\documentclass[twocolumn,10pt,notitlepage,nofootinbib,superscriptaddress,numerical]{revtex4-2}

\usepackage[utf8]{inputenc}
\usepackage[colorlinks=true,linkcolor=blue,citecolor=blue,urlcolor=blue]{hyperref}

\usepackage{amsmath,amssymb}
\usepackage{dsfont}
\usepackage{xcolor}
\usepackage{amsthm}

\usepackage{algorithm}
\usepackage{algpseudocode}

\algnewcommand\algorithmicforeach{\textbf{for each}}
\algdef{S}[FOR]{ForEach}[1]{\algorithmicforeach\ #1\ \algorithmicdo}

\usepackage{ragged2e}

\newtheorem{theorem}{Theorem}[section]

\newtheorem{corollary}[theorem]{Corollary}

\newtheorem{lemma}[theorem]{Lemma}

\theoremstyle{definition}
\newtheorem{definition}[theorem]{Definition}
\newtheorem{example}{Example}[section]
\newtheorem{remark}{Remark}

\definecolor{FGreen}{RGB}{1,100,10}

\usepackage{graphicx}
\usepackage[caption=false]{subfig}
\graphicspath{
    {figures/}
}

\usepackage{mathtools} 
\usepackage[shortlabels]{enumitem}
\usepackage{bm}
\usepackage{dsfont} 
\usepackage{etoolbox}

\docsvlist{A,B,C,D,E,F,G,H,I,J,K,L,M,N,O,P,Q,R,S,T,U,V,W,X,Y,Z}

\docsvlist{A,B,C,D,E,F,G,H,I,J,K,L,M,N,O,P,Q,R,S,T,U,V,W,X,Y,Z}

\docsvlist{A,B,C,D,E,F,G,H,I,J,K,L,M,N,O,P,Q,R,S,T,U,V,W,X,Y,Z,c,d}

\docsvlist{d}

\docsvlist{E,P,R,e,f,g,h,i,j,k,p,q,r,x,y}

\usepackage{ stmaryrd }
\def\llb{\llbracket}
\def\rrb{\rrbracket}
\def\bra{\langle}
\def\ket{\rangle}
\def\lr{\leftrightarrow}
\def\backslash{\symbol{92}}

\def\weight{\mathfrak{w}}
\def\qubit{\mathfrak{q}}

\graphicspath{
    {figures/}
}

\DeclareMathOperator{\im}{im}

\DeclareMathOperator{\supp}{supp}

\DeclareMathOperator{\poly}{poly}

\DeclareMathOperator{\row}{row}
\DeclareMathOperator{\col}{col}
\DeclareMathOperator{\Row}{Row}
\DeclareMathOperator{\Col}{Col}
\DeclareMathOperator{\AB}{AB}
\DeclareMathOperator{\ABC}{ABC}

\usepackage{tikz-cd} 
\usetikzlibrary{matrix,arrows,decorations.pathmorphing}
\usetikzlibrary{fit, shapes.geometric,calc}

\usepackage{xcolor}

\newcommand{\red}[1]{{\color{red} #1}}
\newcommand{\gray}[1]{{\color{gray} #1}}
\newcommand{\blue}[1]{{\color{blue} #1}}

\makeatletter
\def\l@subsection#1#2{}
\def\l@subsubsection#1#2{}
\makeatother

\begin{document}

\title{4D and 5D Layer Codes through Color Routing}
\author{Andrew C.~Yuan}
\affiliation{Iceberg Quantum, Sydney}
\author{Nouédyn Baspin}
\affiliation{Iceberg Quantum, Sydney}
\date{\today}

\begin{abstract}
    We introduce an explicit Calderbank–Shor–Steane (CSS) code construction that generalizes the Layer codes to $D=4,5$ dimensions. 
    Much like its predecessor, the present construction is based on embedding quantum low-density parity check (qLDPC) codes; from an $[[n,k,d]]$ code with energy barrier $\Delta$, we obtain a $D=4,5$ dimensional Layer code with parameters $[[\Theta(n^{D/(D-2)}), k, \Theta(d n^{1/{(D-2)}})]]$ and energy barrier $\Omega(\Delta)$. Using good qLDPC codes as input, our construction saturates the $D=4,5$ dimensional BPT bounds exactly. 
    The higher dimensional Layer Codes are modular, and thus well suited to architectures composed of modular network patches, despite our physical limitation to three dimensions.
    We overcome the hurdles encountered by previous generalization attempts through the use of \textit{color routing}, allowing us to resolve the structure of the check layers and line defects.
\end{abstract}

\maketitle


\section{Introduction}
\label{sec:intro}
Quantum error-correcting codes have attracted sustained attention over the past several decades because of their importance for fault-tolerant quantum computation in noisy settings \cite{shor1996faultolerant, knill1998resilient, aharonov2008faulttolerant}. Despite this substantial body of work, much about these codes remains poorly understood. 
In particular, locality poses a natural and important constraint.
Since the physical space accessible to us is three-dimensional, code constructions should ideally respect this limitation.

At the turn of the 2010's groundbreaking works \cite{bravyi2009nogo, bravyi2010tradeoffs} showed that quantum stabilizer codes \cite{gottesman2024surviving} in three dimensions had to obey restrictions on their performances. These no-go theorems, commonly referred as BPT bounds, are phrased in terms of $n$ (the number of physical qubits used by the code), $k$ (the number of qubits protected from noise), $d$ (the size of the smallest error able to irrevocably corrupt the stored information), and $\Delta$ (the energy barrier); and combined show that the three dimensional Euclidean space requires
\begin{equation}
    kd = O(n), \quad d = O(n^{2/3}), \quad \Delta =O(n^{1/3})
\end{equation}
In fact, their works provide a generalization for any dimension $D$:
\begin{equation}
    kd^{\frac{2}{D-1}} = O(n), \quad d = O(n^{\frac{D-1}{D}}), \quad \Delta =O(n^{\frac{D-2}{D}})
\end{equation}
As noted in \cite{baspin2024wire, lin2023geometrically} saturating these bounds reduces to obtaining codes in dimension $D$ satisfying
\begin{equation}
    \label{eq:BPT-bounds}
    [[n, \Omega(n^{\frac{D-2}{D}}), \Omega(n^{\frac{D-1}{D}})]], \quad \Delta=O(n^{\frac{D-2}{D}})
\end{equation}

However, at the time these bounds were obtained, no code construction was known to saturate these bounds in $D\ge 3$ dimension\footnote{The surface code trivially satisfies the BPT bounds in $D=2$ dimensions.}; opening the question of whether these bounds were in fact tight or not. 
The search for BPT-saturating codes therefore attracted considerable interest and soon extended across several coding frameworks. 
Early progress came in two dimensions, where a construction was shown to saturate the analogous BPT bounds for \textit{subsystem} codes \cite{bravyi2011subsystem}; later, optimal subsystem code constructions were obtained in all dimensions \cite{bacon2017sparse, baspin2024wire}.
Analogous BPT bounds can also be formulated for \textit{classical} codes.
In 2013, Yoshida \cite{yoshida2013information} described a fractal-based classical code family that required the spin dimension to diverge; and the case of classical codes was only fully resolved much later \cite{baspin2023combinatorial, li2026almost}. 
It is also worth noting that embedding-based methods have proved especially fruitful, both in the breadth of problems they address \cite{apel2025simulating} and in the optimality of the resulting constructions.

It took another decade before substantial progress was made towards saturating the BPT bounds for quantum stabilizer codes, largely due to the breakthrough discovery of asymptotically \textit{good} quantum low-density parity check (qLDPC) codes \cite{panteleev2022asymptotically,leverrier2022quantum,dinur2023good}, i.e., those with code parameters $k,d,\Delta$ that scale linearly with the system size $n$.
This discovery then motivated the pursuit of \textit{optimal} embeddings of good qLDPC codes into Euclidean space to saturate the BPT bounds, with a flurry of results first emerging in 2023 \cite{portnoy2023local,williamson2023layer,lin2023geometrically,li2026almost}.
Using a relatively advanced result by Freedman and Hastings \cite{freedman2021building}, Portnoy showed that any qLDPC CSS code that admits a \textit{sparse $\dZ$ lift} can be almost\footnote{up to polylogarithmic corrections} optimally embedded in $(D\ge 3)$-dimensions \cite{portnoy2023local} .
Lin, et. al. \cite{lin2023geometrically,li2026almost} improved upon this result by removing the somewhat unnatural requirement of admitting a sparse $\dZ$ lift.
Unfortunately, this line of research relies on a probabilistic argument in their construction, and thus presents major obstruction in real world implementations and follow up research on further properties of these BPT saturating codes.

In contrast, the 3D Layer Codes \cite{williamson2023layer} provide an explicit and optimal embedding for any CSS qLDPC code. 
Their explicit construction has already led to several fruitful lines of follow-up work, including studies of their (partial) self-correction properties \cite{baspin2025free,gu2025layer,williamson2025partial}, extensions to embeddings of random CSS codes \cite{gu2025layer,hsieh2025simplified}, and generalizations for quantum weight reduction \cite{layerreduction}. From an implementation perspective, 3D Layer Codes are naturally realized in physical three-dimensional space. 
In particular, they are constructed via lattice surgery \cite{horsman2012surface} from a stack of surface-code patches, using local measurements of concatenated stabilizers from the input code. 
This makes them especially well suited to architectures composed of modular, networked patches. 
Their modular structure and conceptual simplicity also raise the prospect of realizing higher-dimensional embeddings (which possess better scalings of code parameters $k,d,\Delta$), despite our physical limitation to three dimensions. 
Unfortunately, the conventional Layer Codes only apply to three dimensions and does not admit a natural generalization to higher dimensions in any immediate manner.

\subsection{Main Result}

The main result of this work is to generalize the Layer Codes to 4D and 5D by introducing the idea of \textit{color routing}.
In particular, we provide an explicit procedure to embed any qLDPC CSS code into a 4D or 5D hypercube.
Concretely, if $n(A)=L^{D-2}$ where $D=4,5$ has code parameters $k(A),d(A),\Delta(A)$, then the output code $C$ is embedded in $D$ dimensions with $n(C)=O(\weight^2\qubit^2)L^D$ for $D=4$ and $O(\weight^6 \qubit^4)L^D$ for $D=5$, and
\begin{align}
    k(C)&=k(A)\\
    d(C)& =\Omega\left(\frac{1}{\weight}\right) L d(A) \\ 
    \Delta(C) &= \Omega\left(\frac{1}{\weight^2\qubit}\right) \Delta(A) 
\end{align}
where $\weight,\qubit$ are the maximum check weight and total qubit degree of the input code $A$.
The construction is \textit{optimal}, in the sense that, the output code $C$ saturates the BPT bounds \cite{bravyi2010tradeoffs,bravyi2009nogo} in Eq. \eqref{eq:BPT-bounds} provided that the input code $A$ has parameters $k,d,\Delta=\Theta(n)$.
The embedding is explicit, modular, and can be naturally arranged on a $D=4$ (5)-dimensional grid, where each edge hosts at most 3 (4) qubit degrees of freedom.
The output code $C$ has maximum check weight $3(D-1)=9$ (12) and total qubit degree $2D=8$ (10) for $D=4$ (5), independent of the weights of the input code $A$.
As far as we are aware, our construction is the only embedding which is both explicit and optimal in $D=4,5$ dimensions.

\subsection{Outlook}

Our results naturally raise several questions. 
The first of which is the generalization to arbitrary target dimensions $D\ge 6$. 
As it stands, our construction relies on introducing check layers induced by \textit{color routing} whose cycles are then coned through the use of the additional planes. 
However, it is crucial that the overlap of multiple planes does not itself introduce more non-contractible cycles, which is the heart of the proof of Theorem \ref{thm:contracting-cycles-5D} for 4D and 5D embeddings.
Hence, if an analogue of Theorem \ref{thm:contracting-cycles-5D} can be proven for $D\ge 6$ dimensions, then the $D$-dimensional Layer Codes would follow immediately (see Remark \ref{rem:anyD-layer-codes}).

Second, the cycles that appear in the check layers arise naturally from the requirement that the defects between $X$- and $Z$-check layers remain one-dimensional \textit{strings}.
An alternative would be to relax this condition and instead enforce commutation between overlapping check layers by introducing higher-dimensional defects, e.g., \textit{membrane} defects.
It is then natural to ask whether this alternative might be easier to implement in practice.

Regarding broader investigations, we leave the question of the self correction properties of the 4D and 5D Layer Codes open, as it is currently unclear how much of existing works \cite{baspin2025free,gu2025layer,williamson2025partial} immediately generalize. 

It is also worth asking whether our color routing scheme may be useful beyond the present setting. 
The classical scheme, in which checks and bits are replaced by classical repetition codes, has already proved broadly successful \cite{sabo2024weight,baspin2024wire,baspin2023combinatorial,li2026almost}.
Our construction provides a quantum\footnote{Stabilizer code rather than subsystem code \cite{li2026almost,baspin2024wire}} analogue that can be generalized to all dimensions (see Appendix \ref{sec:anyD}).
This suggests that color routing may have applications beyond saturating the BPT bounds.

\subsection{Section Outline}

The remaining sections are laid out as follows. 
In Section~\ref{sec:overview}, assuming only basic knowledge of stabilizer and Pauli subgroups, we provide an overview of our color route scheme and, in particular, use the seminal $\llb 9,1,3\rrb$ Shor code as an example.
In Section~\ref{sec:prelim}, we provide the necessary preliminaries in algebraic homology to understand the technical construction of our 4D and 5D Layer Code.
In Section~\ref{sec:4Dembed}, we provide the details of the 4D construction, culminating in Theorem \ref{thm:logical-4D}-\ref{thm:energy-4D}, whose proofs are postponed to Appendix \ref{sec:4Dproof}.
Similarly, in Section~\ref{sec:5Dembed}, we provide details of the 5D construction, culminating in Theorem \ref{thm:logical-5D}-\ref{thm:energy-5D}, whose proofs are postponed to Appendix \ref{sec:5Dproof}.
In Appendix~\ref{sec:3D-layer-code}, we review the 3D Layer Code, especially its energy barrier, in the language of algebraic homology.
In Appendix \ref{sec:anyD}, we generalize the \textit{color routing} scheme to arbitrary dimensions.

\section{Overview of Construction}
\label{sec:overview}

\subsection{Relation to 3D}
\begin{figure}[ht]
    \centering
    \includegraphics[width=0.8\columnwidth]{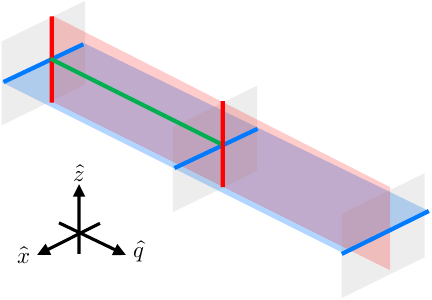}
    \caption{3D Layer Code Example. Depicts the 3D layer code $C$ constructed from input code $A$ with parity checks $XXI,ZZZ$, where bred, grey, and blue layers denote $X$-check, qubit and $Z$-check layers, and green lines denote line defects between $X$- and $Z$-check layers connecting overlapping qubits.
    }
    \label{fig:layer-code-example-3D}
    \end{figure}
In 3D, the conventional Layer Code \cite{williamson2023layer} constitutes the only known explicit and optimal embedding procedure. However, this construction does not admit a straightforward extension to higher dimensions. 
As depicted in Fig. \ref{fig:layer-code-example-3D}, the essential reason is that the qubits of $A$ are arranged in a 1D array $[L]=\{1,...,L\}$, say along the $\hat{q}$ axis. 
The 1D structure permits the associated $X$- and $Z$-checks to be organized in a particularly simple manner along the transverse directions $\hat{x},\hat{z}\perp\hat{q}$. 
Euclidean locality is then preserved by replacing each $X$-check, qubit, and $Z$-check with a surface code, equipped with the appropriate boundary conditions, supported respectively on the $\hat{q}\times\hat{z}$, $\hat{x}\times\hat{z}$, and $\hat{x}\times\hat{q}$ planes, thereby producing the $X$-check, qubit, and $Z$-check layers. 
This construction may be viewed as the quantum analogue of substituting checks and bits by repetition codes in a classical code \cite{sabo2024weight,baspin2024wire,baspin2023combinatorial}. 
Crucially, the 1D arrangement of qubits and checks ensures that these layers intersect in a natural and geometrically controlled fashion, so that the corresponding interactions can be implemented locally along their intersections, and thus yields an output code naturally embedded in 3D $\hat{x}\times \hat{q}\times \hat{z}$. 

In higher dimensions, by contrast, this simplifying 1D geometry is absent, and the construction therefore no longer generalizes in any immediate way. 
In particular, the so-called line defects of \cite{williamson2023layer} cease to be lines, while in 3D, these appear naturally by choosing segments from the intersection of $X$ layers of the form $x \times \dZ \times \dZ$ for each $x$-check with $Z$ layers of the form $\dZ \times \dZ \times z$ for each $z$ check, i.e., $x \times \dZ \times z$.

\subsection{Constraints}
\label{sec:constraints}
Our 4D and 5D embedding is guided by the same underlying intuition as the conventional 3D Layer Codes, but overcomes the associated technical obstacles by introducing the idea of \textit{color routing}.
The notion generalizes to all dimensions (see Appendix \ref{sec:anyD}), but unfortunately, we cannot prove a technical lemma (analogue of Theorem \ref{thm:contracting-cycles-5D}) for $(D\ge 6)$-dimensions, and thus restrict our attention to the 4D and 5D scenarios.
Concretely, the qubits of the input code $A$ are arranged along a $(D-2)$-dimensional grid $[L]^{D-2}$, with axes $\bm{\hat{q}}$, so that each qubit layer is realized as an independent surface code along the transverse directions $\hat{x}\times \hat{z}$.
Because the placement of qubits within the grid is otherwise arbitrary, an $X$-check of $A$ may couple to qubits that are widely separated in the qubit grid.
To maintain locality, each such check is therefore replaced by a repetition code supported along a \emph{route} in $\hat{x}\times\bm{\hat{q}}$ connecting the qubits in its support.
In the 3D layer code, the analogous route is simply the 1D line in the $\hat{q}$ direction at fixed $\hat{x}$, e.g., $x\times \dZ$.
Tensoring this route with a repetition code in the transverse $\hat{z}$ direction then produces layer-like structures for the $X$-checks, closely paralleling those of the 3D Layer Code. 
An analogous construction is used for the $Z$-checks, with the roles of $\hat{x} \lr \hat{z}$ interchanged.

This construction must simultaneously address several nontrivial constraints, each of which will be discussed in greater detail below. 
First, the full collection of routes associated with the $X$-checks must be arranged in a manner that admits an embedding into the $(D-1)$-dimensional subgrid $\hat{x}\times \bm{\hat{q}}$, thereby ensuring that subsequent tensoring in the $\hat{z}$ direction yields an embedding into the $D$-dimensional hypercube. 
The case is similar for $Z$-checks.

Second,  because $X$- and $Z$-checks of the input code have overlapping support, analogues of the string defects appearing in \cite{williamson2023layer} must also be incorporated into the routes for both types of checks. 
However, since the routes for the $X$-checks are determined independently of those for the $Z$-checks, they must be \emph{oblivious}, in the sense that they may depend only on the overlapping qubits between $X$- and $Z$-checks. 
Analogous to the 3D case, consider an $x$ check layer of the form $\eta(x) \times \lambda(x) \times \mathbb{Z}$ in $\hat{x}\times  \bm{\hat{q}}\times \hat{z}$, and a $z$ layer of the form $\mathbb{Z}\times \lambda(x) \times \eta(z)$. 
Then their overlap $\eta(x) \times (\lambda(x) \cap \lambda(z)) \times \eta(z)$ contains the qubits in the common support $x \wedge z$; and to resolve the commutation of the check layers we need to route the line defects of \cite{williamson2023layer} between consecutive pairs $(q_i, q_{i+1})$ in $x \wedge z$. 
In other words, $\lambda(x) \cap \lambda(z)$ has to contain a route from $q_i$ to $q_{i+1}$, which is easier to achieve if we assume that route is independent of $x$ and $z$.

Third, as suggested in \cite{yuan2025unified}, each check layer should be free of internal logical operators. 
In the present setting, such internal logicals correspond to nontrivial cycles in the underlying routes, which are absent in 3D Layer Codes since the routes are simply 1D lines.
It is therefore necessary to impose additional structure guaranteeing that all cycles arising in the check layers are contractible, while ensuring at the same time that these auxiliary structures, taken over all checks, remain embedded in the $D$-dimensional hypercube.

\subsection{Connection to Routing}
\begin{figure}[ht]
\subfloat[\label{fig:collision}]{%
    \centering
    \includegraphics[width=0.6\columnwidth]{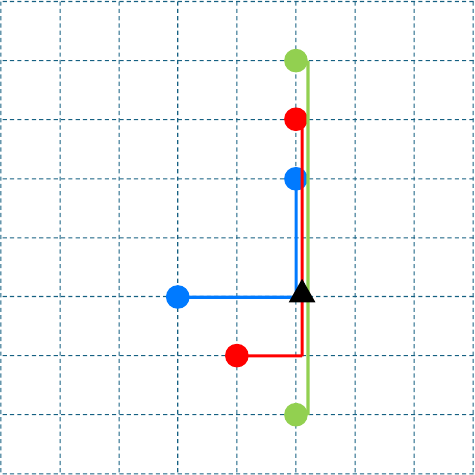}
}
\\
\subfloat[\label{fig:congestion}]{%
    \centering
    \includegraphics[width=0.9\columnwidth]{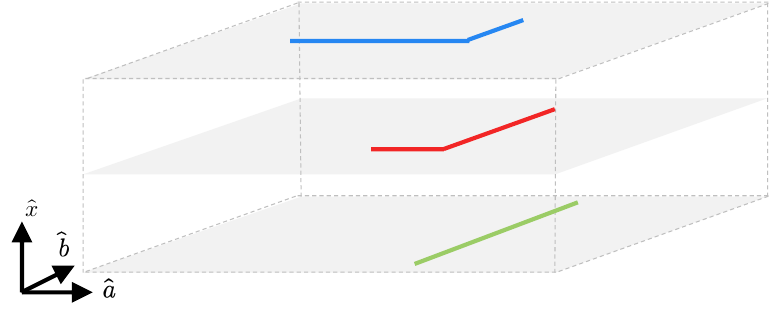}
}
\caption{Queue and Congestion. (a) depicts the projection of oblivious time-parametrized routes (colored) $t\mapsto t \times \gamma(t) \in \hat{x}\times \hat{a}\times \hat{b}$ onto the $\hat{a}\times \hat{b}$ plane. If the routes are independently and continuously routed, then the black triangle $\blacktriangle\in \hat{a}\times \hat{b}$ can have $O(L)$ queue at some time $t^\blacktriangle \in \hat{x}$. (b) depicts using the $\hat{x}$ axis to assign a distinct color for each AB route so that AB routes with the same color are decongested. 
Note that the colors of routes are for visualization and should not be confused with the \textit{color} assignment, i.e., $\hat{x}$ coordinate.
}
\label{fig:collision-congestion}
\end{figure}

It should be noted that the first constraint is closely related to the notion of routing in classical computation theory \cite{leighton2014introduction}.
Specifically, motivated by optimal subsystem and classical code constructions \cite{baspin2024wire,bacon2017sparse,baspin2023combinatorial}, it has been conjectured that the 3D Layer Code can be generalized by treating that $\hat{x}$ axis as the time coordinate and obtaining spacetime routes for $t\mapsto t\times \gamma(t) \in \hat{x} \times \hat{q}$ for $X$-checks based on an \textit{optimal} routing algorithm with $O(1)$-queue, i.e., only $\Theta(L)$ time is necessary to route all $X$ checks in parallel so that each spacetime node is visited by at most $O(1)$ routes.
The case is similar for the $Z$-checks with the axis $\hat{x}\lr \hat{z}$ interchanged.
Unfortunately, a major obstacle of this line of thought is that the spacetime routes $t\mapsto t\times \gamma(t)$ depend on the full collection of checks.
Since the routes are determined independently for $X$- and $Z$-checks, this suggests that the second constraint in the previous section cannot be satisfied in a natural manner.

For example, consider a 2D grid with axes $\bm{\hat{q}}=\hat{a}\times \hat{b}$.
Then a canonical optimal routing scheme in classical computation theory from $q_0\to q_\infty$ is simply the AB route which proceeds first along the row direction $\hat{a}$, and then along the column direction $\hat{b}$ \cite{litman2001fast}. 
However, as depicted in Fig. \ref{fig:collision}, the associated spacetime routes $t\mapsto t\times \gamma(t) \in \hat{x}\times \hat{q}$, when routed independently and continuously, can produce queues of size $O(L)$ at certain spacetime nodes.
These trajectories therefore cannot be embedded directly into the 3D spacetime grid $\hat{x}\times \bm{\hat{q}}$.
The standard classical resolution is to introduce carefully designed waiting times which preserves the overall $\Theta(L)$ routing time but reduces the spacetime queue to $O(1)$ \cite{litman2001fast,leighton1994packet}.
In fact, such a 2D routing algorithm is \textit{oblivious} in the classical sense, i.e., the spatial projection $\im (t\mapsto \gamma(t)) \in \bm{\hat{q}}$ depends only on the source and destination.
However, as discussed previously, the carefully designed waiting times depend on the full collection of checks and thus the second constraint cannot be satisfied naturally.
Further complications also arise when routing $>2$ qubits since checks generally have weight $>2$ and removing noncontractible cycles as suggested by the third constraint.


\subsection{4D Color Routing}
A more successful formulation is instead to interpret the $\hat{x}$ coordinate as a \textit{color}\footnote{The terminology comes from the frequent use of graph coloring theorems; see, e.g., Theorem \ref{ex:2D-routing},  \ref{thm:line-coloring-4D}, \ref{thm:color-route-5D}, \ref{thm:line-coloring-5D}, or \ref{thm:plane-coloring-5D}.} $\eta$ assigned to each AB route, as in Fig. \ref{fig:congestion}.
Routes of the same color then lie in a common $\hat{a}\times \hat{b}$ plane with bounded $O(1)$ edge congestion. 
In this way, the full family of colored routes admits a natural embedding into the 3D color grid $\hat{x}\times \hat{a}\times \hat{b}$.
Also note that the routes in the 2D qubit grid depend only on the source and destination qubits and thus the second constraint is satisfied naturally.

Extending this routing picture to $X$- and $Z$-checks presents a further difficulty, since these checks typically have weight  $>2$.
Because the construction must remain agnostic to the overlap between $X$- and $Z$-checks, one must account for every possible pair of qubits in the support of a given check.
For example, if $\supp x= \{q_1,q_2,...,q_{w}\}$, then an AB route must be provided between every pair $q_i\lr q_j$ since there may exist a $Z$-check with support $\supp z=\{q_i,q_j\}$ so that the defect line between $q_i\lr q_j$ must be present in both check layers.
However, the coloring of AB routes depends on the particular pair of routed qubits.
Consequently, a single $x$ check layer may be disconnected since it is assembled from routes supported at distinct $\hat{x}$ coordinates.
As noted in \cite{yuan2025unified}, this is problematic, since the desired construction requires a one-to-one correspondence between checks and connected components. 
We therefore cannot connect these distinct routes in any canonical manner while still ensuring that the final object satisfies all of the nontrivial constraints discussed above.

The key observation is that, for any check $x$, every AB route between any pair $q_i\lr q_j$ in $\supp x$ is contained in the union $\AB(x)$ of AB graphs $\AB(q)$ over all qubits $q \in \supp x$ where
\begin{equation}
    \AB(q)=(a\times [L]) \cup  ([L]\times b), \quad q=(a,b)
\end{equation}
Note that each $\AB(q)$ is of length $O(L)$ and constructed independently of other qubits, and thus forms a natural decomposition of an $x$-check.
In particular, the \textit{turning point} of the AB route $q_i \to q_j$ is determined in an agnostic manner due to the intersection of $\AB(q_i)$ and $\AB(q_j)$.
The assignment of colors is then similarly derived (see Theorem \ref{thm:line-coloring-4D}) so that each $x$ check is assigned a unique color and thus all AB routes between any pairs $q_i\lr q_j$ within $\supp x$ lie in the same 2D plane (same $\hat{x}$ coordinate).
The construction is similar and performed independently for the $Z$-checks.


\subsection{Example 4D Embedding: $\llb 9,1,3\rrb$ Shor Code}
\begin{figure}[ht]
\centering
\includegraphics[width=0.8\columnwidth]{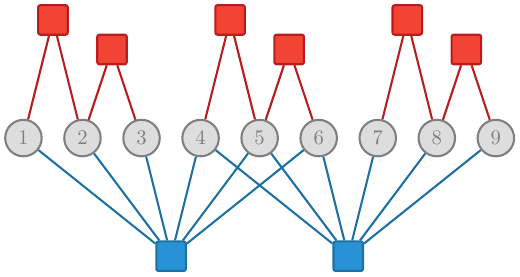}
\caption{Shor Code. The grey circles denote qubits. The blue boxes denote $X$-checks, with lines denoting which qubits the check acts on, and similarly for red boxes, which denote $Z$-checks.
}
\label{fig:Shor-code}
\end{figure}

\begin{figure}[ht]
\centering
\subfloat[\label{fig:Shor-qubits}]{%
    \centering
    \includegraphics[width=0.3\columnwidth]{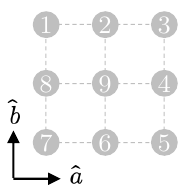}
}
\subfloat[$\eta_X(x)=1\in \hat{x}$\label{fig:X-check-1}]{%
    \centering
    \includegraphics[width=0.3\columnwidth]{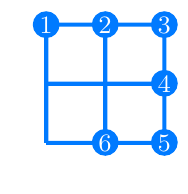}
}
\subfloat[$\eta_X(x)=2\in \hat{x}$\label{fig:X-check-2}]{%
    \centering
    \includegraphics[width=0.3\columnwidth]{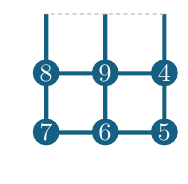}
}
\\
\subfloat[$\eta_Z(z)=1\in \hat{z}$\label{fig:Z-check-1}]{%
    \centering
    \includegraphics[width=0.3\columnwidth]{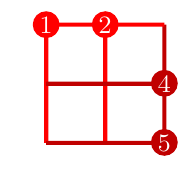}
}
\subfloat[$\eta_Z(z)=2\in \hat{z}$\label{fig:Z-check-2}]{%
    \centering
    \includegraphics[width=0.3\columnwidth]{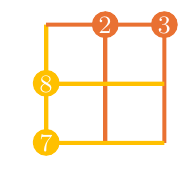}
}
\subfloat[$\eta_Z(z)=3\in \hat{z}$\label{fig:Z-check-3}]{%
    \centering
    \includegraphics[width=0.3\columnwidth]{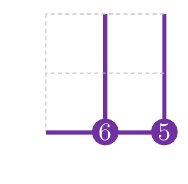}
}
\\
\subfloat[$\eta_Z(z)=4\in \hat{z}$\label{fig:Z-check-4}]{%
    \centering
    \includegraphics[width=0.3\columnwidth]{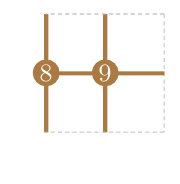}
}
\caption{Color Routes. 
(a) denotes a fixed arrangement of qubits on the 2D grid $[3]^2$ with axes $\hat{a}\times \hat{b}$. 
(b)-(c) denotes the assignment of distinct colors $\eta_X(x)\in [2]$ to $x$-checks alongs $\hat{x}$ axis so that $\AB(x)$ over $x$-checks with the same color has 1-congestion in the $\hat{a}\times \hat{b}$ plane.
Note that the colors of checks are for visualization and should not be confused with the \textit{color} assignment $\eta_X(x)$.
Similarly, (d)-(g) denotes the assignment of distinct colors $\eta_Z(z) \in [4]$ to $z$ checks along the $\hat{z}$ axis so that $\AB(z)$ over $z$-checks with the same color has 1-congestion in the $\hat{a}\times \hat{b}$ plane.  
Note that in (d), (e), although distinct check routes (layers) intersect, they do not \textit{interact} with each other in the sense that parity checks are not modified at their intersections.
}
\label{fig:Shor-check}
\end{figure}
In this subsection, we illustrate the full 4D embedding using the seminal $\llb 9,1,3\rrb$ Shor code as a guiding example of the input code $A$.
In particular, due to the routing scheme in the previous subsection, each check layer may have nontrivial logicals, corresponding to nontrivial cycles in each $\AB(x)$, and thus this (third) constraint must be addressed.
The detailed treatment is deferred to Section \ref{sec:contracting-cycles-4D}; here, we restrict ourselves to explaining the underlying idea through the example of the Shor code.

The Shor code $A$ has parity checks given by Fig. \ref{fig:Shor-code}.
To construct the 4D embedding, let us first arrange the qubits in an arbitrary manner on the 2D grid $[L=3]^2$ with axes $\bm{\hat{q}}=\hat{a}\times \hat{b}$, say in a spiral fashion as depicted in Fig. \ref{fig:Shor-qubits}.
The $X$-checks of the Shor code can then be assigned colors $\eta_X(x)\in [2]$ so that AB routes $\AB(x)$ over $x$ checks with the same color are decongested in the $\bm{\hat{q}}$ plane along the fixed $\hat{x}$ coordinate, as depicted in Fig. \ref{fig:X-check-1}-\ref{fig:X-check-2}.
The case is similar for $Z$-checks with coloring $\eta_Z(z)\in [4]$ as depicted in Fig. \ref{fig:Z-check-1}-\ref{fig:Z-check-4}.


\begin{figure}[ht]
\centering
\subfloat[$\eta_X(x)=1\in \hat{x}$\label{fig:Shor-contract-1}]{%
    \centering
    \includegraphics[width=0.45\columnwidth]{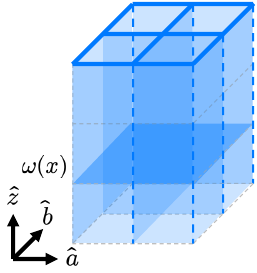}
}
\subfloat[$\eta_X(x)=2\in \hat{x}$\label{fig:Shor-contract-2}]{%
    \centering
    \includegraphics[width=0.45\columnwidth]{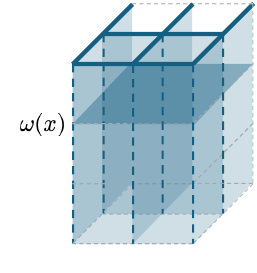}
}
\caption{Contracting Cycles. (a), (b) denotes the $x$-check layers obtained from first tensoring the AB routes in Fig. \ref{fig:X-check-1}-\ref{fig:X-check-2} with the repetition code in the $\hat{z}$ direction, and then taking the union with the horizontal (shaded) layer at a fixed $\hat{z}$ coordinate $=\omega(x)$.
The vertical dashed lines denote the qubit (layers) each check acts on.
}
\label{fig:Shor-contract}
\end{figure}

By tensoring each AB route $\eta\AB(x) =\eta_X(x) \times \AB(x) \in \hat{x} \times \bm{\hat{q}}$ with a repetition code in $\hat{z}$ direction, we obtain layer-like (or \textit{cylinder-like}) structure in 4D as depicted by the vertical layers in Fig. \ref{fig:Shor-contract}.
By themselves, these layer-like (cylinder-like) structures have nontrivial logicals corresponding to the noncontractible cycles in $\AB(x)$ in Fig. \ref{fig:X-check-1}-\ref{fig:X-check-2}.
Fortunately, by Theorem \ref{thm:plane-coloring-4D}, the number of $x$ checks with the same color $\eta_X$ is $\le L$ and thus an arbitrary indexing $\omega_X(x)\in [L]$ can be applied to distinguish $x$ checks of the same color.
We can then consider an $x$ check layer as the union of the vertical layers determined by $\AB(x)$ and a horizontal layer $[L=3]^2$ in the $\bm{\hat{q}}$ plane at coordinate $\omega_X(x)$ in the $\hat{z}$ axis depicted in Fig. \ref{fig:Shor-contract} (see also Fig. \ref{fig:check-layer-congestion-4D}).
In this case, any cycle in $\AB(x)$ can be contracted by first \textit{shifting}\footnote{Applying a homotopy} the cycle along the $\hat{z}$ axis to coordinate $\omega(x)$ and contracted within the horizontal layer in the $\bm{\hat{q}}$ plane.
The case is similar for the $Z$-checks\footnote{Though for the Shor code and this specific qubit arrangement, the horizontal layers are unnecessary for the $Z$-checks since $\AB(z)$ does not admit any cycles.}.

Recall that the qubit layers $q$ are simply surface codes in the transverse $\hat{x}\times \hat{z}$ direction.
Hence, similar to the 3D Layer Codes \cite{williamson2023layer}, each $x$ check layer can interact locally with the qubit layers for all $q\in \supp x$ via the intersections along $\hat{z}$ as denoted by the vertical dashed lines in Fig. \ref{fig:Shor-contract}.
Similarly, the $z$ check layers interact locally with the qubit layers for all $q\in \supp z$ via the intersections along $\hat{x}$.

Due to the overlap of $X$- and $Z$-checks, say Fig. \ref{fig:X-check-1} and \ref{fig:Z-check-1}, line defects must be introduced as depicted by the green lines in Fig. \ref{fig:Shor-defect-11}.
By construction, the line defects are contained within both the corresponding $X$- and $Z$-check layers and thus can interact locally in the 4D hypercube.
The remaining cases are similarly depicted in Fig. \ref{fig:Shor-defect}.

Since the $X$-check layers are embedded in the 4D hypercube with finite congestion, and similarly for the qubit layers and $Z$-check layers, and that the interactions between different types of layers is local in 4D, we see that the final output code $C$ is embedded within the 4D grid.
\begin{figure}[ht]
\centering
\subfloat[$1\times 1\in \hat{x}\times \hat{z}$\label{fig:Shor-defect-11}]{%
    \centering
    \includegraphics[width=0.3\columnwidth]{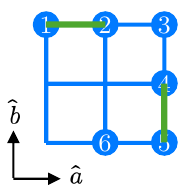}
}
\subfloat[$1\times 2\in \hat{x}\times \hat{z}$\label{fig:Shor-defect-12}]{%
    \centering
    \includegraphics[width=0.3\columnwidth]{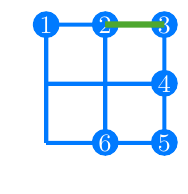}
}
\subfloat[$1\times 3\in \hat{x}\times \hat{z}$\label{fig:Shor-defect-13}]{%
    \centering
    \includegraphics[width=0.3\columnwidth]{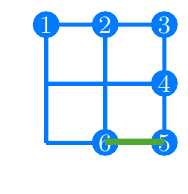}
}
\\
\subfloat[$2\times 1\in \hat{x}\times \hat{z}$\label{fig:Shor-defect-21}]{%
    \centering
    \includegraphics[width=0.3\columnwidth]{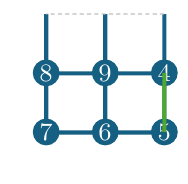}
}
\subfloat[$2\times 2\in \hat{x}\times \hat{z}$\label{fig:Shor-defect-22}]{%
    \centering
    \includegraphics[width=0.3\columnwidth]{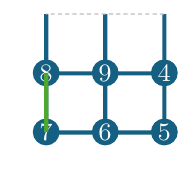}
}
\subfloat[$2\times 3\in \hat{x}\times \hat{z}$\label{fig:Shor-defect-23}]{%
    \centering
    \includegraphics[width=0.3\columnwidth]{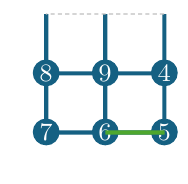}
}
\\
\subfloat[$2\times 4\in \hat{x}\times \hat{z}$\label{fig:Shor-defect-24}]{%
    \centering
    \includegraphics[width=0.3\columnwidth]{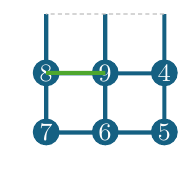}
}
\caption{Shor Defects. (a)-(g) denotes the defects (green lines) induced by the overlap of qubits between $X,Z$-checks in the $\hat{a}\times \hat{b}$ plane with fixed color assignments $\eta_X(x)\times \eta_Z(z) \in \hat{x}\times \hat{z}$.
Compare with Fig. \ref{fig:Shor-check}
}
\label{fig:Shor-defect}
\end{figure}

\subsection{5D Color Routing}

In higher dimensions, say a 3D grid, the analogous graphs $\ABC(q_i)$ and $\ABC(q_j)$ may not necessarily intersect if $q_i,q_j$ do not share the same plane (e.g., opposite corners of a cube).
Specifically, the \textit{turning points} of ABC routes $q_i\to q_j$ are not agnostically incorporated in the intersection of $\ABC(q_i)$ and $\ABC(q_j)$.
Hence, an additional idea must be included to generalize the construction to 5D (and higher-dimensional) embeddings.

As formalized in Theorem \ref{thm:color-route-5D} and depicted in Fig. \ref{fig:color-route-5D}, the key observation is that every pair of qubits $q_i\lr q_j$, which are within the support of some $x$ or $z$-check (but agnostic to the explicit check), can be routed in a fashion by a sequence of finite points in the qubit grid $[L]^{D-2}$ such that continguous points share the same face and that the congestion of intermediate points can be controlled by the qLDPC nature of the input code.
We refer to the sequence as the \textbf{color route}\footnote{The terminology follows from utilizing the graph coloring theorem in an inductive manner on the dimension of the qubit grid $[L]^{D-2}$. See, e.g., Theorem \ref{thm:color-route-5D} or \ref{thm:color-route}.} of pair $q_i\lr q_j$.
Since continguous points share the same face, our understanding of the 4D embedding generalizes in an immediate manner.

Since the color route generalizes to all dimensions, we conjecture that the method generalizes the Layer Code to higher dimensions, i.e., $D\ge 6$.
Unfortunately, to satisfy the third constraint in Section \ref{sec:constraints}, we utilized an intrinsic property of the 3D grid $[L]^3$ to prove that all cycles induced by the color route are removed (see Theorem \ref{thm:contracting-cycles-5D} and Fig. \ref{fig:contracting-cycles-5D}), and hence, our limitations to 4D and 5D embeddings.
Conversely, however, if an analogue of Theorem \ref{thm:contracting-cycles-5D} can be proven for any higher dimension $D\ge 6$, then the $D$-dimensionaal Layer Code follows immediately (see Appendix \ref{sec:anyD}, but specifically, Remark \ref{rem:anyD-layer-codes}).

\section{Preliminaries}
\label{sec:prelim}

\subsection{General}
\begin{definition}
    We write $[n]=\{1,...,n\}$ and $[n)=\{1,...,n-1\}$.
    We also write $[a,b]$ to only denote the integer values within the interval and similarly for $[a,b),(a,b],(a,b)$.
\end{definition}

\begin{definition}[Coordinates]
    Given $q\in [L]^{\fD}$, let $q|_{i}$ denotes its $i$th-coordinate. 
    Also if $S\subseteq [\fD]$, then let $q|_{S}$ denote the tuple $(q|_{i},i\in S)$. Also write $q|_{\ne i}$ to denote the tuple of all coordinates except for $q|_{i}$.
\end{definition}

\subsection{Complexes}
\begin{definition}
\label{def:chain-complex}
A \textbf{(chain) complex} $C$ (of \textbf{length} $n$) is a sequence of $\dF_2$-vector spaces $C_{i}$ (whose elements are referred as $i$-\textbf{chains}) together with linear $\partial_{i}:C_{i} \to C_{i-1}$, called the \textbf{differentials} of $C$, such that $\partial_{i}\partial_{i+1}=0$ where the subscripts are often omitted. We write
\begin{equation}
    C = C_n\cdots \to  C_{i} \xrightarrow{\partial_{i}} C_{i-1} \to \cdots  C_0
\end{equation}
Note that $\im \partial_{i+1} \subseteq \ker \partial_i$ for all $i$, and thus the \textbf{$i$-homology} of $C$ is defined as
\begin{equation}
    H_i(C)\equiv  \ker \partial_i/\im \partial_{i+1}
\end{equation}
Denote the equivalences class of $i$-chain $\ell$ as $[\ell]\in H_i(C)$ -- conversely, we may also write $[\ell] \in H_i(C)$ without specifying the representation $\ell$.
\end{definition}

\begin{definition}[Basis]
A complex $C$ is equipped with an \textbf{basis} $\sC$ (which we always assume henceforth) if each $C_i = \dF_2^{n_i}$ with basis elements in $\sC_i$ referred as \textbf{$i$-cells}. 
The  \textbf{(Hamming) weight} $|\ell|$ is then defined for any $\ell \in C_{i}$.
We say that cells $c_{i},c_{i-1} $ are \textbf{adjacent}  if $\bra c_{i-1} |\partial c_{i}\ket \ne 0$, and write $c_i \sim c_{i-1}$. 
\end{definition}

\begin{example}[Repetition Code]
    \label{ex:rep}
    The \textbf{repetition code} on $L$ bits is the classical code with parity checks $Z_i Z_{i+1}$ where $i=1,...,L-1$.
    Its corresponding complex is defined as $R \equiv R(L)$ with differential $\partial^{R}$.
    The 1-cells are denoted via $|i^+\ket$ for $i=1,...,L-1$ where $i^\pm =i\pm 1/2$, and 0-cells via $|i\ket$ for $i=1,...,L$ so that
    \begin{equation}
        \partial^{R} |i^+\ket = |i\ket +|i+1\ket
    \end{equation}
    We implicitly assume that $|i\ket, |i^\pm\ket =0$ if the label is not within the previous parameters.
\end{example}

\begin{example}[Graphs]
    Given a graph $\sG=(\sE,\sV)$, there is an associated complex $G=E\to V$ where $E,V$ are the $\dF_2$ vector spaces generated by $\sE,\sV$ and the differential map is given by the adjacency relation. 
    Conversely, a complex $G:E\to V$ with basis can define a graph $\sG=(\sE,\sV)$ if $|\partial e|=2$ for all 1-cells $e\in E$. In either case, we refer to $G$ as a \textbf{graph} complex with associated graph $\sG$.
    For example, the repetition code $R(L)$ is a graph complex. 
    We further call complex $C=C_2 \to C_1 \to C_0$ a \textbf{cell} complex if $C_1 \to C_0$ is a graph complex and $G$ is a \textbf{subgraph} of $C$ if $G$ is a subgraph of $C_1\to C_0$.
\end{example}

\begin{definition}[Cocomplex]
Let $C$ denote a complex.
Since the inner product is nondegenerate, the transpose $\partial_i^\top:C_{i-1} \to C_i$ is well-defined so that that \textbf{cocomplex} is that given by
\begin{equation}
    C^\top = C_n \cdots \leftarrow C_{i} \xleftarrow{\partial_{i}^\top} C_{i-1} \leftarrow \cdots C_0
\end{equation}
The \textbf{$i$-cohomology} of $C$ is $H^i(C) = \ker \partial_{i+1}^\top/\im \partial_i^\top$ which is isomorphic to $H_i(C)$.
\end{definition}

\begin{definition}[Tensor Product]
    Let $A,B$ be two chain complexes; then the \textbf{tensor product} $C = A \otimes B$ is the complex defined as
    \begin{equation}
        C_{k} = \bigoplus_{i+j = k} A_i \otimes B_j
    \end{equation}
    with cells $a\otimes b$ where $a,b$ are cells in $A,B$, respectively, and with differential
    \begin{equation}
        \partial^C (a \otimes b) = \partial^A a \otimes b + a \otimes \partial^{B} b
    \end{equation}
\end{definition}
\begin{remark}
    Note that the tensor product of graph complexes is a cell complex.
\end{remark}
\begin{remark}
    When tensoring a complex $A$ and a cocomplex $B^\top$, we first relabel the cocomplex $B^\top$ as a complex with vector spaces $(B^{\top})_i=B_{n-i}$ where $n$ is the length of $B$.
\end{remark}

\begin{example}[CSS Codes]
    \label{ex:CSS}
    Consider a CSS code as a complex $C=C_2\to C_1 \to C_0$ and convention that $C_2,C_1,C_0$ are the $X$-type checks, qubits, $Z$-type checks, respectively. Let $\weight_X,\weight_Z$ denote the max weight of $X,Z$-checks, respectively, and let $\qubit_X,\qubit_Z$ denote the max number of $X,Z$-checks acting on any qubit, respectively.
    Note that $\weight_X,\qubit_X$ are the max column weights $\|\cdot\|_{\rm{col}}$ of $\partial_2,\partial_1$, respectively, and similarly, $\qubit_Z,\weight_Z$. 
    Hence, we denote the weights via the following diagram
    \begin{equation}
    \label{eq:weight-diagram}
    C_2 \xrightleftharpoons[\qubit_X]{\weight_X} C_1 \xrightleftharpoons[\weight_Z]{\qubit_Z} C_0
    \end{equation}
    If the weights are $O(1)$ for a family of increasing size codes, then we refer to the family as (quantum) \textbf{low-density parity check} or \textbf{(q)LDPC} for short.
    Furthermore, given cells $c_2,c_0$, we shall refer to the \textbf{common qubits} -- denoted as $c_2\wedge c_0$ -- as intersection of supports of $c_2,c_0$, i.e., the collection of cells $c_1$ such that $c_2 \sim c_1 \sim c_0$.
\end{example}

\begin{definition}[Systolic Distance]
    Let $C$ be a complex. Then the $i$\textbf{-systolic distance} $d_i(C)$ is 
    \begin{equation}
        d_i(C) = \min_{\ell:0\ne [\ell]\in H_i(C)} |\ell|
    \end{equation}
    Similarly, define the \textbf{(co)systolic distance} $d^{i}(C)$ via $H^{i}(C)$. Note that when relating to CSS codes, $d_1(C),d^1(C)$ is the $X,Z$-type code distance, respectively.
\end{definition}

\subsection{Graph Coloring}

\begin{definition}[Vertex coloring]
    For a graph $\sG=(\sE,\sV)$, a \textbf{vertex coloring} with $\chi \in \mathbb{N}$ colors is a map $\eta : \sV \to [\chi]$ with the following property: 
    if $v, u \in V$ have the same color $\eta(v) = \eta(u)$ then $(u,v)$ is not an edge in $\sE$.
\end{definition}

\begin{theorem}[Vertex coloring, \cite{diestel2025graph}]
    \label{theorem:vertex-coloring}
    For every graph $\sG$ with maximum degree $\Delta(\sG)$, there exists a vertex coloring for $\sG$ that uses at most $\Delta(\sG) + 1$ colors.
\end{theorem}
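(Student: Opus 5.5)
The plan is the standard greedy argument, by induction on the number of vertices or, equivalently, by coloring the vertices one at a time. Fix an arbitrary ordering $v_1,\dots,v_N$ of $\sV$ and palette $[\Delta(\sG)+1]$. We assign colors in order. When we reach $v_i$, we look at the neighbors of $v_i$ among $v_1,\dots,v_{i-1}$ that are already colored. There are at most $\deg(v_i)\le \Delta(\sG)$ of them, so they occupy at most $\Delta(\sG)$ distinct colors. Hence at least one color in $[\Delta(\sG)+1]$ is unused among them, and we set $\eta(v_i)$ to be the smallest such color.

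Next we check that $\eta:\sV\to[\Delta(\sG)+1]$ is a vertex coloring in the sense of the definition above. Take any edge $(u,v)\in\sE$, and say $u=v_j$, $v=v_i$ with $j<i$. Then $u$ was already colored when $v$ was processed, and $v$ was explicitly given a color different from $\eta(u)$. So no edge joins two vertices of the same color.

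If $\sG$ contains self-loops, the statement is vacuous or false, so I would note that graphs are taken to be simple here. Multiple edges are harmless, since they only reduce the number of distinct neighbors. There is no real obstacle: the only point needing care is that the count uses distinct colored neighbors, which is at most the degree. The result is classical and could simply be cited from \cite{diestel2025graph}.
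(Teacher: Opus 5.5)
Your greedy argument is correct and is the standard proof found in the cited reference \cite{diestel2025graph}; the paper gives no proof of its own and only cites the result. Your side remarks also hold, with one small correction: if a vertex has a loop, then no valid coloring exists under the paper's definition, so the statement is false there rather than vacuous. The abstract graphs the paper colors are implicitly loopless, so this case does not arise.
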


\begin{definition}[Edge coloring]
    For a graph $\sG=(\sE,\sV)$, an \textbf{edge coloring} with $\chi \in \mathbb{N}$ colors is a map $\eta : \sE \to [\chi]$ with the following property: 
    if edges $e, e \in \sE$ have the same color $\eta(e) = \eta(e')$ then $e$ and $e'$ do not share a vertex.
\end{definition}

\begin{theorem}[Edge coloring, \cite{vizing1964estimate, diestel2025graph}]
    \label{theorem:edge-coloring}
    For every bipartite graph $\sG$ with maximum degree $\Delta(\sG)$, there exists an edge coloring for $\sG$ that uses at most $\Delta(\sG)$ colors.
\end{theorem}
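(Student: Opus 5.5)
We prove the statement (König's edge-coloring theorem) by induction on the number of edges $|\sE|$, keeping the palette fixed at $[\Delta]$ with $\Delta=\Delta(\sG)$. The empty graph is trivially colored. For the inductive step, remove an edge $e=(u,v)$, with $u$ on one side of the bipartition and $v$ on the other. By induction, $\sG-e$ has a proper edge coloring $\eta:\sE\setminus\{e\}\to[\Delta]$, since its maximum degree is still at most $\Delta$. It then suffices to extend $\eta$ to $e$, after possibly recoloring some edges.

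In $\sG-e$ the vertices $u$ and $v$ each have degree at most $\Delta-1$. Hence there is a color $\alpha\in[\Delta]$ missing at $u$, and a color $\beta\in[\Delta]$ missing at $v$. If some single color is missing at both, we assign it to $e$ and are done. Otherwise $\alpha\neq\beta$, $\alpha$ is present at $v$, and $\beta$ is present at $u$.

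Consider the subgraph $\sH$ formed by the edges colored $\alpha$ or $\beta$. Every vertex has degree at most $2$ in $\sH$, so its components are paths or even cycles. Let $P$ be the component of $\sH$ containing $v$. Since $\beta$ is missing at $v$, the vertex $v$ has degree $1$ in $\sH$, so $P$ is a path starting at $v$ with an $\alpha$-edge. Colors along $P$ alternate $\alpha,\beta,\alpha,\dots$.

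The key claim, and the step that uses bipartiteness, is that $u\notin P$.
\begin{itemize}
\item Because $\sG$ is bipartite and $P$ starts at $v$, the vertices on the side of $u$ are reached from $v$ after an odd number of steps. Hence they are entered along $P$ by an edge of color $\alpha$ (the first, third, \dots\ edges).
\item So if $u$ lay on $P$, it would be incident to an $\alpha$-edge of $P$.
\item This contradicts $\alpha$ being missing at $u$.
\end{itemize}
Multi-edges cause no difficulty here, since the parity argument only uses the bipartition.

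Now swap $\alpha\leftrightarrow\beta$ on the edges of $P$. The result is still a proper coloring:
\begin{itemize}
\item Interior vertices of $P$ keep one $\alpha$-edge and one $\beta$-edge.
\item At each endpoint of $P$, the color absent there becomes the color now used. The path cannot be extended past an endpoint exactly because the color it would need is missing there.
\item Vertices off $P$, including $u$, are unaffected.
\end{itemize}
After the swap, $\alpha$ is missing at $v$, and it is still missing at $u$ because $u\notin P$. Coloring $e$ with $\alpha$ then completes the induction.

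The only delicate point is the parity argument excluding $u$ from $P$. This is precisely where bipartiteness is needed: for odd cycles the theorem fails, and one only gets Vizing's bound $\Delta+1$.
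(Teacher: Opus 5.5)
Your proof is correct, and it is essentially the standard alternating-path (Kempe chain) proof of König's bipartite edge-coloring theorem as given in the cited reference (Diestel); the paper itself gives no proof of this statement and only cites it. Keep your remark that multi-edges cause no difficulty, because the paper applies the theorem precisely to multi-edge bipartite graphs (Example~\ref{ex:2D-routing} and Theorem~\ref{thm:color-route-5D}), and your induction, missing-color choice and parity argument all go through verbatim in that setting.
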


\subsection{Congestion and Embedding}

\begin{definition}[Congestion]
    Let $\sG = (\sE,\sV)$ be a graph with edges $\sE$ and vertices $\sV$. Let $\Gamma$ be a collection of subgraphs $\gamma$ of $\sG$. Then $\Gamma$ has $c$ \textbf{edge congestion} if for every $e\in E$, there exists at most $c$ subgraphs $\gamma \in \Gamma$ such that $e\in \gamma$.
    If $c=1$, then we say that $\Gamma$ is \textbf{edge decongested}.
\end{definition}

\begin{definition}[Graph Embedding]
    Let $\sG_\infty$ denote an infinite graph of bounded degree and $(\sG_L)$ denote an increasing family of graphs of $O(1)$ max degree. If $\Gamma_L$ are collections of subgraphs of $\sG_L$ with $O(1)$ edge congestion, then $\Gamma_L$ is \textbf{embedded} in $\sG_\infty$.
    We often omit the $L$ subscript for simplicity.
    If $\sG_\infty=\dZ^D$, then we say $\Gamma$ can be embedded in $D$-dimensions for simplicity.
\end{definition}

\begin{remark}
    \label{rem:simplified-embedding-definition}
    Although the previous embedding definition is different from the conventional \cite{li2026almost,portnoy2023local}, it is sufficient for our purposes and simplifies the discussion. Specifically, let $C\gamma$ denote the associated graph complex of $\gamma\in \Gamma$ with 1-cells $|e;\gamma\ket$ and 0-cells $|v;\gamma\ket$ where $e,v$ are edges, vertices of $\gamma$. Here, we refer to $e,v$ as the \textbf{coordinates} of $|e;\gamma\ket$ and $|v;\gamma\ket$, respectively. 
    Let $C\Gamma$ denotes the direct sum (disjoint union). Since $\Gamma$ has $O(1)$-congestion, for any edge $e\in \sG_\infty$, there exists at most $O(1)$ many basis elements in $C\Gamma$ with coordinate equal to $e$.
    Similarly, since $\sG_\infty$ has bounded degree, there exists at most $O(1)$ many basis elements in $C\Gamma$ with coordinate equal to $v$.
\end{remark}

Note that the previous definition and remark naturally extends to complexes. Specifically,

\begin{definition}[Embedding of Complexes]
    \label{def:embedding-of-complexes}
    Let $C$ be a (3-term) complex with basis $\sC$ and with $O(1)$ weights , i.e.,
    \begin{equation}
         C_2 \xrightleftharpoons[O(1)]{O(1)} C_1 \xrightleftharpoons[O(1)]{O(1)} C_0
    \end{equation}
    Let $\gamma\subseteq \sC$ and $C\gamma$ be the (sub-)complex of $C$ generated by $\gamma$.
    Let $\Gamma$ be a collection of $\gamma$. 
    Then $\Gamma$ has $O(1)$ \textbf{edge congestion} in $C$ if for any 1-cell $e$ of $C$, there exists at most $O(1)$ many $\gamma\in \Gamma$ such that $e\in \gamma$; and $\Gamma$ is \textbf{edge decongested} if it has 1 edge congestion.
    Similarly define \textbf{vertex} and \textbf{face congestion}\footnote{Since $C$ has $O(1)$ weights, $O(1)$ edge congestion implies $O(1)$ face (vertex) congestion and vice-versa so that it's not necessary to distinguish between congestions. However, in our Layer code construction, the edge and face congestions can be particularly simple and thus we include the distinction.} for $0$-cells $v$ and $2$-cells $f$ of $C$, respectively.
    Specifically, if $C\Gamma$ is the direct sum of $C\gamma$ for $\gamma \in \Gamma$ with $i$-cells labeled by $|c;\gamma\ket$ where $c$ is an $i$-cell in $\gamma$, then we say that $c$ is the \textbf{coordinate} of $i$-cell $|c;\gamma\ket$.

\end{definition}

\begin{example}[Surface Code]
    \label{ex:surface-code}
    Let $R=R(L)$ denote the repetition code on $L$ vertices so that $R\otimes R^\top$ denote the surface code.  
    Then it's clear that the surface code is embedded in $[L]\times [L+1]$, but for consistency, we say that $R\otimes R^\top$ is embedded in $[L]^2$.
\end{example}

Note that Definition \ref{def:embedding-of-complexes} accounts for the scenario where $C\Gamma$ is the direct sum of $C\gamma,\gamma\in \Gamma$. 
However, if we introduce further adjacency relations between cells in distinct $\gamma,\gamma'$ to obtain an \textit{interacting} complex $C^{\mathrm{int}}\Gamma$, e.g., the Layer Codes, then the definition is slightly insufficient.
Hence, we extend the definition slightly.
\begin{example}[Layer Codes]
    \label{ex:layer-codes}
    In the $D$-dimensional Layer Codes, we say that the output code $C$ is \textbf{embedded} in $D$-dimensions if the $X$-check layers, qubit layers and $Z$-check layers are embedded in $D$-dimensions with $O(1)$ congestion, and that the adjacency relation between cells in distinct type layers (e.g., $X$-type and qubit) is $O(1)$ \textbf{local}, i.e., cells are adjacent only if their coordinates\footnote{For edges, faces, etc, we take the barycenter as its position in $\dR^D$.} are within $O(1)$ $L^1$ distance of each other.  
\end{example}

\subsection{Routing}

\begin{definition}(Routing)
    Let $\sG=(\sE,\sV)$ be a finite graph on $n$ vertices. Let $\sP$ be a set of \textbf{packets}.
    A \textbf{routing problem} is a tuple of maps $\gamma_0,\gamma_\infty:\sP \to \sV$ where $\gamma_0,\gamma_\infty$ are the \textbf{source} and \textbf{destination}, respectively. 
    The routing problem $\gamma_0\to \gamma_\infty$ has \textbf{density} $\rho$ if for any $v\in \sV$ and $t=0,\infty$,
    \begin{equation}
        |\{p\in \sP: \gamma_{t}(p)=v\}| \le \rho
    \end{equation}
    A \textbf{routing algorithm} is a map from $(\gamma_0,\gamma_\infty)$ to a collection of \textbf{routes} $\gamma_t:\sP \to \sV, t\in [0,T]$ such that $t\mapsto \gamma_t(p)$ is a path on $\sG$ and $\gamma_T(p)=\gamma_\infty(p)$ and $T$ is possibly dependent on $p$ and the routing problem.
\end{definition}

\begin{remark}
    In general, a routing algorithm could output non-simple paths $t\mapsto \gamma_t(p)$, e.g., the packet could wait at some vertex for some time, or return to the vertex after some time. 
    Fortunately, this does not occur in this paper.
\end{remark}

\begin{example}[2D Color Routing]
    \label{ex:2D-routing}
    Let $[L]^2$ be a 2D grid with axes $\hat{a}\times \hat{b}$, and $\gamma_0,\gamma_\infty:\sP \to [L]^2$ be injective maps. 
    Then there exists a coloring map $\eta:\sP \to [L]$ (dependent on $\gamma_0,\gamma_\infty$) such that if $\sP_{\eta}$ denotes a partition of $\sP$ with fixed color $\eta(p)=\eta$, then the 2D \textit{greedy} routing algorithm is edge decongested when routing packets $\sP_\eta$ for any fixed $\eta$, i.e., the routes $\gamma(p),p\in \sP_\eta$ are edge decongested.

    Specifically, the 2D greedy routing algorithm is such that it moves each packet $p$ from its source $\gamma_0(p)$ to its destination $\gamma_\infty(p)$ monotonically, first along the rows $\hat{a}$, and then along the columns $\hat{b}$, in parallel and independently of other packets.
\end{example}

\begin{proof}
    Let $r_1,...,r_L$ denote the rows and $c_1,...,c_L$ denote the columns of the 2D grid. 
    Consider a (multi-edge) bipartite graph with vertices consisting of rows and columns, and edges $r_i c_j$ if there exists a packet $p\in \sP$ with source in row $r_i$ and destination in column $c_j$.
    Note we allow for multiple edges with the same endpoints $r_i, c_j$ if there exist multiple packets $p$ from row $r_i$ to column $c_j$.
    Since $\gamma_0,\gamma_\infty$ are injective, we see that the multi-edge graph has a max degree of $L$.
    By the edge coloring theorem of bipartite graphs in Theorem \ref{theorem:edge-coloring}, the edges can be colored by at most $L$ colors such that edges of the same color do not share a vertex.
    Since the edges are in one-to-one correspondence with the packets, we see that $\sP$ can be partitioned into at most $L$ partitions $\sP_\eta$.
    By definition, the packets $p\in \sP$ start from distinct rows and end up in distinct columns.
    Therefore, in the first stage of the 2D greedy routing algorithm of packets $\sP_\eta$, all row edges are traversed at most once and no column edges are traversed yet.
    In the second state, all column edges are traversed at most once, while no row edges are traversed.
    Hence, the routing are edge decongested for packets $\sP_\eta$ for any $\eta$.
\end{proof}

Despite being motivated by the 2D greedy routing algorithm, the proof proves a somewhat stronger statement. Specifically, 
\begin{lemma}[2D Color Routing -- Reformulated]
    \label{lem:2D-routing-reformulated}
    Consider the setup in Lemma \ref{ex:2D-routing}. Then the rows and columns $\row \gamma_0, \col \gamma_{\infty}: \sP_{\eta} \to [L]$ are injective.
    In particular, if 
    \begin{align}
        \Row \gamma_0(p) &=  [L]\times \row \gamma_0(p)\\
        \Col \gamma_\infty(p) &= \col\gamma_\infty (p)\times [L]\\
        \AB(\gamma_0,\gamma_\infty)(p) &= \Row \gamma_0(p) \cup \Col \gamma_\infty(p)
    \end{align}
    Then the collection of $\AB(\gamma_0,\gamma_\infty)(p),p\in \sP_{\eta}$ (regarded as subgraphs) are edge decongested in $[L]^2$.
\end{lemma}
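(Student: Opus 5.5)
We plan to reuse the bipartite edge-coloring argument from the proof of Example \ref{ex:2D-routing} and then read off the edge decongestion of the sets $\AB(\gamma_0,\gamma_\infty)(p)$ directly from the structure of the grid.

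\emph{Step 1: injectivity.} By construction, $\sP_\eta$ is a color class of a proper edge coloring of the bipartite multigraph whose vertices are the rows $r_i$ and columns $c_j$, and which has one edge $r_i c_j$ for each packet $p$ with source in row $r_i$ and destination in column $c_j$. Two packets $p\ne p'\in \sP_\eta$ therefore correspond to edges sharing no endpoint. In particular they do not share a row vertex, so $\row\gamma_0(p)\ne\row\gamma_0(p')$. Likewise they do not share a column vertex, so $\col\gamma_\infty(p)\ne \col\gamma_\infty(p')$. This shows that both maps $\sP_\eta\to[L]$ are injective.

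\emph{Step 2: decongestion.} Every edge of the grid $[L]^2$ is either a row edge, joining two vertices with the same row coordinate, or a column edge, joining two vertices with the same column coordinate. Consider the subgraph $\Row\gamma_0(p)=[L]\times\row\gamma_0(p)$, the full line at the fixed row $\row\gamma_0(p)$. It contains only row edges, all of them lying in that single row. Similarly, $\Col\gamma_\infty(p)$ contains only column edges, all lying in the single column $\col\gamma_\infty(p)$.

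Now fix a row edge $e$ lying in row $r$. It belongs to $\AB(\gamma_0,\gamma_\infty)(p)$ only if $\row\gamma_0(p)=r$, and by Step 1 at most one packet in $\sP_\eta$ has this property. A column edge in column $c$ is handled the same way, using injectivity of $\col\gamma_\infty$. Hence every edge of $[L]^2$ lies in at most one of the subgraphs, which is edge decongestion.

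\emph{Main subtlety.} The only step requiring care is the bookkeeping of axis conventions. The statement writes $\Row = [L]\times\row$, so the row index is taken as the second coordinate. We need to check that this is consistent with the lines traversed in the greedy routing, so that the row line and the column line of a packet really are disjoint families of edges. Beyond this there is no real obstacle: since the degree bound of $L$ colors is already established, the lemma is essentially a restatement of the property that a proper edge coloring induces a matching.
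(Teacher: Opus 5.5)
Your proposal is correct and follows essentially the paper's argument, which is contained in the proof of Example~\ref{ex:2D-routing}: each color class of the proper bipartite edge coloring is a matching, so sources lie in distinct rows and destinations in distinct columns, and since every $\Row$ line uses edges of one orientation and every $\Col$ line edges of the other, each grid edge lies in at most one $\AB(\gamma_0,\gamma_\infty)(p)$. The axis-convention issue you flag is not needed for this statement: decongestion uses only that $\Row\gamma_0(p)=[L]\times\row\gamma_0(p)$ and $\Col\gamma_\infty(p)=\col\gamma_\infty(p)\times[L]$ point in perpendicular directions, together with the two injectivities. Consistency with the greedy route matters only for the subsequent remark that the routes are contained in the AB graphs.
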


\begin{remark}
    Note that the 2D routes in Example \ref{ex:2D-routing} are included in the collection of AB graphs.
    This reformulation will be the essence of the 4D embedding construction,
    while further extensions (see Theorem \ref{thm:color-route}) will be the essence of higher dimensional embeddings.
\end{remark}

\subsection{Mayer-Vietoris}

\begin{definition}
    Let $A,B$ be complexes with \textbf{possible overlap}, i.e., $\partial^A = \partial^B$ on $A_i \cap B_i$. Then the \textbf{intersection} 
    \begin{equation}
        A\cap B=\cdots A_i\cap B_i \to A_{i-1}\cap B_{i-1}\to \cdots
    \end{equation}
    is a well-defined complex with differential $\partial^{A\cap B}$ defined as the restriction of $\partial^A=\partial^B$ on $A_i\cap B_i$. Similarly, the \textbf{union}
    \begin{equation}
        A+B=\cdots \to A_i+B_i \to A_{i-1}+B_{i-1} \to \cdots
    \end{equation}
    is a well-defined complex with differential $\partial$ defined as follows: given $c\in A_i+B_i$ with representation $c=a+b$ for $a\in A_i,b\in B_i$, let $\partial c= \partial^A a+\partial^B b$, and note that $\partial$ is defined independent of the representation $c=a+b=a'+b'$.
\end{definition}

\begin{remark}
    The usual direct sum $A\oplus B$ can be regarded as the disjoint union of $A,B$.
\end{remark}

\begin{example}[Graphs]
    Let $G^\alpha:E^\alpha \to V^\alpha$ for $\alpha=A,B$ denote graph complexes such that their associated graphs $\sG^{\alpha}$ may overlap $\sG^{A}\cap \sG^{B}=(\sE^A \cap \sE^B, \sV^A\cap \sV^B)$. Then the intersection $G^A\cap G^B$ and union $G^A + G^B$ are the graph complexes associated with $\sG^A\cap \sG^B$ and $\sG^A \cup \sG^B$, respectively.
    In particular, the intersection and union of cell complexes are also a cell complexes.
\end{example}

\begin{lemma}[Mayer-Vietoris, Theorem 6.3 \cite{rotman2009introduction}]
    \label{lem:mayer-vietoris}
    Let $A,B$ be complexes with possible overlap. Then the following is a short exact sequence
    \begin{equation}
        0\to A\cap B \xrightarrow[]{\iota} A\oplus B\xrightarrow[]{\pi} A+B \to 0
    \end{equation}
    where
    \begin{equation}
        \iota(a)=(a,a), \quad \pi(a,b) =a+b
    \end{equation}
    In particular, the following is a (long) exact sequence (for all $s$)
    \begin{equation}
    \begin{tikzpicture}[baseline]
    \matrix(a)[matrix of math nodes, nodes in empty cells, nodes={minimum size=25pt},
    row sep=1.5em, column sep=1.5em,
    text height=1.25ex, text depth=0.25ex]
    {  H_s(A\cap B) & H_s(A\oplus B) & H_s(A+B)   \\
       H_{s-1}(A\cap B) &\cdots & \cdots \\};
    \path[->,font=\scriptsize]
    (a-1-1) edge node[above]{$[\iota]$} (a-1-2)
    (a-1-2) edge node[above]{$[\pi]$} (a-1-3)
    (a-1-3) edge node[below]{$\fd$} (a-2-1)
    (a-2-1) edge (a-2-2)
    (a-2-2) edge (a-2-3);
    \end{tikzpicture}
    \end{equation}
    where $[\iota],[\pi]$ denotes the induced maps of $\iota,\pi$ via passing through quotients, e.g., $[\iota][a]=[\iota(a)]$, and $\fd$ is the connecting map defined as $\fd[a+b]=[\partial^A a]$, and $H_s(A\oplus B)\cong H_s(A) \oplus H_s (B)$
\end{lemma}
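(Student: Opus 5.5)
This is the standard Mayer--Vietoris lemma (cf. \cite{rotman2009introduction}); the plan is to check that the displayed sequence is a short exact sequence of chain complexes and then apply the zig-zag (snake) lemma, identifying the connecting map explicitly.

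First, $\iota$ and $\pi$ are chain maps. For $c\in A_i\cap B_i$ we have $\partial^{A\oplus B}\iota(c)=(\partial^A c,\partial^B c)=(\partial^{A\cap B}c,\partial^{A\cap B}c)=\iota(\partial^{A\cap B}c)$, using that $\partial^A=\partial^B$ on the overlap. For $(a,b)\in A_i\oplus B_i$ we have $\pi\partial(a,b)=\partial^A a+\partial^B b=\partial\pi(a,b)$, which is exactly the definition of the differential on $A+B$. That definition is well posed: if $a+b=a'+b'$, then $a-a'=b'-b\in A_i\cap B_i$, so $\partial^A(a-a')=\partial^B(b'-b)$.

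Second, exactness holds degreewise. The map $\iota$ is injective. The map $\pi$ is surjective by the definition of $A_i+B_i$. The composition $\pi\iota(c)=c+c=0$ over $\dF_2$. If $\pi(a,b)=0$, then $a=-b=b\in A_i\cap B_i$, so $(a,b)=\iota(a)$. Over $\dF_2$ no signs arise; over general rings one would take $\iota(c)=(c,-c)$.

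Third, apply the standard long exact sequence theorem for a short exact sequence of complexes. This yields the exact sequence $H_s(A\cap B)\to H_s(A)\oplus H_s(B)\to H_s(A+B)\xrightarrow{\fd}H_{s-1}(A\cap B)\to\cdots$, with $H_s(A\oplus B)\cong H_s(A)\oplus H_s(B)$ because homology commutes with direct sums.

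The only step needing care is the description of $\fd$, obtained by chasing the diagram. Take a cycle $c=a+b$ in $A+B$ and lift it to $(a,b)$. Apply the differential to get $(\partial^A a,\partial^B b)$. This lies in $\ker\pi$, so $\partial^A a=\partial^B b$ lies in $A_{s-1}\cap B_{s-1}$, and it equals $\iota(\partial^A a)$. Hence $\fd[a+b]=[\partial^A a]$. Independence of the chosen lift and representative is the usual snake-lemma verification: two lifts differ by $\iota(x)$, which changes $\partial^A a$ by the boundary $\partial x$. This completes the plan; there is no real obstacle beyond bookkeeping.
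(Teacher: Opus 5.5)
Your proposal is correct and follows the standard route the paper relies on. The paper gives no proof of its own: it cites Rotman and only remarks that a mapping-cone argument would also work. Your argument checks that $0\to A\cap B\to A\oplus B\to A+B\to 0$ is a short exact sequence of chain complexes, invokes the long exact homology sequence, and computes the connecting map $[a+b]\mapsto[\partial^A a]$ by the usual diagram chase, including well-definedness of the differential on $A+B$ and independence of the lift, and all of this is accurate.
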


\begin{remark}
    One can also prove the previous lemma using mapping cones, since $A+B$ corresponds to \textit{gluing} $A$ to $B$ along their intersection $A\cap B$. However, the notation is much more tedious and asymmetric (e.g., $B$ to $A$ is also a valid choice).
\end{remark}

\begin{example}[Contracting Cycles on Cylinders]
    \label{ex:gauging-cycles}
    Let $G=E\to V$ be the graph complex of a triangle and $R$ be the repetition code on $L$ vertices so that $G\otimes R$ is a \textit{cylinder-like} cell complex. Let $C$ be the cell-complex of the triangle (with a single face generating the boundary triangle) situated at the top of the cylinder, i.e.,
    \begin{equation}
        (G\otimes R) \cap C=G\otimes |0\ket \cong G
    \end{equation}
     Note that $H_1(G\otimes R) \cong \dF_2$ and $H_1 (C) \cong 0$. By Lemma \ref{lem:mayer-vietoris}, $H_1(G\otimes R+C)=0$ and thus the nontrivial cycle in $C$ is now \textit{contractible} due to the additional $T$.
\end{example}

The previous example is formalized as follows.
\begin{lemma}[Contracting Cycles on Cylinders]
    \label{lem:contracting-cycles}
    Let $G=E\to V$ be a connected graph complex, and $R$ be the repetition code on $L$ vertices so that $G\otimes R$ is a (cylinder-like) cell complex. 
    Let $C$ be a 3-term cell complex (with faces, edges and vertices) such that $G$ is a subgraph of $C$ and $H_1(C)=0$. 
    Then 
    \begin{equation}
        H_1(G\otimes R +C\otimes |i\ket) = 0
    \end{equation}
    where $|i\ket$ is a 0-cell of $R$. Moreover, if $H_2(C) =0$, then $H_2(G\otimes R+C\otimes |i\ket)=0$
\end{lemma}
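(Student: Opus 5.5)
The plan is to apply the Mayer--Vietoris sequence of Lemma \ref{lem:mayer-vietoris} with $A = G\otimes R$ and $B = C\otimes|i\ket$.

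First, we identify the intersection $A\cap B$. The cells of $G\otimes R$ lying at height $|i\ket$ are exactly the cells $e\otimes|i\ket$ and $v\otimes|i\ket$, with $e\in E$ and $v\in V$. Since $G$ is a subgraph of $C$, all of these also lie in $C\otimes|i\ket$. No other cells are shared: the cells $v\otimes|i^\pm\ket$ of $A$ are not of the form $c\otimes|i\ket$, and $A$ has no faces at height $i$, so $A_2\cap B_2=0$. Hence $A\cap B = G\otimes|i\ket\cong G$, as in Example \ref{ex:gauging-cycles}. The differentials agree on the overlap, because $\partial(e\otimes|i\ket)=\partial e\otimes|i\ket$ in both complexes.

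Next, we record the homologies we need.
\begin{itemize}
\item Because $G$ is connected, $H_0(G)\cong\dF_2$.
\item By the K\"unneth formula over $\dF_2$, $H_\ast(G\otimes R)\cong H_\ast(G)\otimes H_\ast(R)$. Since $H_\ast(R)\cong\dF_2$ is concentrated in degree $0$, this gives $H_1(A)\cong H_1(G)$ and $H_0(A)\cong\dF_2$. The map $[\iota]: H_1(G)\to H_1(A)$ induced by $G\otimes|i\ket\hookrightarrow G\otimes R$ is an isomorphism: a cycle at height $i$ is homologous to itself at any other height, because the faces $e\otimes|j^+\ket$ realise the homotopy. Alternatively, $G\otimes R$ deformation retracts onto $G\otimes|i\ket$.
\item $H_2(A)=0$, since $A_2 = E\otimes R_1$ and $\ker(\partial^R)=0$ on $R_1$.
\item $H_1(B)=H_1(C)=0$.
\item $H_0(B)$ injects the class of any vertex of $G$ nontrivially, since a single vertex is never a boundary.
\end{itemize}

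Now we read off the long exact sequence
\[
H_1(G)\xrightarrow{[\iota]} H_1(A)\oplus H_1(B)\xrightarrow{[\pi]} H_1(A+B)\xrightarrow{\fd} H_0(G)\xrightarrow{[\iota]} H_0(A)\oplus H_0(B).
\]
The first map is surjective: it is an isomorphism onto $H_1(A)$, and $H_1(B)=0$. Hence $[\pi]=0$, and $\fd$ is injective.

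The last map sends the generator of $H_0(G)\cong\dF_2$ to (generator, $[v]$), which is nonzero. So this map is injective, and therefore $\fd=0$. Combined with the injectivity of $\fd$, this gives $H_1(A+B)=0$.

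For $H_2$, we use the segment
\[
H_2(G)\to H_2(A)\oplus H_2(B)\to H_2(A+B)\to H_1(G)\xrightarrow{[\iota]} H_1(A)\oplus H_1(B).
\]
Here $H_2(G)=0$ because $G$ is a graph, $H_2(A)=0$ as noted, and $H_2(B)=H_2(C)=0$ by the extra hypothesis. The last map is injective, being an isomorphism onto the first factor. So $H_2(A+B)$ injects into $H_1(G)$ with image equal to $\ker[\iota]=0$, and hence $H_2(A+B)=0$.

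The main point requiring care is the claim that $H_1(G)\to H_1(G\otimes R)$ is an isomorphism, including that it is surjective. I would verify this directly, without invoking K\"unneth. Any $1$-cycle in $G\otimes R$ has the form $\sum_j \ell_j\otimes|j\ket + \sum_j u_j\otimes|j^+\ket$ with $\ell_j\in E$ and $u_j\in V$. Sweeping this cycle to height $i$ by adding boundaries $\partial(e\otimes|j^+\ket)$ removes the vertical parts, leaving a cycle of the form $\ell\otimes|i\ket$. For injectivity, suppose $\ell\otimes|i\ket=\partial\sum_j f_j\otimes|j^+\ket$. Comparing vertical components forces $\partial f_j=0$ for each $j$, and then the height telescoping gives $\ell=0$.
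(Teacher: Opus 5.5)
Your proposal is correct and follows essentially the same route as the paper: Mayer--Vietoris with $A=G\otimes R$ and $B=C\otimes|i\rangle$, using that $H_1(A\cap B)\to H_1(A)$ is an isomorphism (and $H_1(B)=0$) to force $H_1(A)\oplus H_1(B)\to H_1(A+B)$ to vanish, and then showing that the connecting map into $H_0(A\cap B)$ is zero. The differences are cosmetic: you kill the connecting map via injectivity of $H_0(A\cap B)\to H_0(A)\oplus H_0(B)$ rather than the paper's direct parity argument (an even set of vertices in the connected graph $A\cap B$ is a boundary), and you write out the $H_2$ case and the isomorphism $H_1(G)\cong H_1(G\otimes R)$, which the paper only asserts.
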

\begin{proof}
    Assign $A = G \otimes R$ and $B = C\otimes |i\ket$ in Lemma \ref{lem:mayer-vietoris}. Note that 
    \begin{align}
        H_1(A\cap B) &= H_1(G\otimes |i\ket) \cong H_1(G) \\
        H_1(A\oplus B) &\cong H_1(A) \cong H_1(G)
    \end{align}
    Hence, $[\iota]$ acts as the identity from $H_1(A\cap B) \to H_1(A\oplus B)$ so that $\ker [\pi ] =\im [\iota] = H_1(A\oplus B)$. 
    Hence, $[\pi]$ must be the zero map $H_1(A\oplus B) \to H_1(A+B)$.
    Again, since the sequence is exact, we see that the connecting map $\fd$ is injective.
    However, since $H_0(A\cap B) \cong H_0(G)$ corresponds to arbitrary vertices in the connected components of $G$, they cannot be in the image of the differential $\partial$ and thus $\fd$ is the zero map.
    Specifically, given $a+b\in \ker \partial^{A+B}$, we see that $\partial^A a= \partial^B b\in A\cap B$.
    Since $a$ is a 1-chain (collection of paths), we see that $\partial^{A} a$ must be an even collection of vertices in $A\cap B$.
    Since $A\cap B =G\otimes |i\ket$ is connected, we see that $[\partial^A a]=0$ and thus the connecting map $\fd=0$.
    Hence, $H_1(A+B) = \ker \fd = \im [\pi] =0$.
    The case is similar for the second homology.
\end{proof}

We will also need the following lemma.
\begin{lemma}[Union]
    \label{lem:union-homology}
    Let $A,B$ be cell complexes such that $A\cap B$ is connected.
    If $A,B$ have trivial first homology, then so does the union $A+B$.
\end{lemma}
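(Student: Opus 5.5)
The plan is to apply the Mayer--Vietoris sequence of Lemma \ref{lem:mayer-vietoris} with $A$ and $B$ as given. We focus on the degree-one portion of the long exact sequence:
\begin{equation}
    H_1(A\oplus B) \xrightarrow{[\pi]} H_1(A+B) \xrightarrow{\fd} H_0(A\cap B) \xrightarrow{[\iota]} H_0(A\oplus B).
\end{equation}
By hypothesis $H_1(A)=H_1(B)=0$. Since $H_1(A\oplus B)\cong H_1(A)\oplus H_1(B)=0$, the map $[\pi]$ is zero. Exactness then gives $H_1(A+B)=\im[\pi]\oplus(\text{something detected by }\fd)$; more precisely, $\ker\fd=\im[\pi]=0$, so $\fd$ is injective. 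It therefore suffices to show that $\fd=0$.

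To show that the connecting map vanishes, we argue directly, exactly as in the proof of Lemma \ref{lem:contracting-cycles}. Take a cycle $c=a+b\in\ker\partial^{A+B}$ with $a\in A_1$ and $b\in B_1$. Then $\partial^A a=\partial^B b$, and this common chain lies in $(A\cap B)_0$, so $\fd[c]=[\partial^A a]\in H_0(A\cap B)$.

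Because $A$ is a cell complex, the map $A_1\to A_0$ is a graph complex. Every edge therefore has boundary of size two, so $\partial^A a$ is a vertex set of even cardinality. This vertex set lies in $A\cap B$, which is itself a cell complex (an intersection of cell complexes) and is connected by hypothesis. In a connected graph, any even set of vertices is the boundary of a $1$-chain: pair the vertices up and join each pair by a path inside $A\cap B$. Hence $[\partial^A a]=0$ in $H_0(A\cap B)$, and $\fd=0$.

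Combining $\fd=0$ with the injectivity of $\fd$ gives $H_1(A+B)=0$.

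The only delicate point is the well-definedness of $\fd$: the value must not depend on the chosen decomposition $c=a+b$. This is already built into Lemma \ref{lem:mayer-vietoris}, so no extra work is needed. Connectedness of $A\cap B$ is used essentially at exactly one place, namely to identify $H_0(A\cap B)$ with $\dF_2$ and to see that even-cardinality vertex sets are trivial in it.
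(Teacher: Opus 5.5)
Your proof is correct and follows the paper's argument essentially verbatim. Mayer--Vietoris with $H_1(A\oplus B)=0$ makes the connecting map injective, and that map vanishes because $\partial^A a$ is an even vertex set in the connected graph $A\cap B$, hence trivial in $H_0(A\cap B)$; your passing phrase about a direct-sum decomposition of $H_1(A+B)$ is loose but harmless, since you immediately replace it with the correct statement that the kernel of the connecting map equals $\im[\pi]=0$.
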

\begin{proof}
    By Lemma \ref{lem:mayer-vietoris}, it's sufficient to show that $\fd=0$ for $s=1$.
    Indeed, if so, then by the long exact sequence, $H_1(A+B)=\ker \fd =\im [\pi] =0$ since $H_1(A\oplus B)=0$.
    Note that given 1-chain $a+b\in \ker \partial^{A+B}$, we see that $\partial^A a=\partial^B b \in A\cap B$.
    Since $A$ is a cell complex, we see that $\partial^A a$ must be even in $A\cap B$.
    Since $A\cap B$ is connected, we see that $[\partial^A a]=0$ and thus the connecting map $\fd=0$ so that the statement follows.
\end{proof}

\subsection{Energy Barrier}
\label{sec:energy}
Let $A=X \to Q \to Z$ be a CSS code with the convention that $X,Q,Z$ denote $X$-checks, qubits, and $Z$-check, respectively, and weights
\begin{equation}
    X \xrightleftharpoons[\qubit_X]{\weight_X} Q \xrightleftharpoons[\weight_Z]{\qubit_Z} Z
\end{equation}
\begin{definition}[Pauli Path]
    An ($X$-type) $c$-\textbf{continuous} with \textbf{destination} $e\in Q$ is a path $\gamma:[0,T]\to Q$ such that
    \begin{align}
        \gamma(0)=0, \quad \gamma(T)=e \\
        |\gamma(t) -\gamma(t-1)| \le c, \quad \forall t\in [T]
    \end{align}
\end{definition}
\begin{remark}
    Note that the conventional Pauli path is such that the step size $|\gamma(t)-\gamma(t-1)|\le 1$, while in our case, we allow a step size of constant size $c$.
    In particular, if $c\ge \weight_X$, then it allows us to derive the exact invariance in Lemma \ref{lem:stabilizer-invariance}, instead of invariance up to $\weight_X \qubit_Z$. 
    Also see Remark \ref{rem:step-size}.
\end{remark}
\begin{definition}
    The ($X$-type) \textbf{energy barrier} $\Delta=\Delta_c$ of an ($X$-type) $c$-continuous Pauli path $\gamma$ is defined as
    \begin{equation}
        \Delta(\gamma)=\max_{t} |\partial\gamma(t)|
    \end{equation}
    The ($X$-type) $c$-\textbf{energy barrier} $\Delta_c$ of $e\in Q$ is defined as
    \begin{equation}
        \Delta_c(e) = \min_{\gamma:0 \to e} \Delta(\gamma)
    \end{equation}
    where the minimum is over all $c$-continuous Pauli paths $\gamma$ with destination $e$. The ($X$-type) $c$-\textbf{energy barrier} of code $A$ is then defined as
    \begin{equation}
        \Delta_c(A) = \min_{\ell\in \ker \partial^{A} \backslash \im \partial^A} \Delta_c(\ell)
    \end{equation}
\end{definition}

\begin{remark}[Relation between Step Size]
    \label{rem:step-size}
    Note that $\Delta_{c'} \le \Delta_{c}$ if $c' \ge c$. Conversely, $\Delta_1\le \Delta_c + c\qubit_Z$, since for any $c$-continuous path $\gamma$, we can always construct a corresponding $1$-continuous path $\gamma_1$ by \textit{filling} in the symmetric difference $\gamma(t-1)+\gamma(t)$ one step at a time between time steps $t-1$ and $t$.
\end{remark}


\begin{lemma}[Sub-additivity]
    \label{lem:sub-additivity}
    Let $e,e'\in Q$. Then
    \begin{equation}
        \Delta_c(e+e') \le \Delta_c(e) +\Delta_c(e')
    \end{equation}
\end{lemma}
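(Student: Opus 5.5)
The plan is to concatenate two optimal paths. Fix $c$-continuous Pauli paths $\gamma:[0,T]\to Q$ with destination $e$ and $\gamma':[0,T']\to Q$ with destination $e'$, chosen so that $\Delta(\gamma)=\Delta_c(e)$ and $\Delta(\gamma')=\Delta_c(e')$. Such minimizers exist because the set of attainable values $\Delta(\gamma)$ is a set of nonnegative integers. If no path to $e$ or $e'$ exists, that barrier is $+\infty$ and the inequality is trivial.

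Define $\sigma:[0,T+T']\to Q$ by $\sigma(t)=\gamma(t)$ for $t\le T$, and $\sigma(t)=e+\gamma'(t-T)$ for $t\ge T$. This is consistent at $t=T$ because $\gamma'(0)=0$. We have $\sigma(0)=0$ and $\sigma(T+T')=e+e'$. Each step of $\sigma$ is either a step of $\gamma$ or satisfies
\begin{equation}
\sigma(t)-\sigma(t-1)=\gamma'(t-T)-\gamma'(t-T-1),
\end{equation}
so every step has weight at most $c$. Hence $\sigma$ is a $c$-continuous Pauli path with destination $e+e'$.

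Next, bound the syndrome weight using linearity of $\partial$ and the triangle inequality for Hamming weight. For $t\le T$ we have $|\partial\sigma(t)|=|\partial\gamma(t)|\le \Delta_c(e)$. For $t\ge T$ we have
\begin{equation}
|\partial\sigma(t)|\le |\partial e|+|\partial\gamma'(t-T)|\le \Delta_c(e)+\Delta_c(e').
\end{equation}
This uses $|\partial e|=|\partial\gamma(T)|\le\Delta(\gamma)=\Delta_c(e)$. Taking the maximum over $t$ gives $\Delta(\sigma)\le\Delta_c(e)+\Delta_c(e')$, and minimizing over paths to $e+e'$ yields the claim.

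No step is a real obstacle. The only point needing care is to bound the constant offset $\partial e$ by the energy of the first path's endpoint, rather than bounding it separately.
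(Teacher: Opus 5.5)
Your proposal is correct and follows essentially the same argument as the paper: concatenate minimizing paths $\gamma\to\gamma'$ and bound the barrier by $\max(\Delta(\gamma),|\partial\gamma(T)|+\Delta(\gamma'))\le\Delta_c(e)+\Delta_c(e')$. Your added remarks that the minimum is attained (the values are nonnegative integers) and that the consecutive-step weights are preserved because $\gamma'(0)=0$ are small points the paper leaves implicit.
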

\begin{proof}
    Let $\gamma,\gamma'$ be Pauli paths with destination $e,e'$, respectively, such that they obtain the energy barriers, i.e., $\Delta_c(\gamma)=\Delta_c(e)$ and $\Delta_c(\gamma')=\Delta_c(e')$.
    Then define the \textit{concatenation} $\gamma\to \gamma'$ as the Pauli path with destination $e+e'$ which is equal to
    \begin{equation}
        \label{eq:concatenated-path}
        \begin{cases}
            \gamma(t) & t\le T\\
            \gamma(T)+\gamma'(t-T) & T\le t\le T+T'
        \end{cases}
    \end{equation}
    Then note that
    \begin{align}
        \Delta_c(\gamma \to \gamma')&\le \max(\Delta_c(\gamma),|\partial\gamma(T)|+\Delta_c(\gamma'))\\
        &\le \Delta_c(\gamma)+\Delta_c(\gamma')
    \end{align}
    Hence, the statement follows.
\end{proof}

\begin{lemma}[Stabilizer Invariance]
    \label{lem:stabilizer-invariance}
    If $c\ge \weight_X$, then $\Delta_c(\partial s)=0$ for any $s\in X$. In particular,
    \begin{equation}
        \Delta_c(e) = \Delta_c(e+\partial s)
    \end{equation}
\end{lemma}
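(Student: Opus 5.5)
The plan is to exhibit an explicit $c$-continuous Pauli path from $0$ to $\partial s$ whose syndrome vanishes at every time step, and then use sub-additivity (Lemma \ref{lem:sub-additivity}) to transfer this to arbitrary $e$.

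\textbf{Path to a stabilizer.} Fix $s\in X$, and by linearity first take $s$ to be a single $X$-check cell. Then $|\partial s|\le \weight_X\le c$. Define the path $\gamma:[0,1]\to Q$ by $\gamma(0)=0$ and $\gamma(1)=\partial s$. The single step has size $|\gamma(1)-\gamma(0)|=|\partial s|\le c$, so $\gamma$ is $c$-continuous. Its syndromes are $|\partial\gamma(0)|=0$ and $|\partial\gamma(1)|=|\partial\partial s|=0$, because $\partial\partial=0$. Hence $\Delta(\gamma)=0$, and therefore $\Delta_c(\partial s)=0$.

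For a general $s\in X$, write $s=\sum_j s_j$ as a sum of check cells. Concatenate the one-step paths for the $s_j$ exactly as in the proof of Lemma \ref{lem:sub-additivity}. Every intermediate point is $\partial(\sum_{j\le m}s_j)$, which has zero syndrome, and each step is a single $\partial s_j$ of size at most $c$. So again $\Delta_c(\partial s)=0$. Alternatively, one can apply Lemma \ref{lem:sub-additivity} repeatedly: $\Delta_c(\partial s)\le\sum_j\Delta_c(\partial s_j)=0$.

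\textbf{Invariance.} By Lemma \ref{lem:sub-additivity},
\begin{equation}
\Delta_c(e+\partial s)\le \Delta_c(e)+\Delta_c(\partial s)=\Delta_c(e).
\end{equation}
For the reverse inequality, apply the same bound to $e+\partial s$ and $\partial s$, using that addition is over $\dF_2$:
\begin{equation}
\Delta_c(e)=\Delta_c\big((e+\partial s)+\partial s\big)\le \Delta_c(e+\partial s)+\Delta_c(\partial s)=\Delta_c(e+\partial s).
\end{equation}
The two inequalities together give equality.

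There is no genuine obstacle. The only point needing care is confirming that the step-size condition $c\ge\weight_X$ is exactly what lets $\partial s$ be added in a single step. With the conventional step size $c=1$, intermediate partial supports of $\partial s$ would create syndrome up to $\weight_X\qubit_Z$, which is precisely the remark preceding the lemma.
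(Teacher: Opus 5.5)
Your proof is correct and follows the paper's argument. The paper likewise orders the check cells of $s$ as $x_1,\ldots,x_T$ and uses the path $\gamma(t)=\sum_{i\le t}\partial x_i$, whose steps have size at most $\weight_X\le c$ and whose points all have zero syndrome, then applies sub-additivity in both directions to obtain $\Delta_c(e)=\Delta_c(e+\partial s)$.
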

\begin{remark}
    Since the $c\ge \weight_X$-energy barrier is invariant under $\im \partial$, we can write $\Delta_c([e])$ for any $[e]\in Q/\im \partial$. In particular, we can write $\Delta_c([\ell])$ for $[\ell]\in H_1(A)$.
\end{remark}
\begin{proof}
    Wlog, order the $x$ cells in $s$ arbitrarily as $x_1,..,x_T$ and define the Pauli path with destination $\partial s$ as
    \begin{equation}
        \gamma(t) = \sum_{i\le t} \partial x_i
    \end{equation}
    Since $|\partial x|\le \weight_X$ for all $x$, we see that $\gamma(t)$ is well-defined. Hence, it's clear that $\Delta(\gamma)=0$ and thus $\Delta(\partial s)=0$.
    By sub-additivity, we have
    \begin{equation}
        \Delta(e)\le \Delta(e+\partial s) +\Delta(\partial s)=\Delta (e+\partial s)
    \end{equation}
    And similarly,
    \begin{equation}
        \Delta(e+\partial s) \le \Delta(e)+\Delta(\partial s) =\Delta(e)
    \end{equation}
    Hence, the statement follows.
\end{proof}
\section{4D Embedding}
\label{sec:4Dembed}

In this section, we construct an explicit embedding in 4D hypercubes with axes $\hat{x} \times \hat{a}\times \hat{b} \times \hat{z}$.
Let $A=X\to Q \to Z$ be a CSS code written in the form of a complex, with the $X$, $Q$, and $Z$ modules representing respectively the $X$-type Paulis, qubits and $Z$-type Paulis. We assume these spaces have a selected basis $\sX,\sQ,\sZ$; and the max $X,Z$-check weight is $\weight_X,\weight_Z$, and the max $X,Z$-qubit degree is $\qubit_X,\qubit_Z$, i.e.,
\begin{equation}
X \xrightleftharpoons[\qubit_X]{\weight_X} Q \xrightleftharpoons[\weight_Z]{\qubit_Z} Z
\end{equation}

If $A$ has $|\sQ|= L^{2}$ qubits, we identify $\sQ  = [L]^{2}$ in an arbitrary manner; this gives a natural embedding of the qubits on the 2D grid $[L]^2$ in a fixed manner, with axes $\hat{a}\times \smash{\hat{b}}$.
We also define a total ordering on the 2D grid for the qubits, based on the lexicographic order, i.e., if distinct qubits $q_{i}=(a_{i},b_{i}),i=1,2$ are given, then $q_1 \prec q_2$ if $a_1 < a_2$ or if $a_1=a_2,b_1<b_2$.


\subsection{Check Layers}
\label{sec:check-layer-4D}

Since the qubits are arranged on $[L]^2$ in an arbitrary manner, the checks of $A$ may interact with qubits that are far apart. 
As discussed in Section \ref{sec:overview}, the intuition is thus to partition the $|\sX|\le L^2$ checks into $O(\poly(\weight_X,\qubit_X))L$ partitions, each of which corresponding to a distinct coordinate in the $\hat{x}$ axis, and replace the long-range interactions with short range 2D routes, i.e., the AB routes in Lemma \ref{lem:2D-routing-reformulated}.
Each of these routes will be associated with an $x$-check.
Later, tensoring these strings with a repetition code along $\hat{z}$ will give us the analogues of the $X$-check layers from \cite{williamson2023layer}.
Further modifications of the $x$-check layers, similar to Example \ref{ex:gauging-cycles}, will be necessary so that they do not contain internal logicals (nontrivial cycles) as suggested by the framework in \cite{yuan2025unified} and that the layers still remain embedded in 4D.

The case is similar for $z$-checks with $\hat{x} \lr \hat{z}$ switched, and note that the $X$- and $Z$-type check layers are constructed independently of each other.

\subsubsection{Check Routing}

Despite being motivated by the 2D greedy routing algorithm in Lemma \ref{lem:2D-routing-reformulated}, an arbitrary check $x \in \sX$ can have weight $ > 2$ and thus a more general scheme must be considered. Specifically, 
\begin{definition}[AB Graph]
    \label{def:AB-graph}
    For every qubit $q=(a,b)$, let the \textbf{AB graph} $\AB(q)$ denote the union of the row and column that passes through $q$, i.e.,
    \begin{equation}
        \AB(q) =(a\times [L]) \cup  ([L]\times b)
    \end{equation}
    For every $x$ check, let the \textbf{AB graph} $\AB(x)$ be the union of all $\AB(q),q\sim x$ so that $\AB(x)$ has $O(\weight_X L)$ vertices.
    We will also use $\AB(q), \AB(x)$ to refer to the subgraph that includes the horizontal and vertical edges induced from the grid. As a result, we will also frequently use $\AB$ as a $1$-dimensional complex.
\end{definition}

\begin{figure}[ht]
\centering
\includegraphics[width=0.7\columnwidth]{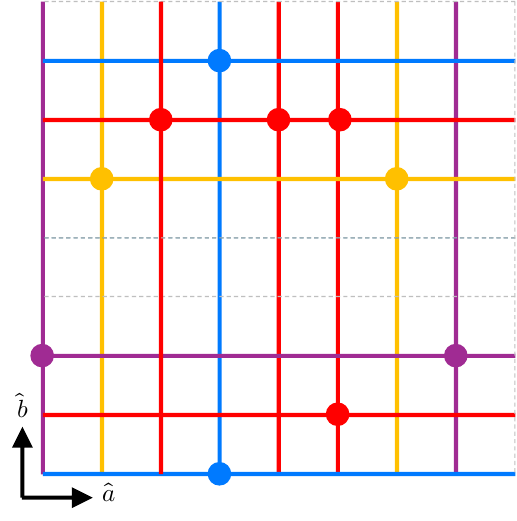}
\caption{Check Routing. Each color denotes the AB route $\AB(x)$ of a distinct $x$ check in a partition $\sX_{\eta}$ of fixed $\eta$ and qubits $q\sim x$ are denoted by dots. 
}
\label{fig:check-routing}
\end{figure}

\begin{theorem}[Line Coloring, Fig. \ref{fig:check-routing}]
    \label{thm:line-coloring-4D}
    There exists a \textbf{(line) coloring} function $\eta=\eta_X:\sX\to [\chi_X]$ where\footnote{Note that it's always possible to boost $\chi_X=\Theta(\weight_X \qubit_X L)$ if necessary.} $\chi_X =O(\weight_X\qubit_X L)$ such that if $\sX_{\eta}$ denotes a partition with fixed color $\eta(x)=\eta$, then the collection of AB routes $\AB(x),x\in \sX_{\eta}$ is edge decongested in $[L]^2$ for every partition $\sX_\eta$.
\end{theorem}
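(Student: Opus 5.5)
The plan is to reduce the statement to a single vertex coloring of a conflict graph on $\sX$, in the spirit of Lemma \ref{lem:2D-routing-reformulated}, but with the conflict captured row by row and column by column. Since $\AB(x)=\bigcup_{q\sim x}\big((a_q\times[L])\cup([L]\times b_q)\big)$, the edge set of $\AB(x)$ is exactly the union of the full rows and full columns passing through the qubits of $x$. Two AB graphs $\AB(x),\AB(x')$ share an edge only if they share a full row or a full column. Indeed, a horizontal edge lies only in its own row line, since columns contain only vertical edges, and the symmetric statement holds for vertical edges. So the plan is to define a conflict graph $\sG_X$ on vertex set $\sX$, with $x\sim x'$ whenever some row index $a$ or column index $b$ is used by a qubit of $x$ and by a qubit of $x'$. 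A proper vertex coloring of $\sG_X$ then gives exactly the required partition: within a color class, no row or column line is shared, so every grid edge lies in at most one $\AB(x)$.

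The key step is bounding the maximum degree of $\sG_X$ by $O(\weight_X\qubit_X L)$. Fix $x$. It touches at most $\weight_X$ qubits, hence at most $\weight_X$ distinct rows and at most $\weight_X$ distinct columns. Each row contains exactly $L$ qubits, and each qubit lies in the support of at most $\qubit_X$ checks. So at most $L\qubit_X$ checks use a given row, and likewise for a given column. Summing over rows and columns gives $\deg_{\sG_X}(x)\le 2\weight_X\qubit_X L$.

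Theorem \ref{theorem:vertex-coloring} then yields a coloring $\eta_X:\sX\to[\chi_X]$ with $\chi_X\le 2\weight_X\qubit_X L+1=O(\weight_X\qubit_X L)$. The edge decongestion of each $\sX_\eta$ follows from the observation in the first paragraph. Padding with unused colors gives the boosted version $\chi_X=\Theta(\weight_X\qubit_X L)$ mentioned in the footnote.

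The main point to check carefully is the edge-versus-line bookkeeping. A row line and a column line intersect only at a vertex, never along an edge, so vertex crossings between different checks of the same color are allowed; this matches the caption of Fig.~\ref{fig:Shor-check}. Edge disjointness therefore reduces exactly to row and column disjointness. Beyond that, the argument is a routine degree count, and it is cruder than the bipartite edge coloring used in Example \ref{ex:2D-routing}, which is unnecessary here since only the order of $\chi_X$ matters.
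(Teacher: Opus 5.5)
Your proposal is correct and follows the paper's proof: the same conflict graph on $\sX$ (checks adjacent when some of their qubits share a row or column), the same $O(\weight_X\qubit_X L)$ degree count, and the $\Delta+1$ vertex coloring theorem. Your remark that distinct grid lines meet only at vertices, so that sharing an edge reduces to sharing a line, makes explicit a step the paper leaves implicit.
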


\begin{proof}
    Consider an abstract graph $\sG_X = (\sX, \sE_{\sX})$ with vertices $\sX$ and edges $xx'$ if there exists $q\sim x$ and $q'\sim x'$ such that $q,q'$ share a row or column. 
    Given any $x\in \sX$, we see that there are $\le \weight_X$ qubits $q\sim x$. For each $q\sim x$, there are at most $O(L)$ many $q'$ that share a row or column with $q$. Since the $X$-qubit degree is $\le \qubit_X$, we see that there are at most $O(\weight_X\qubit_X L)$ many $x'$ adjacent to $x$ in this abstract graph, i.e., the max degree of $\sG_X$ is $O(\weight_X \qubit_X L)$.
    By Theorem \ref{theorem:vertex-coloring}, we see that the vertices of $\sG_X$ can be colored using $\chi_X = O(\weight_X \qubit_X L)$ many colors. 
    Let $\eta_X:\sX\to [\chi_X]$ denote the corresponding coloring function and $\sX_{\eta}$ denote the subset of $\sX$ with color $\eta(x)=\eta$.
\end{proof}

By the previous Theorem, for any given $x\in \sX$, let
\begin{equation}
    \eta\AB(x) = \eta(x)\times \AB(x) \subseteq \hat{x} \times \hat{a}\times \hat{b}
\end{equation}
so that $\eta\AB(x),x\in \sX$ is embedded in $[\chi_X]\times [L]^2$. The case is similar for the $Z$-checks with $\chi_Z=O(\weight_Z\qubit_Z L)$ and coloring function $\eta_Z$, so that
\begin{equation}
    \AB\eta(z) = \AB(z) \times \eta(z)\subseteq \hat{a}\times \hat{b} \times \hat{z}
\end{equation}
and that $\AB\eta(z),z\in \sZ$ is embedded in $[L]^2\times [\chi_Z]$.

\begin{remark}[Naive Application]
    Note that if we were to naively apply Lemma \ref{lem:2D-routing-reformulated} for all possible pairs of qubits $\sim x$ in some consisent manner over all $x\in \sX$, then each pair would induce a distinct color. 
    The corresponding routes would thus occupy distinct $\hat{x}$ coordinate, resulting in further difficulties when contracting cycles as shown in Section \ref{sec:contracting-cycles-4D}, and defining the defects in Eq. \eqref{eq:pZX-2-4D}-\eqref{eq:pZX-1-4D}
\end{remark}

\subsubsection{Contracting Cycles}
\label{sec:contracting-cycles-4D}

\begin{figure}[ht]
\centering

\subfloat[$\eta(x)=\eta$\label{fig:check-layer-example-4D}]{%
    \centering
    \includegraphics[width=0.9\columnwidth]{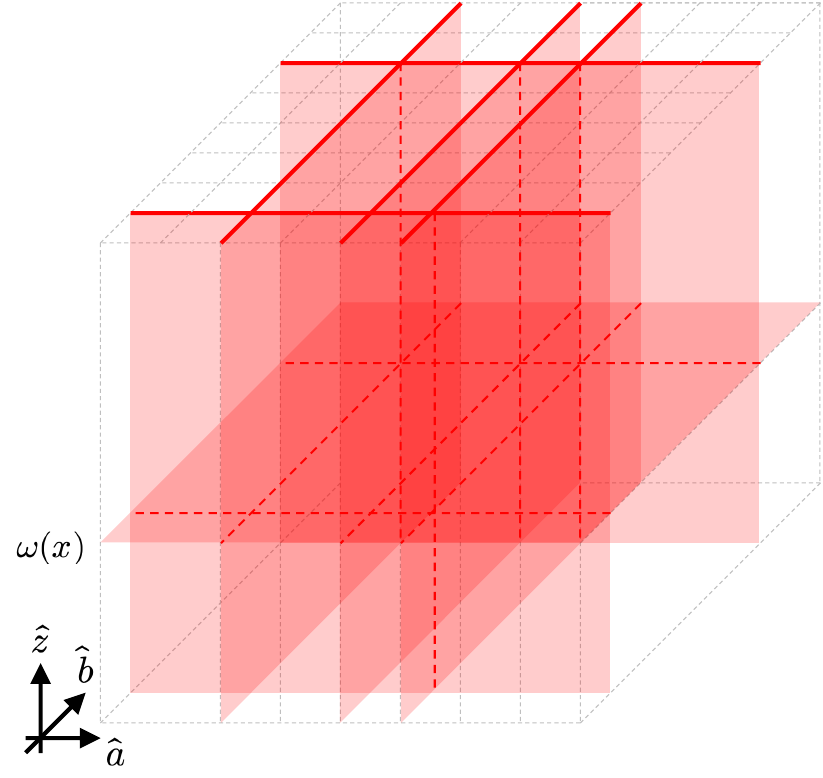}
}
\\
\subfloat[$\eta(x)=\eta$\label{fig:check-layer-congestion-4D}]{%
    \centering
    \includegraphics[width=0.9\columnwidth]{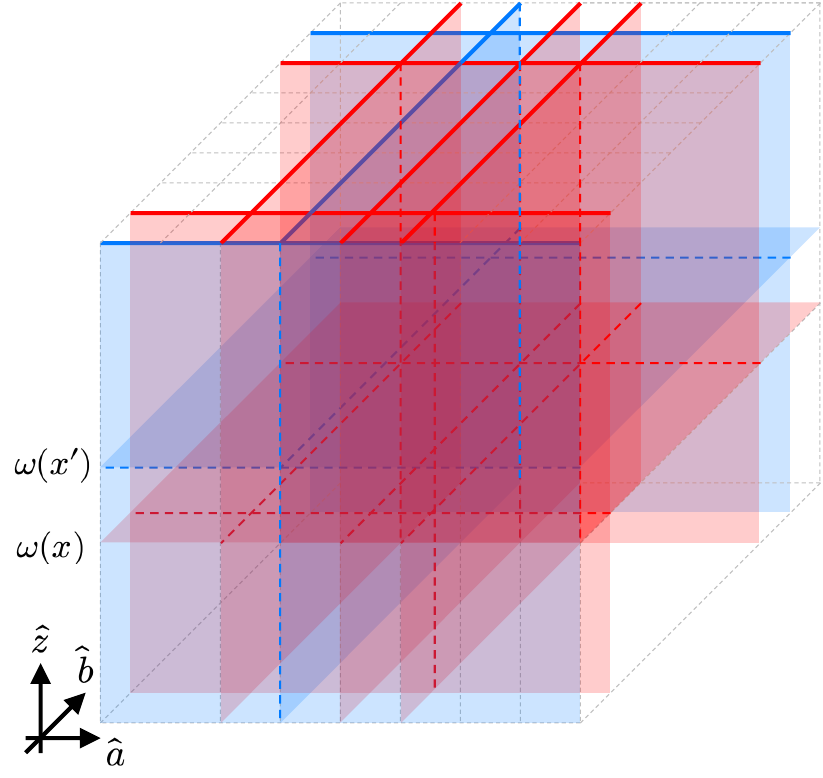}
}
\caption{Check Layer. (a) The vertical $\hat{z}$ layers denote $\eta \AB(x)\otimes R_Z$ where $\AB(x)$ is denoted by red in Fig. \ref{fig:check-routing}, while the vertical dashed lines emphasize positions of the qubits $q\sim x$. The horizontal $\hat{a}\times \hat{b}$ layer denotes $\eta \omega(x)$ as defined in Eq. \eqref{eq:contracting-layer-4D} so that union $C(x)$ has no nontrivial cycles. (b) shows that for distinct $x,x'\in \sX_{\eta}$ (e.g., red and blue checks in Fig. \ref{fig:check-routing}), the contracting layer $\eta\omega(x),\eta\omega(x')$ are distinct. Note that in this example, the blue route $\AB(x')$ does not innately have any cycles, so the contracting layer is technically unnecessary.
}
\label{fig:check-layer-4D}
\end{figure}

Note that $\AB(x)$ may have nontrivial cycles so that $\eta\AB(x)\otimes R_Z$ can have nontrivial internal logicals.
Therefore, in this section, we will attempt to make the cycles contractible by introducing additional faces along some coordinate in the $\hat{z}$ axis, similar to Example \ref{ex:gauging-cycles} and Lemma \ref{lem:contracting-cycles}.
However, we first need the following Lemma
\begin{theorem}[Plane Coloring]
    \label{thm:plane-coloring-4D}
    Let $\sX_{\eta}$ be a partition in Theorem \ref{thm:line-coloring-4D} such that AB graphs $\AB(x)$ do not share the same edge, i.e., edge decongested.
    Then $|\sX_{\eta}| \le L$ for any fixed $\eta$.
    
    In particular, there exists a \textbf{plane coloring} map $\omega=\omega_X:\sX\to [L]$ such that if $\sX_{\eta,\omega}$ is the collection of $x\in \sX_{\eta}$ with fixed \textbf{plane} color $\omega$, then $|\sX_{\eta,\omega}|\le 1$.
    The case is similar for $Z$-check with plane coloring function $\omega=\omega_Z:\sZ \to [L]$
\end{theorem}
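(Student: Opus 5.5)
Since the checks in $\sX_\eta$ have pairwise edge-disjoint AB graphs, the bound $|\sX_\eta|\le L$ should come from a pigeonhole argument on the rows of the grid. Each $x\in\sX$ acts on at least one qubit; this is the only case of interest, since a check with no support can be discarded. Choose any $q=(a,b)\sim x$. Then $\AB(x)\supseteq \AB(q)\supseteq a\times[L]$, so $\AB(x)$ contains every horizontal edge of the full line through $q$ in the $\hat b$ direction. There are exactly $L$ such lines $a\times[L]$, one for each $a\in[L]$, and each contains $L-1\ge 1$ edges when $L\ge 2$. If two distinct $x,x'\in\sX_\eta$ contained the same line $a\times[L]$, they would share an edge of $[L]^2$, contradicting the edge decongestion from Theorem \ref{thm:line-coloring-4D}. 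The map $x\mapsto a(x)$, which sends $x$ to the row index of any chosen qubit in its support, is therefore injective on $\sX_\eta$, and so $|\sX_\eta|\le L$. The same argument works using columns instead of rows.

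For the plane coloring, fix $\eta$ and let $\omega_X$ restricted to $\sX_\eta$ be any injection into $[L]$. The injection $x\mapsto a(x)$ above is a convenient explicit choice. Since the classes $\sX_\eta$ partition $\sX$, these restrictions glue into a single map $\omega_X:\sX\to[L]$, and by construction $|\sX_{\eta,\omega}|\le 1$ for every pair $(\eta,\omega)$. The $Z$-check statement follows word for word, using the coloring $\eta_Z$ and the edge decongestion of $\AB(z)$ for $z\in\sZ_\eta$.

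The only step that needs care is the degenerate case $L=1$. There a line $a\times[L]$ has no edges, so the injectivity argument fails. In that case $|\sQ|=1$, so there is only one possible nonempty support, and the bound can be checked directly or by convention: either the grid is treated as nontrivial with $L\ge 2$, or one notes that a $1$-qubit code admits at most $L=1$ distinct nontrivial checks after removing duplicates. I would state $L\ge2$ as a standing assumption and otherwise expect no real obstacle.
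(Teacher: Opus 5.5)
Your argument is correct and is essentially the paper's proof: the paper notes that each $\AB(x)$ contains at least one full row of the grid, and that edge decongestion forces the row sets of distinct checks with the same line color to be disjoint, so at most $L$ checks share a color and the plane coloring can be any injection on each color class. Your choice of one row $a(x)$ per check with $x\mapsto a(x)$ injective is the same pigeonhole argument, and your explicit handling of the degenerate case $L=1$ is a harmless addition.
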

\begin{proof}
    Let $\row \AB(x)$ denote the row indices $\AB(x)$ occupies so that $1\le |\row \AB(x)| \le \weight_X$. Since $\AB(x),x\in \sX_{\eta}$ are edge decongested, we see that $\row \AB(x),x\in \sX_{\eta}$ must all be disjoint and thus the statement follows.
\end{proof}

\begin{definition}[Dimension of 4D Grid]
    Let
    \begin{align}
        L_X &= \max(\chi_X,L) = O(\weight_X\qubit_X)L \\
        L_Z &= \max(\chi_Z,L) = O(\weight_Z\qubit_Z)L
    \end{align}
    so that the 4D grid in $\hat{x}\times \hat{a}\times \hat{b}\times \hat{z}$ is given by $[L_X]\times [L]^2\times [L_Z]$.
    Let $R_X,R_Z$ denote the repetition code on $L_X,L_Z$ vertices, respectively.
\end{definition}

\begin{definition}[Contracting Layers]
    \label{eq:contracting-layer-4D}
    For each $x$-check, define the \textbf{contracting layer} as the natural 3-term cell complex defined as
    \begin{equation}
        \eta \omega(x) = \eta(x) \times [L]^2 \times \omega(x) \subseteq \hat{x} \times \hat{a}\times \hat{b} \times \hat{z}
    \end{equation}
    so that the collection $\eta\omega(x),x\in \sX$ is embedded in $[L_X]\times [L]^{2}\times [L_Z]$ with faces decongested\footnote{But also vertex and edge}.
    The case is similar for the $z$-checks.
\end{definition}
\begin{definition}[Check layers, Fig. \ref{fig:check-layer-4D}]
    \label{def:check-layer-4D}
    For a check $x \in \sX$ we define the associated check layer $C(x)$:
    \begin{equation}
        C(x) = \eta \AB(x) \otimes R_Z +\eta \omega(x)
    \end{equation}
    Similarly, define the associated check layer $C(z)$ for $z\in \sZ$
    \begin{equation}
        C(z) = R_X \otimes \AB\eta(z) +\omega\eta(z)
    \end{equation}
\end{definition}
Then by Example \ref{ex:gauging-cycles} or Lemma \ref{lem:contracting-cycles}, we see that $H_1(C(x))=0$ and that $C(x)$ is embedded in $[L_X]\times [L]^2 \times [L_Z]$. The case is similar for $z$ with squares $\omega\eta(z)$ and check layers so that $H_1(C(z))=0$. 

\begin{remark}[Layer Congestion]
    \label{rem:layer-congestion-4D}
    Note that $\eta\AB(x),x\in \sX$ are edge decongested, but after tensoring with $R_Z$, it's possible that $\eta\AB(x)\otimes R_Z$ have $D-2=2$ edge congestion for edges along the $\hat{z}$ axis since at most 2 distinct $\eta\AB(x)$ can share the same vertex.

    Hence, we say that $C(x),x\in \sX$ has 
    \begin{itemize}
        \item 0 edge congestion for edges along the $\hat{x}$ axis
        \item 1 edge congestion for edges within the 2D qubit grid
        \item $D-2=2$ edge congestion for edges along the $\hat{z}$ axis
    \end{itemize}
    The case is similar $Z$-check layers with $\hat{z}\lr \hat{x}$ switched.
\end{remark}

\subsection{Putting Everything Together}

Unlike the check layers, the qubit layers are simply those defined as in the conventional Layer Codes \cite{williamson2023layer}, i.e., a surface code $R_X\otimes R_Z^\top$ for each $q\in \sQ$. 
By Example \ref{ex:surface-code}, the surface code is embedded in $[L_X]\times [L_Z]$ so that the collection of qubit layers $C(q) = R_X\otimes q\otimes R_Z^\top$ is embedded in $[L_X]\times [L]^2 \times [L_Z]$ with edges decongested. 
We can thus follow \cite{yuan2025unified} and define the following
\begin{align}
    \label{eq:CX}
    C^{X} &= \bigoplus_{x\in \sX} C(x)^\top\\
    C^{Q} &= \bigoplus_{q\in \sQ} C(q) \\
    \label{eq:CZ}
    C^{Z} &= \bigoplus_{z\in \sZ} C(z)
\end{align}
Note that $C^{X},C^{Q},C^{Z}$ is embedded in 4D $[L_X]\times [L]^2 \times [L_Z]$.
Hence, the output code $C$ is embedded in 4D, as long as the gluing maps and defect maps are \textit{local}, as we discuss below.

\subsubsection{Gluing Maps}
Similar to the conventional layer codes, we define the gluing maps $g^{QX},g^{ZQ}$ in \cite{yuan2025unified} as follows, which are also depicted by the vertical dashed lines in Fig. \ref{fig:check-layer-4D}.
\begin{align}
    \label{eq:gQX-4D}
    g^{QX} |x;i,q,k^\bullet\ket &=|i, q,k^{\bullet}\ket 1\{i=\eta(x),q\sim x\}\\
    \label{eq:gZQ-4D}
    g^{ZQ} |i^{\bullet},q,k\ket &=\sum_{z\sim q} |i^{\bullet},q,k;z\ket 1\{k=\eta(z)\}
\end{align}
where $i^{\bullet}$ denotes either $i$ or $i^+$ for $i\in [\chi_X]$, and similarly for $k^{\bullet}$, and thus $g^{QX},g^{ZQ}$ act \textit{locally} in 4D.

\subsubsection{Defect Map}

\begin{figure}[ht]
\centering
\subfloat[$\eta(x)\times \eta(z) = \eta_X\times \eta_Z \in \hat{x}\times \hat{z}$\label{fig:defect-pair}]{%
    \centering
    \includegraphics[width=0.7\columnwidth]{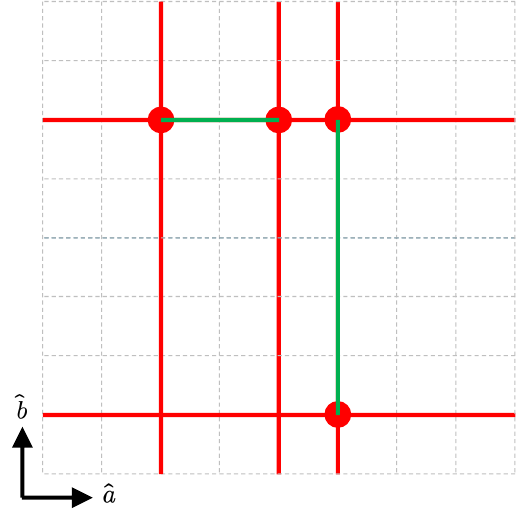}
}
\\
\subfloat[$\eta(x)\times \eta(z) = \eta_X\times \eta_Z \in \hat{x}\times \hat{z}$\label{fig:defect-congestion}]{%
    \centering
    \includegraphics[width=0.7\columnwidth]{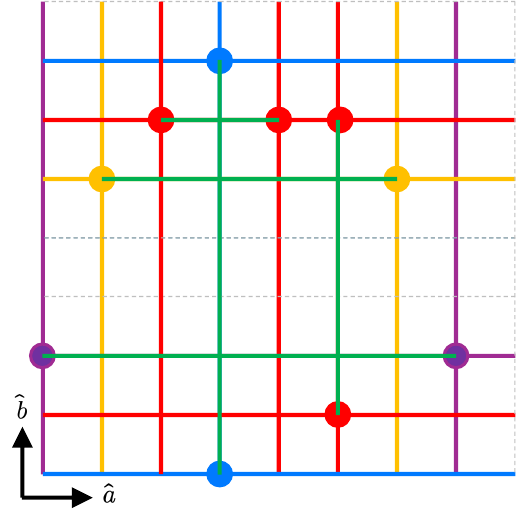}
}
\caption{Defects. (a) The red AB route denotes that of an $x$-check in Fig. \ref{fig:check-routing}. In the case where there exists a $z$-check such that $x\wedge z$ are denote by the red dots, then the green lines denote the defect paths. (b) Although defect lines are vertex disjoint for a fixed adjacent pair $x,z$, they may intersect when considering all adjacent pairs, as indicated by the green lines. However, the edge congestion is bounded by 1.
}
\label{fig:defects}
\end{figure}

Similar to the conventional layer codes, the defect maps $p^{ZX}$ will correspond to strings between pairs of qubits $q_1,q_2\in x\wedge z$ for adjacent $x,z$ checks.
However, due to the 4D hypercube structure (2D grid structure of qubits), the strings are somewhat more complicated. Fortunately, our construction of the check layers exactly accounts for this complication by utilizing the 2D greedy routing algorithm in a novel fashion, i.e., Theorem \ref{thm:line-coloring-4D}.

Given $x\sim z$, we can order the qubits in the overlap $x\wedge z=\{q_1,q_2,...,q_{2M}\}$ based on the lexicographic order. 
Let $\gamma_m(xz)$ denote the AB route in $[L]^2$ from $q_{2m-1} \to q_{2m}$, i.e., we first move in the row direction and then in the column direction. Then it's clear that $\gamma_m(xz) \in \AB(x),\AB(z)$ for all $s$.
Moreover,
\begin{lemma}[Defect Congestion, Fig. \ref{fig:defects}]
    \label{lem:defects-disjoint-4D}
    Fix a pair of adjacent $x,z$ checks. Then $\gamma_m(xz),m=1,...,M$ are (vertex) disjoint. 
    In particular, the collection of \textbf{defect paths}
    \begin{equation}
        \Gamma_m(xz)=\eta(x)\times \gamma_m(xz) \times \eta(z)
    \end{equation}
    over all pairs of adjacent $x,z$ and pairs $m$ must have edges decongested. 
\end{lemma}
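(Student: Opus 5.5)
The plan is to pin down the geometry of a single AB route and then use the lexicographic order to separate consecutive routes. After that, the line coloring of Theorem \ref{thm:line-coloring-4D} handles different check pairs.

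\textbf{Convention.} Write $q_j=(a_j,b_j)$. I take ``first along the row'' to mean moving inside the row $a_{2m-1}\times[L]$. So $\gamma_m(xz)$ goes from $(a_{2m-1},b_{2m-1})$ to the turning point $(a_{2m-1},b_{2m})$, then along the column $[L]\times b_{2m}$ to $(a_{2m},b_{2m})$. The turning point lies in $\AB(q_{2m-1})\cap\AB(q_{2m})$, which is what gives $\gamma_m(xz)\subseteq \AB(x)\cap\AB(z)$. This choice of direction matters. With the opposite convention, the column segment at $a=a_{2m}$ could span an arbitrary range of $b$ and hit $q_{2m+1}$: for example $q_1=(1,5)$, $q_2=(2,1)$, $q_3=(2,3)$.

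\textbf{Step 1: vertex disjointness for fixed $x,z$.} Take $m<m'$. Every vertex of $\gamma_m$ has first coordinate in $[a_{2m-1},a_{2m}]$, and every vertex of $\gamma_{m'}$ has first coordinate at least $a_{2m'-1}\ge a_{2m}$. So a common vertex forces $a=a_{2m}=a_{2m'-1}$. I then check that coordinate in two cases.
\begin{itemize}
\item If $a_{2m-1}<a_{2m}$, the only vertex of $\gamma_m$ with $a=a_{2m}$ is $q_{2m}$ itself.
\item If $a_{2m-1}=a_{2m}$, the vertices of $\gamma_m$ with $a=a_{2m}$ form the row segment with $b\in[b_{2m-1},b_{2m}]$.
\end{itemize}
In either case these vertices have $b\le b_{2m}$. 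On the other side, the vertices of $\gamma_{m'}$ with $a=a_{2m'-1}$ form its row segment, with $b$ between $b_{2m'-1}$ and $b_{2m'}$. If $a_{2m'}=a_{2m'-1}$, lexicographic order gives $b_{2m'}>b_{2m'-1}$, so all these vertices have $b\ge b_{2m'-1}$. Because $q_{2m}\prec q_{2m'-1}$ and they share the first coordinate, $b_{2m'-1}>b_{2m}$. Hence there is no common vertex.

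\textbf{Step 2: edge decongestion over all pairs.} Suppose $\Gamma_m(xz)$ and $\Gamma_{m'}(x'z')$ share an edge. Since that edge lies in the slice $\eta(x)\times[L]^2\times\eta(z)$, we get $\eta(x)=\eta(x')$ and $\eta(z)=\eta(z')$. Its projection to $[L]^2$ lies in $\gamma_m(xz)\subseteq\AB(x)$ and in $\gamma_{m'}(x'z')\subseteq\AB(x')$. Edge decongestion within $\sX_{\eta}$ from Theorem \ref{thm:line-coloring-4D} then forces $x=x'$, and the $Z$-check version forces $z=z'$. With the same pair, Step 1 forces $m=m'$.

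\textbf{Main obstacle.} The delicate part is Step 1 when several overlap qubits share a row. There the lexicographic tie-break on $b$, together with the row-first convention, is exactly what separates consecutive routes.
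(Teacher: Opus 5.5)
Your Step 2 is fine: sharing an edge forces $\eta(x)=\eta(x')$ and $\eta(z)=\eta(z')$, and the line coloring then forces $x=x'$ and $z=z'$. The problem is Step 1, and it is not just a presentational gap: under your row-first convention the vertex-disjointness claim is false.

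\textbf{Where Step 1 fails.} When you describe the vertices of $\gamma_{m'}$ on row $a_{2m'-1}$, you justify $b\ge b_{2m'-1}$ only in the sub-case $a_{2m'}=a_{2m'-1}$. If $a_{2m'}>a_{2m'-1}$, the lexicographic order places no constraint on $b_{2m'}$. The row leg of $\gamma_{m'}$ then runs from $b_{2m'-1}$ to $b_{2m'}$, possibly leftward across $q_{2m}$.

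\textbf{Counterexample.} Take $x\wedge z=\{(2,1),(2,3),(2,5),(3,1)\}$. Then $\gamma_1$ is the segment $(2,1)\to(2,3)$ along row $2$, and $\gamma_2$ runs $(2,5)\to(2,1)\to(3,1)$. They share the vertices $(2,1),(2,2),(2,3)$ and two edges, so $\Gamma_1(xz)$ and $\Gamma_2(xz)$ are not even edge-disjoint.

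This is the mirror image of your own counterexample to the column-first convention. Column-first breaks the bound $b\le b_{2m}$ on the earlier route; row-first breaks $b\ge b_{2m'-1}$ on the later route. So no L-shaped convention with lexicographic pairing makes Step 1 true. The paper's one-line proof, $b\le b_{2i}<b_{2j-1}=b$, takes its first inequality from the row-first picture and its final equality from the column-first picture, so it has exactly the same hole.

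\textbf{What can be salvaged.} The edge-decongestion conclusion survives once you change the paths. Each $\gamma_m(xz)$ lies in $\AB(x)\cap\AB(z)$, and $\sum_m\partial\gamma_m(xz)$ is exactly $x\wedge z$. So, as in Lemma \ref{lem:disjoint-paths} for the 5D case, you can delete cycles from the mod-2 sum $\sum_m\gamma_m(xz)$ to obtain a forest with the same boundary. You can then split it into $M$ edge-disjoint simple paths that still lie inside $\AB(x)\cap\AB(z)$, and your Step 2 goes through verbatim.

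Vertex-disjointness is lost, however. Later uses of it, such as the bound $|p^{ZX}|x;iqk\rangle|\le 2$ in the proof of Theorem \ref{thm:logical-4D}, would need a 5D-style degree count instead. The compatibility Lemma \ref{lem:compatible} uses only the mod-2 boundary and is unaffected.
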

\begin{proof}
    Suppose there exists $q=(a,b)\in [L]^2$ which is in $\gamma_i(xz)$ and $\gamma_{j}(xz)$ for $i<j$. 
    If $q_{m}=(a_m,b_m)$ are the qubits in $x\wedge z$, then by definition of the route,
    \begin{align}
        a_{2i-1}\le a\le a_{2i} \\
        a_{2j-1}\le a\le a_{2j}
    \end{align}
    And thus $a=a_{2i}=a_{2j-1}$ by the lexicographic order.
    By definition of $\gamma_m(xz)$ route, 
    \begin{equation}
        b\le b_{2i} < b_{2j-1} =b
    \end{equation}
    and thus we reach a contradiction.
\end{proof}

Parametrize the route $\gamma_m(xz)$ by adding a $t$-dependence, i.e., $t\mapsto \gamma_m(t;xz)$, so that $t=0,T_m\mapsto q_{2m-1}, q_{2m}$, respectively, and $T_m$ is equal to the $L^{1}$ distance between $q_{2m-1},q_{2m}$.
The case is similar for $\Gamma_m(t;xz)$.
In both $\gamma_m,\Gamma_m$, we use half-integers $t^+$ to denote the edges of the route. 
We can then define the defect map as follows
\begin{align}
    \label{eq:pZX-2-4D}
    p^{ZX}|x;iqk\ket &= \sum_{z\sim x} \sum_{m=1}^{|x\wedge z|/2} \sum_{t<T_m} 1\{iqk= \Gamma_m(t;xz)\} \nonumber \\
    &\quad\quad\quad  \times|\Gamma_{m}(t^+;xz);z\ket \\
    \label{eq:pZX-1-4D}
    p^{ZX}|x;i e k\ket &= \sum_{z\sim x}\sum_{m=1}^{|x\wedge z|/2} \sum_{t<T_m} 1\{iek \in \Gamma_m(t^+;xz)\} \nonumber\\
    &\quad\quad\quad \times |\Gamma_{m}(t+1;xz);z\ket
\end{align}
where $e$ denotes an edge (nearest-neighbors) in $[L]^2$. 
Again note that $p^{ZX}$ acts \textit{locally} in 4D, in the sense that, they map cells in $C^X$ to cells in $C^{Z}$, respectively, with the neighboring coordinates, e.g., $\Gamma_m(t^+;xz)$ is an adjacent edge of $\Gamma_m(t;xz)$.
It's then straightforward to check that
\begin{lemma}[Compatibility]
    \label{lem:compatible}
    The defect map $p^{ZX}$ in Eq. \eqref{eq:pZX-2-4D}-\eqref{eq:pZX-1-4D} are compatible with the gluing maps in Eq. \eqref{eq:gQX-4D}-\eqref{eq:gZQ-4D}, i.e.,
    \begin{equation}
        g^{ZQ} g^{QX} = p^{ZX} \partial^{X} + \partial^{Z} p^{ZX}
    \end{equation}
\end{lemma}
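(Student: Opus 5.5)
We verify the identity on each basis cell of $C^X$, viewing both sides as maps $C^X\to C^Z$ of the appropriate degree and comparing coefficients on each basis cell of $C^Z$. The only cells on which $g^{ZQ}g^{QX}$ is nonzero are vertical $\hat z$-cells $|x;\eta(x),q,k^\bullet\rangle$ of $\eta\AB(x)\otimes R_Z$ with $q\sim x$. For the faces of the contracting layer $\eta\omega(x)$ and the $\hat z$-edges away from qubit columns, the right-hand side must vanish. For the cell $|x;\eta(x),q,\eta(z)\rangle$, the composite gives $\sum_{z\sim q}|\eta(x),q,\eta(z);z\rangle$. Only $z\sim x$ with $q\in x\wedge z$ survive, since $q\sim z$ together with $q\sim x$ is exactly this condition. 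So the target is: for each adjacent pair $(x,z)$, the contribution of $(x,z)$ to the right-hand side equals $\sum_{q\in x\wedge z}$ of the point cell at $\eta(x)\times q\times\eta(z)$ in $C(z)$.

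Next we fix an adjacent pair $(x,z)$ and a pairing index $m$, and isolate the terms of $p^{ZX}$ coming from $\Gamma_m(xz)$. In the cochain convention of Eq.~\eqref{eq:CX}, $p^{ZX}\partial^X$ applied to a vertex cell $|x;iqk\rangle$ sums $p^{ZX}$ over the edges of $C(x)$ adjacent to that vertex. By Eq.~\eqref{eq:pZX-1-4D} only route edges $\Gamma_m(t^+)$ contribute, each producing the next vertex $\Gamma_m(t+1)$. Meanwhile $\partial^Z p^{ZX}$ applied to the same vertex, via Eq.~\eqref{eq:pZX-2-4D}, gives $\partial^Z$ of the route edge $\Gamma_m(t^+)$ in $C(z)$, namely $\Gamma_m(t)+\Gamma_m(t+1)$. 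Summing over $t$, the two contributions telescope along the path. The interior vertices cancel in pairs, leaving exactly the endpoints $\Gamma_m(0)$ and $\Gamma_m(T_m)$, i.e.\ the cells at $q_{2m-1}$ and $q_{2m}$. Summing over $m$ recovers all of $x\wedge z$, because $|x\wedge z|$ is even by commutativity $\partial\partial=0$ of $A$. On edge cells, the analogous bookkeeping shows that the edge contributions from $p^{ZX}\partial^X$ and $\partial^Z p^{ZX}$ cancel, so the degree-one part of the identity reads $0=0$ there. We also check that faces and $\hat z$-edges off the path receive zero from both terms. For faces this holds because $p^{ZX}$ is supported only on path cells of $C(x)$, and at a path cell only the in-plane edges of $C(x)$ along the path are sent by $p^{ZX}$.

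The bookkeeping needs two well-definedness facts. First, $\Gamma_m(xz)$ must actually be contained in both $C(x)$ and $C(z)$, so that the cells named in Eqs.~\eqref{eq:pZX-2-4D}--\eqref{eq:pZX-1-4D} exist. This holds because the AB route $q_{2m-1}\to q_{2m}$ lies in $\AB(q_{2m-1})\cup\AB(q_{2m})\subseteq\AB(x)\cap\AB(z)$ at colors $\eta(x),\eta(z)$. Second, for a fixed pair $(x,z)$, distinct $m$ must not interfere. Lemma~\ref{lem:defects-disjoint-4D} gives vertex-disjointness of the paths $\gamma_m(xz)$, so the telescoping for each $m$ is independent. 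Distinct pairs $(x,z)$ never interact at all, since they map into distinct summands $C(z)$ and come from distinct $C(x)$. Thus the verification reduces to a single path per pair.

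The main obstacle is the sign/orientation and transpose bookkeeping: $C(x)^\top$ reverses degrees, so we must confirm that $\partial^X$ on $C(x)^\top$ maps a vertex to its incident edges, and that the $\hat z$-edges $k^+$ adjacent to the point $\eta(z)$ contribute nothing extra. We would handle this by writing $\partial^{X}=\partial^\top$ explicitly, and checking that $p^{ZX}$ is nonzero only on in-plane path cells at height $k=\eta(z)$. Any vertical adjacency then maps to zero, which removes the spurious terms. Everything else is over $\mathbb{F}_2$, so signs do not arise.
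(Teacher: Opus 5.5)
Your proposal is correct and follows essentially the same route as the paper: evaluate both sides on each vertex cell $|x;iqk\rangle$ of $C(x)$, observe that $\partial^{Z}p^{ZX}$ and $p^{ZX}\partial^{X}$ cancel at interior points of each defect path $\Gamma_m(xz)$ and leave exactly its two endpoints, and match the resulting sum over $x\wedge z$ with $g^{ZQ}g^{QX}$. Two minor remarks: the check on edge cells holds trivially for degree reasons (there is nothing to cancel), and you do not need the vertex-disjointness from Lemma \ref{lem:defects-disjoint-4D}, since the computation is linear in $m$.
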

\begin{proof}
    Note that 
    \begin{align}
        \partial^{Z} p^{ZX} |x;iqk\ket &= \sum_{z\sim x} \sum_{m=1}^{|x\wedge z|/2} \sum_{t<T_m} 1\{iqk= \Gamma_m(t;xz)\} \nonumber \\
        &\times (|\Gamma_{m}(t;xz);z\ket + |\Gamma_m(t+1;xz)\ket)
    \end{align}
    And that
    \begin{align}
        p^{ZX} \partial^{X} |x;iqk\ket &= \sum_{z\sim x} \sum_{m=1}^{|x\wedge z|/2} \sum_{t<T_m} \sum_{e\sim q} 1\{iek= \Gamma_m(t^+;xz)\} \nonumber\\
        &\quad\quad \times|\Gamma_m(t+1;xz);z\ket
    \end{align}
    Note that if $0<t<T_m$ such that $iqk=\Gamma_m(t;xz)$, then there are exactly two edges $e$ adjacent to $q$ in $[L]^2$ such that $iek=\Gamma_m(s^+;xz)$ for some $s$, i.e., $iek=\Gamma_m(t^-;xz)$ and $=\Gamma_m(t^+;xz)$. Conversely, if $t=0,T_m$, there exists exactly one edge $e$ adjacent to $q$ such that $iek=\Gamma_m(s^+;xz)$ for some $s$, i.e., $iek=\Gamma_m(t^+;xz)$ if $t=0$ and $=\Gamma_m(t^-;xz)$ if $t=T_m$. Hence, $p^{ZX}\partial^{X} + \partial^{Z} p^{ZX}$ maps
    \begin{align}
        |x;iqk\ket &\mapsto \sum_{z\sim x} \sum_{m} 1\{iqk\in \partial\Gamma_m(xz)\} |iqk;z\ket \\
        &= \sum_{z\sim x} 1\{q \in x\wedge z,i=\eta(x),k=\eta(z)\} \\ 
        &\quad\quad  \times |iqk;z\ket
    \end{align}
    where $\partial \Gamma_m(x,z)$ are the boundary points of the string segment in $[L_X]\times [L]^2\times [L_Z]$.
    One can similarly check that the right-hand-side is equal to $g^{ZQ}g^{QX}$ and thus the statement follows.
\end{proof}

\subsection{Main Result}
\begin{theorem}[Logical]
    \label{thm:logical-4D}
    Let $C^{X},C^{Q},C^{Z}$ be defined as in Eq. \eqref{eq:CX}-\eqref{eq:CZ}.
    Let gluing maps $g^{QX},g^{ZQ}$ be defined as in Eq. \eqref{eq:gQX-4D}-\eqref{eq:gZQ-4D}, and defect map $p^{ZX}$ in Eq. \eqref{eq:pZX-2-4D}-\eqref{eq:pZX-1-4D}.
    Then the output, referred as the \textbf{4D Layer Code}, $C$ is a well-defined complex embedded\footnote{See Example \ref{ex:layer-codes}} in the 4D grid $[L_X]\times [L]^2\times [L_Z]$ where each edge hosts at most 3 qubits, and with column code $=A$ so that
    \begin{equation}
        H_s(C) \cong H_s(A), \quad s=2,1,0
    \end{equation}
    And maximum weights ($D=4$)
    \begin{equation}
        C_2 \xrightleftharpoons[2(D-1)]{3(D-1)} C_1 \xrightleftharpoons[3(D-1)]{2(D-1)} C_0
    \end{equation}
    Specifically, the total qubit degree is $2D$.
\end{theorem}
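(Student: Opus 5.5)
The plan is to treat $C$ as the total complex of the framework in \cite{yuan2025unified}: a $3\times 3$ ``double complex'' whose rows are $C^X$, $C^Q$, $C^Z$ (Eqs.~\eqref{eq:CX}--\eqref{eq:CZ}). The vertical maps are $g^{QX}$ and $g^{ZQ}$, and the extra diagonal term is $p^{ZX}$. Concretely, I set $C_s=C^X_{s}\oplus C^Q_{s}\oplus C^Z_{s}$ with the appropriate degree shifts, so that qubits of $C$ are the $1$-cells. The differential is
\[
\partial^C=\begin{pmatrix}\partial^X&0&0\\ g^{QX}&\partial^Q&0\\ p^{ZX}&g^{ZQ}&\partial^Z\end{pmatrix}.
\]
Well-definedness $\partial^C\partial^C=0$ then reduces to three identities. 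The first is that $g^{QX}$ and $g^{ZQ}$ are chain maps, which I check directly from Eqs.~\eqref{eq:gQX-4D}--\eqref{eq:gZQ-4D}; this works because the boundary of $C(x)$ along $\hat z$ and the rough/smooth boundaries of the surface code $R_X\otimes R_Z^\top$ match at $i=\eta(x)$, as in the 3D case. The second is the compatibility $g^{ZQ}g^{QX}=p^{ZX}\partial^X+\partial^Zp^{ZX}$, which is exactly Lemma~\ref{lem:compatible}. The third is the trivial identity $\partial\partial=0$ within each layer.

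For the homology statement I intend to use the filtration, or ``column code'', argument of \cite{yuan2025unified}, which applies a spectral-sequence or iterated-cone argument to the layer structure. The key input is the layer homology. By Lemma~\ref{lem:contracting-cycles}, each check layer $C(x)=\eta\AB(x)\otimes R_Z+\eta\omega(x)$ satisfies $H_1=0$. This requires that $\AB(x)$ is connected, which holds since any two AB graphs $\AB(q),\AB(q')$ intersect at $(a,b')$, and that $\eta\omega(x)\cong[L]^2$ has $H_1=H_2=0$. Hence $H_2(C(x))=0$ and $H_0(C(x))\cong\dF_2$. Dualizing, $C(x)^\top$ has homology $\dF_2$ concentrated in a single degree, with a generator represented by the sum of all top cells. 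The surface code $C(q)$ likewise has a single logical. The gluing maps send these generators to the corresponding generators with incidence $x\sim q\sim z$, because $g^{QX}$ places the boundary of $C(x)$ on $q$ exactly when $q\sim x$, and similarly for $g^{ZQ}$. The induced ``column code'' on $E^1$ is therefore $X\to Q\to Z$, namely $A$. The differential induced by $p^{ZX}$ on $E^1$ is between degrees differing by two, so it cannot affect the result once the middle row's homology is concentrated. This yields $H_s(C)\cong H_s(A)$.

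Embedding and weights come next. Embedding follows from Theorem~\ref{thm:line-coloring-4D}, Theorem~\ref{thm:plane-coloring-4D}, Remark~\ref{rem:layer-congestion-4D}, edge-decongestion of the qubit layers, and locality of $g$ and $p$ (Lemma~\ref{lem:defects-disjoint-4D} ensures defects add at most one per edge). Counting qubits per edge then gives the bound of 3: check-layer cells along $\hat z$ or $\hat x$ edges contribute at most 2 (from 2-congestion), while a grid edge carries one qubit-layer cell plus at most two check-layer cells. The weight bounds come from a local degree count. A face of a $D$-dimensional layer complex meets at most $2(D-1)$ cells within its layer, plus gluing and defect terms. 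I would tabulate by cell type (bulk of $\eta\AB(x)\otimes R_Z$, junction with $\eta\omega(x)$, vertices on a defect $\Gamma_m$) to obtain $3(D-1)$ and $2(D-1)$, using that each defect vertex is incident to one defect edge per direction.

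I expect the main obstacle to be the homology step. Specifically, I must verify that the spectral-sequence or cone argument of \cite{yuan2025unified} applies verbatim. This requires that $H_0$ and $H_2$ of each $C(x)$ are as claimed, which includes checking that the contracting plane $\eta\omega(x)$ does not create new $2$-cycles where it meets the cylinder at height $\omega(x)$; that is the $H_2$ clause of Lemma~\ref{lem:contracting-cycles}, and to settle it I would apply Mayer--Vietoris in degree $2$ directly. The other delicate point is the weight count near junction vertices, where $\AB(x)\otimes R_Z$ and $\eta\omega(x)$ overlap and a defect may pass through. There I would enumerate the worst case explicitly.
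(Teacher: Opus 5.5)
Your treatment of well-definedness and homology follows the paper's proof. The paper gets $\partial^2=0$ from Lemma~\ref{lem:compatible} plus the framework of \cite{yuan2025unified}. It reads off the column code from the sum of all vertices of $C(x)$, from $\sum_k|iqk\rangle$, and from a single vertex $|v;z\rangle$, and it concentrates the layer homology with Lemma~\ref{lem:contracting-cycles}. Your filtration argument correctly unpacks the citation the paper uses for $H_s(C)\cong H_s(A)$: all of $E^1$ lies on one line, so every $d_r$ with $r\ge 2$ vanishes, including the one $p^{ZX}$ would feed. Your explicit checks that $\AB(x)$ is connected and that $H_2(C(x))=0$ are correct.

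The qubits-per-edge count misplaces the qubit layers. $C(q)=R_X\otimes q\otimes R_Z^\top$ lies in the $\hat x\times\hat z$ plane at fixed $q$, so it has no cells on grid edges. The correct tally, via Remark~\ref{rem:layer-congestion-4D}, has two cases:
\begin{itemize}
\item a $\hat z$-edge carries at most $2$ $X$-layer qubits plus at most $1$ qubit-layer qubit, and symmetrically for $\hat x$;
\item a grid edge carries at most one $X$-layer qubit and one $Z$-layer qubit.
\end{itemize}
Your bound of $3$ is right, but you attribute it to the wrong edges.

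The more substantive gap is in the weights. The $X$-check weight $3(D-1)=9$ for $|x;iqk\rangle$ splits as $6+1+2$ from $\partial^X$, $g^{QX}$ and $p^{ZX}$, so you need exactly $|p^{ZX}|x;iqk\rangle|\le 2$. Your justification, ``each defect vertex is incident to one defect edge per direction,'' does not deliver this. Counted per half-axis it only bounds the weight by $4$. Counted per axis it is false, since a defect path passing straight through $q$ uses two collinear edges.

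The missing input is the line coloring of Theorem~\ref{thm:line-coloring-4D}:
\begin{itemize}
\item For fixed $(x,z)$ the paths $\gamma_m(xz)$ are vertex-disjoint (Lemma~\ref{lem:defects-disjoint-4D}), so each $z$ contributes at most one term.
\item The checks $z\sim x$ with $\eta(z)=k$ have pairwise disjoint rows and pairwise disjoint columns, as in the proof of Theorem~\ref{thm:plane-coloring-4D}. Hence at most two of them have $q\in\AB(z)$: one through $q$'s row and one through its column.
\end{itemize}
The same facts are needed elsewhere in the table. The column weight $1$ of $g^{ZQ}$ requires $\eta(z)\neq\eta(z')$ whenever $z,z'\sim q$. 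The weight $1$ of $p^{ZX}$ on grid edges uses edge-decongestion of same-colored $\AB(z)$ together with vertex-disjointness. Without them the gluing and defect contributions grow with $\mathfrak{q}_Z$, and the constant weights $3(D-1)$ and $2(D-1)$ do not follow.
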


\begin{theorem}[Distance]
    \label{thm:distance-4D}
    Let $C$ be the 4D Layer Code in Theorem \ref{thm:logical-4D} with 1-(co)systolic distance $d_X(C)=d_1(C),d_Z(C)=d^1(C)$, and similarly for the input code $A$. Then
    \begin{align}
        d_X(C) &=\Omega\left(\frac{L_Z}{\weight_X\weight_Z\qubit_Z} \right) d_X(A) \\ 
        d_Z(C) &=\Omega\left(\frac{L_X}{\weight_Z\weight_X\qubit_X} \right) d_Z(A) 
    \end{align}
\end{theorem}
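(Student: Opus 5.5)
We follow the framework of \cite{yuan2025unified}, in the same spirit as the 3D Layer Code distance argument reviewed in Appendix~\ref{sec:3D-layer-code}. We treat $d_X(C)$ only; $d_Z(C)$ follows by exchanging $\hat{x}\leftrightarrow\hat{z}$ and passing to the cocomplex. The plan is to take a nontrivial cycle $\ell\in\ker\partial^C_1$ with $0\ne[\ell]\in H_1(C)$ and \emph{clean} it by adding boundaries $\partial^C s$ until it takes a canonical form. In that form, its weight is at least $\Omega(L_Z)$ times the weight of a nontrivial cycle of $A$. The cleaning is done column by column in the $\hat{z}$ direction.

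\textbf{Step 1: restrictions to $\hat{z}$-slices.} For each coordinate $k\in[L_Z]$ that is not a $Z$-layer coordinate $\eta(z)$, and is not adjacent to one, consider the restriction of $\ell$ to the slice $\hat{z}=k$ of the qubit layers. The qubit layers are surface codes $R_X\otimes R_Z^\top$. Away from the $Z$-check layers, the only relevant structure along the slice is the $X$-check layers $C(x)$ and their gluing $g^{QX}$ at $\eta(x)$. Each such slice therefore carries a copy of the chain $x\mapsto$ qubit pattern; roughly, it computes the ``$\hat{x}$-string'' content of $\ell$.

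Here is the core claim. If $\ell$ is nontrivial, then for every such $k$ the $\hat{x}$-strings crossing the slice define a qubit set $e_k\subseteq\sQ$ with $e_k+\partial^A(\cdot)\in\ker\partial^A$ and $[e_k]=\phi([\ell])\ne0$, where $\phi:H_1(C)\to H_1(A)$ is the isomorphism from Theorem~\ref{thm:logical-4D}. To prove it, I would use Lemma~\ref{lem:compatible} together with the fact that each $C(x)$ has $H_1(C(x))=0$, via Lemma~\ref{lem:contracting-cycles}. This lets us push any part of $\ell$ lying inside an $X$-check layer out into the qubit layers without changing its class. The resulting map $\ell\mapsto e_k$ agrees with $\phi$ for every admissible slice. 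Consequently $|\ell\cap\text{slice}_k|\ge |e_k|\ge d_X(A)$, up to a constant loss: one edge of $\ell$ inside a check layer, after being pushed out, can account for up to $\weight_X$ qubits.

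\textbf{Step 2: counting slices.} The excluded coordinates are those within $O(1)$ of some $\eta(z)$. There are at most $\chi_Z=O(\weight_Z\qubit_Z L)$ colors, so this could exclude everything. The plan is to exclude only the colors relevant to the support near $\ell$, but this is the main obstacle. The factor $\frac{1}{\weight_X\weight_Z\qubit_Z}$ in the statement suggests handling it differently: allow slices at $Z$-layer coordinates as well. At such a slice, $Z$-check layers attach through $g^{ZQ}$ and defect strings $p^{ZX}$, and these can make $\ell$ locally cheaper. Each $z$-check layer at height $k$ can absorb boundary from at most $\weight_Z$ qubit layers. Each edge of $\ell$ in $C(z)$ can therefore ``replace'' at most $\weight_Z\qubit_Z$ qubit-layer edges, and with the $\weight_X$ factor from Step 1 we get
\begin{equation}
|\ell\cap\text{slice}_k|\ge \frac{c}{\weight_X\weight_Z\qubit_Z}\,d_X(A)\quad\text{for every }k\in[L_Z].
\end{equation}
The slices are disjoint, so summing over all $k$ gives $|\ell|\ge \Omega\!\left(\frac{L_Z}{\weight_X\weight_Z\qubit_Z}\right)d_X(A)$.

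The hardest step is making the per-slice projection $\ell\mapsto e_k$ well defined and class-preserving when the slice meets $Z$-check layers and their contracting layers $\omega\eta(z)$. A contracting layer lies in a single $\hat{x}$ coordinate plane, which is where $\ell$ could shortcut. I would first try to apply the Mayer--Vietoris argument (Lemma~\ref{lem:mayer-vietoris} and Lemma~\ref{lem:union-homology}) to the slice subcomplex, to show that no class in $H_1$ is supported inside a single $C(z)$. Any cheap detour is then homologous to a qubit-layer chain of controlled weight.
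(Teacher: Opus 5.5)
\textbf{The gap is in Step 2, and the exclusion in Step 1 is what forces it.} Since $\eta_Z$ takes up to $\chi_Z$ values and $L_Z=\max(\chi_Z,L)$, every height in $[L_Z]$ may carry a $Z$-check layer. Step 1 may therefore apply to no slice at all. The inequality you then claim for every $k$ rests on a ``replacement'' count ($\weight_Z$ qubit layers per $C(z)$, times $\qubit_Z$). That count is read off from the target constant rather than derived, and the step you flag as hardest is left as a plan.

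That plan (Mayer--Vietoris to show no $H_1$ class lives inside one $C(z)$) targets the wrong obstruction. $H_1(C(z))=0$ already holds by construction, and $Z$-check layers cannot make an $X$-logical cheaper on a slice. The factor $1/(\weight_Z\qubit_Z)$ in the statement does not come from them either. In the paper it equals $L/L_Z$: it arises from the relative coexpansion $\min(1/|\AB(x)|,2/L_Z)$ of the cylinder $\eta\AB(x)\otimes R_Z$ with $|\AB(x)|\ge L$ (Lemma~\ref{lem:relative-expansion-4D}). That estimate is then combined with the Cleaning Lemma of the unified framework and the $L_Z$ distance of each qubit layer.

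\textbf{The missing idea is that $\ell^Z$ never enters the slice argument, so it works at every height.} Write $\ell=(\ell^X,\ell^Q,\ell^Z)$. Because $\partial$ is lower-triangular, $\partial^X\ell^X=0$ and $\partial^Q\ell^Q=g^{QX}\ell^X$. Hence $\ell^x=\partial^x s^x$ for a vertex set $s^x$ of $C(x)$ (as $H_1(C(x))=0$), and $\ell^A=[\ell^Q+g^{QX}s^X]$ represents $\phi[\ell]$.

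For every $k\in[L_Z]$, including $k=\eta(z)$, the dual string $\sum_i|i,q,k\rangle$ is a cocycle of the surface code $C(q)$. So $\ell^A(q)\equiv|\ell^q\cap\{|i,q,k\rangle\}_i|+|\{x\sim q:(q,k)\in s^x\}|\pmod 2$.

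The in-plane edges of $\ell^x$ at height $k$ are exactly the edge boundary of $s^x\cap\{\hat z=k\}$ inside the connected graph $\AB(x)$ (or $[L]^2$ when $k=\omega(x)$). They therefore fix this section up to complement, i.e.\ up to adding $\partial^A x$. Any $x$ whose chosen section is proper and nonempty costs at least one edge at height $k$ and touches at most $\weight_X$ qubits.

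Let $S_k$ be the qubit-layer edges $|i,q,k\rangle$ together with the in-plane $X$-layer edges at height $k$. Then $\weight_X|\ell\cap S_k|\ge d_X(A)$ for every $k$. Summing over the disjoint $S_k$ gives $d_X(C)\ge L_Z\,d_X(A)/\weight_X$, which implies the stated bound; $d_Z$ follows symmetrically. This is actually stronger than the paper's global cleaning, which pays the factor $\min(1,2|\AB(x)|/L_Z)$ because it trades $\ell^X$ for $g^{QX}\hat s$ over the whole height of each cylinder.
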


\begin{remark}[Better Bounds]
    \label{rem:better-bounds-4D}
    As discussed in Theorem \ref{thm:line-coloring-4D}, it's always possible to boost $L_Z=\Theta(\weight_Z \qubit_Z L)$ and thus we are guaranteed that
    \begin{equation}
        d_X(C) = \Omega\left(\frac{1}{\weight_X}\right) Ld_X(A)
    \end{equation}
    In fact, in the proof of Lemma \ref{lem:relative-expansion-4D}, we utilized the not necessarily optimal lower bound $|\AB(x)| \ge L$ to account for possible $x$-checks that only act on $\ll \weight_X$ many qubits.
    If all $x$ checks act on $\Omega(\weight)$ qubits and $1\ll \weight_X \ll L^{2/3}$, then by Lemma \ref{lem:AB-size}, we are guaranteed
    \begin{equation}
        d_X(C) = \Omega\left(\frac{1}{\sqrt{\weight_X}}\right) Ld_X(A)
    \end{equation}
    and thus obtain a better scaling compared to the conventional 3D Layer codes \cite{williamson2023layer} which has factor $\Omega(1/\weight_X)$. 
    The case is similar for the $Z$-type (1-cosystolic) distance.
\end{remark}

\begin{theorem}[Energy Barrier]
    \label{thm:energy-4D}
    Let $C$ be the 4D Layer code in Theorem \ref{thm:logical-4D} with $X,Z$-type (1-)energy barrier $\Delta_X(C),\Delta_Z(C)$, and similarly for the input code $A$. Then
    \begin{align}
        \Delta_X(C) =\Omega\left(\frac{1}{\weight_X \qubit_Z \min(\weight_X,\weight_Z)} \right) \Delta_X(A) \\
        \Delta_Z(C) =\Omega\left(\frac{1}{\weight_Z \qubit_X \min(\weight_X,\weight_Z)} \right) \Delta_Z(A)
    \end{align}
\end{theorem}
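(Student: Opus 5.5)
We prove the bound for $\Delta_X$; the $Z$-type bound follows by the symmetric argument with $\hat{x}\lr\hat{z}$. The strategy is the standard reduction used for the 3D Layer Code in Appendix \ref{sec:3D-layer-code}. Take an arbitrary $1$-continuous $X$-type Pauli path $\gamma$ in $C$ whose destination represents a nontrivial class in $H_1(C)\cong H_1(A)$. From it we build a $c$-continuous Pauli path $\tilde\gamma$ in $A$ with $c=O(\weight_X)$, whose destination represents the corresponding nontrivial class, and whose syndrome satisfies $|\partial^A\tilde\gamma(t)| = O(\weight_X\qubit_Z\min(\weight_X,\weight_Z))\,|\partial^C\gamma(t')|$ at the matching time $t'$. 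By Remark \ref{rem:step-size} and Lemma \ref{lem:stabilizer-invariance}, $\Delta_c$ and $\Delta_1$ on $A$ differ only by an additive $O(c\,\qubit_Z)$ term, which is absorbed into the constants. This gives the claimed bound.

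\textbf{Step 1: cleaning the syndrome into a qubit-layer pattern.} At each time $t$, let $\gamma(t)$ have syndrome $s(t)=\partial^C\gamma(t)$. It lives on vertices of the qubit layers $C(q)$ and on vertices of the $Z$-check layers $C(z)$. By Lemma \ref{lem:contracting-cycles}, each $C(z)$ has $H_1(C(z))=0$ and is connected. Hence the part of $\gamma(t)$ in $C(z)$ can be moved by stabilizers, and the part in $C(q)$ can be pushed along the surface code $R_X\otimes R_Z^\top$, so that $\gamma(t)$ is reduced to a canonical representative.

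For the canonical representative, define the induced input-code chain $\tilde\gamma(t)\in Q$ as the set of qubits $q$ whose qubit layer carries an odd crossing, that is, the $\hat{x}$-logical string of $C(q)$ measured along a fixed reference cut. The key local estimate is a decomposition by layer. Each violated $z\in\partial^A\tilde\gamma(t)$ forces at least one syndrome vertex either in $C(z)$ or in some $C(q)$ with $q\sim z$. Summing over $z$ gives the factor $\qubit_Z$ together with a factor coming from defect lines. Each defect path is shared by at most $\min(\weight_X,\weight_Z)$ relevant pairs, since $|x\wedge z|\le\min(\weight_X,\weight_Z)$.

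\textbf{Step 2: continuity.} We must show that one step of $\gamma$ changes $\tilde\gamma$ by at most $O(\weight_X)$ qubits, modulo stabilizers. A single physical flip lies in exactly one layer. A flip in $C(q)$ changes at most the parity of $q$. A flip in $C(x)^\top$ is a $Z$-type cell, so it does not enter an $X$-path. Only the cleaning operation can change $\tilde\gamma$ by a whole $\partial x$, which has weight at most $\weight_X$. This is why we work with $c=\weight_X$, in the style of Lemma \ref{lem:stabilizer-invariance}, and use sub-additivity (Lemma \ref{lem:sub-additivity}) to splice the pieces together.

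\textbf{Step 3: the main obstacle.} The main obstacle is making the cleaning in Step 1 canonical and time-consistent, so that the cleaning itself does not introduce jumps. In the 3D argument this uses the fact that check layers are strips. Here a $z$-layer is the cylinder $R_X\otimes\AB\eta(z)$ glued to the contracting plane $\omega\eta(z)$.

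The first approach to try is a Mayer--Vietoris based retraction via Lemma \ref{lem:mayer-vietoris}. For any syndrome configuration confined to $C(z)$ of weight $w$, we would show that it can be matched inside $C(z)$ to the defect and gluing endpoints. The matching should be charged to the qubits $q\sim z$ with multiplicity $O(\weight_X)$, which produces the extra $\weight_X$ factor.

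If that fails, the fallback is to bound the energy directly. Any error of low syndrome weight confined to $C(z)$ must have a boundary that reaches the gluing lines $\eta(z)$ only through defect strings. Since $H_1(C(z))=0$, such an error is equivalent to a stabilizer plus a bounded number of string segments ending at qubit layers.
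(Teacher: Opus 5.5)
Your top-level plan (turn a $1$-continuous path in $C$ into a $\weight_X$-continuous path in $A$ with comparable syndrome, then invoke Remark~\ref{rem:step-size}) matches the paper's. The execution has a genuine gap, however: you have dropped the $X$-check layers from the $X$-type error model.

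The $1$-cells of $C(x)^\top$ are the edges of $C(x)$, which are physical qubits of $C$. Its $0$-cells, the faces of $C(x)$, are $Z$-type checks. Since $\partial$ is lower triangular with blocks $\partial^X, g^{QX}, \partial^Q, p^{ZX}, g^{ZQ}, \partial^Z$, an $X$-error $e$ has a syndrome component $\partial^X e^X$ on faces of the $X$-layers. Moreover, $e^X$ also feeds syndromes into qubit layers through $g^{QX}$ and into $Z$-layers through $p^{ZX}$. So two of your claims are false: the Step~1 claim that the syndrome lives only on the $C(q)$ and $C(z)$, and the Step~2 claim that a flip in $C(x)^\top$ ``does not enter an $X$-path''.

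This is not peripheral. The factors $\weight_X$ and $\min(\weight_X,\weight_Z)$ in the bound arise only from cleaning these $X$-layer syndromes (Lemma~\ref{lem:X-layer-err-4D}). Without them the same argument yields only $O(\qubit_Z)$, and your Step~1 count asserts the factor with no mechanism. In the paper the cleaning works in two cases.
\begin{itemize}
\item A face syndrome in $\eta\AB(x)\otimes R_Z$ is removed by a string of $\hat z$-edges at half-integer height along an AB line. Outside $C(x)$ this string only creates syndromes in qubit layers $q\sim x$ via $g^{QX}$, giving the factor $\weight_X$. Cleaning those qubit layers along $\hat z$ then costs a further factor $\qubit_Z$.
\item A face syndrome in the contracting plane $\eta\omega(x)$ is removed by a string inside that plane at height $\omega(x)$. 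Because $\omega_X$ and $\eta_Z$ are chosen independently, this string can cross the defect paths $\Gamma_m(xz)$ of every $z\sim x$ with $\eta(z)=\omega(x)$. Each crossing creates a $Z$-layer syndrome through $p^{zX}$, and there are at most $|x\wedge z|/2\le\min(\weight_X,\weight_Z)$ such paths per pair.
\end{itemize}
The second case is the actual source of $\min(\weight_X,\weight_Z)$. Your explanation, a defect path ``shared by'' $\min(\weight_X,\weight_Z)$ pairs, contradicts the edge-decongestion in Lemma~\ref{lem:defects-disjoint-4D}.

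The same omission breaks your definition of the reduced chain and misplaces the difficulty. Once the $X$-layer syndromes are cleaned, $\partial^X(e^X+\hat e^X)=0$. Hence $e^X+\hat e^X=\partial^X s^X$, because $H^1(C(x))\cong H_1(C(x))=0$; this is where the contracting layers enter the cleaning argument. The reduced chain is $\ell^A=[\ell^Q]$ with $\ell^Q=e^Q+\hat e^Q+g^{QX}s^X$, which is a cycle of $C^Q$. Without the correction $g^{QX}s^X$, the chain $e^Q+\hat e^Q$ has boundary $g^{QX}(e^X+\hat e^X)$, which is nonzero in general. So your ``parity along a fixed reference cut'' depends on the cut. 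The $\weight_X$-continuity also comes from this term: a single flip in $C(x)$ changes $\hat e^X$ and $s^X$ only inside $C(x)$, and hence changes $\ell^q$ only for $q\sim x$.

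Your Step~3 targets a non-issue, and is left unresolved in any case. In the $Z$-layers one simply pairs up the syndromes inside the connected complex $C(z)$ at energy cost at most $2$ (Lemma~\ref{lem:Z-layer-err-4D}). This leaves a single point iff the class in $H_0(C(z))\cong\mathbb{F}_2$ is nonzero. By the compatibility relation, $\pi^z\partial(e+\hat e^X+\hat e^Q)=g^{zQ}\ell^Q+\partial^z(e^z+p^{zX}s^X)$, so this class equals $\langle z|\partial^A\ell^A\rangle$. Since $\ell^A$ does not depend on $e^z$ at all, no retraction or Mayer--Vietoris argument inside $C(z)$ is needed.
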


\section{5D Embedding}
\label{sec:5Dembed}

As discussed previously, the 4D embedding depends on properties intrinsic to the 2D planar grid; mainly any two AB graphs $\AB(q),\AB(q')$ in Definition \ref{def:AB-graph} must intersect and thus contain a route between $q\lr q'$.
However, for higher dimensions, this is not necessarily true (e.g., $q,q'$ situate on opposite corners of a 3D cube).
Hence, an additional idea is necessary to extend the method to higher dimensions.

In this section, we consider the 5D embedding with qubits arranged on the grid $[L]^3$. 
Many essential lemmas in our proof of the 5D embedding generalize in a natural manner to all $D$ dimensions, which we elaborate in Appendix \ref{sec:anyD}. 
Unfortunately, the proof of Theorem \ref{thm:contracting-cycles-5D} utilizes a special property of the 3D grid $[L]^3$, i.e., the union of finite many \textit{star planes} (as in Definition \ref{def:star-planes}) has trivial first homology.
We can neither prove nor disprove that an analogue holds true for higher dimensional grids $[L]^{D-2},D\ge 6$, and thus our proof is uncertain in 6D embeddings and higher.
However, since many essential lemmas generalize to all dimensions, we expect that a more clever proof of Theorem \ref{thm:contracting-cycles-5D} can be constructed in the near future so that the $(D\ge 6)$-dimensional Layer Codes follow immediately from our color routing method.

Similar to the setup for 4D embedding, let $A=X\to Q \to Z$ be a CSS code written in the form of a complex, with the $X$, $Q$, and $Z$ modules representing respectively the $X$-type Paulis, qubits and $Z$-type Paulis. 
We assume these spaces have a selected basis $\sX,\sQ,\sZ$; and the max $X,Z$-check weight is $\weight$, and the max $X,Z$-qubit degree is $\qubit$, i.e.,
\begin{equation}
X \xrightleftharpoons[\qubit]{\weight} Q \xrightleftharpoons[\weight]{\qubit} Z
\end{equation}
If $A$ has $|\sQ|=L^{D-2}$ qubits where $D=5$, we identify $\sQ  = [L]^{D-2}$ in an arbitrary manner; this gives a natural embedding of the qubits on the $(D-2)$-dimensional array $[L]^3$ in a fixed manner.

\subsection{Check Layers}

The construction of the check layers is similar to that of the 4D embedding in Section \ref{sec:check-layer-4D}, except that the defect maps do not \textit{come for free} anymore.
Specifically, since qubits in 4D embedding are arranged on a 2D grid, the AB graphs $\AB(q_1),\AB(q_2)$ must intersect and thus there always exists a path, dependent only on $q_1,q_2$ (and thus exists in both adjacent $x\sim z$ check layers), that connects $q_1\lr q_2$.
However, for the 5D embedding, qubits are arranged on a 3D grid, and thus pairs of qubits $q_1,q_2$ which do not share a face (or equivalently, have the same coordinate along an axes) do not necessarily have intersecting AB graphs.

Therefore, the key observation for 5D embedding is to generalize the AB route by proving the existence of a \textit{color route} as depicted in Fig. \ref{fig:color-route-5D} and formalized in Theorem \ref{thm:color-route-5D}. 
In fact, the existence of a color route holds true for all dimensions as elaborated in Appendix \ref{sec:anyD}, and thus a $(D\ge 6)$-dimensional Layer Code would follow immediately, provided that an analogue of Theorem \ref{thm:plane-coloring-5D} can be proven for $D\ge 6$, i.e., the star planes $\Lambda(x)$ must \textit{remove} all internal logicals in the check layers, so that all cycles in the check layers are contractible (null-homotopic).

\subsubsection{Check Routing}

\begin{definition}[Induced Routing on $Q$]
    \label{def:induced-routing-Q}
    Let $\sG_Q=(\sQ,\sP_Q)$ denote the \textbf{induced graph} on $Q$ with vertices consisting of qubits $\sQ$ and edges (also referred as \textbf{packets}) $p=(q,q')$ if there exists an $x$-check (or $z$-check) such that $q,q'\sim x$ (or $q,q'\sim z$). 
    Note that the max degree is $O(\weight\qubit)$.
    
    Order each packet $p$ arbitrarily.
    Let the \textbf{induced routing problem} on $Q$ be such that $\gamma_{0},\gamma_\infty:\sP_Q \to \sQ$ are the start and end of each (directed) edge $p$.
    By definition, we see that each qubit is the source (destination) of $O(\weight\qubit)$ packets (edges in $\sP_Q$).
\end{definition}

\begin{definition}[Star Graph]
    Let $q\in [L]^3$.
    Then the \textbf{star graph} of $q$ is defined as
    \begin{equation}
        \lambda(q) = \bigcup_{i} q|_{[1,i)} \times [L]\times q|_{(i,3]}
    \end{equation}
    so that $|\lambda(q)| = O(L)$.
    Also define the $S$-\textbf{partial star graph} where $S\subseteq [3]$ as 
    \begin{equation}
        \lambda_{S}(q) = \bigcup_{i \in S} q|_{[1,i)} \times [L]\times q|_{(i,3]}
    \end{equation}
\end{definition}

\begin{figure}[ht]
\centering
\includegraphics[width=0.9\columnwidth]{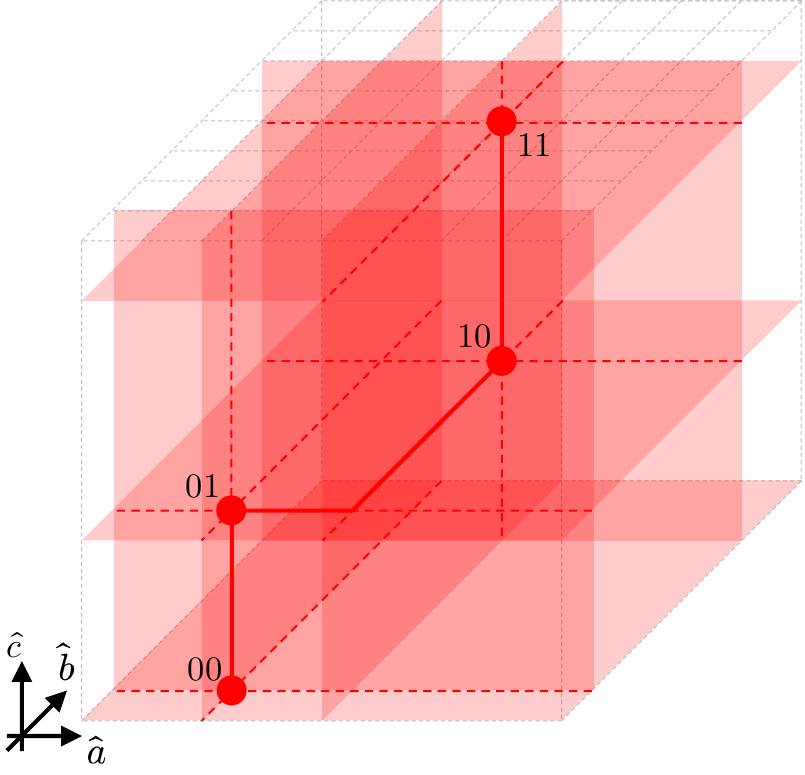}
\caption{Color Route on $[L]^3$. The red dots denote the color route $\gamma_s$ determined by the routing problem $\gamma_0\to \gamma_\infty$. The dashed red lines denote the star graphs $\lambda(\gamma_s(p))$ and the red planes denote the star planes $\Lambda(\gamma_s(p))$. The solid line denotes a route from $\gamma_0\to \gamma_\infty$. Note that by Definition \ref{def:check-layers-5D}, the star planes are included only at $\omega(x)$ along $\hat{z}$ axis (compare with Fig. \ref{fig:check-layer-4D}).
}
\label{fig:color-route-5D}
\end{figure}

\begin{theorem}[Color Route, Fig. \ref{fig:color-route-5D}]
    \label{thm:color-route-5D}
    Consider the induced routing problem $\gamma_0\to \gamma_\infty$ of $\sP_Q$ on $[L]^{3}$. 
    Then for each packet $p\in \sP_Q$, there exists a sequence $\gamma_{s},s\in \dF_2^{2}$, referred as the \textbf{color route}, such that the following properties are satisfied.
    \begin{enumerate}[label=\arabic*)]
        \item (Source and Destination)
        \begin{align}
            \gamma_{00}(p) &= \gamma_0(p)\\
            \gamma_{11}(p) &= \gamma_\infty(p)
        \end{align}
        \item (One Plane at a Time) Continguous points in the sequence share the same plane, i.e., $\gamma_{00},\gamma_{01}$ share the same $[2]$-coordinate, and similarly, for $\gamma_{10},\gamma_{11}$, while $\gamma_{01},\gamma_{10}$ share the same $3$-coordinate.
        \item (Density) For any $s\in \dF_2^{2}$ and $q\in [L]^{3}$, 
        \begin{equation}
            |\{p:\gamma_s(p)=q\}| =O(\weight \qubit)
        \end{equation}
    \end{enumerate}
\end{theorem}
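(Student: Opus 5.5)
The idea is that the color route is determined by a single free coordinate. Property 2) forces the shape. Write $\gamma_0(p)=(a_0,b_0,c_0)$ and $\gamma_\infty(p)=(a_\infty,b_\infty,c_\infty)$. Since $\gamma_{01}$ must share the $[2]$-coordinates with $\gamma_{00}=\gamma_0(p)$, and $\gamma_{10}$ must share the $[2]$-coordinates with $\gamma_{11}=\gamma_\infty(p)$, I set
\begin{equation}
    \gamma_{01}(p)=(a_0,b_0,c(p)),\quad \gamma_{10}(p)=(a_\infty,b_\infty,c(p)),
\end{equation}
for a single value $c(p)\in[L]$ still to be chosen. With this choice, properties 1) and 2) hold automatically, since $\gamma_{01}$ and $\gamma_{10}$ share the $3$-coordinate $c(p)$. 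So the whole statement reduces to choosing $c:\sP_Q\to[L]$ so that the density bound 3) holds. This mirrors the row/column argument of Example \ref{ex:2D-routing}, with the lines $(a,b)\times[L]$ along the third axis playing the role of the rows and columns.

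For $s=00,11$ the density bound is immediate from Definition \ref{def:induced-routing-Q}. Each qubit is the source or destination of at most $O(\weight\qubit)$ packets, because the induced graph $\sG_Q$ has maximum degree $O(\weight\qubit)$.

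For $s=01,10$ I will use bipartite edge coloring. Build a bipartite multigraph whose left vertices are the source lines $(a,b)\in[L]^2$ and whose right vertices are the destination lines $(a',b')\in[L]^2$. For each packet $p$ with $\gamma_0(p)|_{[2]}=(a,b)$ and $\gamma_\infty(p)|_{[2]}=(a',b')$, add one edge between them. Each line contains $L$ qubits, and each qubit is the source or destination of $O(\weight\qubit)$ packets, so the maximum degree is $K=O(\weight\qubit L)$.

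By Theorem \ref{theorem:edge-coloring}, the edges admit a proper coloring $\kappa:\sP_Q\to[K]$. Each color class is a matching: packets of the same color have distinct source lines and distinct destination lines. Now define $c(p)=\kappa(p)\bmod L$, taking representatives in $[L]$. Each value $c\in[L]$ then collects at most $\lceil K/L\rceil=O(\weight\qubit)$ color classes.

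Fix $q=(a,b,c)$. A packet with $\gamma_{01}(p)=q$ has source line $(a,b)$ and satisfies $c(p)=c$. Each of the $O(\weight\qubit)$ color classes mapped to $c$ contributes at most one such packet, so there are $O(\weight\qubit)$ of them. The same argument applied to destination lines bounds the density of $\gamma_{10}$.

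The only substantive step is this decongestion of the intermediate points, and it is handled entirely by the bipartite edge coloring theorem. The remaining care is bookkeeping: treat the induced graph as a multigraph, with packets directed arbitrarily as in Definition \ref{def:induced-routing-Q}, and fold the $K$ colors into $[L]$ so that the coordinate stays within the grid.
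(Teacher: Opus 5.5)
Your proof is correct and is essentially the paper's argument. It builds a bipartite multigraph on the $[2]$-projections of sources and destinations, takes a proper edge coloring with $O(\weight\qubit)L$ colors via Theorem \ref{theorem:edge-coloring}, and groups the colors into $L$ bins of $O(\weight\qubit)$ each to define the shared intermediate $3$-coordinate. Your labeling $\gamma_{01}(p)=(a_0,b_0,c(p))$, $\gamma_{10}(p)=(a_\infty,b_\infty,c(p))$ is the one consistent with property 2) as stated (the paper's proof has these two labels swapped), and you also state explicitly the density bound for $s=00,11$, which the paper leaves implicit.
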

\begin{proof}
    Write 
    \begin{equation}
        \gamma_0(p) = a_0\times b_0\times c_0, \quad \gamma_\infty(p) = a_\infty \times b_\infty \times c_\infty, 
    \end{equation}
    Then the color route is given by 
    \begin{align*}
        \gamma_{00}(p) &= \gamma_0(p) =  a_0 \times b_0 \times c_0 \\
        \gamma_{10}(p) &= a_0 \times b_0 \times c_1 \\
        \gamma_{01}(p) &= a_\infty \times b_\infty \times c_1 \\
        \gamma_{11}(p) &= \gamma_\infty(p)= a_\infty \times b_\infty \times c_\infty \\
    \end{align*}
    For an intermediate position $c_1 \in [L]^2$ dependent on $p$ that we define as follows.

    Consider a (multi-edge) bipartite graph $\sH$ with vertex partitions $\sV_{0},\sV_\infty =[L]^2$, and an edge between $i_0\times j_0\in \sV_0$ and $i_\infty\times j_\infty \in \sV_\infty$ if there exists a packet $p \in \sP_Q$ with source $a_0\times b_0=i_0 \times j_0$ and destination $a_\infty \times b_\infty = i_\infty \times j_\infty$.
    Since each source and destination has $O(\weight \qubit)$ packets, we see that the (multi-edge) bipartite graph has max degree $O(\weight \qubit)L$.
    By Theorem \ref{theorem:edge-coloring}, we see that the edges can be colored with $O(\weight\qubit)L$ many colors so that edges with the same color do not share a vertex.
    Note that there is a one-to-one correspondence between edges in the multi-edge graph and packet $p\in \sP_{Q}$.
    By grouping at most $O(\weight \qubit)$ colors together in an arbitrary manner, we can define a coloring map $p\mapsto c_1(p):\sP_{Q} \to [L]$ such that the following subsets of packets satisfy
    \begin{equation}
        |\{p: (a_t\times b_t  \times c_1)(p)=q\}| = O(\weight \qubit)
    \end{equation}
    For any $q\in \sQ$ and $t=0,\infty$.
    Let $\gamma_{10}(p) = (a_0\times b_0 \times c_1)(p)$ and $\gamma_{01}(p) = (a_\infty\times b_\infty\times c_1)(p)$ as before.
    The statement then follows.
\end{proof}

\begin{definition}[Check Star Graphs]
    For every $x$-check, define the \textbf{star graph} for $x$ as
    \begin{equation}
        \lambda(x) = \bigcup_{s, p\sim x} \lambda(\gamma_s(p))
    \end{equation}
    where the \textbf{adjacency} $p\sim x$ is shorthand for the condition that the endpoints $q,q'$ of $p$ are both contained in the support of $x$.
    In particular,
    \begin{equation}
        |\lambda(x)| = O(\weight^2 )L
    \end{equation}
    The case is similar for $\lambda(z)$ for $z$ checks
\end{definition}

\begin{theorem}[Line Coloring]
    \label{thm:line-coloring-5D}
    There exists a \textbf{line coloring} function $\eta=\eta_X:\sX\to [\chi_X]$ where\footnote{Note that it's always possible to boost $\chi_X$ to saturate its upper bound if necessary.} 
    \begin{equation}
        \label{eq:line-color}
        \chi_X =O(\weight^3 \qubit^2)L
    \end{equation}
    such that if $\sX_{\eta}$ denotes a partition with fixed color $\eta(x)=\eta$, then the collection of star routes $\lambda(x),x\in \sX_{\eta}$ has edges decongested in $[L]^3$ for every partition $\sX_\eta$.
\end{theorem}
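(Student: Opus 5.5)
The plan is to mimic the proof of Theorem \ref{thm:line-coloring-4D}. I build an abstract conflict graph $\sG_X=(\sX,\sE_{\sX})$ with an edge $xx'$ whenever $\lambda(x)$ and $\lambda(x')$ share an edge of $[L]^3$. I then bound its maximum degree by $O(\weight^3\qubit^2)L$ and apply the vertex coloring Theorem \ref{theorem:vertex-coloring}. Any two checks with the same color are non-adjacent in $\sG_X$, so their star graphs share no edge, which is exactly edge decongestion for each $\sX_\eta$. If needed, the color count can be padded up to the stated bound.

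The key step is the degree bound. Fix $x$. Its star graph is a union of $O(\weight^2)$ pieces $\lambda(\gamma_s(p))$: there are $O(\weight^2)$ packets $p\sim x$ (pairs of qubits in $\supp x$) and $4$ values of $s\in\dF_2^2$. Each piece $\lambda(q)$ consists of three axis-parallel lines through $q$.

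An edge of $[L]^3$ on the line $q|_{[1,i)}\times[L]\times q|_{(i,3]}$ lies in $\lambda(q')$ only if $q'$ lies on that same line. The reason is that the lines of $\lambda(q')$ in direction $i$ consist of just the one line through $q'$. So every line of $\lambda(x)$ can only be hit by star graphs $\lambda(q')$ with $q'$ among the $L$ points of that line.

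Now count, for a given point $q'$, how many checks $x'$ can have $q'=\gamma_{s'}(p')$ for some $s'$ and some $p'\sim x'$. By the density property 3) of Theorem \ref{thm:color-route-5D}, for each $s'$ there are $O(\weight\qubit)$ packets $p'$ with $\gamma_{s'}(p')=q'$. Each packet $p'=(q_1,q_2)$ is adjacent to at most $O(\qubit)$ checks $x'$, since $x'$ must contain $q_1$ and $q_1$ has $X$-degree $\le\qubit$. So $q'$ is involved in $O(\weight\qubit^2)$ checks.

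Putting these together: $x$ has $O(\weight^2)$ pieces, each with $3$ lines, each line has $L$ points, and each point is involved in $O(\weight\qubit^2)$ checks. This gives a degree of $O(\weight^2)\cdot 3\cdot L\cdot O(\weight\qubit^2)=O(\weight^3\qubit^2)L$, matching Eq. \eqref{eq:line-color}. The coloring theorem then yields $\chi_X=O(\weight^3\qubit^2)L$ colors.

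The main obstacle is checking the geometric claim precisely. The claim is that two star graphs $\lambda(q)$ and $\lambda(q')$ share an edge only if $q$ and $q'$ lie on a common axis-parallel line, with the shared edge on that line. Lines in different directions meet in at most a vertex, not an edge, and two lines in the same direction share an edge only if they coincide. This is elementary, but the counting must be organized around lines rather than points to avoid over-counting. Once that is settled, everything else is bookkeeping with the density bound. The case of $Z$-checks is identical.
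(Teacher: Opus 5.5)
Your proposal is correct and follows essentially the same argument as the paper. The paper also builds the conflict graph on $\sX$ and reduces to pairs of pieces $\lambda(\gamma_s(p))$, $\lambda(\gamma_{s'}(p'))$. It uses the fact that two star graphs share an edge only when their centers agree off a single coordinate $i$. It then sums over the $L$ values of that coordinate using density property 3), with an extra factor $\qubit$ for the checks containing a given packet, and finishes with the vertex coloring theorem. Your line-and-point bookkeeping is a repackaging of the paper's condition $\gamma_{s'}(p')|_{\ne i}=q|_{\ne i}$ and gives the same bound $O(\weight^3\qubit^2)L$.
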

\begin{proof}
    Similar to Theorem \ref{thm:line-coloring-4D}, consider an abstract graph $\sG_X = (\sX, \sE_{\sX})$ with vertices $\sX$ and edges $xx'$ if $\lambda(x),\lambda(x')$ share an edge.
    We write $\lambda(x)\sim_e \lambda(x')$ to denote that the two share an edge. Then
    \begin{equation}
        \{x':\lambda(x) \sim_e \lambda(x')\}\subseteq \bigcup_{s,p\sim x} \{x':\lambda(\gamma_s(p))\sim_e \lambda(x')\}
    \end{equation}
    Fix $s,p$ so that $q=\gamma_s(p)$. Our goal is then to bound the size of  $\{x':\lambda(q)\sim_{e} \lambda(x')\}$. Note that
    \begin{align}
        |\{\lambda(q)\sim_{e} \lambda(x')\}|&\le \sum_{x'} 1\{\lambda(q)\sim_{e}\lambda(x')\} \\
        &\le \sum_{x',s',p'} 1\{p'\sim x'\} \\
        &\quad\quad \times 1\{\lambda(q) \sim_{e} \lambda(\gamma_{s'}(p'))\} \\
        &\le \qubit \sum_{s',p'} 1\{\lambda(q) \sim_{e} \lambda(\gamma_{s'}(p'))\}
    \end{align}
    Note that $\lambda(q)$ and $\lambda(\gamma_{s'}(p'))$ share an edge only if $\gamma_{s'}(p')=q$ except possibly at the $i$-coordinate for some $i\in [3]$.
    Hence,
    \begin{align}
        \sum_{p'}1\{\lambda(q) \sim_{e} \lambda(\gamma_{s'}(p'))\} &\le \sum_{i,p'} 1\{\gamma_{s'}(p')|_{\ne i}=q|_{\ne i}\} \nonumber \\
        &= \sum_{i,p'} 1\{\gamma_{s'}(p')|_{\ne i}=q|_{\ne i}\} \nonumber\\
        &\times \sum_{c\in [L]} 1\{\gamma_{s'}(p')|_{i} = c\}\\
        &=\sum_{i} \sum_{c\in [L]} O(\weight \qubit) \\
        &=O(\weight \qubit L)
    \end{align}
    where we utilized property (3) of the color routes.
    Hence,
    \begin{align}
        |\{x':\lambda(q)\sim_{e} \lambda(x')\}| &= O(\weight \qubit^2)L \\
        |\{x':\lambda(x) \sim_e \lambda(x')\}| &= O(\weight^3 \qubit^2)L
    \end{align}
    By Theorem \ref{theorem:vertex-coloring}, we see that the vertices of $\sG_X$ can be colored using $\chi_X$ many colors so that adjacent vertices do not share the same color.
    Let $\eta_X:\sX\to [\chi_X]$ denote the corresponding coloring function and $\sX_{\eta}$ denote the subset of $\sX$ with color $\eta(x)=\eta$.
    Then the statement follows.
\end{proof}

Similar to the 4D scenario, by the previous Theorem, for any given $x\in \sX$, define
\begin{equation}
    \eta\lambda(x) =\eta(x)\times \lambda (x) \subseteq [\chi_X]\times [L]^{3}
\end{equation}
Then it's clear that $\eta\lambda(x),x\in \sX$ is embedded in $\hat{x} \times \hat{q}$ with edges decongested.
The case is similarly defined for the $Z$-checks with line coloring function $\eta=\eta_Z:\sZ \to [\chi_Z]$ so that
\begin{equation}
    \eta\lambda(z) =\lambda(z) \times \eta(z) \subseteq  [L]^{3} \times [\chi_Z]
\end{equation}
over $z\in \sZ$ is embedded in $\hat{q}\times \hat{z}$ with edges decongested.

\subsubsection{Contracting Cycles}

Let us denote the planes induced by filling in the star graphs $\lambda(q)$ via $\Lambda(q)$, which are naturally 3-term cell complexes.
The filled planes $\Lambda(x)$ are similarly defined.
Specifically,
\begin{definition}[Star Planes]
    \label{def:star-planes}
    Let $q\in \sQ=[L]^{3}$. Then define the \textbf{star plane} of $q$ as
    \begin{equation}
        \Lambda(q) = \bigcup_{i<j} q|_{[1,i)}\times [L] \times q|_{(i,j)} \times [L] \times q|_{(j,3]}
    \end{equation}
    so that $|\Lambda(q)|=O(L^2)$ in terms of vertices.
    Also define $S$-\textbf{partial star plane} where $S\subseteq [3]$ as
    \begin{equation}
        \Lambda_{S}(q) = \bigcup_{i,j\in S:i<j} q|_{[1,i)}\times [L] \times q|_{(i,j)} \times [L] \times q|_{(j,3]}
    \end{equation}
\end{definition}

\begin{definition}[Star Planes of Checks]
    Let $\gamma_s(p)\in [L]^{3},s\in \dF_2^{2}$ denote the color route given by the induced routing problem in Theorem \ref{thm:color-route-5D}. For each $x$-check, further define the \textbf{star plane} as
    \begin{equation}
        \Lambda(x) = \bigcup_{s,p\sim x} \Lambda (\gamma_s(p))
    \end{equation}
    And similarly for the $z$ checks.
\end{definition}
\begin{theorem}[Plane Coloring]
    \label{thm:plane-coloring-5D}
    Let $\sX_{\eta}$ be a partition in Theorem \ref{thm:line-coloring-5D} such that star graphs $\lambda(x)$ do not share the same edge, i.e., are edge decongested.
    Then there exists a \textbf{plane coloring} map $\omega=\omega_X:\sX\to [\zeta_X]$ where
    \begin{equation}
        \zeta_X = O(\weight^2)L
    \end{equation}
    such that if $\sX_{\eta,\omega}$ is the collection of $x\in \sX_{\eta}$ with fixed \textbf{plane} color $\omega$, then the collection $\Lambda(x),x\in \sX_{\eta,\gamma}$ is face decongested\footnote{Compare with the 4D case in Theorem \ref{thm:plane-coloring-4D}}.
    The case is similar for $Z$-check with plane coloring function $\omega=\omega_Z:\sZ \to [\zeta_Z]$
\end{theorem}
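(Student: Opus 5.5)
The plan is to mimic Theorem \ref{thm:plane-coloring-4D} but replace the exact count there by a degree bound on a conflict graph, followed by Theorem \ref{theorem:vertex-coloring}. Fix a line color $\eta$ and work inside $\sX_\eta$, where by Theorem \ref{thm:line-coloring-5D} the star graphs $\lambda(x)$ are pairwise edge-disjoint. Define the abstract graph $\sG_{\eta}$ on vertex set $\sX_\eta$, with $xx'$ an edge if $\Lambda(x)$ and $\Lambda(x')$ share a face. A vertex coloring of $\sG_\eta$ with $\zeta_X$ colors gives the plane coloring $\omega$ on $\sX_\eta$. Since the classes $\sX_\eta$ are disjoint, doing this for each $\eta$ separately defines $\omega$ on all of $\sX$. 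It therefore suffices to show that the maximum degree of $\sG_\eta$ is $O(\weight^2)L$.

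To bound the degree, fix $x\in\sX_\eta$.
\begin{enumerate}[label=(\roman*)]
\item $\Lambda(x)$ is the union of $\Lambda(q)$ over the points $q=\gamma_s(p)$ with $s\in\dF_2^2$ and $p\sim x$. There are at most $4\binom{\weight}{2}=O(\weight^2)$ such points.
\item Each $\Lambda(q)$ is the union of three axis-aligned planes. Each plane is determined by one fixed coordinate, say $\{q'' : q''|_k = q|_k\}$.
\item Every face of $[L]^3$ lies in a unique axis-aligned plane. Hence if $\Lambda(x')$ shares a face with $\Lambda(q)$, then some generating point $q'$ of $\Lambda(x')$ satisfies $q'|_k=q|_k$ for the relevant $k$. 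That is, $\Lambda(q')$ contains the same plane $P=\{q''|_k=q|_k\}$.
\end{enumerate}

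The key step is to count, for a fixed plane $P$, how many $x'\in\sX_\eta$ have a generating point $q'\in P$. Such a $q'$ contributes to $\lambda(x')$ the two full lines through $q'$ in the directions $i\neq k$, and both lines lie inside $P$. The plane $P$ is covered by exactly $2L$ axis lines, each with $L-1$ edges. Because the $\lambda(x')$ with $x'\in\sX_\eta$ are edge-disjoint, distinct such $x'$ cannot share a line. Therefore at most $2L$ checks $x'\in\sX_\eta$ have a generating point in $P$.

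Summing this bound over the $O(\weight^2)$ points $q$ and the $3$ planes of each $\Lambda(q)$ gives degree at most $3\cdot O(\weight^2)\cdot 2L=O(\weight^2)L$. Theorem \ref{theorem:vertex-coloring} then yields a coloring $\omega$ with $\zeta_X=O(\weight^2)L$ colors. Checks in $\sX_{\eta,\omega}$ are pairwise non-adjacent in $\sG_\eta$, so their star planes are face decongested. The $Z$ case is identical with $\lambda(z),\Lambda(z)$.

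The only delicate point is the step converting face-sharing into sharing a full plane and then into line-disjointness. It relies on two facts: every face of $[L]^3$ determines its plane uniquely, and each generating point contributes complete lines to $\lambda(x')$, not partial segments. Both hold directly from the definitions of $\lambda(q)$ and $\Lambda(q)$ as unions of full coordinate lines and planes. So I expect no real obstacle beyond stating this carefully.
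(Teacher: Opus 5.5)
Your proposal is correct and follows essentially the paper's argument: you build the face-sharing conflict graph on a fixed line-color class and bound its degree by $O(\weight^2)L$, using that a shared face forces a generating point of $\Lambda(x')$ into a coordinate plane through some $\gamma_s(p)$ with $p\sim x$, and that edge-disjointness of the star graphs $\lambda(x')$ leaves room for only $O(L)$ such $x'$ per plane; greedy vertex coloring then gives $\omega$. The only difference is bookkeeping: the paper fixes $(s',i,j)$ and uses disjointness of the coordinate sets $I_{s',ij}(x')$, whereas you count all generating points in a plane at once via the $2L$ lines covering it, which avoids the extra sum over $s'$.
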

\begin{proof}
    Consider the abstract graph $\sG_{X,\eta}=(\sX_{\eta},\sE_{X,\eta})$ with edges $(x,x')$ if $\Lambda(x),\Lambda(x')$ share a face, which we write as $\Lambda(x) \sim_{f} \Lambda(x')$.
    Note that the number of $x'$ adjacent to $x$ is upper bounded by the following.
    \begin{equation}
        \sum_{s,p\sim x} \underbrace{\sum_{x'} 1\{\Lambda(\gamma_s(p))\sim_{f} \Lambda(x')\}}
    \end{equation}
    Hence, our goal is to upper bound the underbraced term where we fix $q=\gamma_s(p)$.
    
    Note that if $\Lambda(q),\Lambda(x')$ share a face, then there exists $s'$ and $p'\sim x'$ such that $\Lambda(q),\Lambda(\gamma_{s'}(p'))$ share a face.
    Hence, there exists $i<j$ such that $q|_{\ne ij}=\gamma_{s'}(p')|_{\ne ij}$ and thus
    \begin{align}
        \{\Lambda(q)\sim_{f} \Lambda(x')\} &\subseteq \bigcup_{s'} \bigcup_{i<j} \\
        &\quad \bigcup_{p'\sim x'} \{q|_{\ne i,j}=\gamma_{s'}(p')|_{\ne i,j}\}
    \end{align}
    Let $I_{s,ij}(x') \subseteq [L]$ denote the collection of coordinates $\gamma_{s'}(p')|_{i}\in [L]$ over $p'\sim x'$ with $\gamma_{s'}(p')|_{\ne ij}=q|_{\ne ij}$. Then note that
    \begin{equation}
        \bigcup_{p'\sim x'} \{q|_{\ne i,j}=\gamma_{s'}(p')|_{\ne i,j}\} \subseteq \{I_{s,ij}(x') \ne \varnothing\}
    \end{equation}
    By definition, we see that for $x'$ of the same line color, we must have $I_{s',ij}(x')$ be disjoint (this is similar to Theorem \ref{thm:plane-coloring-4D}) and thus for a given $s',i,j'$, there can at most be $L$ many $x'$ of the same line color with $I_{s',ij}(x')\ne \varnothing$. 
    Hence, the max degree of $\sG_{X,\eta}$ is upper bounded by
    \begin{equation}
        \sum_{s,p\sim x} \sum_{s'} \sum_{i<j} L =O(\weight^2)L
    \end{equation}
    The statement then follows by applying Theorem \ref{theorem:vertex-coloring} to the abstract graph $\sG_{X,\eta}$.
    
\end{proof}

\begin{definition}[Dimension of 5D Grid]
    Let $\chi_X,\chi_Z$ be that defined in Theorem \ref{thm:line-coloring-5D} and $\zeta_X,\zeta_Z$ be that in Theorem \ref{thm:plane-coloring-5D}. Define
    \begin{align}
        L_X &= \max(\chi_X,\zeta_Z) = O(\weight^3 \qubit^2)L \\
        L_Z &= \max(\chi_Z,\zeta_X) = O(\weight^3 \qubit^2)L
    \end{align}
    so that the 5D grid is given by $[L_X]\times [L]^3\times [L_Z]$.
\end{definition}

\begin{definition}[Contracting Layers]
    For each $x$-check, define the \textbf{contracting layer} as
    \begin{equation}
        \eta\Lambda\omega(x)=\eta(x)\times \Lambda(x) \times \omega(x)
    \end{equation}
    so that the collection $\eta\Lambda\omega(x),x\in \sX$ is embedded in $[L_X]\times [L]^{3}\times [L_Z]$ with faces decongested.
    The case is similar for the $z$-checks.
    Let $R_X,R_Z$ denote the repetition codes on $L_X,L_Z$ vertices, respectively.
\end{definition}

\begin{definition}[Check Layers]
    \label{def:check-layers-5D}
    Define the $x$-check layer as
    \begin{equation}
        C(x) = \eta\lambda(x)\otimes R_Z +\eta\Lambda\omega(x)
    \end{equation}
    Similarly, define the $z$-check layer as
    \begin{equation}
        C(z)=R_X\otimes \lambda\eta (z) +\omega\Lambda\eta(z)
    \end{equation}
\end{definition}

\begin{remark}[Layer Congestion]
    \label{rem:layer-congestion-5D}
    Similar to the 4D case in Remark \ref{rem:layer-congestion-4D}, note that $\eta\lambda(x),x\in \sX$ has edge decongested, but after tensoring with $R_Z$, it's possible that $\eta\lambda(x)\otimes R_Z$ have $D-2=3$ edge congestion for edges along the $\hat{z}$ axis since at most 3 distinct $\eta\AB(x)$ can share the same vertex.

    In contrast to the 4D case, the contracting layers are not vertex disjoint, and instead only face decongested.
    Hence, any edge in the 3D qubit grid can be shared by $D-3=2$ many contracting layers.
    Hence, $C(x),x\in \sX$ has
    \begin{itemize}
        \item 0 edge congestion for edges along the $\hat{x}$ axis
        \item $\max(1,D-3)=2$ edge congestion for edges in the $(D-2)$-dimensional qubit grid
        \item $D-2=3$ edge congestion for edges along $\hat{z}$ axes
    \end{itemize}
    The case is similar for $z$-check layers with $\hat{x}\lr \hat{z}$ switched.
\end{remark}


\begin{figure}[ht]
\centering
\includegraphics[width=0.9\columnwidth]{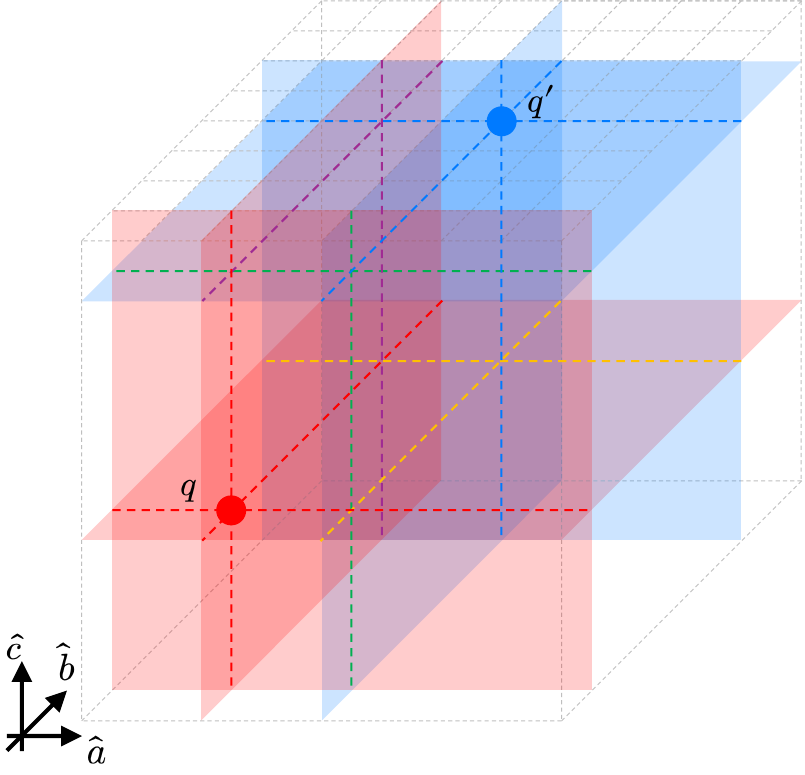}
\caption{Contracting Cycles on $[L]^3$. The red planes (lines) denote the star plane (graph) of $q$, while the blue planes (lines) denote that of $q'$. The purple, yellow and green dashed lines denote the intersection $\Lambda(q)\cap \Lambda(q')$.
}
\label{fig:contracting-cycles-5D}
\end{figure}

Similar to the 4D embedding scenario, the existence of contracting layers $\eta\Lambda\omega(x)$ further promotes 
\begin{theorem}[Contracting Cycles, Fig. \ref{fig:color-route-5D}, \ref{fig:contracting-cycles-5D}]
    \label{thm:contracting-cycles-5D}
    For any $x$- or $z$-check, 
    \begin{equation}
        H_1(C(x))=H_1(C(z))=0
    \end{equation}
\end{theorem}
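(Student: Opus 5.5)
The plan is to reduce the statement to Lemma \ref{lem:contracting-cycles}, exactly as in the 4D case. Take $G=\lambda(x)$ (the graph complex of the check star graph, sitting at color $\eta(x)$), $R=R_Z$, and let the 3-term cell complex be $\Lambda(x)$, placed at the 0-cell $|\omega(x)\ket$ of $R_Z$. Then $C(x)=\eta\lambda(x)\otimes R_Z+\eta\Lambda\omega(x)$ is precisely $G\otimes R+C\otimes|i\ket$ with $i=\omega(x)$.

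The subgraph condition is immediate. Every line $q|_{[1,i)}\times[L]\times q|_{(i,3]}$ of $\lambda(q)$ lies in each of the two coordinate planes through $q$ that contain direction $i$. Hence $\lambda(q)\subseteq\Lambda(q)$, and so $\lambda(x)\subseteq\Lambda(x)$. It therefore remains to verify two hypotheses:
\begin{enumerate}[label=(\roman*)]
\item $\lambda(x)$ is connected;
\item $H_1(\Lambda(x))=0$.
\end{enumerate}
The $z$-check case is identical with $\hat{x}\lr\hat{z}$ interchanged.

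For (i), we use property 2) of Theorem \ref{thm:color-route-5D}. In the explicit color route, $\gamma_{00}(p)$ and $\gamma_{10}(p)$ differ only in the third coordinate. Hence $\gamma_{10}(p)$ lies on the line of $\lambda(\gamma_{00}(p))$ in direction $3$, and similarly $\gamma_{01}(p)$ lies on a line of $\lambda(\gamma_{11}(p))$. The points $\gamma_{10}(p)=(a_0,b_0,c_1)$ and $\gamma_{01}(p)=(a_\infty,b_\infty,c_1)$ share the third coordinate. The direction-$1$ line through the first and the direction-$2$ line through the second meet at $(a_\infty,b_0,c_1)$. Thus $\bigcup_s\lambda(\gamma_s(p))$ is connected and contains both endpoints of $p$. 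Since every pair $q,q'\sim x$ is a packet $p\sim x$, all these pieces are linked through the qubits of $\supp x$, and $\lambda(x)$ is connected.

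For (ii), which I expect to be the main obstacle, the key observation is that $\Lambda(x)$ is a finite union of axis-parallel coordinate planes in $[L]^3$. Each such plane is a filled square grid, so it has trivial $H_1$ and is connected. I would build $\Lambda(x)$ by repeated use of Lemma \ref{lem:union-homology}.
\begin{itemize}
\item Start from $\Lambda(q_0)$ for a single $q_0=\gamma_s(p)$. It is the union of its three coordinate planes, added one at a time. The second plane meets the first in a line, and the third meets the union of the first two in two crossing lines through $q_0$. Both intersections are connected, so $H_1(\Lambda(q_0))=0$.
\item Then add the remaining distinct planes $P$ of $\Lambda(x)$ one by one, skipping duplicates. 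Let $P$ have normal direction $k$. The current union $U$ already contains at least one plane of each of the two other orientations, namely those from $\Lambda(q_0)$. So $P\cap U$ is a union of full lines in $P$ (plus possibly full planes of the same orientation, which are disjoint from $P$ unless equal to it). These lines come in both directions of $P$.
\item Any line of one direction in $P$ meets every line of the other direction. Hence $P\cap U$ is connected, and Lemma \ref{lem:union-homology} preserves $H_1=0$.
\end{itemize}
The delicate points are that the intersection of cell complexes is exactly this union of lines, with no stray faces making the intersection disconnected, and that planes are full $[L]\times[L]$ slices. Both should follow directly from Definition \ref{def:star-planes}.

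Finally, Lemma \ref{lem:contracting-cycles} requires $(\eta\lambda(x)\otimes R_Z)\cap\eta\Lambda\omega(x)=\eta\lambda(x)\otimes|\omega(x)\ket$. This holds because the only cells of $\lambda(x)\otimes R_Z$ outside the slice $\hat{z}=\omega(x)$ are vertical or off-slice, while $\Lambda(x)$ contributes no faces lying in $\lambda(x)\otimes|\omega(x)\ket$ beyond its graph. With (i) and (ii) in hand, Lemma \ref{lem:contracting-cycles} gives $H_1(C(x))=0$.
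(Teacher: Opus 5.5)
Your proof is correct. Its outer skeleton matches the paper's: reduce via Lemma~\ref{lem:contracting-cycles} to $H_1(\Lambda(x))=0$, then build $\Lambda(x)$ up by repeated use of Lemma~\ref{lem:union-homology} with connected intersections. Where you differ is the unit of the induction.

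\textbf{The paper's route.} It adds a whole star plane $\Lambda(q)$ at a time. To keep the intersection with the running union connected, it proves a case-by-case claim, split according to whether $q,q'$ share a face, a line, or a point. The claim is that $\Lambda(q)\cap\Lambda(q')$ is connected and contains the partial star graphs $\lambda_{ij}(q'|_{ij}\times q|_k)$, which pairwise meet.

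\textbf{Your route.} You add individual coordinate planes, after seeding with one full $\Lambda(q_0)$ so that all three orientations are present. Every later plane then meets the union in a nonempty family of full lines in both of its directions. That family is connected for trivial reasons, so the case analysis disappears. This is shorter and makes transparent what is really used: in $[L]^3$, coordinate $2$-planes are hyperplanes, so two of different orientation always meet in a full line. Neither argument extends as is, since in $[L]^{D-2}$ with $D\ge 6$ two coordinate $2$-planes can meet in a single point or not at all. This matches the paper's remark that the step is intrinsically three-dimensional.

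\textbf{Hypotheses you make explicit.} You verify hypotheses of Lemma~\ref{lem:contracting-cycles} that the paper leaves implicit:
\begin{itemize}
\item $\lambda(x)\subseteq\Lambda(x)$;
\item $\lambda(x)$ is connected;
\item $(\eta\lambda(x)\otimes R_Z)\cap\eta\Lambda\omega(x)=\eta\lambda(x)\otimes|\omega(x)\rangle$.
\end{itemize}
For connectivity, you were right to read the intermediate points off the explicit construction in the proof of Theorem~\ref{thm:color-route-5D}. The labels in its property 2) as stated are swapped relative to that proof. Your connectivity argument works with either labeling anyway.
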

\begin{proof}
    Note that the construction is symmetric in $X$- and $Z$-checks and thus we shall restrict our attention to $X$-checks.
    By Example \ref{ex:gauging-cycles} or Lemma \ref{lem:contracting-cycles}, we see that it's sufficient to show that $H_1(\Lambda(x))=0$.
    Note that $\Lambda(x)$ is the union of $\Lambda(q)$ over certain qubits, and thus it's sufficient to show that the finite union of any $\Lambda(q)$ has trivial first homology in $[L]^3$.

    We first \textbf{claim} (whose proof is postponed to the end) that for any qubits $q,q'\in [L]^3$, the intersection $\Lambda(q)\cap \Lambda(q')$ is connected and contains 
    \begin{equation}
        \Lambda(q)\cap \Lambda(q') \supseteq  \bigcup_{k} \lambda_{ij}(q'|_{ij}\times q|_{k})
    \end{equation}
    where $\lambda_{ij}$ is the partial star graph and the union is over all $k\in [3]$ and $i,j$ are the other indices in $[3]$. 
    Using this claim, consider $\Lambda(q)\cap \Lambda(q'')$ for another qubit $q''$ so that it also contains
    \begin{equation}
        \lambda_{12}(q'|_{12}\times q|_{3})
    \end{equation}
    Since $\lambda_{12}(q'|_{12}\times q|_{3})$ and $\lambda_{12}(q''|_{12}\times q|_{3})$ are within the same face, i.e., the 3-coordinate is fixed $=q|_3$, we see that they must intersect, and thus $\Lambda(q)\cap (\Lambda(q')\cup\Lambda(q''))$ is connected.
    In particular, given any collection of qubits $q_1,...,q_m$, we see that $\Lambda(q)\cap \Lambda(q_i)$ is connected and that any pair $\Lambda(q)\cap \Lambda(q_i),\Lambda(q)\cap \Lambda(q_j)$ has nonempty intersection.
    Hence, $\Lambda(q)\cap (\Lambda(q_1)\cup \cdots \cup \Lambda(q_m))$ is connected.

    By induction, we can assume that $\Lambda(q_1)\cup \cdots \cup \Lambda(q_m)$ has trivial first homology, where $m=1$ is the base case.
    Then by Lemma \ref{lem:union-homology}, we see that $\Lambda(q)\cup \Lambda(q_1)\cup \cdots \cup \Lambda(q_m)$ has trivial first homology, and thus the statement follows.

    What remains is then to prove the claim.
    Note that if $q,q'$ do not share a face, i.e., $q|_{i}\ne q'|_{i}$ for any $i$-coordinate, then (as depicted in Fig. \ref{fig:contracting-cycles-5D})
    \begin{equation}
        \Lambda(q)\cap \Lambda(q')=\bigcup_{i,j} q|_{i} \times [L]\times q'|_{j}
    \end{equation}
    where the union is over all $i,j\le 3$. In particular, we can write
    \begin{equation}
        \Lambda(q)\cap \Lambda(q') = \bigcup_{k} \lambda_{ij}(q'|_{ij}\times q|_{k})
    \end{equation}
    where the union is over all $k$ and $i,j$ are the remaining indices in $[3]$. Note that $\lambda_{12}(q'|_{12}\times q|_3)$ contains $q'|_{1}\times q|_{23}$ and similarly, $\lambda_{13}(q'|_{13}\times q|_2)$. Hence, $\lambda_{12}(q'|_{12}\times q|_3)$ and $\lambda_{13}(q'|_{13}\times q|_2)$ are connected. By permuting the indices, it's straightforward to check that $\Lambda(q)\cap \Lambda(q')$ is connected.
    The other possibilities can also be similarly derived. 
    Specifically, let $q,q'$ share a face, but not a line, i.e., $q|_{k}=q'|_{k}$ for some index $k$, say $k=1$, and not equal for the remaining indices $ij=23$. Then
    \begin{align}
        \Lambda(q)\cap \Lambda(q') &= \Lambda_{23}(q) \\
        & \cup \lambda_{12}(q'|_{12}\times q|_3) \cup \lambda_{23}(q'|_{13}\times q|_{2})
    \end{align}
    Again note that all three terms in the union are individually connected and contain $q'|_{1}\times q|_{23}=q$. Hence, $\Lambda(q)\cap \Lambda(q')$ is connected.
    Now let $q,q'$ share a line, but not the same point, so that $q|_{ik}=q'|_{ik}$ some indices say $ik=13$, but not for 2-coordinate. Then
    \begin{equation}
        \Lambda(q)\cap \Lambda(q') = \Lambda_{23}(q) \cup \Lambda_{12}(q)
    \end{equation}
    which is connected since the terms in the union both contain $q$. Finally, if $q=q'$, then $\Lambda(q)\cap \Lambda(q')=\Lambda(q)$ is trivially connected.
\end{proof}

\subsection{Putting Everything Together}
\label{sec:together-5D}
Unlike the check layers, the qubit layers are simply those defined as in the conventional Layer Codes \cite{williamson2023layer}, i.e., a surface code $R_X\otimes R_Z^\top$ for each $q\in \sQ$ where $R_X,R_Z$ are repetition codes on $L_X,L_Z$. 
By Example \ref{ex:surface-code}, the surface code is embedded in $[L_X]\times [L_Z]$ so that the collection of qubit layers $C(q) = R_X\otimes q\otimes R_Z^\top$ is embedded in $[L_X]\times [L]^3 \times [L_Z]$ with edge decongested. We can thus define the following
\begin{align}
    \label{eq:CX}
    C^{X} &= \bigoplus_{x\in \sX} C(x)^\top\\
    C^{Q} &= \bigoplus_{q\in \sQ} C(q) \\
    \label{eq:CZ}
    C^{Z} &= \bigoplus_{z\in \sZ} C(z)
\end{align}
Note that $C^{X},C^{Q},C^{Z}$ is embedded in the 5D grid.
Hence, the output code $C$ is embedded in the 5D hypercube, as long as the gluing maps and defect maps are \textit{local} in 5D, as we discuss below.

\subsubsection{Gluing Maps}
Similar to the conventional layer codes, we define the gluing maps $g^{QX},g^{ZQ}$ as follows (analogue of the vertical dashed lines in Fig. \ref{fig:check-layer-4D}).
\begin{align}
    \label{eq:gQX-5D}
    g^{QX} |x;i,q,k^\bullet\ket &=|i, q,k^{\bullet}\ket 1\{i=\eta(x),q\sim x\}\\
    \label{eq:gZQ-5D}
    g^{ZQ} |i^{\bullet},q,k\ket &=\sum_{z\sim q} |i^{\bullet},q,k;z\ket 1\{k=\eta(z)\}
\end{align}
where $i^{\bullet}$ denotes either $i$ or $i^+$ for $i\in [L_X]$, and similarly for $k^{\bullet}$, so that $g^{QX},g^{ZQ}$ act \textit{locally} in 5D.

\subsubsection{Defect Map}

\begin{figure}[ht]
\centering
\includegraphics[width=0.9\columnwidth]{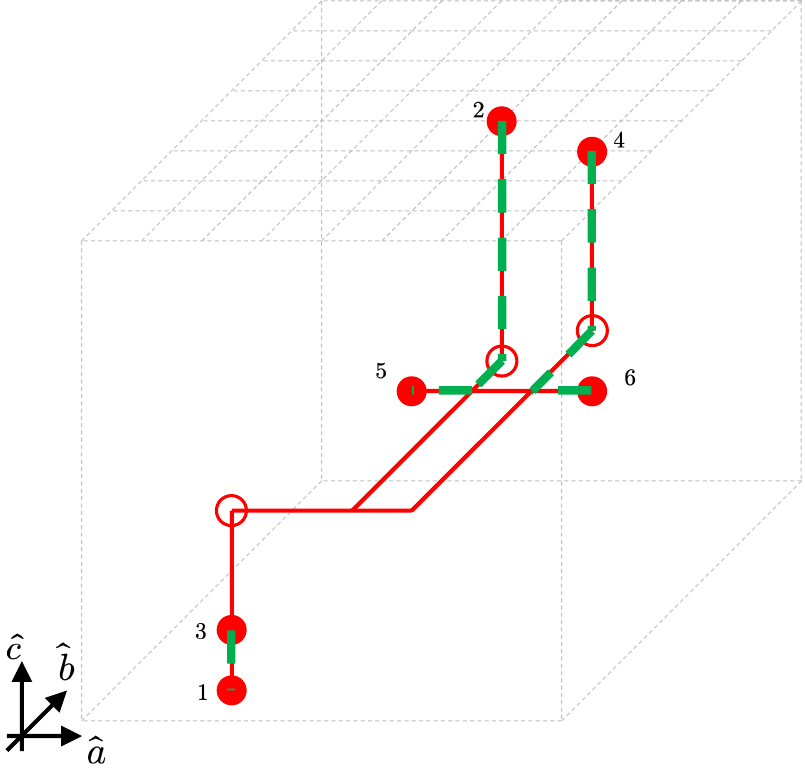}
\caption{Defects on $[L]^3$. The red dots denote qubits $q_1,...,q_6$ within the overlap $x\wedge z$ of some $x$ and $z$ check pair, while hollow red dots denote the intermediate qubits determined by the color route in Theorem \ref{thm:color-route-5D}. The red lines denote routes $\rho_1,\rho_2,\rho_3$ connecting $1\lr 2,3\lr 4$ and $5\lr 6$, respectively.
The green dashed lines denote the defects $\gamma_1,\gamma_2,\gamma_3$ obtained from Corollary \ref{cor:defect-congestion-5D}.
}
\label{fig:defect-pair-5D} 
\end{figure}

Similar to the 3D Layer Codes, the defect maps $p^{ZX}$ will correspond to strings between pairs of qubits $q_1,q_2\in x\wedge z$ for adjacent $x,z$ checks.
However, due to the 5D hypercube structure (3D grid structure of qubits), the strings are somewhat more complicated, even more so than the 4D case.
Fortunately, our construction of the check layers exactly accounts for this complication by utilizing the color routes in Theorem \ref{thm:color-route-5D}, which were constructed independently of $X$- and $Z$-checks.

Given $x\sim z$, we can order the qubits in the overlap $x\wedge z=\{q_1,q_2,...,q_{2M}\}$ arbitrarily. 
Let $p_m=q_{2m-1}q_{2m}$ and $\gamma_{\bullet}(p_m)$ denote the corresponding color route. 
Then denote $\rho_m(xz)$ the concatenation of the following routes: the AB route\footnote{First traverse along the 1-coordinate, then along the 2-coordinate} $\gamma_{00}\to \gamma_{10}$ along the fixed 3-coordinate, the line route from $\gamma_{10}\to \gamma_{01}$, and the AB route $\gamma_{01}\to \gamma_{11}$.
Then $\rho_m(xz)$ is contained in $\lambda(x),\lambda(z)$ for all $m$ and has boundaries $=q_{2m-1},q_{2m}$.
Unlike the 4D case in Lemma \ref{lem:defects-disjoint-4D}, the routes $\rho_m(xz),m=1,...,M$ are not necessarily vertex disjoint. In fact, they can overlap arbitrarily since we generally have no control over the intermediate points $\gamma_{01}(p),\gamma_{10}(p)$ of a packet $p$, i.e., the intermediate points do not depend solely on the endpoints of the packet $p$.
Fortunately, we can utilize the symmetric difference to construct simple paths $\gamma_{m}(xz),m=1,...,M$ which are pair-wise edge disjoint (overlap possibly at vertices) and that the boundaries of $\bigcup_{m}\partial\gamma_m(xz) = x\wedge z$.
Specifically,
\begin{lemma}
    \label{lem:disjoint-paths}
    Let $\rho_m,m=1,...,M$ denote a collection of paths in a finite graph $\sG$ with distinct boundaries and let $\rho =\rho_1+\cdots +\rho_m$ where $\rho_M$ are regarded as 1-chains in the associated graph complex $G$.
    Then there exists simple paths $\gamma_1,...,\gamma_M$ in $\sG$ which are edge-disjoint and that for $\gamma=\gamma_1+\cdots +\gamma_M$ then
    \begin{equation}
        \partial \gamma=\partial \rho
    \end{equation}
\end{lemma}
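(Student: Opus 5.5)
The plan is to work entirely with the 1-chain $\rho=\rho_1+\cdots+\rho_M$ in the graph complex $G$. Set $S=\partial\rho$. Since $\rho$ is a 1-chain in a graph complex, every vertex has even degree in the subgraph supported by $\rho$, except the vertices in $S$, which have odd degree. Since the endpoints of the $\rho_m$ are distinct, $\partial\rho_m$ consists of two points, and $S$ is exactly the set of $2M$ endpoints. So $\rho$ is a finite edge set (the edges with coefficient $1$) whose odd-degree vertices are precisely $S$, with $|S|=2M$.

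Next, I would apply the standard decomposition of a graph with $2M$ odd vertices into $M$ edge-disjoint trails, plus cycles, by induction on the number of edges of $\rho$. Concretely, I start a walk at some $s\in S$ and follow unused edges of $\rho$. Each time the walk enters a vertex $v\notin S$, it has used an odd number of the edges at $v$ (one more entry than exit), while $\deg v$ is even, so an unused edge remains and the walk can continue. The walk therefore stops only at a vertex $s'\in S$ with $s'\ne s$. (If it returns to $s$ itself, the parity count at $s$ shows it can continue.)

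This gives a trail $\tau$ from $s$ to $s'$. I would then shortcut any repeated vertex of $\tau$: cutting out the closed subwalk between two visits to the same vertex leaves a simple path $\gamma_1$ whose edge set is a subset of $\tau\subseteq\rho$, with $\partial\gamma_1=s+s'$. Removing the edges of $\gamma_1$ from $\rho$ gives the chain $\rho'=\rho+\gamma_1$. Its boundary is $\partial\rho'=S\setminus\{s,s'\}$, and its edge set is strictly smaller and disjoint from $\gamma_1$. Induction on $\rho'$ (base case: $S=\varnothing$, where zero paths are needed) produces $\gamma_2,\dots,\gamma_M$ that are simple, pairwise edge-disjoint, and edge-disjoint from $\gamma_1$, because they live in the edge set of $\rho'$. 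Summing gives $\partial(\gamma_1+\cdots+\gamma_M)=\partial\gamma_1+\partial\rho'=\partial\rho$.

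The only delicate step is the walk argument: I must check that the walk genuinely terminates at a point of $S$ other than its start, and that the number of paths comes out to exactly $M$. Each step removes two points of $S$, and $|S|=2M$ relies on the distinct-boundaries hypothesis. Any leftover even-degree part (cycles) in the final $\rho$ is simply discarded, since it has zero boundary. This matters because the lemma asks only for $\partial\gamma=\partial\rho$, not $\gamma=\rho$.
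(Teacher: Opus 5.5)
Your proof is correct and is essentially the paper's argument. The paper first deletes cycles from the support of $\rho$ to get a forest with the same boundary, and then cites the standard fact that a graph with $2M$ odd vertices splits into $M$ edge-disjoint trails; in a forest these trails are automatically simple paths. You instead prove that decomposition directly with your walk argument and get simplicity by shortcutting each trail, discarding the leftover cycles at the end. Both routes also keep each $\gamma_m$ inside the support of $\rho$, which is what the subsequent corollary actually uses.
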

\begin{proof}
    Note that $\rho$ is a subgraph of $\sG$. We can then arbitrarily remove cycles without affecting the boundary $\partial\rho$ so that $\gamma$ denotes the forest which is a subgraph of $\rho$ and $\partial \gamma =\partial \rho$.
    Note that the edges of any forest with $2M$ boundary points can be partitioned into $M$ pairwise edge-disjoint simple paths $\gamma_m,m=1,...,M$ whose endpoints are exactly the the boundary points \cite{proofwiki:odd-vertices-edge-disjoint-trails}, and thus the statement follows.
\end{proof}

The following corollary is then immediate due to the definition of the line coloring maps $\eta(x),\eta(z)$ (see Theorem \ref{thm:line-coloring-5D}).
\begin{corollary}[Defect Congestion, Fig. \ref{fig:defect-pair-5D}]
    \label{cor:defect-congestion-5D}
    Let $\gamma_m(xz),m=1,...,M$ be that constructed by Lemma \ref{lem:disjoint-paths} so that $\gamma_m(xz),m=1,...,M$ are edge disjoint and contained in $\lambda(x),\lambda(z)$.
    Then the collection of \textbf{defect paths}
    \begin{equation}
        \Gamma_m(xz)=\eta(x)\times \gamma_m(xz)\times \eta(z)
    \end{equation}
    over all pairs of adjacent $x,z$ checks and $m$ must be edge decongested.
\end{corollary}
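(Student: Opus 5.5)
The plan is to split the claim into two parts: edge-disjointness of the defect paths for a fixed adjacent pair $(x,z)$, and edge-disjointness across distinct adjacent pairs. The first part holds by construction. Lemma \ref{lem:disjoint-paths} gives $\gamma_m(xz)$, $m=1,\dots,M$, pairwise edge-disjoint in $[L]^3$, and $\Gamma_m(xz)$ is just $\gamma_m(xz)$ placed at the fixed color coordinates $\eta(x)\times\eta(z)$. So two defect paths $\Gamma_m(xz)$, $\Gamma_{m'}(xz)$ with $m\neq m'$ share no edge.

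Before the cross-pair case, I would confirm the containment $\gamma_m(xz)\subseteq\lambda(x)\cap\lambda(z)$. Each $\rho_m(xz)$ is assembled from three pieces. The AB route $\gamma_{00}(p_m)\to\gamma_{10}(p_m)$ moves along coordinates $1$ and $2$ with the $3$-coordinate fixed. It lies in $\lambda_{1}(\gamma_{00})\cup\lambda_{2}(\gamma_{10})$, since its turning point shares all coordinates but one with each endpoint. The middle segment $\gamma_{10}\to\gamma_{01}$ changes only the $3$-coordinate, so it lies in $\lambda(\gamma_{10})$. The last AB piece is handled the same way as the first.

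Here one should note that property (2) of Theorem \ref{thm:color-route-5D}, as literally labelled, pairs $\gamma_{00}$ with $\gamma_{01}$. The explicit construction, however, has $\gamma_{00}$ and $\gamma_{10}$ sharing coordinates $1,2$, and $\gamma_{10}$ and $\gamma_{01}$ sharing coordinate $3$. I would use the explicit construction. Since $p_m$ is a packet with both endpoints in $x\wedge z\subseteq\supp x\cap\supp z$, we have $p_m\sim x$ and $p_m\sim z$. Hence all these star graphs lie in $\lambda(x)$ and in $\lambda(z)$. Taking symmetric differences and removing cycles never leaves the union of the $\rho_m$, so $\gamma_m(xz)\subseteq\lambda(x)\cap\lambda(z)$.

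For distinct pairs $(x,z)\neq(x',z')$, suppose $\Gamma_m(xz)$ and $\Gamma_{m'}(x'z')$ share an edge. Since every defect edge has fixed $\hat{x}$ and $\hat{z}$ coordinates, this forces $\eta(x)=\eta(x')$, $\eta(z)=\eta(z')$, and a common grid edge $e\in\gamma_m(xz)\cap\gamma_{m'}(x'z')$. Then $e\in\lambda(x)\cap\lambda(x')$. If $x\neq x'$, this contradicts Theorem \ref{thm:line-coloring-5D}, because same-colored star graphs are edge decongested. Hence $x=x'$, and symmetrically $z=z'$ via the $Z$-version of the line coloring. That contradicts the assumption that the pairs are distinct.

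The only step I expect to need care is the containment $\gamma_m(xz)\subseteq\lambda(x)\cap\lambda(z)$ together with the coordinate-matching of property (2); everything else follows directly from the two line colorings, exactly as in Lemma \ref{lem:defects-disjoint-4D}.
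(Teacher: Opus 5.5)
Your proposal is correct and is the same argument the paper has in mind when it calls the corollary immediate: within a fixed pair, edge-disjointness comes from Lemma~\ref{lem:disjoint-paths}; across pairs, a shared edge forces $\eta(x)=\eta(x')$ and $\eta(z)=\eta(z')$, and then $x=x'$ and $z=z'$ because same-colored star graphs are edge decongested (Theorem~\ref{thm:line-coloring-5D} and its $Z$-analogue). One slip in your containment check, which is optional anyway since containment is a hypothesis of the corollary: in the explicit construction you chose, $\gamma_{00}\to\gamma_{10}$ and $\gamma_{01}\to\gamma_{11}$ are lines along coordinate $3$, and the AB piece is the middle segment $\gamma_{10}\to\gamma_{01}$ in the plane at fixed $3$-coordinate $c_1$, lying in $\lambda_1(\gamma_{10})\cup\lambda_2(\gamma_{01})$; your piece descriptions are swapped, but the conclusion $\gamma_m(xz)\subseteq\lambda(x)\cap\lambda(z)$ is unaffected.
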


Parametrize the simple paths $\gamma_m(xz)$ by adding a $t$-dependence, i.e., $t\mapsto \gamma_m(t;xz)$, so that $t=0,T_m\mapsto$ the boundaries of $\gamma_m(t;xz)$, respectively.
The case is similar for $\Gamma_m(t;xz)$.
In both $\gamma_m,\Gamma_m$, we use half-integers $t^+$ to denote the edges of the route. 
We can then define the defect map as follows
\begin{align}
    \label{eq:pZX-2-5D}
    p^{ZX}|x;iqk\ket &= \sum_{z\sim x} \sum_{m=1}^{|x\wedge z|/2} \sum_{t<T_m} 1\{iqk= \Gamma_m(t;xz)\} \nonumber \\
    &\quad\quad\quad  \times|\Gamma_{m}(t^+;xz);z\ket \\
    \label{eq:pZX-1-5D}
    p^{ZX}|x;i e k\ket &= \sum_{z\sim x}\sum_{m=1}^{|x\wedge z|/2} \sum_{t<T_m} 1\{iek \in \Gamma_m(t^+;xz)\} \nonumber\\
    &\quad\quad\quad \times |\Gamma_{m}(t+1;xz);z\ket
\end{align}
where $e$ denotes an edge (nearest-neighbors) in $[L]^3$. 
Again note that $p^{ZX}$ acts \textit{locally} in 4D, in the sense that, they map cells in $C^X$ to cells in $C^{Z}$, respectively, with the neighboring coordinates, e.g., $\Gamma_m(t^+;xz)$ is an adjacent edge of $\Gamma_m(t;xz)$.
It's then straightforward to check that
\begin{lemma}[Compatibility]
    \label{lem:compatible-5D}
    The defect map $p^{ZX}$ in Eq. \eqref{eq:pZX-2-5D}-\eqref{eq:pZX-1-5D} are compatible with the gluing maps in Eq. \eqref{eq:gQX-5D}-\eqref{eq:gZQ-5D}, i.e.,
    \begin{equation}
        g^{ZQ} g^{QX} = p^{ZX} \partial^{X} + \partial^{Z} p^{ZX}
    \end{equation}
\end{lemma}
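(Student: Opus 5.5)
The plan is to verify the identity on basis cells of $C^X$, following the 4D argument of Lemma \ref{lem:compatible}. The only structural fact it needs is that each $\Gamma_m(xz)$ is a simple path in $\eta(x)\times[L]^3\times\eta(z)$, parametrized by $t\in[0,T_m]$. A cell of $C(x)^\top$ in the grid $[L_X]\times[L]^3\times[L_Z]$ is either a vertex cell $|x;iqk\ket$ or a $[L]^3$-edge cell $|x;iek\ket$; these are the cells on which $p^{ZX}$ is defined. On any other cell (edges along $\hat{z}$, faces of the contracting layer), both sides vanish. The left side vanishes because $g^{QX}$ only sees cells whose qubit-grid coordinate is a qubit $q\sim x$ at $i=\eta(x)$, and then only the appropriate degree. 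The right side vanishes by degree and support: $p^{ZX}$ is zero off vertex and $[L]^3$-edge cells, and $\partial^X$ of such a cell only reaches cells at fixed $k$ lying on the paths.

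For a vertex cell $|x;iqk\ket$, I expand both terms of the right side. First,
\begin{equation}
\partial^Z p^{ZX}|x;iqk\ket=\sum_{z\sim x}\sum_m\sum_{t<T_m}1\{iqk=\Gamma_m(t;xz)\}\bigl(|\Gamma_m(t;xz);z\ket+|\Gamma_m(t+1;xz);z\ket\bigr).
\end{equation}
Second, $p^{ZX}\partial^X|x;iqk\ket$ sums over the grid edges $e\ni q$ inside $\lambda(x)$ with $iek\in\Gamma_m(t^+;xz)$, each contributing $|\Gamma_m(t+1;xz);z\ket$. Here I need that the coboundary $\partial^X$ of $C(x)^\top$ restricted to the slice $i=\eta(x)$, $k$ fixed, hits exactly the edges of $\lambda(x)$ (and of $\Lambda(x)$ if $k=\omega(x)$) incident to $q$. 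This holds because every $\Gamma_m$ lies in $\lambda(x)$ by Corollary \ref{cor:defect-congestion-5D}.

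Since $\gamma_m(xz)$ is simple, an interior vertex $\Gamma_m(t;xz)$ with $0<t<T_m$ has exactly two incident path edges, $t^-$ and $t^+$. Their contributions cancel the two terms produced by $\partial^Z p^{ZX}$. At the endpoints $t=0$ and $t=T_m$ only one edge is incident, so a single term $|iqk;z\ket$ survives. Summing over $m$, the surviving terms are the boundary points of the $\Gamma_m(xz)$.

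The one genuinely new point compared with 4D is that the paths $\gamma_m(xz)$ for different $m$ may share vertices; they are only edge-disjoint. Because we work over $\dF_2$ and the identity is linear, contributions from distinct $m$ simply add. Edge-disjointness (Lemma \ref{lem:disjoint-paths}) guarantees that each edge cell term is attributed to exactly one $(m,t)$, so the local cancellation above happens path by path. Lemma \ref{lem:disjoint-paths} also gives $\sum_m\partial\gamma_m(xz)=\partial\rho=x\wedge z$, with each common qubit appearing once. Hence
\begin{equation}
(p^{ZX}\partial^X+\partial^Zp^{ZX})|x;iqk\ket=\sum_{z\sim x}1\{q\in x\wedge z,\ i=\eta(x),\ k=\eta(z)\}\,|iqk;z\ket.
\end{equation}

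I then compute $g^{ZQ}g^{QX}|x;iqk\ket$ directly from Eqs. \eqref{eq:gQX-5D}-\eqref{eq:gZQ-5D}. It equals $\sum_{z\sim q}1\{i=\eta(x),q\sim x,k=\eta(z)\}|iqk;z\ket$, which matches the expression above. For edge cells $|x;iek\ket$, the left side is zero because $g^{QX}$ only acts on qubit-vertex coordinates. On the right side, $p^{ZX}$ sends $|x;iek\ket$ to the head vertex $\Gamma_m(t+1)$, and $\partial^Z p^{ZX}$ of the tail adds the two endpoints of that edge. I expect the two contributions to $|\Gamma_m(t+1);z\ket$ to cancel mod 2, and the tail term to cancel against $p^{ZX}$ applied to $\partial^X$ of the edge. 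I anticipate that the main obstacle is this edge-cell bookkeeping under the transpose convention for $C(x)^\top$, that is, checking which degree each term lives in. I would settle it by mirroring the grading used in the 4D Lemma \ref{lem:compatible}.
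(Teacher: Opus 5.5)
Your computation on vertex cells $|x;iqk\ket$ is the paper's proof, and it is correct. You expand $\partial^Z p^{ZX}$ and $p^{ZX}\partial^X$, cancel the two terms at each interior point of a simple path, and match the surviving endpoints against $g^{ZQ}g^{QX}$; by Lemma \ref{lem:disjoint-paths} these endpoints are exactly $x\wedge z$. Edge-disjointness is not actually needed for this identity. The cancellation happens term by term in $(z,m,t)$ by linearity, and disjointness only matters for the weight bounds.

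The part to correct is your last paragraph. The edge-cell case is not an obstacle, and the cancellation you anticipate there does not occur.

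\begin{itemize}
\item In $C(x)^\top$, vertices have degree $2$, edges degree $1$ and faces degree $0$.
\item $g^{QX}$, $g^{ZQ}$ and $p^{ZX}$ each lower degree by one, so all three terms of the identity map $C^X_n\to C^Z_{n-2}$.
\item On an edge cell $|x;iek\ket$, $p^{ZX}|x;iek\ket$ is a sum of vertex cells of $C(z)$. Since $C^Z=\bigoplus_z C(z)$ is not transposed, these have degree $0$, so $\partial^Z p^{ZX}|x;iek\ket=0$.
\item $\partial^X|x;iek\ket$ is a sum of face cells of $C(x)$, on which $p^{ZX}$ is zero.
\item $g^{QX}$ vanishes on $[L]^3$-edge cells.
\end{itemize}

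So both sides vanish on edge cells. The same degree count disposes of $\hat{z}$-edges and faces: on a $\hat z$-edge, $g^{QX}$ lands on $|i,q,k^+\ket$, which $g^{ZQ}$ kills. Vertex cells are therefore the only case to check, which is all the paper checks.
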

\begin{proof}
    Note that $\partial^Z p^{ZX}$ maps
    \begin{align}
        |x;iqk\ket &\mapsto \sum_{z\sim x} \sum_{m,t<T_m} 1\{iqk= \Gamma_m(t;xz)\} \\
        &\times (|\Gamma_{m}(t;xz);z\ket + |\Gamma_m(t+1;xz)\ket)
    \end{align}
    And that $p^{ZX}\partial^X$ maps
    \begin{align}
        |x;iqk\ket &\mapsto \sum_{z\sim x} \sum_{m,t<T_m} \sum_{e\sim q} 1\{iek= \Gamma_m(t^+;xz)\}\\
        &\quad\quad \times|\Gamma_m(t+1;xz);z\ket
    \end{align}
    Note that if $0<t<T_m$ such that $iqk=\Gamma_m(t;xz)$, then there are exactly two edges $e$ adjacent to $q$ in $[L]^2$ such that $iek=\Gamma_m(s^+;xz)$ for some $s$, i.e., $iek=\Gamma_m(t^-;xz)$ and $=\Gamma_m(t^+;xz)$. Conversely, if $t=0,T_m$, there exists exactly one edge $e$ adjacent to $q$ such that $iek=\Gamma_m(s^+;xz)$ for some $s$, i.e., $iek=\Gamma_m(t^+;xz)$ if $t=0$ and $=\Gamma_m(t^-;xz)$ if $t=T_m$.
    
    Hence, $p^{ZX}\partial^{X} + \partial^{Z} p^{ZX}$ maps
    \begin{align}
        |x;iqk\ket &\mapsto \sum_{z\sim x} \sum_{m} 1\{iqk\in \partial\Gamma_m(xz)\} |iqk;z\ket \\
        &= \sum_{z\sim x} 1\{q \in x\wedge z,i=\eta(x),k=\eta(z)\} \\
        &\quad\quad\times |iqk;z\ket
    \end{align}
    where $\partial \Gamma_m(xz)$ are the boundary points of the string segment in $[L_X]\times [L]^2\times [L_Z]$.
    One can similarly check that the right-hand-side is equal to $g^{ZQ}g^{QX}$ and thus the statement follows.
\end{proof}

\subsection{Main Result}
\label{sec:main-result-5D}
\begin{theorem}[Logical]
    \label{thm:logical-5D}
    Let $D=5$.
    Let $C^{X},C^{Q},C^{Z}$ be defined as in Eq. \eqref{eq:CX}-\eqref{eq:CZ}.
    Let gluing maps $g^{QX},g^{ZQ}$ be defined as in Eq. \eqref{eq:gQX-5D}-\eqref{eq:gZQ-5D}, and defect map $p^{ZX}$ in Eq. \eqref{eq:pZX-2-5D}-\eqref{eq:pZX-1-5D}.
    Then the output, referred as the \textbf{5D Layer Code}, $C$ is a well-defined complex embedded\footnote{See Example \ref{ex:layer-codes}} in 5D with at most 
    \begin{equation}
        \max(2, 2(D-3), D-1) = 4
    \end{equation}
    qubits at each edge and with column code $=A$, and
    \begin{equation}
        H_1(C) \cong H_1(A)
    \end{equation}
    And maximum weights  
    \begin{equation}
        C_2 \xrightleftharpoons[2(D-1)]{3(D-1)} C_1 \xrightleftharpoons[3(D-1)]{2(D-1)} C_0
    \end{equation}
    Specifically, the total qubit degree is $2D$.
\end{theorem}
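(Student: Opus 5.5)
The proof follows the same template as Theorem \ref{thm:logical-4D}, relying on the homological framework of \cite{yuan2025unified}.

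\textbf{Well-definedness.} The output complex $C$ is assembled from $C^X,C^Q,C^Z$, the gluing maps $g^{QX},g^{ZQ}$ and the defect map $p^{ZX}$. Its differential is
\[
\partial^C=\begin{pmatrix}\partial^X & & \\ g^{QX} & \partial^Q & \\ p^{ZX} & g^{ZQ} & \partial^Z\end{pmatrix}.
\]
It squares to zero provided three identities hold. First, $g^{QX}$ must be a chain map, which is checked locally: it maps $|x;\eta(x),q,k^\bullet\ket$ to the $\hat z$-column of $C(q)$, and both $\eta\lambda(x)\otimes R_Z$ and $R_X\otimes q\otimes R_Z^\top$ restrict there to the same repetition code $R_Z$. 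Second, $g^{ZQ}$ must be a chain map, by the symmetric argument. Third, the homotopy identity must hold, and this is exactly Lemma \ref{lem:compatible-5D}.

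\textbf{Embedding and weights.} The embedding follows from several facts proved earlier. Theorem \ref{thm:line-coloring-5D} and Theorem \ref{thm:plane-coloring-5D} give bounded congestion, which Remark \ref{rem:layer-congestion-5D} records. Corollary \ref{cor:defect-congestion-5D} controls the defects. Locality of $g,p$ was noted at their definition.

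For the per-edge count, I would add contributions by axis:
\begin{itemize}
\item Along $\hat z$: up to $D-2$ check-layer cells from $C^X$, plus one qubit-layer cell.
\item Within the qubit grid: up to $2(D-3)$ cells, taking $X$ and $Z$ layers together.
\item Along $\hat x$: by symmetry, $D-1$ in total.
\end{itemize}
This gives $\max(2,2(D-3),D-1)=4$.

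The weight bounds $3(D-1)$ and $2(D-1)$ come from counting local neighbourhoods. A cell of a check layer has at most $2(D-1)$ neighbours inside its layer, since it is a cell of a $(D-1)$-dimensional grid. Gluing adds at most one neighbour. Defects add at most the number of defect edges incident to the cell, which edge-decongestion bounds by $D-1$. Combined carefully, these give the stated bounds and total qubit degree $2D$.

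\textbf{Homology.} Following \cite{yuan2025unified}, I would filter $C$ by layer type and run the spectral sequence, or equivalently iterate Lemma \ref{lem:mayer-vietoris} via mapping cones. The inputs are the layer homologies:
\begin{itemize}
\item For the $Z$-check layers, $H_1(C(z))=0$ by Theorem \ref{thm:contracting-cycles-5D}. In addition, $H_0(C(z))\cong\dF_2$ because $C(z)$ is connected: $\lambda(z)$ is connected since it contains color routes joining all pairs of qubits of $z$.
\item For the $X$-check layers, the same holds dually, applied to $C(x)^\top$.
\item Each qubit layer is a surface code, with $H_1\cong\dF_2$ and all other homology trivial.
\end{itemize}
With these, the $E^1$ page collapses to the column complex. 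That complex is $X\to Q\to Z$, with the induced maps given by adjacency through $g$, so it equals $A$. Hence $H_1(C)\cong H_1(A)$.

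\textbf{Main obstacle.} I expect the hardest step to be the collapse: ruling out higher differentials and contributions from $H_2$ of the check layers. I would first try to show directly that $H_2(C(x))=0$. The plan is to apply the second half of Lemma \ref{lem:contracting-cycles}, which requires $H_2(\Lambda(x))=0$. A union of planar pieces has no closed 2-cycles, so this should hold. If it does, the $E^1$ page is concentrated in a single row and degenerates immediately. If not, I would restrict attention to $H_1$, which is all the statement claims, and use the exactness of Lemma \ref{lem:mayer-vietoris} directly.
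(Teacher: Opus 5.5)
Your architecture is the paper's: compatibility from Lemma~\ref{lem:compatible-5D}, congestion from Remark~\ref{rem:layer-congestion-5D}, and layer homologies fed into the framework of \cite{yuan2025unified}. However, the homology step as you plan it rests on a false claim, because $H_2(\Lambda(x))\neq 0$ in general. $\Lambda(q)$ is the union of the three \emph{full} coordinate planes through $q$. So if $q,q'\sim x$ differ in every coordinate, $\Lambda(x)\supseteq\Lambda(q)\cup\Lambda(q')$ contains all six faces of the box spanned by $q$ and $q'$. That is a nonzero $2$-cycle in a complex with no $3$-cells. Apply Lemma~\ref{lem:mayer-vietoris} to $A=\eta\lambda(x)\otimes R_Z$ and $B=\eta\Lambda\omega(x)$, where $H_2(A)=H_2(A\cap B)=0$ and $H_1(A\cap B)\to H_1(A)$ is an isomorphism. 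This gives $H_2(C(x))\cong H_2(\Lambda(x))\neq 0$, and likewise for $C(z)$.

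So the $E^1$ page does \emph{not} collapse to the column complex. It also carries $H_0(C^X)\cong\bigoplus_x H^2(C(x))$ in total degree $0$ and $H_2(C^Z)\cong\bigoplus_z H_2(C(z))$ in total degree $2$. These are extra check redundancies, and they are why this statement asserts only $H_1$, unlike Theorem~\ref{thm:logical-4D}, where each contracting layer is a single disk.

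The conclusion is rescued by a degree count, which you should make the main argument rather than a fallback. In total degree $1$ the only $E^1$ term is $H_1(C^Q)\cong Q$, because $H_1(C^X)\cong\bigoplus_x H^1(C(x))=0$ and $H_1(C^Z)=0$ by Theorem~\ref{thm:contracting-cycles-5D}. Since $d^r$ lowers the filtration level by $r$ and there are only three levels, the only differentials touching this term are the $d^1$'s from $H_2(C^X)\cong X$ and into $H_0(C^Z)\cong Z$, i.e.\ the column-code maps. Hence $H_1(C)\cong H_1(A)$ regardless of the extra terms. This is all that the paper's appeal to Theorem~\ref{thm:contracting-cycles-5D} and \cite{yuan2025unified} uses.

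Your weight count also does not close: a defect contribution of $D-1$ gives $2(D-1)+1+(D-1)=3D-2$, not $3(D-1)$. The paper's bound on the defect contribution is $D-2$, for the following reasons.
\begin{itemize}
\item For fixed $x$ and color $k$, the paths $\gamma_m(xz)$ over all $z\sim x$ with $\eta(z)=k$ and all $m$ are pairwise edge-disjoint simple paths. This follows from Corollary~\ref{cor:defect-congestion-5D} plus edge-disjointness of same-colored $\lambda(z)$.
\item Each such path through a grid vertex $q$ uses two of the $2(D-2)$ grid edges at $q$, but contributes a single term $\Gamma_m(t^+;xz)$ to $p^{ZX}|x;iqk\rangle$.
\item At most one of these paths has $q$ as an endpoint, since same-colored $z$'s share no qubit ($\lambda(q)\subseteq\lambda(z)$ for $q\sim z$).
\end{itemize}
Hence there are at most $D-2$ terms, and the weight is $2(D-1)+1+(D-2)=3(D-1)$.

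Finally, your qubit-grid tally copies the $\max(1,D-3)$ of Remark~\ref{rem:layer-congestion-5D} without argument. Since $\eta\lambda(x)\otimes R_Z$ occupies every height, a grid edge at height $k$ can carry the $\lambda$-edge of its unique owner of line color $\eta$. When that owner's plane color differs from $k$, the same edge can also carry $\Lambda$-edges of up to $D-3$ other checks of line color $\eta$ and plane color $k$. So this step needs a sum rather than a maximum, or a further argument.
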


\begin{theorem}[Distance]
    \label{thm:distance-5D}
    Let $C$ be the 5D Layer Code in Theorem \ref{thm:logical-5D} with 1-(co)systolic distance $d_X(C)=d_1(C),d_Z(C)=d^1(C)$, and similarly for the input code $A$. Then
    \begin{align}
        d_X(C) &=\Omega\left(\frac{L_Z}{\weight^4 \qubit^2} \right) d_X(A) \\ 
        d_Z(C) &=\Omega\left(\frac{L_X}{\weight^4 \qubit^2} \right) d_Z(A) 
    \end{align}
\end{theorem}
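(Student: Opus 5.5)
To prove the distance bound I would reduce to the input code via the column-code structure guaranteed by Theorem \ref{thm:logical-5D}, following the scheme of the 4D proof (Theorem \ref{thm:distance-4D}). I treat $d_X(C)$ in detail; $d_Z(C)$ is symmetric under $\hat{x}\lr\hat{z}$.

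Let $\ell\in\ker\partial^C_1$ be a nontrivial $X$-type logical of $C$. I decompose it as $\ell=\ell^X+\ell^Q+\ell^Z$ according to the layer types. The first step is a \emph{cleaning} step. Each $X$-check layer $C(x)^\top$ has no internal logicals, by Theorem \ref{thm:contracting-cycles-5D} (using $H_1(C(x))=0$ and its cohomological counterpart). Each $Z$-check layer likewise has trivial first homology. So I can push $\ell$, modulo boundaries, off the check layers and onto the qubit layers. I will track the weight cost of this pushing through a \emph{relative expansion} statement, the analogue of Lemma \ref{lem:relative-expansion-4D}. The statement I want is this: if a chain on $C(x)$ has boundary $\beta$ supported near the intersection of $C(x)$ with the qubit layers or defect lines, then a filling of $\beta$ exists whose weight is controlled by $|\beta|$ times the diameter of $\lambda(x)$. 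That diameter is $O(\weight^2 L)$, since $|\lambda(x)|=O(\weight^2)L$.

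The second step works with the cleaned representative. On each qubit layer $C(q)=R_X\otimes R_Z^\top$, it has the form of surface-code strings. Because the column code is $A$, slicing $\ell$ at any fixed $\hat{z}$-coordinate $k$ that avoids the $O(1)$ special rows yields a chain $\ell|_k\in Q$. There are two such kinds of special row: the rows $k=\eta(z)$ carrying the $Z$-layer gluing, and the contracting-layer rows $k=\omega(x)$. I would show that $\ell|_k$ represents the same nonzero class in $H_1(A)$ for every good $k$, using the gluing maps \eqref{eq:gQX-5D}--\eqref{eq:gZQ-5D} and the fact that $\ell$ is a cycle.

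The crucial point is that the contracting layers $\eta\Lambda\omega(x)$ occupy only one $\hat{z}$-slice per $x$. The $X$-layer vertical parts $\eta\lambda(x)\otimes R_Z$ are translation invariant in $\hat{z}$. Hence for all but $O(\weight\qubit)\cdot$(number of distinct $\omega$, $\eta_Z$ values met) slices, the slice is a genuine nontrivial logical of $A$. Each such slice then costs at least $d_X(A)$ qubits, up to the cleaning overhead per check: a factor $\weight$ from the support of the check, $\weight^2$ from the size of $\lambda(x)$, and $\qubit$ from the degree. Summing over the $\Omega(L_Z)$ good slices gives $|\ell|=\Omega\bigl(L_Z/(\weight^4\qubit^2)\bigr)\,d_X(A)$.

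The main obstacle is the cleaning step. I have to bound how much weight moves from the $X$-check layers, which now include the star planes $\Lambda(x)$ whose faces may touch up to $D-3$ contracting layers, onto a single slice. I would first try to bound this per slice by charging each $X$-layer edge to $O(\weight^2\qubit)$ qubits via the density property (3) of Theorem \ref{thm:color-route-5D}. If that charging scheme fails, I would fall back to averaging over slices, which suffices because only the total weight matters.
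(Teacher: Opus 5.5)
You have the paper's skeleton. You use $H_1(C(x))=0$ (Theorem~\ref{thm:contracting-cycles-5D}) to push the $X$-layer part of an $X$-logical into the qubit layers, then charge $L_Z$ per nontrivial qubit layer. But the quantitative cleaning lemma, which is the whole content of the theorem, is misstated, and as stated it cannot give the result.

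Only $\partial^X$ maps into $C^X_0$, so the $X$-layer part satisfies $\partial^X\ell^X=0$ exactly. There is no boundary ``near the qubit layers or defect lines'' to fill. Instead $\ell^x$ is a cocycle of $C(x)$, so $\ell^x=\delta^x s$ for a vertex set $s$ of $C(x)$. Pushing it off adds $g^{QX}\hat s$ to the qubit layers, where $\hat s$ is $s$ or its complement. (Only $X$-layers need cleaning; the $Z$-layer part of an $X$-logical is not closed and is simply discarded.)

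What you need is $|g^{QX}\hat s|\le|\delta^x s|/c$ with $c$ depending only on $\weight,\qubit$. Your filling bound has constant equal to the diameter or size of $\lambda(x)$, i.e.\ $O(\weight^2)L$. That makes the overhead linear in $L$ and yields only $|\ell|=\Omega\bigl(L_Z/(\weight^2 L)\bigr)d_X(A)$ up to $\weight,\qubit$ factors. Once $L_Z=\Theta(\weight^3\qubit^2L)$, this has lost the entire factor $L$. Your final count silently drops this $L$, and $\weight\cdot\weight^2\cdot\qubit$ does not produce $\weight^4\qubit^2$ anyway.

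The missing ingredient is the relative coexpansion of Lemma~\ref{lem:relative-expansion-5D}:
\begin{itemize}
\item $\lambda(x)$ is connected and carries at most $\weight$ qubits, and $R_Z$ is $2/L_Z$-coexpanding.
\item The product lemma of \cite{layerreduction} then gives $|\delta^x s|\ge\frac{1}{2\weight}\min\bigl(1,2|\lambda(x)|/L_Z\bigr)|g^{QX}\hat s|$. The contracting layer is handled by projecting onto $\eta\lambda(x)\otimes R_Z$.
\item The extremal sets are horizontal cuts (everything in $C(x)$ below some height). They cost only $|\lambda(x)|\ge L$ edges yet move up to $\weight L_Z/2$ qubit-layer cells. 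With $L_Z=O(\weight^3\qubit^2)L$, this is exactly where $\weight^4\qubit^2$ comes from.
\end{itemize}
The Cleaning Lemma of \cite{yuan2025unified} then gives $L_Z\,d_X(A)\le|\ell^Q|+|\ell^X|/c$.

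Your slicing step also fails as written. The rows $\eta_Z(z)$ and $\omega_X(x)$ are not $O(1)$ in number: these maps take up to $\chi_Z$ and $\zeta_X$ values, and $L_Z=\max(\chi_Z,\zeta_X)$, so the excluded rows can exhaust $[L_Z]$. Before cleaning, a slice of $\ell$ need not be a cycle of $A$. After cleaning, no row needs excluding, because each $\ell'^q$ is a surface-code cycle and every height carries the same class. The density property (3) of Theorem~\ref{thm:color-route-5D} plays no role here.

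If you want a genuinely slice-based proof (your averaging fallback), the right per-slice statement is as follows.
\begin{enumerate}
\item Since $\ell^x=\delta^x S^x$ globally, the horizontal part $h^x_k$ of $\ell^x$ at height $k$ equals $\delta_{\lambda(x)}s^x_k$, where $s^x_k$ is the height-$k$ part of $S^x$.
\item Lift a cocycle $m_A$ of $A$ to a cocycle of $C$: horizontal strings at height $k$ in the layers $q\in m_A$, plus, in each $C(x)$, paths in $\lambda(x)\times\{k\}$ pairing up the even set $x\wedge m_A$.
\item Pairing $\ell$ with these lifts shows the following. Let $\ell_k\in Q$ be the parity of $\ell^q$ on the cells $|i,q,k\rangle$. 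Then $\ell_k$ corrected by $\sum_x(s^x_k\cap\supp x)$ pairs with every cocycle of $A$ as a representative of $\phi[\ell]$ does. Hence it is a nontrivial logical of $A$.
\item Choosing $s^x_k$ or its complement, the correction is empty if $h^x_k=0$ and has at most $\weight/2$ qubits otherwise.
\end{enumerate}
Summing over $k$ gives $L_Z\,d_X(A)\le|\ell^Q|+\frac{\weight}{2}|\ell^X|$, which implies the theorem with room to spare.
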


\begin{remark}[Better Bounds]
    \label{rem:better-bounds-5D}
    Similar to the 4D case in Remark \ref{rem:better-bounds-4D}, it's always possible to boost $L_Z=\Theta(\weight^3\qubit^2 L)$ and thus we are guaranteed that
    \begin{equation}
        d_X(C) = \Omega\left(\frac{1}{\weight}\right) Ld_X(A)
    \end{equation}
    The case is similar for the $Z$-type (1-cosystolic) distance.
\end{remark}

\begin{theorem}[Energy Barrier]
    \label{thm:energy-5D}
    Let $C$ be the 5D Layer code in Theorem \ref{thm:logical-5D} with $X,Z$-type (1-)energy barrier $\Delta_X(C),\Delta_Z(C)$, and similarly for the input code $A$. Then
    \begin{align}
        \Delta_X(C) =\Omega\left(\frac{1}{\weight^2 \qubit} \right) \Delta_X(A) \\
        \Delta_Z(C) =\Omega\left(\frac{1}{\weight^2 \qubit} \right) \Delta_Z(A)
    \end{align}
\end{theorem}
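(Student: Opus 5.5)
The plan is to follow the energy-barrier argument for the 3D Layer Code (reviewed in Appendix~\ref{sec:3D-layer-code}), adapted to the layered structure $C^X\to C^Q\to C^Z$ with gluing maps $g^{QX},g^{ZQ}$ and defect map $p^{ZX}$. By the $X\leftrightarrow Z$ symmetry of the construction it suffices to treat $\Delta_X(C)$. Take any $1$-continuous $X$-type Pauli path $\Gamma:[0,T]\to C_1$ whose destination is a nontrivial logical of $C$, and suppose $\Delta(\Gamma)$ is small. The aim is to build a $c$-continuous Pauli path $\gamma$ in $A$, with $c\ge\weight$ a constant, whose destination is a nontrivial logical of $A$ and with $\Delta(\gamma)=O(\weight^2\qubit)\Delta(\Gamma)$. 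Then Remark~\ref{rem:step-size} and Lemma~\ref{lem:stabilizer-invariance} let us pass between $c$- and $1$-energy barriers at only constant cost.

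The projection will be a \emph{reduction map} $r$, which sends a $1$-chain of $C$ to a $1$-chain of $A$. Concretely, it reads off the qubit-layer component $C(q)=R_X\otimes q\otimes R_Z^\top$ along a fixed reference cut. Before applying $r$, we first \emph{clean} each $\Gamma(t)$ using check-layer stabilizers and $H_1(C(x))=H_1(C(z))=0$ from Theorem~\ref{thm:contracting-cycles-5D}.

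The cleaning works as follows. Inside a single check layer $C(z)$, a $1$-chain with small syndrome can be pushed off the layer, modulo the image of $\partial$, onto the qubit layers it is glued to. This uses the vanishing first homology of $C(z)$, together with a local isoperimetric (filling) bound: in the cell complex $\lambda\eta(z)\otimes R_X+\omega\Lambda\eta(z)$, a chain with few violated checks differs from a chain supported near the violations plus a boundary. Similarly, on each surface-code qubit layer, the $X$-type component of $\Gamma(t)$ restricted to the layer is, up to stabilizers, determined by the parity across the layer, plus a correction bounded by the layer's syndrome count. We then define $\gamma(t)=r(\text{cleaned }\Gamma(t))$. Continuity of $\gamma$ follows because a single-qubit step of $\Gamma$ changes the reduced chain on at most $O(1)$ qubits of $A$. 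Here we use that check layers touch at most $\weight$ qubit layers, and that the stabilizers used in cleaning change in bounded ways.

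The syndrome bound comes next. A violated $z$-check of $A$ at time $t$, i.e.\ $\partial\gamma(t)\ni z$, should force at least one violated check of $C$ inside $C(z)$ or in a neighboring qubit layer $q\sim z$. This is the content of the chain-map identity $g^{ZQ}g^{QX}=p^{ZX}\partial^X+\partial^Zp^{ZX}$ (Lemma~\ref{lem:compatible-5D}) together with $H_s(C)\cong H_s(A)$ from Theorem~\ref{thm:logical-5D}. Each violated check of $C$ can be charged to at most $O(\weight^2\qubit)$ checks of $A$. The factor comes as follows: a check layer $C(x)$ carries defect lines toward $\le \weight\qubit$ adjacent $z$'s, and the star graphs add a further factor $\weight$ through the congestion statements of Corollary~\ref{cor:defect-congestion-5D} and Remark~\ref{rem:layer-congestion-5D}. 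Finally, the endpoint $r(\Gamma(T))$ represents the class in $H_1(A)$ corresponding to $[\Gamma(T)]\ne0$ under the isomorphism of Theorem~\ref{thm:logical-5D}. So $\gamma$ ends at a nontrivial logical, giving $\Delta_c(A)\le O(\weight^2\qubit)\Delta(\Gamma)$.

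The main obstacle is the cleaning step inside the check layers. Unlike the 3D case, $C(x)$ is not a product of a line with a repetition code. It is the star-graph cylinder $\eta\lambda(x)\otimes R_Z$ glued to the star planes $\eta\Lambda\omega(x)$, and it is crossed by defect lines that are only edge-disjoint (Lemma~\ref{lem:disjoint-paths}). One therefore needs a quantitative, not merely homological, version of Theorem~\ref{thm:contracting-cycles-5D}: any $1$-cycle of small syndrome in $C(x)$ must be fillable without changing the reduction $r$. The first approach would be to fill cycles layer by layer along $\hat z$, sliding each cycle to the contracting plane at $\omega(x)$ as in Lemma~\ref{lem:contracting-cycles}. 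One would then use that $\Lambda(x)$ is a union of $O(\weight^2)$ star planes with connected pairwise intersections, so that the filling cost per violated check stays $O(1)$ and contributes no $L$-dependence.
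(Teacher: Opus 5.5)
Your outer skeleton matches the paper's (Theorem~\ref{thm:path-reduction-5D}, proved as Theorem~\ref{thm:path-reduction-4D}). You map each $\Gamma(t)$ to a chain of $A$ via the classes in $H_1(C(q))$, get $\weight$-continuity because one step touches one layer, and identify the endpoint through $H_1(C)\cong H_1(A)$.

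The gap is the cleaning step. You build it on a quantitative filling/isoperimetric property of the check layers, and that property is false as stated. Take any chain whose only violated check is one interior face $f$ of $\eta\lambda(x)\otimes R_Z$. Parity ($\sum_{f'}(\delta c)_{f'}\equiv\sum_{e\in c}\deg e$) forces it to contain an edge with an odd number of adjacent faces. Such edges sit at the top or bottom in $\hat z$, at height $\omega(x)$, or at the end of a line of $\lambda(x)$. So in general there is no chain "supported near the violations" with that syndrome.

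Likewise, the syndrome-free chain of all $\hat z$-edges $v\times k^+$, $v\in\lambda(x)$, has weight $\Theta(L)$. For $k\approx L_Z/2$ every filling of it has size $\Theta(L^2)$, so no "$O(1)$ filling cost" can hold. Your step of pushing a chain in $C(z)$ onto the qubit layers also goes the wrong way. The differential of $C$ is lower triangular, so $X$-stabilizers of $C^Z$ have boundary inside $C^Z$. In any case $e^Z$ never affects the reduction; it is the $C^X$ component that must be transported, via $g^{QX}$.

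The argument needs none of this, because stabilizers never change the syndrome; only syndromes cost anything. In the paper, syndromes are removed by explicit non-stabilizer strings in the order $X$-layers, qubit layers, $Z$-layers (Lemmas~\ref{lem:X-layer-err-5D}, \ref{lem:Q-layer-err-5D}, \ref{lem:Z-layer-err-5D}). The last step only pairs syndromes inside each $C(z)$, leaving at most one. What must be controlled is the side effect of the $X$-layer strings:
\begin{itemize}
\item via $g^{QX}$ they disturb only the $\le\weight$ qubit layers $q\sim x$;
\item via $p^{ZX}$, a string in the contracting layer $\eta\Lambda\omega(x)$ meets defect paths $\Gamma_m(xz)$ (when $\omega(x)=\eta(z)$) at most $O(\weight)$ times for each of the $\le\weight\qubit$ checks $z\sim x$.
\end{itemize}
This, not an isoperimetric constant, is the source of $\weight^2\qubit$.

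Afterwards $e^X+\hat e^X$ is syndrome-free, so $e^X+\hat e^X=\partial^X s^X$ by the purely qualitative fact $H_1(C(x))=0$. Here $s^X$ is chosen layer by layer by any fixed rule, with no bound on $|s^X|$. With $\ell^Q=e^Q+\hat e^Q+g^{QX}s^X$, a cycle, the parity identity $[\pi^z\partial(e+\hat e^X+\hat e^Q)]=\langle z|\partial^A\ell^A\rangle$ in $H_0(C(z))\cong\mathbb{F}_2$ gives the syndrome of $\ell^A$ exactly, whatever $s^X$ is. Continuity holds because a change in $C(x)$ alters $s^X$ only on layer $x$, hence $\ell^A$ only on $q\sim x$.

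Finally, your localization claim is incomplete. A violated $z$ of $A$ need not force a violation in $C(z)$ or in some $C(q)$ with $q\sim z$. Violations in $C(x)$ for $x$ sharing qubits with $z$ also force it, and these are exactly the dominant contribution.
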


\section*{Acknowledgements}
ACY and NB are employed by Iceberg Quantum. 
ACY thanks Dominic J. Williamson, Aleksander Kubica, Shouzhen Gu for helpful discussions.

\appendix
\section{3D Layer Code Energy Barrier}
\label{sec:3D-layer-code}
In \cite{yuan2025unified}, the logical subspace and code distance of the 3D Layer Codes were derived using the language of homology, while the proof of the energy barrier was omitted.
However, since we are concerned with the energy barrier of the 4D and 5D embedding in the main text, we provide a detailed derivation in this section for comparison.

\subsection{Review}
Following \cite{yuan2025unified}, though with somewhat different notation, let us define the conventional 3D layer codes as follows. 
Let $A=X\to Q\to Z$ be the input CSS code with $X,Q,Z$ denoting the $X$-checks, qubits, and $Z$-checks.
Let $\sX,\sQ,\sZ$ denote the collection of corresponding basis elements $x,q,z$, all of which are arranged on a 1D linear array so that $\sX =[|\sX|]$ and simialrly, for $\sQ,\sZ$.
Let $A$ have weights
\begin{equation}
X \xrightleftharpoons[\qubit_X]{\weight_X} Q \xrightleftharpoons[\weight_Z]{\qubit_Z} Z
\end{equation}
Let $R_X,R_Q,R_Z$ be the repetition code on $|\sX|,|\sQ|,|\sZ|$ vertices, and define
\begin{align}
    \label{eq:CX-3D}
    C^X &= X\otimes R_Q^\top \otimes R_Z^\top \\
    C^Q &= R_X\otimes Q\otimes R_Z^\top \\
    \label{eq:CZ-3D}
    C^Z &= R_X\otimes R_Q\otimes Z
\end{align}
so that the basis elements of $C^X,C^Q,C^Z$ can be labeled via $|\bar{x},q,z\ket,|x,\bar{q},z\ket,|x,q,\bar{z}\ket$ (and so on), respectively.
Denote $\pi^X$ by the projections from $C\to C^{X}$ and $\pi^x$ to the $x$-check layer. Similarly, define $\pi^{Q},\pi^q,\pi^{Z},\pi^{z}$.
Write $\partial^x = \pi^{x}\partial^X$ and similarly for $\partial^{q},\partial^z$. 
Also write $g^{qX}=\pi^{q}g^{QX}$ and similarly for $p^{zX},g^{zQ}$.
Note that $\partial^q \ne \pi^{q} \partial$ in general; rather, $\pi^{q}\partial = \partial^q + g^{qX}$.

Define the gluing maps
\begin{align}
    \label{eq:gQX-3D}
    g^{QX}|\bar{x},q,z^{\bullet}\ket &= |x,\bar{q},z^{\bullet}\ket 1\{q\sim x\} \\
    \label{eq:gZQ-3D}
    g^{ZQ} |x^{\bullet},\bar{q},z\ket &= |x^{\bullet},q,\bar{z}\ket 1\{z\sim q\}
\end{align}
where $z^{\bullet}$ denotes either integer $z$ or half-integer $z^+$, and similarly, for $x^{\bullet}$.

The defect map is defined as follows. Given pair of adjacent $x,z$, order the common qubits $x\wedge z$ based on the ordering of the 1D array of qubits $\sQ$ and label via $q_1<q_2<...<q_{2M}$. 
Let $t\mapsto \gamma_m(t;xz): [0,T_m] \to \sQ$ denote the path from $q_{2m-1} \to q_{2m}$ along the 1D array of qubits and $T_m=q_{2m}-q_{2m-1}$.
Write half-integer $t^+$ to denote the edges of $\gamma_m$.
Then define the defect map as
\begin{align}
    \label{eq:pZX-2-3D}
    p^{ZX} |\bar{x}qz\ket &= 1\{z\sim x\} |xq^+\bar{z}\ket \\
    &\quad \times \sum_{m} \sum_{t<T_m} \ket1\{q =\gamma_m(t;xz)\}\\
    \label{eq:pZX-1-3D}
    p^{ZX} |\bar{x}q^+ z\ket &= 1\{z\sim x\} |x,q+1,\bar{z}\ket \\
    &\quad\times\sum_{m} \sum_{t<T_m} 1\{q^+ =\gamma_m(t^+;xz)\} 
\end{align}

\subsection{Result}
\begin{theorem}[Energy Barrier]
    \label{thm:energy-3D}
    Let $C$ denote the 3D layer code as defined in Eq. \eqref{eq:CX-3D}-\eqref{eq:pZX-1-3D} with $X$- and $Z$-type (1-)energy barrier $\Delta_X(C),\Delta_Z(C)$, and similarly for the input code $A$.
    Then
    \begin{align}
        \Delta_X(C) &= \Omega\left(\frac{1}{\weight_X \qubit_Z}\right) \Delta_X(A) \\
        \Delta_Z(C) &= \Omega\left(\frac{1}{\weight_Z \qubit_X}\right) \Delta_Z(A)
    \end{align}
\end{theorem}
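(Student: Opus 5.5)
We reduce any low-energy Pauli path in $C$ to a low-energy Pauli path in the input code $A$. We treat only the $X$-type barrier; the $Z$-type case follows by the symmetric argument with $X\leftrightarrow Z$. Let $\Gamma:[0,T]\to C_1$ be a $1$-continuous $X$-type Pauli path whose destination $\ell$ is a nontrivial logical of $C$, with $\max_t|\partial\Gamma(t)|=\Delta$. We will build a $c$-continuous path $\gamma:[0,T]\to Q$ in $A$, with $c=O(\weight_X)$ steps, such that:
\begin{itemize}
\item $\gamma(0)=0$;
\item $\gamma(T)$ represents the same class as $\ell$ under the isomorphism $H_1(C)\cong H_1(A)$ of \cite{yuan2025unified};
\item $|\partial^A\gamma(t)|\le O(\weight_X\qubit_Z)\,\Delta$ for all $t$.
\end{itemize}
Lemma \ref{lem:stabilizer-invariance} then lets us ignore stabilizer ambiguities, and Remark \ref{rem:step-size} converts the $c$-energy barrier back to the $1$-energy barrier at additive cost $c\qubit_Z$. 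Together these give $\Delta_X(C)=\Omega(1/(\weight_X\qubit_Z))\Delta_X(A)$.

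The construction of $\gamma(t)$ is a local ``readout'' map. For each qubit layer $q\in\sQ$, the surface code $R_X\otimes q\otimes R_Z^\top$ has a logical string running along $\hat{z}$. We read off whether $\pi^{q}\Gamma(t)$ crosses a fixed reference line, say the column at $\hat x$-coordinate $x=1$. Set $\gamma(t)|_q$ equal to the parity of $\pi^q\Gamma(t)$ on the edges $|1,\bar q,z^+\ket$, summed over $z$. Without syndromes this is only well defined up to the $X$-check layers' boundaries, so we account for those separately. Each single-qubit step of $\Gamma$ changes at most one reference edge, which gives a $1$-step in $\gamma$. Layer-code stabilizers in $C^X$ change the reference parity by $\partial^A x$, a change of weight at most $\weight_X$; we absorb this as a single $c$-step. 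This is where $c=O(\weight_X)$ enters.

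The main step is bounding $|\partial^A\gamma(t)|$ by the syndrome of $\Gamma(t)$. Fix $z\in\sZ$ with $\bra z|\partial^A\gamma(t)\ket=1$. We must show that the $z$-check layer $C(z)=R_X\otimes R_Q\otimes z$, together with the qubit layers glued to it, carries at least one violated $Z$-check of $C$. Suppose instead that all checks in $C(z)$ and in the qubit layers at $z$ are satisfied. The restriction of $\Gamma(t)$ to the slab at $\hat z=z$ is then a relative cycle. It is closed inside $C(z)$ except where it meets the qubit layers $q\sim z$ through $g^{ZQ}$, and through the defect strings $p^{ZX}$. A parity count over the repetition code along $\hat x$ and $\hat q$ shows that the total flux entering $C(z)$ from the qubit layers equals $\bra z|\partial^A\gamma(t)\ket$ plus contributions from $X$-check layers crossing the defects. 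Those contributions are even, by the compatibility $g^{ZQ}g^{QX}=p^{ZX}\partial^X+\partial^Zp^{ZX}$. This yields a contradiction, so every unsatisfied $z$ of $A$ is witnessed by at least one violated check of $C$ located in $C(z)$ or in a qubit layer $q\sim z$.

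The multiplicity of this witnessing is what costs the constant. A single violated check of $C$ in a qubit layer $q$ can be charged to at most $\qubit_Z$ checks $z\sim q$. The reference-line ambiguity from the $X$-layers contributes a further factor $\weight_X$. Hence $|\partial^A\gamma(t)|\le O(\weight_X\qubit_Z)|\partial\Gamma(t)|$. We expect this flux and parity argument to be the main obstacle. One must handle syndromes that lie along the defect lines and the boundary conditions of the rough and smooth surface-code boundaries, and must show that violated checks cannot be hidden inside $C^X$ without being counted. If the direct parity count fails, the fallback is to first clean $\Gamma(t)$ inside each $X$-check layer. The idea is to use that $H_1$ of each $x$-layer is trivial, as in \cite{yuan2025unified}, to push its restriction onto the qubit layers at cost proportional to its boundary; the readout is then applied.

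Finally, $\gamma(T)$ is the image of $\ell$ under the homology isomorphism, so it is nontrivial in $H_1(A)$. Therefore $\Delta_{c}(A)\le \max_t|\partial^A\gamma(t)|\le O(\weight_X\qubit_Z)\Delta$. Combining with Remark \ref{rem:step-size} and the fact that $\Delta_X(A)=\Delta_1(A)$ yields the claim.
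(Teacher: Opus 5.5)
Your main argument has a genuine gap, and it sits exactly at the step you flag. Take the readout of layer $q$ to be the parity on $\{|i,\bar q,k_0\rangle\}_i$ at a fixed height $k_0$. (Your edges $|1,\bar q,z^+\rangle$ over $z$ run parallel to the logical $\sum_k|i,\bar q,k\rangle$ and do not detect it.) Even so, the raw readout of $\pi^q\Gamma(t)$ is not invariant under stabilizers of $C$. A single generator $\partial|\bar x,q,k_0\rangle$ of $C^X$ with $q\sim x$ acts, through $g^{QX}$, on exactly one qubit $|x,\bar q,k_0\rangle$ of the single layer $q$, so it flips the readout of that one layer only. Only the sum of all 2-cells of layer $x$ shifts the readout by $\partial^A x$. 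Hence $\ell$ and $\ell+\partial|\bar x,q,k_0\rangle$ are both zero-syndrome logicals whose readouts differ by the single qubit $q$. Whenever $q$ lies in some $Z$-check, at least one of these readouts is not even a cycle of $A$. So both your endpoint claim $\gamma(T)\in\phi[\ell]$ and any bound $|\partial^A\gamma(t)|\le O(\weight_X\qubit_Z)|\partial\Gamma(t)|$ fail for this $\gamma$.

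In your flux count the same problem appears as the assertion that the defect contributions are even, which is false. If $e^X=\partial^X s^X$, compatibility gives $p^{zX}e^X=g^{zQ}g^{QX}s^X+\partial^z p^{zX}s^X$. So the defect flux into $C(z)$ is precisely the readout of the correction $g^{QX}s^X$ that your $\gamma$ omits, and such an $s^X$ exists only once $\pi^X\partial e=0$. Your account of $c=O(\weight_X)$ is off for the same reason: $\Gamma$ moves by single qubits and the raw readout is $1$-continuous, so steps of size $\weight_X$ only arise from the corrected readout.

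The fallback you mention is the missing idea, and carrying it out is the paper's proof (Theorem~\ref{thm:path-reduction-3D}). For each $e=\Gamma(t)$, proceed in order:
\begin{itemize}
\item Clear every $X$-layer syndrome $|\bar x,q^+,z^+\rangle$ with the string $\sum_{j\le q}|\bar x,j,z^+\rangle$. This creates at most $\weight_X$ qubit-layer syndromes through $g^{QX}$, and it is invisible to $p^{ZX}$ because it lies at half-integer height.
\item Write $e^X+\hat e^X=\partial^X s^X$ by a fixed rule.
\item Clear the qubit-layer syndromes $\pi^q\partial e+g^{qX}\hat e^X$ with $\hat z$-strings, each meeting at most $\qubit_Z$ of the $z$-layers.
\item Pair up the remaining points in each $z$-layer.
\end{itemize}
Now $\ell^Q=e^Q+\hat e^Q+g^{QX}s^X\in\ker\partial^Q$. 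The compatibility identity above shows that the leftover syndrome in $C(z)$ has odd parity iff $\langle z|\partial^A\ell^A\rangle=1$, where $\ell^A=[\ell^Q]$.

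Counting the $z$ that can carry any syndrome gives $|\partial^A\ell^A|\le|\pi^Z\partial e|+\qubit_Z(|\pi^Q\partial e|+\weight_X|\pi^X\partial e|)$, which replaces your charging argument. At $t=T$ one has $\hat e=0$, so $\ell^A(T)$ represents $\phi[\ell]$. The map $t\mapsto\ell^A(t)$ is $\weight_X$-continuous:
\begin{itemize}
\item a flip in layer $x$ changes $\hat e^x$ and $s^x$, hence $\ell^q$ only for $q\sim x$;
\item a flip in layer $q$ changes only $\ell^q$;
\item a flip in a $z$-layer changes nothing.
\end{itemize}
Remark~\ref{rem:step-size} then converts $\Delta_{\weight_X}$ to $\Delta_1$ as you intended.
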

\begin{proof}
    The statement follows from Theorem \ref{thm:path-reduction-3D} and Remark \ref{rem:step-size}.
\end{proof}
\subsection{Proof of Energy Barrier}

Since the construction is symmetric in $X,Z$, we shall be concerned with the $X$-type energy barrier $\Delta(C)$, while the $Z$-type is similarly derived.

\begin{lemma}[$X$ layer error]
    \label{lem:X-layer-err-3D}
    Let $\sigma^x\in x\otimes R_Q^\top \otimes R_Z^\top$ be a 0-chain (syndrome). Then there exists a map $\sigma^x\mapsto \hat{e}^x$ such that the 1-chain $\hat{e}^x$ is such that $\partial^x\hat{e}^x =\sigma^x$ and
    \begin{equation}
        \Delta_1(\hat{e}^x)\le (1+\weight_X)|\sigma^x|
    \end{equation}
    and
    \begin{align}
        |g^{qX} \hat{e}^x| &\le 1\{q\sim x\} |\sigma^x| \\
        |p^{zX} \hat{e}^x| &= 0
    \end{align}
    Moreover, if $\sigma^x=0$, then $\hat{e}^x=0$.
\end{lemma}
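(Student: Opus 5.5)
The plan is to build $\hat{e}^x$ explicitly and linearly. We send each syndrome cell to the boundary of the $x$-layer along a straight string. The string uses only those 1-cells of the layer on which $p^{zX}$ vanishes.

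First fix the cell structure of $x\otimes R_Q^\top\otimes R_Z^\top$. After relabeling the cocomplexes, vertices of $R_Q,R_Z$ sit in degree $1$ and edges in degree $0$. Hence:
\begin{itemize}
    \item the 2-cells are $|\bar{x},q,z\ket$;
    \item the 1-cells are $|\bar{x},q^+,z\ket$ and $|\bar{x},q,z^+\ket$;
    \item the 0-cells (syndromes) are $|\bar{x},q^+,z^+\ket$.
\end{itemize}
Moreover
\begin{equation}
    \partial^x|\bar{x},q,z^+\ket=|\bar{x},q^-,z^+\ket+|\bar{x},q^+,z^+\ket,
\end{equation}
where the term $|\bar{x},\tfrac12,z^+\ket$ is absent, since $\partial^{R\top}|1\ket=|1^+\ket$ only. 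For a single syndrome cell $\sigma=|\bar{x},q_0^+,z^+\ket$ we set
\begin{equation}
    \hat{e}(\sigma)=\sum_{q=1}^{q_0}|\bar{x},q,z^+\ket .
\end{equation}
This sum telescopes to $\partial^x\hat{e}(\sigma)=\sigma$. We then extend linearly over $\dF_2$, so that $\hat{e}^x=\sum_{\sigma\in\sigma^x}\hat{e}(\sigma)$. Linearity gives $\partial^x\hat{e}^x=\sigma^x$, and it gives $\hat{e}^x=0$ whenever $\sigma^x=0$.

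The gluing and defect bounds are read off from the support of $\hat{e}^x$.
\begin{itemize}
    \item By Eq.~\eqref{eq:gQX-3D}, $g^{qX}$ sends $|\bar{x},q,z^+\ket$ to $|x,\bar{q},z^+\ket$ exactly when $q\sim x$. Each string $\hat{e}(\sigma)$ crosses a given column $q$ at most once. Therefore $|g^{qX}\hat{e}^x|\le 1\{q\sim x\}|\sigma^x|$.
    \item By Eqs.~\eqref{eq:pZX-2-3D}--\eqref{eq:pZX-1-3D}, $p^{ZX}$ is nonzero only on cells with integer $z$-label. Every cell in $\hat{e}^x$ carries a half-integer label $z^+$, so $p^{zX}\hat{e}^x=0$.
\end{itemize}

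The energy bound comes from an explicit $1$-continuous Pauli path. Order the cells of $\sigma^x$ arbitrarily and treat the strings one at a time. Within the string for $\sigma=|\bar{x},q_0^+,z^+\ket$, add the edges $|\bar{x},q,z^+\ket$ in the order $q=q_0,q_0-1,\dots,1$. We must control the full syndrome in $C$. In $C$ this is $\partial^x$ plus the gluing contribution $g^{QX}$, and $p^{ZX}$ contributes nothing. At an intermediate step of the current string, its syndrome consists of two parts:
\begin{itemize}
    \item the original cell $\sigma$ together with one moving endpoint, which is at most $2$ cells;
    \item at most $\weight_X$ qubit-layer cells, produced by crossing columns $q\sim x$.
\end{itemize}
Each completed string leaves a residual syndrome of at most $\weight_X$ gluing cells. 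Its $\sigma$ is not counted here, because the full target error has syndrome $\sigma^x$ plus gluing terms.

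Summing these contributions bounds the barrier by $\sim(1+\weight_X)|\sigma^x|$. The main obstacle is bookkeeping at this last step. We have to pin down exactly which cells count toward the energy $|\partial\gamma(t)|$ in $C$ (the layer syndrome versus the gluing image). We also have to make sure the additive constant, coming from the moving endpoint and the cell $\sigma$, is absorbed into the stated bound $(1+\weight_X)|\sigma^x|$. If the constant does not fit, we will run each string from the left boundary $q=1$ toward $q_0$, so that the moving endpoint itself plays the role of $\sigma$.
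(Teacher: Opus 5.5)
Your proposal matches the paper's proof: you use the same strings $\sum_{j\le q_0}|\bar{x},j,z^+\rangle$ from the left boundary, read off the $g^{qX}$ bound and the vanishing of $p^{zX}$ from the half-integer $z^+$ labels, and get the energy bound by sub-additivity over strings. Commit to your fallback ordering, which is the one the paper uses. Growing each string from $j=1$ upward means its full syndrome at every step is one moving endpoint plus at most $\weight_X$ gluing cells, giving exactly $1+\weight_X$ per string. Your first ordering, from $q_0$ downward, only gives $2+\weight_X$ at intermediate steps and overshoots the stated bound by one.
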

\begin{proof}
    Define $\hat{e}^x$ as the collection of strings with endpoints in $\partial^x e^x$, i.e., 
    \begin{equation}
        \hat{e}^x = \sum_{q^+ z^+} 1\{|\bar{x}q^+z^+\ket\in \sigma^x\}\left(\underbrace{\sum_{j\le q} |\bar{x}j z^+\ket }\right)
    \end{equation}
    Then it's clear that $\partial^x \hat{e}^x =\sigma^x$.
    Also note that the string (underbraced) can be generated by the following string-like Pauli path
    \begin{equation}
        \gamma^x(t)=t\mapsto \sum_{j\le t}|\bar{x} j z^+\ket
    \end{equation}
    It's then clear that 
    \begin{equation}
        \left|\partial^X \gamma^x(t) \right| \le 1,\quad  \forall t
    \end{equation}
    Also note that for any $t\le q$, we have
    \begin{align}
        |g^{qX} \gamma^x(t)| &\le \sum_{j\le t} 1\{j=q\sim x\} \\
        &\le 1\{q\sim x\} 
    \end{align}
    Similarly, by definition of the defect map, we see that $p^{zX} \gamma^x(t)=0$. Hence,
    \begin{align}
        |\partial \gamma^{x} (t)| &\le (1+\weight_X), \quad \forall t \\
        \Delta_1(\gamma^x) &\le (1+\weight_X) \\
        \Delta_1(\hat{e}^x) &\le (1+\weight_X) |\sigma^x|
    \end{align}
    where we utilized the sub-additivity of the energy barrier.
\end{proof}

We will also require analogous statement for the qubit and $Z$-check layers, as we collect below.
\begin{lemma}[$Q$ layer error]
    \label{lem:Q-layer-err-3D}
    Let $\sigma^q\in R_X\otimes q\otimes R_Z^\top$ be a 0-chain. Then there exists map $\sigma^{q}\mapsto \hat{e}^q$ such that the 1-chain $\hat{e}^q$ satisfies $\partial^q \hat{e}^q=\sigma^q$ and
    \begin{equation}
        \Delta_1(\hat{e}^{q}) \le (1+\qubit_Z)|\sigma^q|
    \end{equation}
    Moreover, if $\sigma^q=0$, then $\hat{e}^q=0$.
\end{lemma}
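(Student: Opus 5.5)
The qubit layer $C^q = R_X\otimes q\otimes R_Z^\top$ is a surface code with 1-cells $|x\bar{q}z\ket$ (vertical, i.e.\ along the $R_Z^\top$ direction in our labeling) and $|x^+\bar{q}z^+\ket$, and 0-cells (syndromes) $|x^{\bullet} q z^{\bullet}\ket$ in the $Z$-check-type position. I would mirror the proof of Lemma \ref{lem:X-layer-err-3D} exactly. For each syndrome cell in $\sigma^q$, I define a string $\hat{e}$ running from that cell to the (smooth/rough) boundary of the surface code along one fixed transverse direction. Concretely, a syndrome at $|x^+ q z\ket$ is paired with the string $\sum_{j\le x}|j^{\bullet}\bar{q}z\ket$ that runs along $\hat{x}$ to the boundary where $R_X$ terminates. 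The orientation is chosen so that the boundary of $R_X\otimes R_Z^\top$ along that side absorbs the endpoint. Then $\partial^q$ of the string is exactly the single syndrome cell, and by linearity $\partial^q\hat{e}^q=\sigma^q$. Since the map is linear and sends $0\mapsto 0$, the last claim is immediate.

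For the energy bound, I parametrize each string by the Pauli path $t\mapsto \sum_{j\le t}|j\,\bar{q}\,z\ket$ (1-continuous). At time $t$ its full boundary $\partial\gamma(t)$ in $C$ has two contributions. The first is the within-layer part $\partial^q\gamma(t)$, which is a single moving endpoint, so it has weight $\le 1$. The second is the contribution through the gluing map $g^{ZQ}$, since $\pi^z\partial=\partial^z+g^{zQ}$ on the qubit layer. By Eq.\ \eqref{eq:gZQ-3D}, each qubit-layer 1-cell $|x^\bullet\bar q z\ket$ maps to $|x^\bullet q\bar z\ket$ only for $z\sim q$ at the matching $z$-coordinate. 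Along the string the $z$-coordinate is fixed, so the only potentially nonzero gluing term is at the unique $z$ with coordinate equal to the string's; but this is a sum over at most $\qubit_Z$ checks adjacent to $q$. Bounding crudely gives $|g^{ZQ}\gamma(t)|\le \qubit_Z$. There is no defect-map contribution, since $p^{ZX}$ only originates in $C^X$. Hence $|\partial\gamma(t)|\le 1+\qubit_Z$ for all $t$, so $\Delta_1\le 1+\qubit_Z$ per string, and sub-additivity (Lemma \ref{lem:sub-additivity}) gives $\Delta_1(\hat e^q)\le(1+\qubit_Z)|\sigma^q|$.

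The main point requiring care is the gluing contribution. In this 3D layout, $g^{ZQ}$ applied to a cell $|x^\bullet\bar q z\ket$ hits $C^Z$ only when the check index $z$ coincides with the coordinate. I must check that the chosen string direction (along $\hat x$, fixed $z$) makes the gluing image have weight at most $\qubit_Z$ at each time rather than accumulating. The image of the partial string $\sum_{j\le t}$ under $g^{ZQ}$ is $\sum_{j\le t}|j^\bullet q\bar z\ket$, and $|g^{ZQ}\cdot|$ as a chain is not small. However, what enters the energy is $\partial$ of the full 1-chain in $C$, and the $C^Z$-component of $\partial\gamma(t)$ equals $g^{ZQ}\gamma(t)$ plus $\partial^Z$ of nothing. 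So I would instead choose the string to run along the $\hat z$ direction, i.e.\ perpendicular to the gluing. Its $g^{ZQ}$ image then lies in the $C^Z$ cells at a single $x^\bullet$ and the $\le\qubit_Z$ checks $z\sim q$ it crosses, giving at most $\qubit_Z$ terms. Verifying which orientation yields a bounded gluing syndrome, and that the corresponding rough boundary exists in that direction, is the step I expect to be fiddly. This is precisely the analogue of the bound $|g^{qX}\hat e^x|\le 1\{q\sim x\}|\sigma^x|$ in Lemma \ref{lem:X-layer-err-3D}.
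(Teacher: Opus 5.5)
Once you commit to strings along $\hat z$, your argument is the paper's proof. The $\hat x$ construction you lead with is invalid, though, and for a different reason than the one you give.

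In $R_X\otimes q\otimes R_Z^\top$ the cells are:
\begin{itemize}
\item 0-cells (syndromes): $|x\bar q z^+\rangle$;
\item 1-cells: the $\hat z$-edges $|x\bar q z\rangle$ and the $\hat x$-edges $|x^+\bar q z^+\rangle$;
\item 2-cells: $|x^+\bar q z\rangle$.
\end{itemize}
So ``a syndrome at $|x^+ q z\rangle$'' is actually a face. Likewise $\sum_{j\le x}|j^\bullet\bar q z\rangle$ is a set of parallel $\hat z$-edges (or faces), not an $\hat x$-string. That is why you saw a large gluing image.

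A genuine $\hat x$-string $\sum_{a\le j<b}|j^+\bar q z^+\rangle$ sits at half-integer $z^+$, so $g^{ZQ}$, which acts only on integer $z$-coordinates, annihilates it. Its gluing cost is therefore zero, not large. The real problem is its boundary, $|a\bar q z^+\rangle+|b\bar q z^+\rangle$. The repetition code $R_X$ has vertices at both ends, so an $\hat x$-string can never remove a single syndrome. It is this boundary condition, not the gluing, that forces the $\hat z$ direction.

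The step you flag as fiddly is immediate. $R_Z^\top$ has 1-cells $|k\rangle$ at every integer position and 0-cells only at interior half-integers $|k^+\rangle$, so the $k=1$ edge has a single endpoint. Hence $\partial^q\sum_{k\le z}|x\bar q k\rangle=|x\bar q z^+\rangle$, which is exactly the paper's $\hat e^q$.

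Along the path $\gamma(t)=\sum_{k\le t}|x\bar q k\rangle$:
\begin{itemize}
\item $|\partial^q\gamma(t)|\le 1$;
\item $g^{ZQ}\gamma(t)=\sum_{k\le t,\,k\sim q}|x q\bar k\rangle$, which has weight at most $\qubit_Z$ because the string crosses each check coordinate once.
\end{itemize}
Since $\partial$ restricted to $C^Q$ has only the components $\partial^Q$ and $g^{ZQ}$, you get $|\partial\gamma(t)|\le 1+\qubit_Z$. Sub-additivity then gives $\Delta_1(\hat e^q)\le(1+\qubit_Z)|\sigma^q|$, as in the paper.
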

\begin{proof}
    Define $\hat{e}^q$ as the collection of strings with endpoints in $\partial^q e^q$, i.e.,
    \begin{equation}
        \hat{e}^q = \sum_{ x z^+} 1\{|x \bar{q} z^+\ket\in \sigma^q\}\left(\underbrace{\sum_{k\le z} |x\bar{q} k\ket }\right)
    \end{equation}
    Then it's clear that $\partial^q \hat{e}^q =\sigma^q$.
    Note that the string (underbraced) can be generated by the following string-like Pauli path
    \begin{equation}
        \gamma^{q}(t) = t\mapsto \sum_{k\le t}|x\bar{q} k\ket
    \end{equation}
    It's then clear that 
    \begin{equation}
        |\partial^Q \gamma^q(t)| \le 1, \quad \forall t
    \end{equation}
    Also note that
    \begin{align}
        |g^{zQ} \gamma^{q}(t)| &= \sum_{k\le z'} 1\{z=k\sim q\}  \\
        &\le 1\{z\sim q\} 
    \end{align}
    Hence,
    \begin{align}
        |\partial \gamma^{q}(t)| &\le (1+\qubit_Z), \quad \forall t\\
        \Delta_1(\gamma^{q}) &\le (1+\qubit_Z) \\
        \Delta_1(\hat{e}^q) &\le (1+\qubit_Z)|\sigma^{q}|
    \end{align}
    where we utilized the sub-additivity of the energy barrier.
\end{proof}

After cleaning syndromes off of $X$ and $Q$ layers, there only remains $Z$ layers. However, a given $C^Z(z)$ might contain multiple point-like syndromes. In fact, if there is an even number of them, they can brought together and cancel each other out at a low energy cost. After this step, each $Z$ layer contains either no syndrome, or a single point-like syndrome.

\begin{lemma}[$Z$ layer error]
    \label{lem:Z-layer-err-3D}
    Let $\sigma^{z}\in R_X\otimes R_Q\otimes z$ be a 0-chain. Then there exists map $\sigma^z \mapsto \hat{e}^z$ such that the 1-chain $\hat{e}^z$ saitsfies
    \begin{equation}
        \partial^z \hat{e}^z + \sigma^z = |00\bar{z}\ket 1\{[\sigma_z]\ne 0\}
    \end{equation}
    where $[\sigma^z]\in H_0(C^Z)$, and
    \begin{equation}
        \Delta_1(\hat{e}^{z}) \le 2
    \end{equation}
    Moreover, if $\sigma^z=0$, we can choose $\hat{e}^z=0$.
\end{lemma}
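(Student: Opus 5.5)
The $z$-layer $C^Z(z)=R_X\otimes R_Q\otimes z$ is a 2D grid graph on $[|\sX|]\times[|\sQ|]$ with the fixed $\bar z$ label; its edges (qubits of the layer) are $|x^+q\bar z\ket$ and $|xq^+\bar z\ket$, and its vertices are $|xq\bar z\ket$. Thus $\sigma^z$ is a set of vertices in a connected grid, and $[\sigma^z]\in H_0(C^Z(z))\cong\dF_2$ is just the parity of $|\sigma^z|$. The plan is to pair the vertices of $\sigma^z$ and join each pair by a lattice path. If the parity is odd, the one leftover vertex is joined to the corner $|00\bar z\ket$. Let $\hat e^z$ be the $\dF_2$-sum of these paths. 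By construction $\partial^z\hat e^z=\sigma^z+|00\bar z\ket 1\{[\sigma^z]\ne0\}$, and $\sigma^z=0$ gives the empty collection, so $\hat e^z=0$.

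The energy bound has to be read as the energy seen inside the layer, $|\partial^z\gamma(t)|$, since $\partial^Z$ has no gluing or defect maps leaving $C^Z$. A $\Delta_1\le 2$ bound for the whole $\hat e^z$ cannot follow from sub-additivity, because that only gives $2\times$(number of paths). So I will build one explicit 1-continuous Pauli path for the whole chain. The choice is to order the points of $\sigma^z$, together with the corner if the parity is odd, as $v_1,v_2,\dots,v_{2N}$. Then grow the path from $v_1$ to $v_2$ one edge at a time: at every intermediate time the partial string has exactly the two endpoints $v_1$ and the current tip, so $|\partial^z\gamma(t)|\le2$. When the tip reaches $v_2$ the string is closed and the syndrome it contributes is $\partial(\text{string})=v_1+v_2$. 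Next grow the string from $v_3$ to $v_4$, and so on.

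The main obstacle is the bookkeeping. With $\partial$ taken of the accumulated chain, $|\partial^z\gamma(t)|$ is the syndrome created so far, which includes the already-paired points. Those are exactly the points being cancelled, so measured against the ambient state $\sigma^z+\partial^z\gamma(t)$ the count actually decreases. I therefore interpret the lemma's bound as the energy \emph{added on top of} $\sigma^z$: in the later composition of Lemmas~\ref{lem:X-layer-err-3D}--\ref{lem:Z-layer-err-3D} one tracks $\max_t|\sigma^z+\partial^z\gamma(t)|-|\sigma^z|\le 2$. At each step the only new excitations are the moving tip and, before it is absorbed, the start point. So this excess is at most $2$ at every time, and in fact at most $1$ when the tip starts at a point of $\sigma^z$. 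I would state this interpretation explicitly and verify it step by step. I would also check that overlapping paths (where $\dF_2$ cancellation occurs) do not break the argument: flipping an edge changes the boundary only at its two endpoints, so the excess bound is unaffected.
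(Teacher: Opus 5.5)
Your construction is the paper's: pair the points of $\sigma^z$, join each pair by a lattice path in the $z$-layer, send an odd leftover to $|00\bar z\ket$, and concatenate everything into one Pauli path whose endpoint is $\hat e^z$. Where you part ways with the paper is the energy bookkeeping, and there you are right and the paper's proof is not.

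The paper asserts $\Delta_1(\Gamma^z)\le 1$ for the concatenated path. By the concatenation estimate in the proof of Lemma~\ref{lem:sub-additivity}, however, each completed pair leaves two excitations in $\partial\Gamma^z(t)$. You should say more than ``cannot follow from sub-additivity'': the literal bound is \emph{false}. Any path ending at $\hat e^z$ has
$\Delta\ge|\partial\hat e^z|=|\partial^z\hat e^z|=|\sigma^z+|00\bar z\ket 1\{[\sigma^z]\ne 0\}|$, and this exceeds $2$ as soon as $|\sigma^z|\ge 4$.

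Your excess-over-background statement, $\max_t|\sigma^z+\partial^z\Gamma(t)|-|\sigma^z|\le 2$, is the correct replacement, and your argument for it is sound. Overlaps are harmless, because the mod-2 boundary of a partial walk is always just its two ends.

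Your guess about how this enters the later composition is off, but harmlessly. In Theorem~\ref{thm:path-reduction-3D} the cleaning chains are never traversed as paths; only their final syndromes matter. Concretely, $|\partial(e+\hat e)|$ is bounded by the number of $z$-layers with nonzero class, hence by $|\pi^Z\partial(e+\hat e^X+\hat e^Q)|$. So the only thing used from this lemma is the identity $\partial^z\hat e^z+\sigma^z=|00\bar z\ket 1\{[\sigma^z]\ne 0\}$, and no energy bound on $\hat e^z$ is needed at all.

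One small fix concerns the case where $|\sigma^z|$ is odd and $|00\bar z\ket$ already lies in $\sigma^z$. There you should pair the points of the even set $\sigma^z+|00\bar z\ket$ rather than appending the corner to your list a second time.
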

\begin{proof}
    Note that $\sigma^z \in \im \partial^z$ iff $|\sigma^z|$ is even.
    Order $|xq\bar{z}\ket\in \sigma^z$ arbitrarily, i.e., $|x_1q_1\bar{z}\ket,|x_2q_2\bar{z}\ket...$.
    Let $\gamma^z_m(t)$ be a simple path in $R_X\otimes R_Q \otimes z$ connecting pairs $|x_{2m-1} q_{2m}\bar{z}\ket \to |x_{2m}q_{2m}$. 
    If $|\sigma^z|=2M+1$ is odd, let $\gamma_{M+1}^z(t)$ denote the simple path from $|x_{2M+1}q_{2M+1} \bar{z}\ket \to |00\bar{z}\ket$.
    It's then clear that $|\gamma^z_m(t)|\le 1$ for all $t$.
    Consider the concatenated path $\Gamma^z=\gamma_1^z\to \gamma_2^z \to \cdots $, defined similarly to Eq. \eqref{eq:concatenated-path}.
    Then it's clear that $\Delta_1(\Gamma^z)\le 1$.
    Let $\hat{e}^z$ denote the destination of Pauli path $\Gamma^z$. 
    Then we see that the statement follows.
\end{proof}

\begin{theorem}
    \label{thm:path-reduction-3D}
    Let $\sL$ be a nontrivial logical of the 3D Layer code $C$ so that $[\sL]\in H_1(C)$. If $\phi:H_1(C) \to H_1(A)$ denotes the isomorphism in Theorem 1 of \cite{yuan2025unified}, then
    \begin{equation}
        \Delta_{\weight_X}(\phi [\sL]) =O(\weight_X \qubit_Z) \Delta_1(\sL)
    \end{equation}
\end{theorem}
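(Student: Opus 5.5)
Let $\gamma:[0,T]\to C_1$ be a $1$-continuous Pauli path with destination $\sL$ attaining $\Delta_1(\sL)$. The plan is to convert $\gamma$, time step by time step, into a $\weight_X$-continuous Pauli path $\tilde\gamma$ in the input code $A$ with destination in the class $\phi[\sL]$, while keeping $|\partial^A\tilde\gamma(t)| = O(\weight_X\qubit_Z)\,\Delta_1(\gamma)$. By Lemma \ref{lem:stabilizer-invariance}, the $\weight_X$-energy barrier depends only on the class, so it is enough to end anywhere in $\phi[\sL]$.

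\textbf{Step 1 (cleaning).} At each time $t$, split the syndrome $\sigma(t)=\partial\gamma(t)$ into its components $\sigma^x,\sigma^q,\sigma^z$ on the $X$-, qubit- and $Z$-layers. Clean them in order.
\begin{itemize}
\item Apply Lemma \ref{lem:X-layer-err-3D} to each $\sigma^x$, producing $\hat e^x$. This moves syndrome onto qubit layers $q\sim x$ only, at most $\weight_X|\sigma^x|$ in total, and creates nothing on the $Z$-layers.
\item Apply Lemma \ref{lem:Q-layer-err-3D} to the updated qubit-layer syndromes. This pushes syndrome onto $Z$-layers, multiplying the count by at most $\qubit_Z$.
\item Apply Lemma \ref{lem:Z-layer-err-3D} in each $Z$-layer. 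After this, each $z$-layer carries at most a single point $|00\bar z\rangle$, present exactly when the residual syndrome in that layer is odd.
\end{itemize}
Let $c(t)=\gamma(t)+\hat e(t)$ be the corrected chain. Then $\partial c(t)=\sum_z \epsilon_z(t)|00\bar z\rangle$ with $\sum_z\epsilon_z(t)=O(\weight_X\qubit_Z)|\sigma(t)|$.

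\textbf{Step 2 (projection).} Define $\tilde\gamma(t)=\phi(c(t))\in Q$ via the chain-level map underlying Theorem 1 of \cite{yuan2025unified}: restrict $c(t)$ to the qubit layers and read off, for each $q$, the parity of its crossings of a fixed cut, e.g.\ the line $x=0$ of the surface code $R_X\otimes q\otimes R_Z^\top$. The target identity is
\begin{equation}
\partial^A\tilde\gamma(t)=\sum_z\epsilon_z(t)\,z ,
\end{equation}
which gives $|\partial^A\tilde\gamma(t)|=O(\weight_X\qubit_Z)\Delta_1(\gamma)$.

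\textbf{Step 3 (continuity and endpoints).} Because the cleaning maps are canonical (linear with fixed string choices, and zero on zero syndrome), changing $\gamma$ by one qubit changes $\sigma$ locally and changes $\hat e$ by $O(1)$ strings. The projected difference $\tilde\gamma(t)-\tilde\gamma(t-1)$ should then be supported on the qubits of a single $x$-check, or be a single qubit, so it has weight at most $\weight_X$. This is why the statement uses $\Delta_{\weight_X}$. At the endpoints, $\gamma(0)=0$ gives $\tilde\gamma(0)=0$, and $\gamma(T)=\sL$ gives $\hat e(T)=0$, so $\tilde\gamma(T)=\phi(\sL)$.

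\textbf{Main obstacle.} The hardest point is making Step 2 rigorous. One must show that the residual $Z$-layer points are the only source of $A$-syndrome, which requires proving that $\phi$ intertwines $\partial$ with $\partial^A$ modulo $\im\partial$ on chains with localized syndrome. The compatibility of the defect maps with the gluing maps should give this. Concretely, I would check that a $Z$-layer whose syndrome is only $|00\bar z\rangle$ contributes exactly $\epsilon_z$ to $\langle z|\partial^A\tilde\gamma\rangle$, by summing syndrome parities inside that layer.

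The second difficulty is that $\hat e$ is not linear in Lemma \ref{lem:Z-layer-err-3D}, because the pairing of syndrome points is arbitrary. To fix this, I would pair each point directly with $|00\bar z\rangle$, which makes the map linear and keeps consecutive projections within step size $\weight_X$.
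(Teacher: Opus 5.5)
Your Step 1 (cleaning the $X$-, then qubit-, then $Z$-layers) and your Step 3 (localizing each change to the qubit layers $q\sim x$ of a single check) match the paper. Step 2, however, fails as you define it, and this is exactly where the key idea is missing.

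After cleaning, the qubit-layer part $c^Q$ of $c=e+\hat e$ is \emph{not} a cycle of $C^Q$. The qubit-layer component of $\partial c$ is $\partial^Q c^Q+g^{QX}c^X$, so its vanishing only gives $\partial^Q c^Q=g^{QX}c^X$, which is generally nonzero. Counting crossings of a fixed cut is therefore neither a homology invariant nor a tracker of the syndrome. Here is a concrete failure. Take $q_1\sim x$. Any cut detecting the logical of the surface code of $q_1$ must contain some $1$-cell $|x,\bar q_1,k\rangle$ of the vertical string at the $\hat x$-position of $x$, since it has to pair to $1$ with that string. The stabilizer $e=\partial|\bar x,q_1,k\rangle$ has $\partial e=0$, so $\hat e=0$ and every $\epsilon_z=0$. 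Yet $c^Q=g^{QX}|\bar x,q_1,k\rangle=|x,\bar q_1,k\rangle$, so your map returns $\tilde\gamma=q_1$. Its syndrome $\partial^A q_1$ is nonzero as soon as $q_1$ lies in the support of a $Z$-check. Adding this stabilizer to a logical breaks your endpoint claim as well: $\tilde\gamma$ shifts by $q_1$, so it does not descend to $H_1(C)$ and cannot land in $\phi[\mathcal L]$ for every representative.

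The missing ingredient is a lift through the $X$-layers. After cleaning, $\partial^X c^X=0$, and the $X$-layer complex has trivial first homology. So write $c^X=\partial^X s^X$, with $s^X$ chosen by a fixed rule layer by layer, and set $\ell^Q=c^Q+g^{QX}s^X$. From $\partial^2=0$ we have $\partial^Q g^{QX}=g^{QX}\partial^X$, so $\ell^Q$ is a genuine cycle and $\ell^A=[\ell^Q]\in H_1(C^Q)\cong Q$ is well defined. Pairing $\ell^Q$, not $c^Q$, with your cut is legitimate.

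The compatibility relation $g^{ZQ}g^{QX}=p^{ZX}\partial^X+\partial^Z p^{ZX}$ then gives $\pi^z\partial c=g^{zQ}\ell^Q+\partial^z(c^z+p^{zX}s^X)$. Taking the class in $H_0$ of the $z$-layer yields $\epsilon_z=\langle z|\partial^A\ell^A\rangle$ directly. So your ``main obstacle'' disappears once the correction $g^{QX}s^X$ is included, and at $e=\mathcal L$ one gets $\ell^A\in\phi[\mathcal L]$ by definition of $\phi$.

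Continuity still holds. A one-cell change in the $x$-layer alters $s^X$ only within that layer, possibly by its complement, which contributes $\partial^A x$. Hence only the at most $\weight_X$ qubits $q\sim x$ change. Finally, your concern about nonlinearity of the $Z$-layer cleaning is moot, since $\ell^A$ does not depend on $\hat e^Z$ at all.
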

\begin{proof}
    Let $t\mapsto \Gamma(t)$ be a 1-continuous path in $C$ with destination $\sL$ such that $\Delta(\sL)=\Delta(\Gamma)$ and let $e =\Gamma(t)$ for an arbitrary $t$.
    
    By Lemma \ref{lem:X-layer-err-3D}, we can obtain $\hat{e}^x$ based on the syndrome $\pi^x\partial e$ where we take $\hat{e}^x=0$ if $\pi^x \partial e=0$. 
    Let $\hat{e}^X$ be the direct sum of $\hat{e}^x$ over $x\in \sX$ so that
    \begin{align}
        \Delta(\hat{e}^X) &\le(1+\weight_X)|\pi^X \partial e| =O(\weight_X) |\partial e|\\
        |g^{QX} \hat{e}^X| &\le \weight_X |\pi^X\partial e|  \\
        |p^{ZX} \hat{e}^X| & =0
    \end{align}
    By Lemma \ref{lem:Q-layer-err-3D}, we can obtain $\hat{e}^q$ based on the syndrome 
    \begin{equation}
        \pi^q\partial (e+\hat{e}^X) = \pi^q\partial e +g^{qX}\hat{e}^X
    \end{equation}
    where we take $\hat{e}^q=0$ if the syndrome $=0$.
    Let $\hat{e}^Q$ be the direct sum of $\hat{e}^q$ over $q\in \sQ$ so that
    \begin{align}
        \Delta(\hat{e}^Q) &\le (1+\qubit_Z)\left(|\pi^{Q} \partial e| +\weight_X |\pi^X \partial e|\right) \\
        &= O(\weight_X \qubit_Z) |\partial e|
    \end{align}
    By Lemma \ref{lem:Z-layer-err-3D}, we can obtain $\hat{e}^z$ based on the syndrome
    \begin{equation}
        \pi^z \partial (e+\hat{e}^X +\hat{e}^Q) = \pi^z\partial e +g^{zQ} \hat{e}^Q
    \end{equation}
    where we take $\hat{e}^z=0$ if the syndrome $=0$.
    Note that we used the fact that $p^{zX}\hat{e}^X =0$ for all $z$.
    Let $\hat{e}^Z$ denote the direct sum of $\hat{e}^z$ over $z\in \sZ$ so that $\Delta(\hat{e}^Z) \le 2$ and that if $\hat{e} =\hat{e}^X +\hat{e}^Q+\hat{e}^Z$, then
    \begin{equation}
        \partial (e+\hat{e}) = \sum_{z} |00\bar{z}\ket 1\{[\pi^z\partial e +g^{zQ}\hat{e}^Q]\ne 0\}
    \end{equation}
    
    Note that $e^X+\hat{e}^X \in \ker \partial^X = \im \partial^X$ and thus $=\partial^X s^X$ for some 2-chain $s^X\in C^X$.
    Let $s^X$ be obtained in an arbitrary but predetermined manner via a map $e^X+\hat{e}^X\mapsto s^X$.
    Hence,
    \begin{equation}
        \pi^{Q} \partial(e+\hat{e}^X) = \partial^Q(e^Q+g^{QX} s^X)
    \end{equation}
    And thus $\ell^{Q} \equiv e^Q+\hat{e}^Q +g^{QX} s^X \in \ker \partial^Q$ so that $\ell^{A} = [\ell^Q] \in H_1(C^Q)=Q$ is well-defined as a 1-chain in $A$.
    Finally, note that
    \begin{align}
        \pi^z \partial e+g^{zQ} \hat{e}^Q &= g^{zQ} \ell^Q + \partial^z (e^z +p^{zX} s^X) \\
        [\pi^z\partial e+ g^{zQ} \hat{e}^Q] &= [g^{zQ}] [\ell^{Q}] \\
        [\pi^z\partial e+ g^{zQ} \hat{e}^Q] &= \bra z|\partial^A \ell^A\ket
    \end{align}
    Hence, we see that
    \begin{equation}
        |\partial (e+\hat{e})| = |\partial^A \ell^A|
    \end{equation}
    Note that the previous construction defines a map $e\mapsto \ell^A$ and applies to any $e=\Gamma(t)$ and thus we obtain a path $t\mapsto \ell^A(t)$ with respect to $A$. 
    We further claim that the induced path $t\mapsto \ell^A(t)$ is $\weight_X$-continuous, which we postpone to the end of the proof.
    Note that if $e=\sL$, i.e., at the destination of the Pauli path $\Gamma(t)$, then $\hat{e}=0$ and that $\ell^A \in \phi[\sL]$. 
    Hence, the destination of Pauli path $\ell^A(t)$ is a nontrivial logical in the coset $\phi[\sL]\in H_1(A)$.
    Note that for any $t$, we have shown that
    \begin{equation}
        |\partial^A\ell^A(t)| = O(\weight_X \qubit_Z)|\partial \Gamma(t)|
    \end{equation}
    Hence,
    \begin{equation}
        \Delta_{\weight_X}(\phi[\sL]) =O(\weight_X \qubit_Z) \Delta_1(\sL)
    \end{equation}

    Let us now prove that $t\mapsto \ell^A(t)$ is $\weight_X$-continuous.
    Indeed, since $\Gamma$ is 1-continuous, we see that $e'=\Gamma(t+1)$ must differ from $e=\Gamma(t)$ by a single 1-cell in $C$, and thus there are three possibilities: the difference occurs in $e^x$ for some $x\in \sX$, or $e^q$ for some $q\in \sQ$, or $e^z$ for some $z\in \sZ$. 
    In the first case, we have $e'+e=f^x$ for some $x\in \sX$ and thus $|g^{qX}(s^X +s'^X)|\ne 0$ only if $q\sim x$ and thus changes at most $\weight_X$ qubit layers. Hence, $\ell'^q\ne \ell^q$ only if $q\sim x$ and thus $|\ell'^{A} -\ell^A| \le \weight_X$.
    In the second case, we have $e'+e=f^q$ for some $q\in \sQ$ and thus only $\ell'^q,\ell^q$ may differ over all qubits. Hence, $|\ell'^{A} -\ell^{A}| \le 1$.
    In the third case, it's clear that $\ell^A$ is independent of $e^z$ and thus $\ell'^A =\ell^A$.
    Hence, $t\mapsto \ell^{A}(t)$ is $\weight_X$-continuous.
    


\end{proof}
\section{Proof of 4D Embedding}
\label{sec:4Dproof}
In this section, we provided proofs for the main results collected in Theorem \ref{thm:logical-4D}, \ref{thm:distance-4D} and \ref{thm:energy-4D}.
The proofs follow the similar reasoning as \cite{williamson2023layer,layerreduction,yuan2025unified}, but included for completeness.

\subsection{Logicals Preserved}
\begin{proof}[Proof of Theorem \ref{thm:logical-4D}]
    By Lemma \ref{lem:compatible}, we see that the gluing maps and the defect map are compatible and thus by the framework in \cite{yuan2025unified}, the output code $C$ is well-defined.

    Let us now check that the embedded column code is $=A$.
    Since $C(x)$ is a cell-complex with a (unique) connected component $\sC_0(x)$, we see that $H^0(C(x))\cong \dF_2$ has basis $[\|x\ket]$ where we regard $\|x\ket= \sC_0(x)$ as the sum over all vertices in $C(x)$. 
    In particular, $H_2(C^X) \cong X$ with basis $[\sC_0(x)],x\in \sX$.
    Similarly, $H_0(C(z))\cong \dF_2$ has basis $[\|z\ket]$ where $\|z\ket = |v;z\ket$ and $v$ is any vertex in $C(z)$, and thus $H_0(C^Z)\cong Z$ with basis $[\|z \ket],z\in \sZ$. 
    Since $C(q)$ is the surface code, we see that it has logical $H_1(C(q))\cong \dF_2$ given by basis $[\|q\ket]$ where
    \begin{equation}
        \|q\ket = \sum_{k=1}^{L_Z} |iqk\ket
    \end{equation}
    for any arbitrary $i\in [\chi_X]$ index.
    Further note that
    \begin{align}
        [g^{QX}][\|x\ket] &= \sum_{q\in \AB(x),k} [g^{QX}|x;\eta(x)qk\ket] \\
        &= \sum_{q\sim x} \sum_{k} [|\eta(x),q,k\ket] \\
        &= \sum_{q\sim x} [\|q\ket]
    \end{align}
    where vertices in $(\eta\omega)(x)$ (but not in $\eta\AB(x)\otimes R_Z$) are omitted, since $g^{QX}$ acts trivially on them.
    Also note that
    \begin{align}
        [g^{ZQ}][\|q\ket] &= \sum_{k} [g^{ZQ} |i,q,k\ket] \\
        &= \sum_{z\sim q} [|i,q,\eta(z);z\ket] \\
        &=\sum_{z\sim q} [\|z\ket]
    \end{align}
    Hence, the embedded column code $=A$. In fact, since $C^{X},C^{Z}$ has no internal logicals, we see that $H_1(C)\cong H_1(A)$. Moreover, by Lemma \ref{lem:contracting-cycles}, $H_2(C^X),H_1(C^Q),H_0(C^Z)$ are the only nontrivial homology for $C^X,C^Q,C^Z$, respectively.
    Hence, by \cite{yuan2025unified},
    \begin{equation}
        H_s(C)\cong H_s(A), \quad s=2,1,0
    \end{equation}

    Note that the number of qubits of $C$ per edge follows from Remark \ref{rem:layer-congestion-4D} and the fact that an edge is either along the $\hat{x},\hat{z}$ axis or within the qubit grid.

    The weights of the 4D Layer Code $C$ is determined via the following diagram
    \begin{equation}
    \label{eq:weight-diagram-4D}
    \begin{tikzpicture}[baseline]
    \matrix(a)[matrix of math nodes, nodes in empty cells, nodes={minimum size=25pt},
    row sep=2em, column sep=2em,
    text height=1.25ex, text depth=0.25ex]
    {&& \red{C^{X}_{2}}  & \red{C^{X}_{1}} & \red{C^{X}_{0}}\\
    & \gray{C^{Q}_{2}}  & \gray{C^{Q}_{1}}  & \gray{C^{Q}_{0}} &\\
    \blue{C^{Z}_{2}} & \blue{C^{Z}_{1}} & \blue{C^{Z}_{0}} &&\\};
    \path[-left to,font=\scriptsize,transform canvas={yshift=0.2ex}]
    (a-1-3) edge node[above]{$6$}  (a-1-4)
    (a-1-4) edge node[above]{$4$}  (a-1-5)
    (a-2-2) edge node[above]{$4$}  (a-2-3)
    (a-2-3) edge node[above]{$2$}  (a-2-4)
    (a-3-1) edge node[above]{$4$}  (a-3-2)
    (a-3-2) edge node[above]{$2$}  (a-3-3);
    \path[left to-,font=\scriptsize,transform canvas={yshift=-0.2ex}]
    (a-1-3) edge node[below]{$2$}  (a-1-4)
    (a-1-4) edge node[below]{$4$}  (a-1-5)
    (a-2-2) edge node[below]{$2$}  (a-2-3)
    (a-2-3) edge node[below]{$4$}  (a-2-4)
    (a-3-1) edge node[below]{$4$}  (a-3-2)
    (a-3-2) edge node[below]{$6$}  (a-3-3);
    \path[-left to,font=\scriptsize,transform canvas={xshift=0.2ex}]
    (a-1-3) edge node[right]{$1$}  (a-2-3)
    (a-1-4) edge node[right]{$1$}  (a-2-4)
    (a-2-2) edge node[right]{$1$}  (a-3-2)
    (a-2-3) edge node[right]{$1$}  (a-3-3);
    \path[left to-,font=\scriptsize,transform canvas={xshift=-0.2ex}]
    (a-1-3) edge node[left]{$1$}  (a-2-3)
    (a-1-4) edge node[left]{$1$}  (a-2-4)
    (a-2-2) edge node[left]{$1$}  (a-3-2)
    (a-2-3) edge node[left]{$1$}  (a-3-3);
    \path[-left to,font=\scriptsize]
    (a-1-3) edge[bend right=80, dashed] node[right]{$2$} (a-3-2)
    (a-1-4) edge[bend left=80, dashed] node[right]{$1$} (a-3-3);
    \path[left to-,font=\scriptsize]
    (a-1-3) edge[bend right=80, dashed] node[left]{$1$} (a-3-2)
    (a-1-4) edge[bend left=80, dashed] node[left]{$2$} (a-3-3);
    \end{tikzpicture}
    \end{equation}
    Specifically, the weights of the row $C^{Q}$ are straightforwardly obtained since each qubit layer is a surface code.
    The weights of $C^X,C^Z$ can also be obtained straightforwardly since the cells of each $x,z$-check layer correspond to faces, edges and vertices in a 3D sub-grid of the 4D hypercube. For example, as shown in Fig. \ref{fig:check-layer-4D}, $\eta\AB(x) \otimes R_Z +(\eta \omega)(x)$ has fixed $\hat{x}$ coordinate $\eta(x)$ and thus is a \textit{layer-like} (or, \textit{cylinder-like}) structure in the 3D grid with axes $\hat{a}\times \hat{b} \times \hat{z}$.
    Note that in this case, a vertex can have at most 6 adjacent edges and an edge can have at most 4 adjacent faces.
    
    Finally, let us focus on the weights of $g^{QX},g^{ZQ}$ and $p^{ZX}$.
    By Eq. \eqref{eq:gQX-4D}, it's clear that $g^{QX}$ has max column weight $=1$. 
    By Eq. \eqref{eq:gZQ-4D}, we see that $g^{ZQ}$ has max column weight $=1$ since
    \begin{equation}
        |g^{ZQ}|i^{\bullet},q,k\ket| \le \sum_{z\sim q} 1\{\eta(z)=k\} \le 1
    \end{equation}
    where we used the fact that if $q\sim z,z'$, then $\eta(z)\ne \eta(z')$, which follows from Theorem \ref{thm:line-coloring-4D}.
    Since the construction is symmetric, it's straightforward to check that $g^{QX},g^{ZQ}$ have max row weights $=1$.
    Now consider the defect maps as defined in Eq. \eqref{eq:pZX-2-4D}-\eqref{eq:pZX-1-4D}.
    Note that
    \begin{align}
        |p^{ZX} |x;iqk\ket|&\le 1\{i=\eta(x)\}\sum_{z\sim x} 1\{k=\eta(z)\} \\
        &\quad \times \sum_{m} 1\{q\in \AB(q_{2m-1},q_{2m})\} \\
        &\le 1\{i=\eta(x)\}\sum_{z\sim x} 1\{k=\eta(z)\} \\
        &\quad\quad \times 1\{q\in \AB(q_{2m-1},q_{2m}),\exists m\}
    \end{align}
    where we used Lemma \ref{lem:defects-disjoint-4D}.
    Further note that if $z,z'\sim x$ are such that $k=\eta(z)=\eta(z')$, then by definition, $\AB(z),\AB(z')$ cannot share the same column or row.
    However, since $q\in \Gamma_m(xz)$, we see that it must either be in a row of $\AB(z)$ or a column of $\AB(z)$. In either case, say a row of $\AB(z)$, there is at most one more $z'\sim x$ such that $q\in \Gamma_m(xz')$ and is in a column of $\AB(z')$. See, for example, Fig. \ref{fig:defect-congestion} for a qubit $q$ which lies in the intersection between two green lines. Hence,
    \begin{equation}
        |p^{ZX} |x;iqk\ket|\le 2
    \end{equation}
    Similarly, note that
    \begin{align}
        |p^{ZX} |x;iek\ket | &\le 1\{i=\eta(x)\} \sum_{z\sim x}  1\{k=\eta(z)\} \\
        &\quad \times \sum_{m} 1\{e \in \AB(q_{2m-1},q_{2m})\} \\
        &\le 1\{i=\eta(x)\} \sum_{z\sim x}  1\{k=\eta(z)\} \\
        &\quad \times 1\{e\in \AB(q_{2m-1},q_{2m}), \exists m\} \\
        &\le 1
    \end{align}
    where we used Lemma \ref{lem:defects-disjoint-4D}, and the fact that the collection of $z$-check layers has edges decongested (or equivalently, $\AB(z),\AB(z')$ do not share a row or column for $\eta(z)=\eta(z')=k$). The case is similar for the max row weight, and thus the check weight follows.
\end{proof}

\subsection{Distance Preserved}

To prove that the distance of the 4D Layer Codes is preserved in some manner, the procedure is similar to that of the $\infty$-dimensional Layer Codes \cite{layerreduction} (which also applies to the conventional 3D Layer Codes \cite{williamson2023layer,yuan2025unified}). 
Specifically, we will utilize Appendix (A) of \cite{layerreduction} to compute the relative expansion coefficient of the check layers $C^{X}$ ($C^{Z}$) with respect to the gluing map $g^{QX}$ ($g^{ZQ}$), and then apply the Cleaning Lemma in \cite{yuan2025unified} to prove a lower bound on the 4D Layer Codes.
In particular, since the 4D Layer Codes are symmetric in $X,Z$-checks, we shall restrict our attention to the $X$-type distance.

\begin{lemma}[AB (co)expansion]
    For any $x$-check, $\AB(x)$ is $2/|\AB(x)|$-coexpanding, and $1/|\AB(x)|$-coexpanding relative to the projection $\pi$ onto $w\le \weight_X$ qubits $q\sim x$.
\end{lemma}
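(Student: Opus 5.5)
Both expansion statements should reduce to one fact: $\AB(x)$ is connected, so every nontrivial vertex cut in it contains at least one edge. The plan is to unfold the definitions of (relative) coexpansion for the graph complex $\AB(x)=E\to V$, as in Appendix (A) of \cite{layerreduction}. The working definition is: $\AB(x)$ is $h$-coexpanding if for every $0$-cochain $c\in V$,
\begin{equation}
|\partial^\top c| \ge h\min_{c'\in c+\ker\partial^\top}|c'|.
\end{equation}
It is $h$-coexpanding relative to $\pi$ if the same holds with $|c'|$ replaced by $|\pi c'|$.

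The first step is connectivity. Each $\AB(q)$ is a row union a column through $q$, hence connected. For any two qubits $q,q'\sim x$, the row of $q$ meets the column of $q'$ at a grid point, so $\AB(q)\cap\AB(q')\neq\varnothing$. Therefore $\AB(x)=\bigcup_{q\sim x}\AB(q)$ is connected. It follows that $\ker\partial^\top$ on $0$-cochains is exactly $\{0,\sC_0\}$, where $\sC_0$ is the all-ones vector on the vertices of $\AB(x)$.

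The second step is the unrelative bound. Take any $c$ viewed as a vertex subset $S$. If $S=\varnothing$ or $S=V$, the right-hand side is $0$ and there is nothing to prove. Otherwise connectivity gives at least one edge between $S$ and $V\setminus S$, so $|\partial^\top c|\ge 1$. On the other side,
\begin{equation}
\min(|S|,|V\setminus S|)\le \tfrac{1}{2}|\AB(x)|,
\end{equation}
so $\tfrac{2}{|\AB(x)|}\min(|S|,|V\setminus S|)\le 1\le|\partial^\top c|$.

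The third step is the relative bound, handled the same way. For nontrivial $S$,
\begin{equation}
\min(|\pi S|,|\pi(V\setminus S)|)\le w\le \weight_X\le|\AB(x)|.
\end{equation}
The last inequality holds because $\AB(x)$ contains at least a full row, so $|\AB(x)|\ge L$, and we assume $\weight_X\le L$. Indeed it would also suffice that the $w$ qubits are themselves vertices of $\AB(x)$. Hence $\tfrac{1}{|\AB(x)|}\min(\cdot)\le 1\le|\partial^\top c|$.

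The only real care needed is matching the exact normalization conventions of \cite{layerreduction}, namely whether the minimum is taken over the coset, and cochain versus chain orientation. I expect that to be the sole obstacle; the combinatorics is just the connectivity observation above.
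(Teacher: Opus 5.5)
Your combinatorial core is right, and it is what the deferred Example A.2 of \cite{layerreduction} rests on. Your explicit connectivity check, which the paper leaves implicit, is also correct: the row through $q$ meets the column through $q'$, so $\AB(x)$ is connected, and every vertex cochain other than $0$ and the all-ones vector has coboundary of weight at least $1$.

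The one real issue is the step you set aside as a convention check. Here the convention changes the content of the relative claim by a factor of order $|\AB(x)|/w\gtrsim L/\weight_X$. Under your unnormalized definition the relative statement reads $|\delta s|\ge\frac{1}{|\AB(x)|}\min_{s'}|\pi s'|$. This is true, and trivially improvable to $\frac{2}{w}$, which should make you suspicious of the reading. But it is not what the paper uses. In Lemma~\ref{lem:relative-expansion-4D}, the relative constant $\min(1/|\AB(x)|,2/L_Z)$ is turned into an unnormalized inequality via Remark (5) of \cite{layerreduction}, by multiplying by $|\AB(x)|L_Z/(2\weight_X L_Z)$. That factor is the ratio of the size of the complex to the size of the image of the projection. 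So relative coexpansion there is normalized: $|\delta s|$ is measured against the size of the complex, and $|\pi\hat s|$ against the size of $\pi$'s image. In that convention, ``$1/|\AB(x)|$-coexpanding relative to $\pi$'' amounts to $|\delta s|\ge\frac{1}{w}|\pi\hat s|$, which is an $\Omega(1/\weight_X)$ bound. Your version only yields an $O(1/L)$ constant. Fed into the cleaning lemma, it would cost Theorem~\ref{thm:distance-4D} its factor of $L$.

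The repair needs no new idea. If $\delta s\ne 0$, then $|\delta s|\ge 1\ge|\pi\hat s|/w$ for every representative $\hat s$, since $\pi$ only sees the $w$ qubit vertices. If $\delta s=0$, take $\hat s=0$. The non-relative bound is the same computation you gave. This reading also matches the stated constants: $2/|\AB(x)|$ comes from bounding the normalized coset minimum by $1/2$, and $1/|\AB(x)|$ from bounding $|\pi\hat s|/w$ by $1$. It also makes your auxiliary assumption $\weight_X\le L$ (or $w\le|\AB(x)|$) unnecessary.
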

\begin{proof}
    The proof is exactly the same as Example A.2 of \cite{layerreduction} and thus omitted.
\end{proof}

\begin{lemma}[AB size]
    \label{lem:AB-size}
    Fix an $x$-check and let there be $w\le \weight_X$ many qubits $q\sim x$. Then
    \begin{equation}
        |\AB(x)| \ge 2L\sqrt{w} -w^2
    \end{equation}
\end{lemma}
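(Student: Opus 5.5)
The plan is to count the vertices of $\AB(x)$ exactly in terms of how many distinct rows and columns the qubits $q\sim x$ occupy, and then bound that count by AM--GM. We measure $|\AB(x)|$ in vertices, as in Definition \ref{def:AB-graph}.

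\textbf{Exact count.} Let $r$ be the number of distinct row indices among the $w$ qubits $q\sim x$, and $c$ the number of distinct column indices. By Definition \ref{def:AB-graph}, $\AB(x)$ is the union of $r$ full lines of one orientation and $c$ full lines of the other, each containing $L$ vertices. Two lines of the same orientation are disjoint. A line of one orientation meets a line of the other in exactly one grid point. Inclusion--exclusion therefore gives
\begin{equation}
    |\AB(x)| = rL + cL - rc .
\end{equation}

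\textbf{Two elementary constraints on $r,c$.}
\begin{itemize}
    \item Each qubit is determined by its (row, column) pair, and the $w$ qubits are distinct, so $w \le rc$.
    \item There are at most $w$ distinct rows and at most $w$ distinct columns, so $r,c\le w$ and hence $rc\le w^2$.
\end{itemize}

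\textbf{Combining.} By AM--GM, $rL+cL \ge 2L\sqrt{rc} \ge 2L\sqrt{w}$. Together with $rc\le w^2$ this yields
\begin{equation}
    |\AB(x)| \ge 2L\sqrt{w} - w^2 .
\end{equation}

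\textbf{Main point of care.} The only delicate step is the inclusion--exclusion count. Lines of the same orientation never intersect, and each row--column pair intersects in exactly one vertex, so the overlap correction is exactly $rc$ and not more. If $|\AB(x)|$ were instead measured in edges, no correction would be needed at all: row edges and column edges are disjoint, giving $(r+c)(L-1)$ edges. That count is at least $2(L-1)\sqrt{w}$, which, since $w\ge 1$, is no weaker asymptotically in the regime $w\ll L^{2/3}$ relevant to Remark \ref{rem:better-bounds-4D}. I will state the vertex version, since that is what the definition uses.
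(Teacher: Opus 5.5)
Your proposal is correct and follows essentially the same argument as the paper: you count $|\AB(x)|=(r+c)L-rc$ exactly, use the constraints $w\le rc$ and $r,c\le w$, and apply AM--GM in the form $r+c\ge 2\sqrt{rc}\ge 2\sqrt{w}$. Your remark about counting edges instead of vertices is a harmless aside; the vertex version you state is the one the paper proves.
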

\begin{proof}
    Note that $\AB(x)$ must consist of $r$ rows and $c$ columns and that $q\sim x$ must be positioned at the intersection of the rows and columns (see, e.g., Fig. \ref{fig:check-routing}). Hence, $w\le rc$.
    Conversely, we also must have $r,c\le w$ since if there exists a row that does not contain any $q\sim x$, then it cannot be included in $\AB(x)$.
    Note that $|\AB(x)|= (r+c)L-rc$ and thus
    \begin{equation}
        |\AB(x)| \ge 2L\sqrt{w} -w^2
    \end{equation}
    where we used the fact that $2\sqrt{rc} \le r+c$.
\end{proof}

\begin{lemma}[Relative Expansion]
    \label{lem:relative-expansion-4D}
    Fix any $x$ check. Let $\tilde{\delta}^{x}$ be the codifferential of $\eta\AB(x)\otimes R_Z$. Then for every subset of vertices $s\in \eta\AB(x)\otimes R_Z$, there exists $\hat{s}$ with $\delta^{x} s=\delta^{x} \hat{s}$ such that
    \begin{equation}
        |\tilde{\delta}^x s| = \Omega\left(\frac{1}{\weight^2\qubit}\right) |g^{QX} \hat{s}|
    \end{equation}
    
    In particular, if $\delta^{x}$ is the codifferential of $C(x)$, then for every subset of vertices $s\in C(x)$, there exists $\hat{s}$ with $\tilde{\delta}^{x}s = \tilde{\delta}^{x}\hat{s}$ such that
    \begin{equation}
        |\delta^x s| = \Omega\left(\frac{1}{\weight^2\qubit}\right) |g^{QX} \hat{s}|
    \end{equation}
\end{lemma}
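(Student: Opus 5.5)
We prove the first statement on the cylinder $\eta\AB(x)\otimes R_Z$ and then pass to $C(x)$ by absorbing the contracting layer. The cylinder is the tensor product of the graph complex $\AB(x)$ with the repetition code $R_Z$. Its codifferential acts on a vertex subset $s$, viewed as a family of slices $s_k\subseteq \AB(x)$ at height $k\in[L_Z]$, in two ways: horizontal coboundaries $\delta^{\AB} s_k$ within each slice, and vertical coboundaries $s_k+s_{k+1}$ between consecutive slices. The gluing map $g^{QX}$ only reads off the restriction of each slice to the $w\le\weight_X$ qubit positions $q\sim x$. We therefore follow the product-expansion argument of Appendix A of \cite{layerreduction}, feeding it the AB (co)expansion lemma: $\AB(x)$ is $2/|\AB(x)|$-coexpanding, and $1/|\AB(x)|$-coexpanding relative to the projection $\pi$ onto the qubits $q\sim x$.

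\textbf{Step 1: choose the representative $\hat s$.} Replace each slice $s_k$ by whichever of $s_k$ or its complement $s_k+\AB(x)$ is smaller; the choice is made slice by slice. This modifies $s$ only by adding $\mathds{1}_{\AB(x)}$ on some slices.
\begin{itemize}
\item To keep $\delta^x s=\delta^x\hat s$, work globally instead: add the all-ones vector of the whole (connected) cylinder if that reduces the weight.
\item Then split the heights into \emph{good} slices, where $|s_k|\le |\AB(x)|/2$, and \emph{bad} slices.
\end{itemize}

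\textbf{Step 2: bound good slices.} On a good slice, relative coexpansion gives $|\delta^{\AB}s_k|\ge |\pi s_k|/|\AB(x)|$. Because $g^{QX}$ only sees at most $w$ qubits per slice, this is at least $|g^{QX}s_k|/(\weight_X|\AB(x)|)\cdot$const.

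\textbf{Step 3: bound transitions between regimes.} Whenever the slice type changes between consecutive heights, or the qubit pattern $\pi s_k$ changes, the vertical edges pay at least $|\pi s_k+\pi s_{k+1}|$.

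\textbf{Step 4: the main obstacle.} The hard case is long runs of bad or nearly full slices, where horizontal cost can be tiny. A run of length up to $L_Z$ carries $g^{QX}$-weight up to $wL_Z$ but costs only $\Omega(1)$ per slice boundary. Here I would use the global complement choice of Step 1, together with the fact that the cylinder is capped by nothing in the $\hat z$ direction, in the following order.
\begin{enumerate}
\item If a constant fraction of heights are bad, complementing the whole of $s$ makes them good.
\item Otherwise, each maximal run of bad slices is bounded by good slices or by the ends of $R_Z$.
\item Each such run can then be charged to horizontal cost inside it, which is $\ge|\AB(x)|\cdot\Omega(1/|\AB(x)|)$ per slice by absolute coexpansion $2/|\AB(x)|$.
\item Alternatively, run $j$ is charged to the vertical jump of size $\Omega(|\AB(x)|)$ at its ends.
\end{enumerate}
Combining this with $|\AB(x)|\le \weight_X L$ (Lemma \ref{lem:AB-size} gives $\ge L$) and $L_Z=O(\weight\qubit)L$ yields a loss factor of $\weight^2\qubit$. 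The factor arises as follows: $|\AB(x)|/L_Z$ combined with run length $L_Z$ turns into the $O(\weight)$, and the bounded number of qubits $w\le\weight$ gives the other factor. Getting exactly this exponent is the delicate bookkeeping step.

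\textbf{Step 5: extend to $C(x)$.} Write $C(x)=\eta\AB(x)\otimes R_Z+\eta\omega(x)$, where the contracting layer is a full plane $[L]^2$ at height $\omega(x)$. $g^{QX}$ vanishes on vertices of the plane that are not in $\AB(x)$, so only the restriction of $s$ to the cylinder matters for the right-hand side. We compare $\delta^x s$ with $\tilde\delta^x(s|_{\rm cyl})$.
\begin{itemize}
\item The extra edges of the plane only add coboundary, except at slice $\omega(x)$. There the plane's edges replace nothing, since the cylinder's edges remain.
\item Hence $|\delta^x s|\ge|\tilde\delta^x (s|_{\rm cyl})|$ up to the common representative choice.
\end{itemize}
Applying the first part to $s|_{\rm cyl}$ and taking $\hat s$ compatibly (again adjusting by the all-ones vector of connected $C(x)$) gives the second inequality.
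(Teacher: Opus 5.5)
Your Step 5 is sound and matches the second half of the paper's proof. You restrict to the cylinder via the projection $\tau$, note that the cylinder's edges are edges of $C(x)$ so $|\delta^x s|\ge|\tilde\delta^x\tau s|$, use $g^{QX}=g^{QX}\tau$, and use that $\tau$ sends the all-ones vector of $C(x)$ to that of the cylinder. The gap is in Steps 2--4. They carry the entire quantitative content of the lemma, and as written they fail in several places:
\begin{enumerate}
\item In Step 2 you use the relative coexpansion constant $1/|\AB(x)|$ as a raw multiplier. This gives $|g^{QX}s_k|/(\weight_X|\AB(x)|)$ per good slice. Since $|\AB(x)|\ge L$, that bound depends on $L$ and can never yield $\Omega(1/(\weight^2\qubit))$. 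In the normalization of \cite{layerreduction} the constant comes with a conversion factor of order $|\AB(x)|/\weight_X$. What you actually need on a slice that is neither empty nor full is only $|\delta^{\AB}s_k|\ge 1$, by connectivity of $\AB(x)$, and this is $\ge|\pi s_k|/w$.
\item In Step 4, complementing all of $s$ flips every slice. It therefore does not ``make bad slices good''; it makes the good ones bad.
\item A full slice has zero horizontal coboundary, so a run of bad slices cannot be charged $\Omega(1)$ per slice.
\item A bad/good interface can have vertical jump $1$, for instance slice sizes $|\AB(x)|/2+1$ and $|\AB(x)|/2$. So no single jump is $\Omega(|\AB(x)|)$.
\item Your final accounting invokes $|\AB(x)|\le\weight_X L$, which is the wrong direction. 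The $\weight_Z\qubit_Z$ factor comes from $L_Z/|\AB(x)|\le L_Z/L=O(\weight_Z\qubit_Z)$, and that needs the lower bound $|\AB(x)|\ge L$.
\end{enumerate}
You yourself leave ``the delicate bookkeeping'' undone, and that bookkeeping is the lemma.

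The paper does not redo this by hand. It applies Lemma A.2 of \cite{layerreduction}: $\AB(x)$ is $1/|\AB(x)|$-coexpanding relative to $\pi$ and $R_Z$ is $2/L_Z$-coexpanding, so the product is $\min(1/|\AB(x)|,2/L_Z)$-coexpanding relative to $g^{QX}$. It then converts this via Remark 5 there, with factor $|\AB(x)|L_Z/(2\weight_X L_Z)$, into
\[
|\tilde\delta^x s|\ge\frac{1}{2\weight_X}\min\bigl(1,2|\AB(x)|/L_Z\bigr)\,|g^{QX}\hat s|,
\]
and finishes with $|\AB(x)|\ge L$ and $L_Z=O(\weight_Z\qubit_Z)L$.

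If you want a self-contained slice argument instead, the right classification is empty, full, or mixed slices, not a threshold at $|\AB(x)|/2$:
\begin{enumerate}
\item Each mixed slice costs at least $1$ horizontally and contributes at most $w$ to $|g^{QX}\hat s|$.
\item If both a full and an empty slice occur, every vertical column $v\times[L_Z]$ must switch value. This costs at least $|\AB(x)|$ vertical edges.
\item Choose $\hat s\in\{s,s+\mathbf{1}\}$ to be the one with fewer full slices. Then full slices contribute at most $wL_Z/2$, and nothing at all unless both full and empty slices occur.
\end{enumerate}
Together these give $|\tilde\delta^x s|\ge\frac{1}{w}\min(1,2|\AB(x)|/L_Z)\,|g^{QX}\hat s|$ directly, with the same final scaling as the paper.
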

\begin{proof}
    Note that $R_Z$ is $2/L_Z$-coexpanding. Hence, by Lemma (A.2) of \cite{layerreduction}, we see that $\eta\AB(x)\otimes R_Z$ is $c^{QX}$-coexpanding relative to $g^{QX}$ where
    \begin{equation}
        c^{QX} =\min\left(\frac{1}{|\AB(x)|},\frac{2}{L_Z}\right)
    \end{equation}
    Hence, by Remark (5) of \cite{layerreduction},
    \begin{align}
        |\tilde{\delta}^x s| &\ge c^{\pi} \frac{|\AB(x)|L_Z}{2\weight_X L_Z} |g^{QX} \hat{s}| \\
        &\ge \frac{1}{2\weight_X} \min\left(1,\frac{2|\AB(x)|}{L_Z}\right) |g^{QX} \hat{s}| \\
        &\ge \Omega\left(\frac{1}{\weight_X\weight_Z \qubit_Z}\right)|g^{QX}\hat{s}|
    \end{align}
    where we used the fact that $|\AB(x)|\ge L$. 

    Now consider the case where $s\in C(x)$. Let $\tau:C(x) \to \eta\AB(x) \otimes R_Z$ denote the natural projection from $C(x)= \eta \AB(x) \otimes R_Z +\eta \omega(x)$ to $\eta \AB(x) \otimes R_Z$. Then
    \begin{align}
        |\delta^x s| &\ge |\tau \delta^x s|\\
        &= |\tilde{\delta}^x \tau s| \\
        &=\Omega\left(\frac{1}{\weight_X\weight_Z\qubit_Z}\right) \\
        &\quad\quad\times \min(|g^{QX} \tau s|, |g^{QX} (\tau s+ \sV)|)
    \end{align}
    where $\sV(x)$ denotes the collection of all vertices in $\eta\AB(x)\otimes R_Z$. Note that $\tau\sC_0(x) = \sV(x)$ where $\sC_0(x)$ is the collection of all vertices in $C(x)$ and thus
    \begin{align}
        |\delta^x s| &=\Omega\left(\frac{1}{\weight_X\weight_Z\qubit_Z}\right) \\
        &\quad\quad\times \min(|g^{QX}  s|, |g^{QX} (s+ \sC_0)|)
    \end{align}
    Where we used the fact that $g^{QX}=g^{QX}\tau$.
\end{proof}

\begin{proof}[Proof of Theorem \ref{thm:distance-4D}]
    By Lemma \ref{lem:relative-expansion-4D} and the Cleaning Lemma in \cite{yuan2025unified}, the 1-systolic distance must satisfy
    \begin{equation}
        d_1(C) =\Omega\left(\frac{L_Z}{\weight_X\weight_Z\qubit_Z}\right) d_1(A)
    \end{equation}
    where the extra factor $L_Z$ utilizes the fact that each qubit layer $R_X\otimes R_Z^\top$ has 1-systolic distance $\ge L_Z$. 
    Since the construction is symmetric in $X\lr Z$, the statement also holds for the 1-cosystolic distance.
\end{proof}

\subsection{Energy Barrier Preserved}

To prove that the energy barrier of the 4D Layer Codes is preserved in some manner, the procedure is similar to that of the conventional 3D Layer codes \cite{williamson2023layer} (see Appendix \ref{sec:3D-layer-code} for a detailed exposition), though additional care must be taken due to the contracting layers, e.g., $\eta\omega (x)$ in Fig. \ref{fig:check-layer-congestion-4D}.
In particular, since the algorithm Layer Codes are symmetric in $X,Z$-checks, we shall restrict our attention to the $X$-type energy barrier.

\begin{definition}[Notation]
    Let $\pi^X,\pi^Q,\pi^Z$ denote the projection chain maps $C\to C^X,C^Q,C^Z$, respectively.
    Let $\pi^x,\pi^q,\pi^z$ denote the projection chain maps onto $x$-check, qubit $q$, $z$-check layer, respectively.
    Let $\partial^x=\pi^x\partial^X$ and similarly define $\partial^q,\partial^z$.
    Let $g^{qX}=\pi^q g^{QX}$ and similarly define $g^{zQ}$ and $p^{zX}$.
\end{definition}

\begin{lemma}[$X$ layer error]
    \label{lem:X-layer-err-4D}
    Let $\sigma^x\in C(x)^\top$ be a 0-chain (syndrome). Then there exists a map $\sigma^x\mapsto \hat{e}^x$ such that the 1-chain $\hat{e}^x$ is such that $\partial^x\hat{e}^x =\sigma^x$ and
    \begin{equation}
        \Delta_1(\hat{e}^x)\le (1+\weight_X \qubit_Z \min(\weight_X,\weight_Z))|\sigma^x|
    \end{equation}
    and
    \begin{align}
        |g^{qX} \hat{e}^x| &\le 1\{q\sim x\} |\sigma^x| \\
        |p^{zX} \hat{e}^x| &\le 1\{z\sim x\} \min(\weight_X,\weight_Z) |\sigma^x|
    \end{align}
    Moreover, if $\sigma^x=0$, then $\hat{e}^x=0$.
\end{lemma}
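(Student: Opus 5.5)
I will follow the template of Lemma \ref{lem:X-layer-err-3D}: each elementary syndrome of $C(x)^\top$ is cleaned by a string-like Pauli path, and the bounds are then summed using sub-additivity (Lemma \ref{lem:sub-additivity}). In $C(x)^\top$ the 1-chains are the faces of $C(x)$ and the syndromes are its edges. So a single syndrome is an edge $f\in C(x)$, and I need a set of faces of $C(x)$ whose coboundary is exactly $f$. Because $H^1(C(x))\cong H_1(C(x))=0$ by Lemma \ref{lem:contracting-cycles}, some such set exists. The work is to pick a canonical one with controlled energy and controlled gluing and defect images. For each edge $f$, the map $f\mapsto \hat e^x(f)$ will be fixed in advance, extended linearly, and chosen so that $\sigma^x=0$ gives $\hat e^x=0$.

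The canonical choice is a ``dual string'' running toward the contracting layer $\eta\omega(x)$. Take an edge $f$ in the cylinder part $\eta\AB(x)\otimes R_Z$ at height $k$, either an edge of $\AB(x)$ at height $k$ or a vertical edge. In the dual picture its cofaces form a segment, and I sweep this segment in the $\hat z$ direction monotonically from $k$ to the top or bottom boundary of $R_Z$. I pick the side that avoids $\omega(x)$, or if that fails I absorb the string into the horizontal layer, which has trivial cohomology. This gives a Pauli path $\gamma^x(t)$ that adds one face at a time. At every time the internal syndrome $|\partial^x\gamma^x(t)|$ is $O(1)$, since only the moving front is excited. An edge $f$ inside $\eta\omega(x)$ is cleaned by a planar dual string within $[L]^2$ running to the boundary of the square. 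Here I first handle the edges shared with the cylinder, by pushing vertically. Summing over the edges of $\sigma^x$ gives $|g^{qX}\hat e^x|\le 1\{q\sim x\}|\sigma^x|$, because a vertically swept string crosses the qubit line $\eta(x)\times q\times R_Z$ at most once and $g^{QX}$ is supported only at $q\sim x$.

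The defect image is the key new term, and I expect it to be the main obstacle. A face of the cylinder that the sweep passes through at height $k=\eta(z)$ lies on some defect path $\Gamma_m(xz)$, and there $p^{ZX}$ is nonzero. By \eqref{eq:pZX-1-4D} and Lemma \ref{lem:defects-disjoint-4D}, a cylinder face above a grid edge $e$ at height $k$ lies on at most one defect per $z$ with $\eta(z)=k$. For the $z\sim x$ of that color, distinct $z$ share no row or column, and this is the same argument used for the weight bound in the proof of Theorem \ref{thm:logical-4D}. So a single string meets each $z\sim x$ in $O(1)$ places. It also meets at most $\min(\weight_X,\weight_Z)$ defect segments for that $z$, since $|x\wedge z|\le\min(\weight_X,\weight_Z)$. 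This gives $|p^{zX}\hat e^x|\le 1\{z\sim x\}\min(\weight_X,\weight_Z)|\sigma^x|$.

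For the energy, I note that at each time $t$ the total excitation $|\partial\gamma^x(t)|$ has three parts. The first is the internal part, which is $O(1)$. The second is the gluing part, at most $1$ per crossed qubit, which is $\le\weight_X$ over the history but instantaneously $\le 1$. The third is the defect part landing in $C^Z$. A defect-image face is a 1-chain of $C^Z$, and its boundary can excite $\qubit_Z$ layers through $g^{ZQ}$ for the many $z$ involved. The crude instantaneous bound is $\weight_X\qubit_Z\min(\weight_X,\weight_Z)$: there are at most $\weight_X\qubit_Z$ neighbouring $z$, each with at most $\min(\weight_X,\weight_Z)$ active defect segments. Together these give $\Delta_1(\gamma^x)\le 1+\weight_X\qubit_Z\min(\weight_X,\weight_Z)$, and sub-additivity then yields the stated bound. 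The delicate point is making sure that, when a string sweeps past the contracting layer, it does not pick up unbounded defect images. I would check this first; if it fails, I would route strings away from $\omega(x)$ by choosing the farther $R_Z$ boundary.
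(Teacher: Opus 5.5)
Your proposal has the degrees of $C(x)^\top$ reversed, and the cleaning of contracting-layer syndromes, which is where the lemma's main constant comes from, is left unproved.

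\textbf{The degree mix-up.} In $C(x)^\top$ the 1-chains (errors) are the \emph{edges} of $C(x)$. The 0-chains (syndromes) are its \emph{faces}: $|x;\eta(x),e,k^+\rangle$ in the cylinder and $|x;\eta(x),f,\omega(x)\rangle$ in the contracting layer.

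\begin{itemize}
\item As you state it, ``a set of faces whose coboundary is a single edge'' does not exist in a 2-complex.
\item $H^1(C(x))$ is the wrong group. The relevant one is $H^2(C(x))\cong H_2(C(x))$, which vanishes by the second half of Lemma \ref{lem:contracting-cycles}; an explicit construction makes this moot anyway.
\item The energy bookkeeping is also off. $p^{ZX}$ sends a horizontal edge $|x;\eta(x),e,k\rangle$ directly to a vertex of $C(z)$, i.e.\ to a syndrome bit in $C^Z_0$. That bit has no further boundary, and $g^{ZQ}$, which maps out of $C^Q$, plays no role. $\qubit_Z$ enters only through the count $\le\weight_X\qubit_Z$ of checks $z\sim x$.
\item Gluing excitations of a string that keeps several $\hat z$-edges at qubit sites accumulate; they are not instantaneously $\le 1$.
\end{itemize}

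\textbf{The cylinder step, once translated.} For a face $|x;\eta(x),e,k^+\rangle$, take the column $\sum_{k'}|x;\eta(x),e,k'\rangle$ over the heights on the side of $k$ that does not contain $\omega(x)$. Such a side always exists, since $\omega(x)$ is an integer and $k^+$ is not, so your fallback is vacuous. This column has:
\begin{itemize}
\item coboundary exactly the given face;
\item zero $g^{QX}$-image, since it contains no $\hat z$-edges;
\item at most one defect cell per $z\sim x$, by Lemma \ref{lem:defects-disjoint-4D}.
\end{itemize}
This is a genuinely different route from the paper. The paper sweeps along the grid line of $e$ with $\hat z$-edges $|x;\eta(x),a'b,k^+\rangle$. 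That sweep is the source of its $g^{qX}$ term and never leaks into $C^Z$. However, its coboundary also contains the faces on any line of $\AB(x)$ it crosses. Your vertical sweep avoids this.

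\textbf{The missing piece: contracting-layer syndromes.} This case is exactly where the factor $\weight_X\qubit_Z\min(\weight_X,\weight_Z)$ comes from. Every edge bounding a face $|x;\eta(x),f,\omega(x)\rangle$ lies in the plane at height $\omega(x)$. So the face must be cleaned there. ``Routing away from $\omega(x)$'' is not available, and the defect image has to be bounded rather than avoided.

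The estimate you need, which the paper uses, goes as follows.
\begin{itemize}
\item A straight planar string $\sum_{a'\le t}|x;\eta(x),a'b^+,\omega(x)\rangle$ meets $p^{zX}$ only for $z\sim x$ with $\eta(z)=\omega(x)$.
\item Each defect path $\gamma_m(xz)$ is an AB route whose $\hat b$-edges lie in the single column $a_{2m}\times[L]$. So the string shares at most one edge with each $\gamma_m(xz)$.
\item This gives at most $|x\wedge z|/2\le\min(\weight_X,\weight_Z)$ defect cells per $z$.
\item Summing over the at most $\weight_X\qubit_Z$ neighbours $z\sim x$ gives the stated energy.
\end{itemize}
You flag this as the delicate point but never establish it.

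Your ``pushing vertically'' of planar edges shared with the cylinder is needed to cancel the faces $|x;\eta(x),e,\omega(x)^{\pm}\rangle$, which the paper's string also glosses over. But if you replace such an edge by its full $\hat z$-column, that column hits $p^{ZX}$ at every height $\eta(z)$, not just at $\omega(x)$. These extra hits must be recounted. The one-edge-per-route argument still bounds them by $\min(\weight_X,\weight_Z)$ per $z$, but you have to make that argument.

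Finally, start each sweep at its boundary end, so at most one internal face is excited at a time and the constant $1$ in the statement is matched.
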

\begin{proof}
    For every 0-cell $v^x\in \sigma^x$ (a face in $C(x)$), we see that either $v^x\in \eta\AB(x)\otimes R_Z$ or $\in (\eta\omega)(x)$.
    
    If $v^x\in \eta\AB(x)\otimes R_Z$ as depicted in Fig. \ref{fig:syndrome-cleaning-vertical-4D}, we see that $v^x= |x;\eta(x),e,k^+\ket$ for some $k\in [L_Z]$ and edge $e\in \AB(x)$.
    Note that $e$ is either horizontal along that $\hat{a}$ axis or vertical along the $\hat{b}$ axis. In either case, let $(a,b)$ denote the smaller endpoint of the edge $e$ (relative to the lexicographic order).
    If $e$ is horizontal, define the following string
    \begin{equation}
        [0,a]\ni t \mapsto \gamma_t(v^x) = \sum_{a'\le t} |x;\eta(x),a'b,k^+\ket
    \end{equation}
    If $e$ is vertical, define the following string as
    \begin{equation}
        [0,b]\ni t \mapsto \gamma_t(v^x) = \sum_{b'\le t} |x;\eta(x),a b',k^+\ket
    \end{equation}
    By definition of $\AB(x)$, we see that $t\mapsto \gamma_t (v^x)$ is a 1-continuous Pauli path in $\eta \AB(x)\otimes R_Z$ and that
    \begin{align}
        |\partial^x \gamma_t (v^x)| &\le 1, \quad \forall t\\
        \partial^x \gamma_T (v^x) &= \sigma^x
    \end{align}
    where $T=a,b$ if $e$ is horizontal, vertical, respectively. 
    Also note that
    \begin{align}
        |g^{qX} \gamma_t(v^x)| &\le \sum_{a' \le t} 1\{q=a'b\sim x\} \\
        &\le 1\{q\sim x\}
    \end{align}
    And by definition of the defect maps in Eq. \eqref{eq:pZX-1-4D}, $p^{zX}\gamma_t (v^x)=0$ for all $z\in \sZ$ and $t$; since $k^+$ is a half integer, while $p^{zX}$ only acts on whole integer coordinates.
    Hence, we see that
    \begin{equation}
        \Delta_1(\gamma_T(v^x))\le \Delta(\gamma(v^x))\le 1+\weight_X
    \end{equation}

    \begin{figure}[ht]
    \centering
    \subfloat[$\eta(x)=\eta \in \hat{x}$\label{fig:syndrome-cleaning-vertical-4D}]{%
        \centering
        \includegraphics[width=0.9\columnwidth]{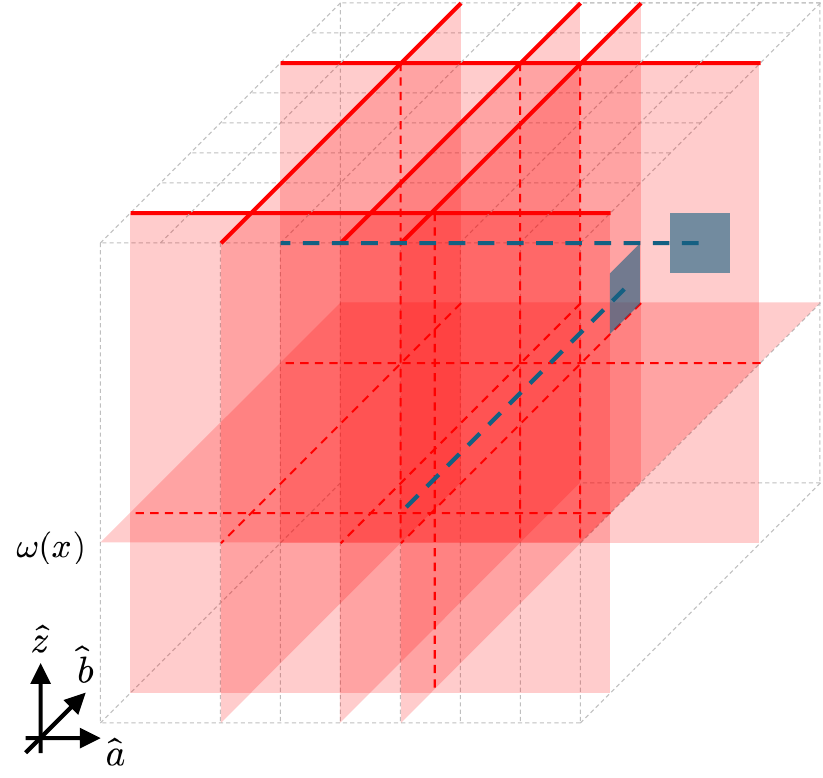}
    }
    \\
    \subfloat[$\eta\times \omega \in \hat{x} \times \hat{z}$\label{fig:syndrome-cleaning-horizontal-4D}]{%
        \centering
        \includegraphics[width=0.7\columnwidth]{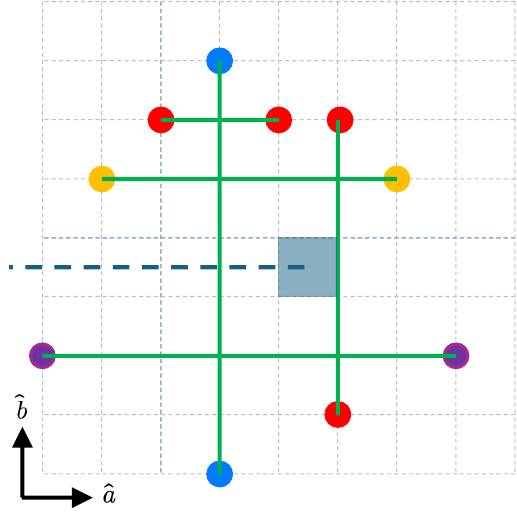}
    }
    \caption{Syndrome Cleaning. (a) is adopted from Fig. \ref{fig:check-layer-example-4D}. When cleaning a syndrome (shaded face) via the dashed lines, additional syndromes may occur on the qubit layers due to intersecting the qubit lines (vertical dashed). (b) is adopted from Fig. \ref{fig:defect-congestion}, with AB routes omitted for clarity. For a fixed $\eta\times \omega \in \hat{x}\times \hat{z}$, it's possible that the syndrome (shaded face) is enclosed by defect lines (green) so that, in particular, $\omega=\omega(x)=\eta(z)$ for some adjacent pairs $z\sim x$. This occurs due to the fact that $\omega(x)$ and $\eta(z)$ are independent.
    In this case, cleaning the syndrome via the dashed line induces additional excitation of the correspond $z$-check layers.
    }
    \label{fig:syndrome-cleaning}
    \end{figure}
    If $v^x \in (\eta \omega)(x)$, then $v^x =|x;\eta(x),f,\omega(x)\ket$ for some face $f\in [L]^2$ in $\hat{a}\times \hat{b}$ plane, as shown by the shaded region in Fig. \ref{fig:syndrome-cleaning-horizontal-4D}.
    Let $(a,b)$ denote the smallest vertex adjacent to $f$ in $[L]^2$ (relative to the lexicographic order).
    Define the horizontal string (dashed line in Fig. \ref{fig:syndrome-cleaning})
    \begin{equation}
        [0,a]\ni t \mapsto \gamma_t (v^x) = \sum_{a'\le t} |x;\eta(x),a'b^+,\omega(x)\ket
    \end{equation}
    Then $t\mapsto \gamma_t (v^x)$ is a 1-continuous Pauli path in $(\eta\omega)(x)$ and that
    \begin{align}
        |\partial^x \gamma_t (v^x)|&\le 1, \quad \forall t \\
        \partial^x \gamma_T(v^x) &=\sigma^x
    \end{align}
    Also note that $g^{qX} \gamma_t(v^x)=0$ for all qubits $q$ and $t$, and that
    \begin{equation}
        |p^{zX} \gamma_t(v^x)| \le  1\{z\sim x\} \min(\weight_X,\weight_Z)
    \end{equation}
    Hence, we see that
    \begin{equation}
        \Delta_1(\gamma_T(v^x)) \le \Delta(\gamma(v^x)) \le 1+ \weight_X \qubit_Z \min(\weight_X,\weight_Z)
    \end{equation}
    
    Define
    \begin{equation}
        \hat{e}^x =\sum_{v^x} 1\{v^x\in \sigma^x\} \gamma_T(v^x)
    \end{equation}
    Then by sub-additivity, we have
    \begin{equation}
        \Delta_1(\hat{e}^x) \le (1+\weight_X \qubit_Z \min(\weight_X,\weight_Z)) |\sigma^x|
    \end{equation}
    Also note that
    \begin{align}
        |g^{qX} \hat{e}^x| &\le 1\{q\sim x\}|\sigma^x| \\
        |p^{zX} \hat{e}^x| &\le 1\{z\sim x\}\min(\weight_X,\weight_Z)|\sigma^x|
    \end{align}
\end{proof}
\begin{lemma}[$Q$ layer error]
    \label{lem:Q-layer-err-4D}
    Let $\sigma^q\in C(q)$ be a 0-chain. Then there exists map $\sigma^{q}\mapsto \hat{e}^q$ such that the 1-chain $\hat{e}^q$ satisfies $\partial^q \hat{e}^q=\sigma^q$ and
    \begin{equation}
        \Delta_1(\hat{e}^{q}) \le (1+\qubit_Z)|\sigma^q|
    \end{equation}
    Moreover, if $\sigma^q=0$, then $\hat{e}^q=0$.
\end{lemma}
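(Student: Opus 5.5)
The plan mirrors Lemma \ref{lem:Q-layer-err-3D}, since the qubit layer $C(q)=R_X\otimes q\otimes R_Z^\top$ is the same surface code as in the 3D case. The only change is that the $\hat{z}$ coordinate now runs over $[L_Z]$ and the gluing to $Z$-check layers is given by Eq. \eqref{eq:gZQ-4D}. A 0-cell of $C(q)$ (a $Z$-type syndrome of the surface code) has the form $|i,q,k^+\ket$, with $i\in[L_X]$ and $k^+$ a half-integer edge of $R_Z^\top$. For each such cell I would define the string
\begin{equation}
    \hat{e}^q = \sum_{i,k^+} 1\{|i,q,k^+\ket\in \sigma^q\}\sum_{k'\le k} |i,q,k'\ket ,
\end{equation}
which runs along the $\hat{z}$ direction at fixed $i$ and $q$ from the rough boundary up to the syndrome. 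This gives $\partial^q\hat{e}^q=\sigma^q$ directly. If $\sigma^q=0$, the construction yields $\hat{e}^q=0$.

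Next I would generate each string with the 1-continuous Pauli path $t\mapsto \gamma^q(t)=\sum_{k'\le t}|i,q,k'\ket$. The energy along this path has two sources. The intra-layer syndrome satisfies $|\partial^q\gamma^q(t)|\le 1$, since only the moving endpoint is excited. The syndrome leaked into $Z$-check layers comes through $g^{ZQ}$. By Eq. \eqref{eq:gZQ-4D}, $g^{zQ}|i,q,k'\ket$ is nonzero only when $z\sim q$ and $k'=\eta(z)$, so $|g^{zQ}\gamma^q(t)|\le 1\{z\sim q\}$. Summing over $z$ bounds this contribution by $\qubit_Z$. There is no defect-map contribution from qubit layers, because $p^{ZX}$ only acts on $C^X$. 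Hence $|\partial\gamma^q(t)|\le 1+\qubit_Z$ for all $t$.

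Finally, sub-additivity (Lemma \ref{lem:sub-additivity}) applied over the $|\sigma^q|$ strings gives $\Delta_1(\hat{e}^q)\le(1+\qubit_Z)|\sigma^q|$. I would also note that the colouring of Theorem \ref{thm:line-coloring-4D} makes the colours $\eta(z)$ distinct for distinct $z\sim q$. This is not needed for the bound, but it shows each crossing excites exactly one $z$-layer.

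The main point to check carefully is the orientation convention. The surface code $R_X\otimes R_Z^\top$ must have its rough boundary along $\hat{z}$, so that a string from $k'=1$ ending at the syndrome leaves no second endpoint. I would verify this against the convention of Example \ref{ex:surface-code} and the 3D Lemma. If the boundary is on the other side, the string should instead run from the syndrome to $k'=L_Z$; the bounds are unchanged.
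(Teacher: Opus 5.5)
Your proposal is correct and is essentially the paper's proof. Both place a $\hat{z}$-directed string $\sum_{k'\le k}|i,q,k'\rangle$ at each syndrome point $|i,q,k^+\rangle$ and generate it by the obvious 1-continuous path. Both bound the energy by $|\partial^q\gamma_t|\le 1$ plus $|g^{zQ}\gamma_t|\le 1\{z\sim q\}$ summed over the at most $\qubit_Z$ checks $z\sim q$, and both conclude by sub-additivity.

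Your orientation worry is settled by the convention. In $R_X\otimes R_Z^\top$ one has $\partial^q|i,q,k'\rangle=|i,q,k'^+\rangle+|i,q,k'^-\rangle$, with out-of-range labels vanishing. So the string starting at $k'=1$ has the single endpoint $|i,q,k^+\rangle$, exactly as you assumed.
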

\begin{proof}
    Note that every 0-cell $v^q\in \sigma^q$ is of the form $|iqk^+\ket$. Define the string as
    \begin{equation}
        [0,T]\ni t \mapsto \gamma_t(v^q) = \sum_{k'\le t} |iqk'\ket
    \end{equation}
    where $T=k$ so that $\gamma_t(v^q)$ is a 1-continuous Pauli path in $C(q)$.
    Then it's clear that 
    \begin{align}
        |\partial^q \gamma_t(v^q)| &\le 1, \quad \forall t\\
        \partial^q \gamma_T(v^q) &= |iqk^+\ket
    \end{align}
    Also note $|g^{zQ} \gamma_t(v^q)| \le 1\{z\sim q\}$ and thus
    \begin{equation}
        \Delta_1(\gamma_T(v^q)) \le \Delta(\gamma(v^q)) \le 1+\qubit_Z
    \end{equation}
    Define
    \begin{equation}
        \hat{e}^q = \sum_{v^q} 1\{v^q\in \sigma^q\} \gamma_T(v^q)
    \end{equation}
    Then $\partial^q \hat{e}^q= \sigma^q$ and by sub-additivity, the statement follows.
\end{proof}
\begin{lemma}[$Z$ layer error]
    \label{lem:Z-layer-err-4D}
    Let $\sigma^{z}\in C(z)$ be a 0-chain. Then there exists map $\sigma^z \mapsto \hat{e}^z$ such that the 1-chain $\hat{e}^z$ satisfies 
    \begin{equation}
        \partial^z \hat{e}^z + \sigma^z = |00\bar{z}\ket 1\{[\sigma_z]\ne 0\}
    \end{equation}
    where $[\sigma^z]\in H_0(C^Z)$, and  $\Delta_1(\hat{e}^{z}) \le 2$.
    Moreover, if $\sigma^z=0$, we can choose $\hat{e}^z=0$.
\end{lemma}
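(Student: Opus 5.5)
I will follow the proof of Lemma \ref{lem:Z-layer-err-3D} essentially verbatim, since the only feature of the $z$-check layer used there is that it is a connected cell complex. By Definition \ref{def:check-layer-4D}, $C(z)=R_X\otimes\AB\eta(z)+\omega\eta(z)$. The subgraph $\AB(z)$ is a union of rows and columns of $[L]^2$. Any two such rows or columns either intersect or are joined through a common qubit $q\sim z$, because $\AB(q)$ always contains both the row and the column through $q$. Hence $\AB(z)$ is connected, and so is $R_X\otimes\AB\eta(z)$. Adding the contracting layer $\omega\eta(z)$, which meets it along $\omega(z)\times\AB(z)\times\eta(z)$, keeps the complex connected. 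Therefore $H_0(C(z))\cong\dF_2$, and for a $0$-chain $\sigma^z$ we have $[\sigma^z]\ne0$ exactly when $|\sigma^z|$ is odd. Equivalently, $\sigma^z\in\im\partial^z$ if and only if $|\sigma^z|$ is even. I will fix once and for all a reference vertex of $C(z)$, denoted $|00\bar z\ket$ as in the statement, and it plays the same role as in 3D.

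The construction is as follows. Order the vertices of $\sigma^z$ arbitrarily as $v_1,v_2,\dots$. For $m=1,\dots,\lfloor|\sigma^z|/2\rfloor$, choose a simple path $\gamma^z_m$ in the graph $C(z)$ from $v_{2m-1}$ to $v_{2m}$; such a path exists by connectivity. If $|\sigma^z|$ is odd, add a final simple path from the last vertex to $|00\bar z\ket$. Each path, viewed as a $1$-continuous Pauli path that adds one edge at a time, has a truncation whose $\partial^z$-boundary is the current endpoint plus the moving endpoint. Relative to the syndrome already present, adding the $t$-th edge therefore only moves one syndrome point. Concatenate the paths as in Eq. \eqref{eq:concatenated-path} and let $\hat e^z$ be the destination. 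By construction $\partial^z\hat e^z=\sigma^z+|00\bar z\ket 1\{[\sigma^z]\ne0\}$, which is the claimed identity over $\dF_2$. If $\sigma^z=0$, there are no paths, so $\hat e^z=0$.

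For the energy bound $\Delta_1(\hat e^z)\le2$, the energy here is measured within the layer, as in Lemma \ref{lem:Z-layer-err-3D}: only $\partial^z$ is counted, since $C^Z$ is the bottom of the complex and has no further gluing. I will bound it against the syndrome $\sigma^z$ already present, not from zero, because the paths are processed sequentially after earlier pairs have been annihilated. During path $m$ the syndrome consists of the remaining untouched points together with the single moving point. Relative to the remaining configuration, the excess is at most $1$ for the moving endpoint plus $1$ for a possible collision or residual point. Stated alone, the isolated cost of each path is $\le 2$, and sub-additivity is not needed because concatenation is done on top of a decreasing syndrome.

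The main subtle point is making this relative accounting match how the lemma will be used later, namely in the analogue of Theorem \ref{thm:path-reduction-3D}. I would phrase the bound exactly as in 3D, as the max over $t$ of $|\partial^z(\Gamma^z(t))+\sigma^z|\le|\sigma^z|+1$, interpreted as energy $\le 2$ above the initial syndrome. Apart from that, the only check that differs from 3D is the connectivity of $C(z)$ including the contracting layer, which is the argument given in the first paragraph.
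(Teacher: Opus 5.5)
Your proposal is correct and uses the same argument the paper gives (by reference to the 3D lemma): pair up the syndrome vertices, join each pair by a simple path in the connected layer $C(z)$, send any leftover vertex to the reference vertex, and concatenate. Your connectivity check for $C(z)$ with the contracting layer included, and your reading of the energy bound relative to the syndrome already present, fill in what the paper leaves implicit. The relative reading is in fact necessary, since measured from $0$ one has $\Delta_1(\hat e^z)\ge|\partial^z\hat e^z|\ge|\sigma^z|-1$.
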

\begin{proof}
    The proof is similar to Lemma \ref{lem:Z-layer-err-3D} and thus omitted.
\end{proof}
\begin{theorem}[Path Reduction]
    \label{thm:path-reduction-4D}
    Let $\sL$ be a nontrivial logical of the 4D Layer Code $C$ so that $[\sL]\in H_1(C)$. If $\phi:H_1(C) \to H_1(A)$ denotes the isomorphism in Theorem 1 of \cite{yuan2025unified}, then
    \begin{equation}
        \Delta_{\weight_X}(\phi [\sL]) =O(\weight_X \qubit_Z \min(\weight_X,\weight_Z)) \Delta_1(\sL)
    \end{equation}
\end{theorem}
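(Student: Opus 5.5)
The proof will follow the 3D argument in Theorem \ref{thm:path-reduction-3D} almost line by line, with Lemma \ref{lem:X-layer-err-4D}, Lemma \ref{lem:Q-layer-err-4D} and Lemma \ref{lem:Z-layer-err-4D} replacing their 3D counterparts. The one new feature is that cleaning an $X$-layer syndrome can now create syndromes in $Z$-layers through $p^{ZX}$. Fix a $1$-continuous path $t\mapsto\Gamma(t)$ with destination $\sL$ achieving $\Delta_1(\sL)$, and let $e=\Gamma(t)$.

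\textbf{Step 1: clean the $X$-layers.} Apply Lemma \ref{lem:X-layer-err-4D} to each $\pi^x\partial e$ to obtain $\hat e^X=\sum_x\hat e^x$. This gives
\[
\Delta(\hat e^X)=O(\weight_X\qubit_Z\min(\weight_X,\weight_Z))|\partial e|,\qquad |g^{QX}\hat e^X|\le \weight_X|\pi^X\partial e|,
\]
and, unlike in 3D, $|p^{ZX}\hat e^X|\le \qubit_X\min(\weight_X,\weight_Z)|\pi^X\partial e|$.

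\textbf{Step 2: clean the $Q$-layers.} Apply Lemma \ref{lem:Q-layer-err-4D} to $\pi^q\partial e+g^{qX}\hat e^X$ to obtain $\hat e^Q$, with $\Delta(\hat e^Q)=O(\weight_X\qubit_Z)|\partial e|$.

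\textbf{Step 3: clean the $Z$-layers.} Apply Lemma \ref{lem:Z-layer-err-4D} to
\[
\pi^z\partial e+g^{zQ}\hat e^Q+p^{zX}\hat e^X .
\]
This leaves at most one point syndrome per $z$-layer, at cost $\Delta\le 2$.

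\textbf{Step 4: identify the residual syndrome.} Since $H_1(C(x))=0$ by Lemma \ref{lem:contracting-cycles}, $e^X+\hat e^X\in\ker\partial^X=\im\partial^X$. Choose $s^X$ deterministically with $\partial^X s^X=e^X+\hat e^X$ and set
\[
\ell^Q=e^Q+\hat e^Q+g^{QX}s^X,\qquad \ell^A=[\ell^Q]\in H_1(C^Q)\cong Q .
\]
The compatibility relation of Lemma \ref{lem:compatible}, $g^{ZQ}g^{QX}=p^{ZX}\partial^X+\partial^Zp^{ZX}$, is exactly what absorbs the new $p^{zX}\hat e^X$ term. It gives
\[
\pi^z\partial e+g^{zQ}\hat e^Q+p^{zX}\hat e^X=g^{zQ}\ell^Q+\partial^z\bigl(e^z+p^{zX}s^X\bigr),
\]
so the class in $H_0(C(z))\cong\dF_2$ equals $\bra z|\partial^A\ell^A\ket$. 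Hence
\[
|\partial^A\ell^A(t)|=|\partial(e+\hat e)|\le|\partial e|+\Delta(\hat e)=O(\weight_X\qubit_Z\min(\weight_X,\weight_Z))|\partial\Gamma(t)| .
\]
At $t=T$ we have $\partial\sL=0$, so $\hat e=0$ and $\ell^A(T)\in\phi[\sL]$.

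\textbf{Step 5: continuity of $t\mapsto\ell^A(t)$, the main obstacle.} I expect this to need the most care. A single-cell change in some $e^x$ alters $s^X$ only within $C(x)$ modulo the choice of $s^X$. The ambiguity in $s^X$ is the global cycle $\sC_0(x)$ (since $H^0$, i.e.\ $H_2(C(x))$, is spanned by it), and $g^{QX}\sC_0(x)$ is a sum of logical strings $\|q\ket$ over $q\sim x$, so $\ell^A$ changes by $\partial^A x$ at most. The safest approach is to state the step-size bound modulo $\im\partial^A$: the change in $\ell^A$ is supported on $q\sim x$, so it has weight $\le\weight_X$. A change in some $e^q$ changes $\ell^A$ by at most one qubit. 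A change in some $e^z$ does not change $\ell^A$ at all.

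Together these give a $\weight_X$-continuous path in $A$ ending in $\phi[\sL]$, which proves the theorem.
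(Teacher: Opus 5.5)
Your proposal is correct and follows the paper's proof essentially step for step. You clean the $X$-, $Q$- and $Z$-layers with Lemmas \ref{lem:X-layer-err-4D}--\ref{lem:Z-layer-err-4D}, use $g^{ZQ}g^{QX}=p^{ZX}\partial^X+\partial^Zp^{ZX}$ to identify the residual $z$-layer class with $\bra z|\partial^A\ell^A\ket$, and check $\weight_X$-continuity by the same three-case analysis.

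Two small corrections:
\begin{enumerate}
\item In Step 1 the bound should read $|p^{ZX}\hat e^X|\le\weight_X\qubit_Z\min(\weight_X,\weight_Z)|\pi^X\partial e|$, since each $x$ has up to $\weight_X\qubit_Z$ neighbours $z\sim x$. This slip is harmless, because your Step 4 bound goes through $\Delta(\hat e)$.
\item In Step 5 no quotient by $\im\partial^A$ is needed, and one would not suffice, since $c$-continuity is an honest Hamming-weight condition. Choose $s^X=\bigoplus_x s^x$ with each $s^x$ a fixed function of $e^x+\hat e^x$. Then a change in $e^x$ changes $\ell^q$ only for $q\sim x$, which gives a genuine weight-$\le\weight_X$ step.
\end{enumerate}
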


\begin{proof}
    Let $t\mapsto \Gamma(t)$ be a 1-continuous path in $C$ with destination $\sL$ such that $\Delta(\sL)=\Delta(\Gamma)$ and let $e =\Gamma(t)$ for an arbitrary $t$, and note that the output code $C$ has differential
    \begin{equation}
        \partial = 
        \begin{pmatrix}
            \partial^X &&\\
            g^{QX} & \partial^{Q} &\\
            p^{ZX} & g^{ZQ} & \partial^{Z}
        \end{pmatrix}
    \end{equation}
    
    By Lemma \ref{lem:X-layer-err-4D}, we can obtain $\hat{e}^x$ based on the syndrome $\pi^x\partial e$ where we take $\hat{e}^x=0$ if $\pi^x \partial e=0$. 
    Let $\hat{e}^X$ be the direct sum of $\hat{e}^x$ over $x\in \sX$ so that
    \begin{align}
        \Delta(\hat{e}^X) &= O(\weight_X \qubit_Z \min(\weight_X,\weight_Z)) |\partial e|\\
        |g^{QX} \hat{e}^X| &\le \weight_X |\pi^X\partial e|  \\
        |p^{ZX} \hat{e}^X| &\le \weight_X\qubit_Z \min(\weight_X,\weight_Z) |\pi^X \partial e|
    \end{align}
    By Lemma \ref{lem:Q-layer-err-4D}, we can obtain $\hat{e}^q$ based on the syndrome 
    \begin{equation}
        \pi^q\partial (e+\hat{e}^X) = \pi^q\partial e +g^{qX}\hat{e}^X
    \end{equation}
    where we take $\hat{e}^q=0$ if the syndrome $=0$.
    Let $\hat{e}^Q$ be the direct sum of $\hat{e}^q$ over $q\in \sQ$ so that
    \begin{align}
        \Delta(\hat{e}^Q) &\le (1+\qubit_Z)\left(|\pi^{Q} \partial e| +\weight_X |\pi^X \partial e|\right) \\
        &= O(\weight_X \qubit_Z) |\partial e|
    \end{align}
    By Lemma \ref{lem:Z-layer-err-4D}, we can obtain $\hat{e}^z$ based on the syndrome
    \begin{equation}
        \pi^z \partial (e+\hat{e}^X +\hat{e}^Q) = \pi^z\partial e +p^{zX} \hat{e}^X +g^{zQ} \hat{e}^Q
    \end{equation}
    where we take $\hat{e}^z=0$ if the syndrome $=0$.
    Let $\hat{e}^Z$ denote the direct sum of $\hat{e}^z$ over $z\in \sZ$ so that $\Delta(\hat{e}^Z) \le 2$ and that if $\hat{e} =\hat{e}^X +\hat{e}^Q+\hat{e}^Z$, then
    \begin{equation}
        \partial (e+\hat{e}) = \sum_{z} |00\bar{z}\ket 1\{[\pi^z(\partial e +\hat{e}^X +\hat{e}^Q)]\ne 0\}
    \end{equation}
    
    Note that $e^X+\hat{e}^X \in \ker \partial^X = \im \partial^X$ and thus $=\partial^X s^X$ for some 2-chain $s^X\in C^X$.
    Let $s^X$ be obtained in an arbitrary but predetermined manner via a map $e^X+\hat{e}^X\mapsto s^X$.
    Hence,
    \begin{equation}
        \pi^{Q} \partial(e+\hat{e}^X) = \partial^Q(e^Q+g^{QX} s^X)
    \end{equation}
    And thus $\ell^{Q} \equiv e^Q+\hat{e}^Q +g^{QX} s^X \in \ker \partial^Q$ so that $\ell^{A} = [\ell^Q] \in H_1(C^Q)=Q$ is well-defined as a 1-chain in $A$.
    Finally, note that
    \begin{align}
        \pi^z \partial (e+\hat{e}^X+\hat{e}^Q) &= g^{zQ} \ell^Q + \partial^z (e^z +p^{zX} s^X) \\
        [\pi^z \partial (e+\hat{e}^X+\hat{e}^Q)] &= [g^{zQ}] [\ell^{Q}] \\
        [\pi^z \partial (e+\hat{e}^X+\hat{e}^Q)] &= \bra z|\partial^A \ell^A\ket
    \end{align}
    Hence, we see that
    \begin{equation}
        |\partial (e+\hat{e})| = |\partial^A \ell^A|
    \end{equation}
    Note that the previous construction defines a map $e\mapsto \ell^A$ and applies to any $e=\Gamma(t)$ and thus we obtain a path $t\mapsto \ell^A(t)$ with respect to $A$. 
    We further claim that the induced path $t\mapsto \ell^A(t)$ is $\weight_X$-continuous, which we postpone to the end of the proof.
    Note that if $e=\sL$, i.e., at the destination of the Pauli path $\Gamma(t)$, then $\hat{e}=0$ and that $\ell^A \in \phi[\sL]$. 
    Hence, the destination of Pauli path $\ell^A(t)$ is a nontrivial logical in the coset $\phi[\sL]\in H_1(A)$.
    Note that for any $t$, we have shown that
    \begin{equation}
        |\partial^A\ell^A(t)| = O(\weight_X\qubit_Z \min(\weight_X,\weight_Z))|\partial \Gamma(t)|
    \end{equation}
    Hence,
    \begin{equation}
        \Delta_{\weight_X}(\phi[\sL]) =O(\qubit_Z \min(\weight_X,\weight_Z)) \Delta_1(\sL)
    \end{equation}
    
    Let us now prove that $t\mapsto \ell^A(t)$ is $\weight_X$-continuous.
    Indeed, since $\Gamma$ is 1-continuous, we see that $e'=\Gamma(t+1)$ must differ from $e=\Gamma(t)$ by a single 1-cell in $C$, and thus there are three possibilities: the difference occurs in $e^x$ for some $x\in \sX$, or $e^q$ for some $q\in \sQ$, or $e^z$ for some $z\in \sZ$. 
    In the first case, we have $e'+e=f^x$ for some $x\in \sX$ and thus $|g^{qX}(s^X +s'^X)|\ne 0$ only if $q\sim x$ and thus changes at most $\weight_X$ qubit layers. Hence, $\ell'^q\ne \ell^q$ only if $q\sim x$ and thus $|\ell'^{A} -\ell^A| \le \weight_X$.
    In the second case, we have $e'+e=f^q$ for some $q\in \sQ$ and thus only $\ell'^q,\ell^q$ may differ. Hence, $|\ell'^{A} -\ell^{A}| \le 1$.
    In the third case, it's clear that $\ell^A$ is independent of $e^z$ and thus $\ell'^A =\ell^A$.
    Hence, $t\mapsto \ell^{A}(t)$ is $\weight_X$-continuous.
\end{proof}

\begin{proof}[Proof of Theorem \ref{thm:energy-4D}]
    The statement follows from Theorem \ref{thm:path-reduction-4D} and Remark \ref{rem:step-size}.
\end{proof}

\section{Proof of 5D Embedding}
\label{sec:5Dproof}
In this section, we provided proofs for the main results collected in Theorem \ref{thm:logical-5D}, \ref{thm:distance-5D} and \ref{thm:energy-5D}.
The proofs follow the similar reasoning as \cite{williamson2023layer,layerreduction,yuan2025unified}, but included for completeness.

\subsection{Logicals Preserved}
\begin{proof}[Proof of Theorem \ref{thm:logical-5D}]
    The proof is similar to that of Theorem \ref{thm:logical-4D} and thus we will only discuss the differences.
    By Theorem \ref{thm:contracting-cycles-5D} and \cite{yuan2025unified}, the output code and input code has isomorphic logicals, i.e.,
    \begin{equation}
        H_1(C) \cong H_1(A)
    \end{equation}
    
    Note that the number of qubits of $C$ per edge follows from Remark \ref{rem:layer-congestion-5D} and the fact that an edge is either along the $\hat{x},\hat{z}$ axis or within the qubit grid.

    The weights of the 5D Layer Code $C$ is determined via the following diagram (where $D=5$)
    \begin{equation}
    \label{eq:weight-diagram-5D}
    \begin{tikzpicture}[baseline]
    \matrix(a)[matrix of math nodes, nodes in empty cells, nodes={minimum size=25pt},
    row sep=2em, column sep=2em,
    text height=1.25ex, text depth=0.25ex]
    {&& \red{C^{X}_{2}}  & \red{C^{X}_{1}} & \red{C^{X}_{0}}\\
    & \gray{C^{Q}_{2}}  & \gray{C^{Q}_{1}}  & \gray{C^{Q}_{0}} &\\
    \blue{C^{Z}_{2}} & \blue{C^{Z}_{1}} & \blue{C^{Z}_{0}} &&\\};
    \path[-left to,font=\scriptsize,transform canvas={yshift=0.2ex}]
    (a-1-3) edge node[above]{$2(D-1)$}  (a-1-4)
    (a-1-4) edge node[above]{$2(D-2)$}  (a-1-5)
    (a-2-2) edge node[above]{$4$}  (a-2-3)
    (a-2-3) edge node[above]{$2$}  (a-2-4)
    (a-3-1) edge node[above]{$4$}  (a-3-2)
    (a-3-2) edge node[above]{$2$}  (a-3-3);
    \path[left to-,font=\scriptsize,transform canvas={yshift=-0.2ex}]
    (a-1-3) edge node[below]{$2$}  (a-1-4)
    (a-1-4) edge node[below]{$4$}  (a-1-5)
    (a-2-2) edge node[below]{$2$}  (a-2-3)
    (a-2-3) edge node[below]{$4$}  (a-2-4)
    (a-3-1) edge node[below]{$2(D-2)$}  (a-3-2)
    (a-3-2) edge node[below]{$2(D-1)$}  (a-3-3);
    \path[-left to,font=\scriptsize,transform canvas={xshift=0.2ex}]
    (a-1-3) edge node[right]{$1$}  (a-2-3)
    (a-1-4) edge node[right]{$1$}  (a-2-4)
    (a-2-2) edge node[right]{$1$}  (a-3-2)
    (a-2-3) edge node[right]{$1$}  (a-3-3);
    \path[left to-,font=\scriptsize,transform canvas={xshift=-0.2ex}]
    (a-1-3) edge node[left]{$1$}  (a-2-3)
    (a-1-4) edge node[left]{$1$}  (a-2-4)
    (a-2-2) edge node[left]{$1$}  (a-3-2)
    (a-2-3) edge node[left]{$1$}  (a-3-3);
    \path[-left to,font=\scriptsize]
    (a-1-3) edge[bend right=80, dashed] node[right]{$D-2$} (a-3-2)
    (a-1-4) edge[bend left=80, dashed] node[right]{$1$} (a-3-3);
    \path[left to-,font=\scriptsize]
    (a-1-3) edge[bend right=80, dashed] node[left]{$1$} (a-3-2)
    (a-1-4) edge[bend left=80, dashed] node[left]{$D-2$} (a-3-3);
    \end{tikzpicture}
    \end{equation}
    Specifically, the weights of the row $C^{Q}$ are straightforwardly obtained since each qubit layer is a surface code.
    The weights of $C^X,C^Z$ can also be obtained straightforwardly since the cells of each $x,z$-check layer correspond to faces, edges and vertices in a 4D sub-grid of the 5D hypercube. 
    Note that in this case, a vertex can have at most $2(D-1)$ adjacent edges and an edge can have at most $2(D-2)$ adjacent faces.
    
    Finally, let us focus on the weights of $g^{QX},g^{ZQ}$ and $p^{ZX}$.
    By Eq. \eqref{eq:gQX-5D}, it's clear that $g^{QX}$ has max column weight $=1$. 
    By Eq. \eqref{eq:gZQ-5D}, we see that $g^{ZQ}$ has max column weight $=1$ since
    \begin{equation}
        |g^{ZQ}|i^{\bullet},q,k\ket| \le \sum_{z\sim q} 1\{\eta(z)=k\} \le 1
    \end{equation}
    where we used the fact that if $q\sim z,z'$, then $\eta(z)\ne \eta(z')$, which follows from Theorem \ref{thm:line-coloring-5D}.
    Since the construction is symmetric, it's straightforward to check that $g^{QX},g^{ZQ}$ have max row weights $=1$.
    Now consider the defect maps as defined in Eq. \eqref{eq:pZX-2-5D}-\eqref{eq:pZX-1-5D}.
    Note that
    \begin{align}
        |p^{ZX} |x;iqk\ket|&\le 1\{i=\eta(x)\}\sum_{z\sim x} 1\{k=\eta(z)\} \\
        &\quad \times \sum_{m} 1\{q\in \gamma_m(xz)\} \\
        &\le 1\{i=\eta(x)\}\sum_{z\sim x} 1\{k=\eta(z)\} \\
        &\quad\quad \times |\{m:q\in \gamma_m(xz)\}|
    \end{align}
    Let $m_1(z),...,m_{\ell}(z)$ denote the possible $\gamma_m(xz)\ni q$. 
    Note that $\lambda(z),\lambda(z')$ do not share edges for $z,z'$ with the same color and that by Corollary \ref{cor:defect-congestion-5D}, $\gamma_m(xz),m\in [M]$ are all simple edge-disjoint paths (for fixed $xz$).
    Since there are $2(D-2)$ edges adjacent to $q$, we see that the collection of $m_1(z),...,m_{\ell}(z),\eta(z)=k,z\sim x$ must be $\le (D-2)$.
    Hence,
    \begin{equation}
        |p^{ZX} |x;iqk\ket|\le D-2
    \end{equation}
    Similarly, note that
    \begin{align}
        |p^{ZX} |x;iek\ket | &\le 1\{i=\eta(x)\} \sum_{z\sim x}  1\{k=\eta(z)\} \\
        &\quad \times \sum_{m} 1\{e \in \gamma_m(xz)\} \\
        &\le 1\{i=\eta(x)\} \sum_{z\sim x}  1\{k=\eta(z)\} \\
        &\quad \times 1\{e\in \gamma_m(xz), \exists m\} \\
        &\le 1
    \end{align}
    where we used Corollary \ref{cor:defect-congestion-5D}, and the fact that $\lambda(z),\lambda(z')$ do not share an edge if $z,z'$ have the same line color. The case is similar for the max row weight, and thus the check weight follows.
\end{proof}

\subsection{Distance Preserved}

To prove that the distance of the 5D Layer Codes is preserved in some manner, the procedure is similar to that of the $\infty$-dimensional Layer Codes \cite{layerreduction} (which also applies to the 3D case in\cite{williamson2023layer,yuan2025unified} and 4D case in Theorem \ref{thm:distance-4D}.). 
Specifically, we will utilize Appendix (A) of \cite{layerreduction} to compute the relative expansion coefficient of the check layers $C^{X}$ ($C^{Z}$) with respect to the gluing map $g^{QX}$ ($g^{ZQ}$), and then apply the Cleaning Lemma in \cite{yuan2025unified} to prove a lower bound on the 4D Layer Codes.
In particular, since the 4D Layer Codes are symmetric in $X,Z$-checks, we shall restrict our attention to the $X$-type distance.

\begin{lemma}[$\lambda$ (co)expansion]
    For any $x$-check, $\lambda(x)$ is $2/|\lambda(x)|$-coexpanding, and $1/|\lambda(x)|$-coexpanding relative to the projection onto $w\le \weight$ qubits $q\sim x$.
\end{lemma}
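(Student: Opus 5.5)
The plan is to reduce both coexpansion statements to one structural fact: the graph $\lambda(x)$ is \emph{connected}. The counting step is then the same as in Example A.2 of \cite{layerreduction}, which the 4D analogue for $\AB(x)$ also invokes.

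\textbf{Counting step.} Write $N=|\lambda(x)|$ for the number of vertices and let $\delta$ be the codifferential of the graph complex of $\lambda(x)$. Take any vertex subset $s$, and let $\bar s$ be its complement in $\lambda(x)$.
\begin{itemize}
\item If $s\neq\varnothing$ and $s$ is not all of $\lambda(x)$, connectivity forces at least one edge between $s$ and $\bar s$. Hence $|\delta s|\ge 1\ge \frac{2}{N}\min(|s|,|\bar s|)$, because $\min(|s|,|\bar s|)\le N/2$.
\item If $s=\varnothing$ or $s$ is everything, both sides vanish. This gives $2/N$-coexpansion.
\item For the relative statement, let $\pi$ project onto the $w\le\weight$ qubits $q\sim x$. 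These qubits lie in $\lambda(x)$: each $q\sim x$ is the source or destination $\gamma_{00}(p)$ or $\gamma_{11}(p)$ of some packet $p\sim x$. Then $\min(|\pi s|,|\pi\bar s|)\le w/2\le N$, so the same one-edge argument gives $|\delta s|\ge \frac1N\min(|\pi s|,|\pi \bar s|)$.
\end{itemize}

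\textbf{Connectivity step.} First, each star graph $\lambda(q)$ is connected, since its three axis-parallel lines all pass through $q$. Next, fix a packet $p\sim x$ with color route $\gamma_{00},\gamma_{10},\gamma_{01},\gamma_{11}$ as constructed in the proof of Theorem \ref{thm:color-route-5D}. I check that consecutive star graphs intersect.
\begin{itemize}
\item $\gamma_{00}=a_0\times b_0\times c_0$ and $\gamma_{10}=a_0\times b_0\times c_1$ differ only in the 3-coordinate. So both star graphs contain the line $a_0\times b_0\times[L]$.
\item $\gamma_{01}$ and $\gamma_{11}$ differ only in the 3-coordinate, so by the same reasoning their star graphs share the line $a_\infty\times b_\infty\times[L]$.
\item $\gamma_{10}=a_0\times b_0\times c_1$ and $\gamma_{01}=a_\infty\times b_\infty\times c_1$ share the 3-coordinate. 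The line $[L]\times b_0\times c_1\subseteq\lambda(\gamma_{10})$ and the line $a_\infty\times[L]\times c_1\subseteq\lambda(\gamma_{01})$ meet at the vertex $a_\infty\times b_0\times c_1$.
\end{itemize}
Therefore $\bigcup_s\lambda(\gamma_s(p))$ is connected and contains both endpoints of $p$. Finally, every pair of qubits in $\supp x$ is a packet of $\sG_Q$ by Definition \ref{def:induced-routing-Q}. So the per-packet unions are glued together through the common qubits of $\supp x$, and $\lambda(x)$ is connected.

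\textbf{Main obstacle.} The step needing care is the middle turning point: $\gamma_{10}$ and $\gamma_{01}$ lie in a common plane but not on a common line. Two star graphs meet only if lines through them cross, which is exactly where the ``one plane at a time'' property 2) of Theorem \ref{thm:color-route-5D} is used. I also expect to have to handle the degenerate case $w=1$ separately, where there is no packet $p\sim x$. There I would adopt the convention $\lambda(x)=\lambda(q)$ for the single qubit $q\sim x$, which is connected.
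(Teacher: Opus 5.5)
Your argument is correct and is essentially the paper's: the paper simply invokes the connected-graph counting of Example A.2 in \cite{layerreduction}. You reproduce that counting step and also make explicit the connectivity of $\lambda(x)$ (consecutive star graphs along each color route meet, and packets join every pair of qubits in $\supp x$), which the paper leaves implicit. Since the counting uses only $|\delta s|\ge 1$ for nontrivial $s$, it gives the stated constants under whatever normalization of (relative) coexpansion \cite{layerreduction} adopts, and your $w=1$ convention handles a degenerate case the paper does not address.
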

\begin{proof}
    The proof is exactly the same as Example (A.2) of \cite{layerreduction} and thus omitted.
\end{proof}


\begin{lemma}[Relative Expansion]
    \label{lem:relative-expansion-5D}
    Fix any $x$ check. Let $\tilde{\delta}^{x}$ be the codifferential of $\eta\lambda(x)\otimes R_Z$. Then for every subset of vertices $s\in \eta\lambda(x)\otimes R_Z$, there exists $\hat{s}$ with $\delta^{x} s=\delta^{x} \hat{s}$ such that
    \begin{equation}
        |\tilde{\delta}^x s| = \Omega\left(\frac{1}{\weight^4\qubit^2}\right) |g^{QX} \hat{s}|
    \end{equation}
    
    In particular, if $\delta^{x}$ is the codifferential of $C(x)$, then for every subset of vertices $s\in C(x)$, there exists $\hat{s}$ with $\tilde{\delta}^{x}s = \tilde{\delta}^{x}\hat{s}$ such that
    \begin{equation}
        |\delta^x s| = \Omega\left(\frac{1}{\weight^4\qubit^2}\right) |g^{QX} \hat{s}|
    \end{equation}
\end{lemma}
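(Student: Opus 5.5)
The plan mirrors the proof of Lemma \ref{lem:relative-expansion-4D}, with $\AB(x)$ replaced by the star graph $\lambda(x)$.

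First, the product statement. By the preceding lemma, $\lambda(x)$ is $1/|\lambda(x)|$-coexpanding relative to the projection onto the $w\le\weight$ qubits $q\sim x$. The repetition code $R_Z$ is $2/L_Z$-coexpanding. Lemma (A.2) of \cite{layerreduction} then gives that $\eta\lambda(x)\otimes R_Z$ is $c^{QX}$-coexpanding relative to $g^{QX}$, with
\begin{equation}
c^{QX}=\min\left(\frac{1}{|\lambda(x)|},\frac{2}{L_Z}\right).
\end{equation}
Remark (5) of \cite{layerreduction} converts this into a bound of the form
\begin{equation}
|\tilde\delta^x s|\ge c^{QX}\,\frac{|\lambda(x)|L_Z}{2\weight L_Z}\,|g^{QX}\hat s|\ge \frac{1}{2\weight}\min\left(1,\frac{2|\lambda(x)|}{L_Z}\right)|g^{QX}\hat s|,
\end{equation}
where $\hat s\in\{s,s+\sV(x)\}$ is the representative minimizing $|g^{QX}\hat s|$.

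Second, the size comparison, which produces the coefficient. Every $x$ has at least one qubit, so $\lambda(x)$ contains a full star graph and $|\lambda(x)|\ge L$. On the other side, $L_Z=\max(\chi_Z,\zeta_X)=O(\weight^3\qubit^2)L$. Hence $|\lambda(x)|/L_Z=\Omega(1/(\weight^3\qubit^2))$, and multiplying by the prefactor $1/(2\weight)$ yields $\Omega(1/(\weight^4\qubit^2))$, as claimed. This bookkeeping is the only place where the estimate differs from the 4D case. The exponents depend on the specific $L_Z$ from Theorems \ref{thm:line-coloring-5D} and \ref{thm:plane-coloring-5D}, so they must be tracked carefully. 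Using $|\lambda(x)|\ge L$ rather than anything finer is deliberate, since checks may act on very few qubits.

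Third, extend to $C(x)=\eta\lambda(x)\otimes R_Z+\eta\Lambda\omega(x)$ via the natural projection $\tau:C(x)\to\eta\lambda(x)\otimes R_Z$, as in the 4D proof. Since the contracting layer only adds cells, $|\delta^x s|\ge|\tau\delta^x s|=|\tilde\delta^x\tau s|$. Then:
\begin{itemize}
\item apply the first part to $\tau s$;
\item use $g^{QX}=g^{QX}\tau$, which holds because $g^{QX}$ acts trivially on vertices of $\eta\Lambda\omega(x)$ outside the cylinder;
\item use $\tau\sC_0(x)=\sV(x)$, so that the two representatives $\tau s$ and $\tau s+\sV(x)$ lift to $s$ and $s+\sC_0(x)$.
\end{itemize}
This gives the stated bound with $\hat s\in\{s,s+\sC_0(x)\}$, and both differ from $s$ by a coboundary-trivial element of $C(x)$.

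The main obstacle I expect is justifying the identity $|\tau\delta^x s|=|\tilde\delta^x\tau s|$. It needs the edges of $\eta\lambda(x)\otimes R_Z$ to have no extra adjacency to the contracting-layer vertices that are not in the cylinder. In 5D, the star planes $\Lambda(x)$ contain $\lambda(x)$ at height $\omega(x)$ and overlap it only along $\lambda(x)\times\omega(x)$. So the edges of the cylinder are exactly those restricting to $\tilde\delta^x$ once the extra vertices are projected out, and I would verify this carefully from the definition of the sum of complexes.
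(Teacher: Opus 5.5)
Your proposal is correct and follows the paper's argument. The paper only states that the proof is identical to Lemma~\ref{lem:relative-expansion-4D}, and you carry out exactly that substitution of $\lambda(x)$ for $\AB(x)$: $|\lambda(x)|\ge L$ together with $L_Z=O(\weight^3\qubit^2)L$ gives the $\Omega(1/(\weight^4\qubit^2))$ factor, followed by the same projection argument with $\tau$. The step you flag as the main obstacle is immediate, because $\eta\lambda(x)\otimes R_Z$ is a subcomplex of $C(x)$, so every cylinder edge has both endpoints in the cylinder and $\tau\delta^x=\tilde{\delta}^x\tau$ holds automatically.
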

\begin{proof}
    The proof is exactly the same as that of Lemma \eqref{lem:relative-expansion-4D} and thus omitted.
\end{proof}

\begin{proof}[Proof of Theorem \ref{thm:distance-5D}]
    By Lemma \ref{lem:relative-expansion-5D} and the Cleaning Lemma in \cite{yuan2025unified}, the 1-systolic distance must satisfy
    \begin{equation}
        d_1(C) =\Omega\left(\frac{L_Z}{\weight^4\qubit^2}\right) d_1(A)
    \end{equation}
    where the extra factor $L_Z$ utilizes the fact that each qubit layer $R_X\otimes R_Z^\top$ has 1-systolic distance $\ge L_Z$. 
    Since the construction is symmetric in $X\lr Z$, the statement also holds for the 1-cosystolic distance.
\end{proof}

\subsection{Energy Barrier Preserved}

To prove that the energy barrier of the 5D Layer Codes is preserved in some manner, the procedure is similar to that of the 3D case \cite{williamson2023layer} (see Appendix \ref{sec:3D-layer-code} for a detailed exposition) and 4D case in Theorem \ref{thm:energy-4D}, though similar to the 4D case, additional care must be taken due to the contracting layers $\eta\Lambda\omega (x)$.
In particular, since the 5D Layer Codes are symmetric in $X,Z$-checks, we shall restrict our attention to the $X$-type energy barrier.

\begin{definition}[Notation]
    Let $\pi^X,\pi^Q,\pi^Z$ denote the projection chain maps $C\to C^X,C^Q,C^Z$, respectively.
    Let $\pi^x,\pi^q,\pi^z$ denote the projection chain maps onto $x$-check, qubit $q$, $z$-check layer, respectively.
    Let $\partial^x=\pi^x\partial^X$ and similarly define $\partial^q,\partial^z$.
    Let $g^{qX}=\pi^q g^{QX}$ and similarly define $g^{zQ}$ and $p^{zX}$.
\end{definition}

\begin{lemma}[$X$ layer error]
    \label{lem:X-layer-err-5D}
    Let $\sigma^x\in C(x)^\top$ be a 0-chain (syndrome). Then there exists a map $\sigma^x\mapsto \hat{e}^x$ such that the 1-chain $\hat{e}^x$ is such that $\partial^x\hat{e}^x =\sigma^x$ and
    \begin{equation}
        \Delta_1(\hat{e}^x)\le (1+\weight^2 \qubit)|\sigma^x|
    \end{equation}
    and
    \begin{align}
        |g^{qX} \hat{e}^x| &\le 1\{q\sim x\} |\sigma^x| \\
        |p^{zX} \hat{e}^x| &\le 1\{z\sim x\} \weight |\sigma^x|
    \end{align}
    Moreover, if $\sigma^x=0$, then $\hat{e}^x=0$.
\end{lemma}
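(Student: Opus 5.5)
The plan is to follow the proof of Lemma \ref{lem:X-layer-err-4D} closely. We clean each syndrome 0-cell $v^x\in\sigma^x$ (a face of $C(x)$) separately, using a string-like 1-continuous Pauli path $t\mapsto\gamma_t(v^x)$ that moves the face toward the boundary of the grid. We then set $\hat e^x=\sum_{v^x\in\sigma^x}\gamma_T(v^x)$ and conclude with sub-additivity (Lemma \ref{lem:sub-additivity}). The final clause is then automatic: if $\sigma^x=0$ the sum is empty. We split according to where $v^x$ lives in $C(x)=\eta\lambda(x)\otimes R_Z+\eta\Lambda\omega(x)$.

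\emph{Vertical faces.} Suppose $v^x=|x;\eta(x),e,k^+\ket$ with $e$ an edge of $\lambda(x)$. The edge $e$ lies on a line $\lambda$ of the form $q|_{[1,i)}\times[L]\times q|_{(i,3]}$ contained in $\lambda(x)$. We take the string of vertical edges $|x;\eta(x),q',k^+\ket$ running along this line, from its smaller endpoint down to coordinate $1$. Each intermediate step creates a single syndrome face, so $|\partial^x\gamma_t|\le1$. The string crosses qubit layers only at vertices $q'\sim x$, and each such crossing excites at most one face of that qubit layer. This gives $|g^{qX}\gamma_t|\le 1\{q\sim x\}$, and hence a total energy of at most $1+\weight$. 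Since the $\hat z$-coordinate $k^+$ is a half integer, Eq. \eqref{eq:pZX-1-5D} shows that $p^{zX}\gamma_t=0$.

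\emph{Horizontal faces.} Suppose $v^x=|x;\eta(x),f,\omega(x)\ket$ with $f$ a face of some plane $\Pi$ of $\Lambda(\gamma_s(p))\subseteq\Lambda(x)$. We take the straight string of edges inside $\Pi$, at fixed $\hat z=\omega(x)$, that pushes $f$ to the boundary of $\Pi$. This is a 1-continuous path with $|\partial^x\gamma_t|\le1$ that stays inside $\eta\Lambda\omega(x)$. Some care is needed here, because $\Lambda(x)$ is a union of planes that are only face-decongested, so an edge of $\Pi$ may also be adjacent to faces of other planes of $\Lambda(x)$ (up to $2(D-2)$ faces). To keep $|\partial^x\gamma_t|$ bounded, I would charge these extra faces to the $O(1)$ bound on adjacency. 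If this becomes problematic, the fallback is to accept an $O(1)$ constant, which does not affect the stated asymptotics. The string never touches the qubit-layer gluing, since $g^{QX}$ only acts on the cells $\eta\lambda(x)\otimes R_Z$ over $q\sim x$, so $g^{qX}\gamma_t=0$ for all $q$.

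\emph{Main obstacle: the defect count.} The nontrivial part is bounding $|p^{zX}\gamma_t|$ in the horizontal case. The defect map acts only on cells at integer $\hat z$-coordinate equal to $\eta(z)$, so the only contributions come from $z\sim x$ with $\eta(z)=\omega(x)$. For any single time $t$, $\gamma_t$ is a set of edges along one line in $\Pi$. By Eq. \eqref{eq:pZX-1-5D}, $p^{zX}\gamma_t$ counts the edges of $\gamma_t$ lying on defect paths $\Gamma_m(xz)$ with $\eta(z)=\omega(x)$, together with boundary contributions from Eq. \eqref{eq:pZX-2-5D}.

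The key claim is that, for the partial string and its endpoint, $|p^{zX}\gamma_t|\le 1\{z\sim x\}\,\weight$. There are at most $\weight$ qubits in $x\wedge z$, hence at most $\weight/2$ edge-disjoint simple paths $\gamma_m(xz)$ (Corollary \ref{cor:defect-congestion-5D}). I would parametrize so that the string is swept perpendicular to the cut it induces. The syndrome on $C(z)$ is then the mod-2 count of defect paths crossing the frontier of the swept region, which is at most the number of path endpoints plus crossings attributable to $z$. Each $\gamma_m(xz)$ lives in $\lambda(z)$, and these are edge-disjoint for distinct $z$ of the same line colour.

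Summing over the at most $\qubit$ checks $z\sim q$ for each of the $\le\weight$ qubits $q\sim x$ gives at most $\weight\qubit$ relevant $z$-checks. Each of them contributes at most $\weight$, so $|\partial\gamma_t|\le 1+\weight^2\qubit$. If the frontier-crossing count for a single $z$ cannot be bounded by $\weight$ directly, I would reroute $\gamma_t$ within $\Pi$ along a coordinate direction transverse to the lines of $\lambda(z)$. Lines of the star graphs meet $\Pi$ in a controlled way (at most one line per point per direction), and such a transverse route makes each path $\gamma_m(xz)$ cross the frontier $O(1)$ times.
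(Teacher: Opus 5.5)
Your skeleton matches the paper's proof: a straight string per syndrome face, followed by sub-additivity. Vertical faces are cleaned along a line of $\lambda(x)$ at half-integer $\hat z$, so $p^{zX}=0$. Horizontal faces are cleaned inside a plane $\Pi\subseteq\Lambda(x)$ at $\hat z=\omega(x)$, so $g^{qX}=0$. The paper handles the horizontal case only with ``proceeds similarly''.

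The gap is the bound $|p^{zX}\gamma_t|\le\weight$ per $z$, which is exactly where 5D departs from 4D. Your argument for it does not go through. In 4D the bound is immediate: the defect paths are vertex-disjoint monotone AB routes, and each meets a straight string at most once. In 5D the $\gamma_m(xz)$ are trails obtained by cutting up a forest (Lemma \ref{lem:disjoint-paths}). A single $\gamma_m(xz)$ can therefore be assembled from pieces of several $\rho_{m'}$ and cross your string many times, so your claim that each crosses $O(1)$ times is unsupported. Your line-counting fallback gives only one crossing per line of $\lambda(z)$ in $\Pi$ parallel to the string's edges. There are $O(\weight^2)$ such lines, one per color-route point of $z$, which yields $O(\weight^2)$ per $z$ rather than $\weight$. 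Rerouting does not help, since $\lambda(z)$ has lines in both directions of $\Pi$. Finally, ``endpoints plus crossings attributable to $z$'' just restates the quantity that has to be bounded.

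The missing step is to count crossings of $\rho=\sum_m\rho_m$ rather than of the individual $\gamma_m$.
\begin{itemize}
\item At any time, the string consists of parallel edges of $\Pi$, all crossing one fixed gap $b\to b+1$ in their direction.
\item By Theorem \ref{thm:color-route-5D}, each $\rho_m$ is a concatenation of axis-parallel segments: two legs along the 3-axis and an AB route in a plane of fixed 3-coordinate. It has at most two segments in any one direction, so it contains at most two string edges.
\item The forest of Lemma \ref{lem:disjoint-paths} is a subgraph of $\supp\rho$. Hence all $\gamma_m(xz)$ together contain at most $2M=|x\wedge z|\le\weight$ string edges.
\item By edge-disjointness, each such edge contributes at most one vertex to $p^{zX}\gamma_t$ via Eq.~\eqref{eq:pZX-1-5D}.
\end{itemize}
Combined with your count of at most $\weight\qubit$ relevant $z$, this gives the $\weight^2\qubit$ term.

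Separately, the extra adjacent faces you propose to ``charge to an $O(1)$ constant'' are not $O(1)$ in total. Faces of transverse planes of $\Lambda(x)$ do not telescope, and neither do the vertical faces of $\eta\lambda(x)\otimes R_Z$ over string edges lying in $\lambda(x)$. They accumulate along the string. The same happens for your vertical strings at vertices where other lines of $\lambda(x)$ branch off. They are harmless only because a straight string meets $O(\weight^2)$ such planes and lines. This gives $|\partial^x\gamma_t|\le 1+O(\weight^2)$, so the argument actually proves $\Delta_1(\hat e^x)=O(\weight^2\qubit)\,|\sigma^x|$. That is all Theorem \ref{thm:path-reduction-5D} uses; the paper's ``$\le 1$'' glosses over the same point.
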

\begin{proof}
    For every 0-cell $v^x\in \sigma^x$ (a face in $C(x)$), we see that either $v^x\in \eta\lambda(x)\otimes R_Z$ or $\in (\eta\Lambda\omega)(x)$.
    
    If $v^x\in \eta\lambda(x)\otimes R_Z$, then similar to the 4D case in Lemma \ref{lem:X-layer-err-4D}, we see that $v^x= |x;\eta(x),e,k^+\ket$ for some $k\in [L_Z]$ and edge $e\in \lambda(x)$.
    Note that $e$ is along the $i$-coordinate axis for some $i\in [3]$ and thus it's coordinate $e|_{\ne i}$ are fixed and well-defined.
    Let $e|_{i}$ denote the smaller value of the corresponding endpoints, i.e., if $e=qq'$ with $q|_{i}<q'|_{i}$, then $e|_i=q|_i$.
    Define the string
    \begin{equation}
        [0,e|_{i}]\ni t \mapsto \gamma_t(v^x) = \sum_{a\le t} |x;\eta(x),a\times e|_{\ne i},k^+\ket
    \end{equation}
    By definition of $\lambda(x)$, we see that $t\mapsto \gamma_t (v^x)$ is a 1-continuous Pauli path in $\eta \lambda(x)\otimes R_Z$ and that
    \begin{align}
        |\partial^x \gamma_t (v^x)| &\le 1, \quad \forall t\\
        \partial^x \gamma_T (v^x) &= \sigma^x
    \end{align}
    where $T=e|_{i}$.
    Also note that
    \begin{align}
        |g^{qX} \gamma_t(v^x)| &\le \sum_{a \le t} 1\{q=a\times e|_{\ne i} \sim x\} \\
        &\le 1\{q\sim x\}
    \end{align}
    And by definition of the defect maps in Eq. \eqref{eq:pZX-1-5D}, $p^{zX}\gamma_t (v^x)=0$ for all $z\in \sZ$ and $t$; since $k^+$ is a half integer, while $p^{zX}$ only acts on whole integer coordinates.
    Hence, we see that
    \begin{equation}
        \Delta_1(\gamma_T(v^x))\le \Delta(\gamma(v^x))\le 1+\weight
    \end{equation}
    If $v^x \in (\eta \Lambda \omega)(x)$, then similar to the 4D case in Lemma \ref{lem:X-layer-err-4D}, $v^x =|x;\eta(x),f,\omega(x)\ket$ for some face $f\in \Lambda(x)$.
    The proof then proceeds similarly and thus omitted.
\end{proof}
\begin{lemma}[$Q$ layer error]
    \label{lem:Q-layer-err-5D}
    Let $\sigma^q\in C(q)$ be a 0-chain. Then there exists map $\sigma^{q}\mapsto \hat{e}^q$ such that the 1-chain $\hat{e}^q$ satisfies $\partial^q \hat{e}^q=\sigma^q$ and
    \begin{equation}
        \Delta_1(\hat{e}^{q}) \le (1+\qubit)|\sigma^q|
    \end{equation}
    Moreover, if $\sigma^q=0$, then $\hat{e}^q=0$.
\end{lemma}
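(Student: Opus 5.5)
The plan is to mirror the proofs of Lemma \ref{lem:Q-layer-err-4D} and Lemma \ref{lem:Q-layer-err-3D}, because the qubit layer $C(q)=R_X\otimes q\otimes R_Z^\top$ is the same surface code as in 4D. The only 5D-specific input is the gluing map $g^{ZQ}$ of Eq. \eqref{eq:gZQ-5D}.

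First, I build $\hat{e}^q$ from strings. Every 0-cell $v^q\in\sigma^q$ has the form $|i,q,k^+\ket$ with $i\in[L_X]$ and $k\in[L_Z)$. I attach to it the string
\begin{equation}
[0,k]\ni t\mapsto \gamma_t(v^q)=\sum_{k'\le t}|i,q,k'\ket
\end{equation}
running along the $\hat{z}$ direction, so its single endpoint excitation is $|i,q,t^+\ket$ (the other end sits on the rough boundary at $k'=1$). Each step adds one 1-cell, so the path is 1-continuous. Moreover $|\partial^q\gamma_t(v^q)|\le 1$ for all $t$, and $\partial^q\gamma_k(v^q)=v^q$. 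I then set $\hat{e}^q=\sum_{v^q\in\sigma^q}\gamma_k(v^q)$. By linearity $\partial^q\hat{e}^q=\sigma^q$, and $\sigma^q=0$ gives $\hat{e}^q=0$.

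Second, I control the excitation the string leaves in the $Z$-check layers. By Eq. \eqref{eq:gZQ-5D}, the cell $|i,q,k'\ket$ maps to $\sum_{z\sim q}|i,q,k';z\ket 1\{k'=\eta(z)\}$. Each such term is a single vertex of $C(z)$, and line coloring (Theorem \ref{thm:line-coloring-5D}) makes $\eta(z)$ distinct for distinct $z\sim q$. So along the string, each $z\sim q$ is hit at most once, namely at $k'=\eta(z)$, giving
\begin{equation}
|g^{zQ}\gamma_t(v^q)|\le 1\{z\sim q\}.
\end{equation}
Summing over the at most $\qubit$ checks $z\sim q$ gives $|\partial\gamma_t(v^q)|\le 1+\qubit$. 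There is no further contribution: $C^Q$ maps only to $C^Q$ and $C^Z$, and $p^{ZX}$ plays no role here.

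Finally, the path bound gives $\Delta_1(\gamma_k(v^q))\le 1+\qubit$, and sub-additivity (Lemma \ref{lem:sub-additivity}) over the $|\sigma^q|$ strings gives $\Delta_1(\hat{e}^q)\le(1+\qubit)|\sigma^q|$. The only step needing care is the claim that each $z\sim q$ is hit at most once, which rests on distinct line colors for checks sharing a qubit. That holds because $\lambda(z)\supseteq\lambda(q)$ for $q\sim z$: if $z,z'\sim q$, their star graphs share edges, so they are adjacent in $\sG_Z$ and get different colors.
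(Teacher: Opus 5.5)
Your proof is correct and is essentially the paper's argument, which simply defers to the 4D version: $\hat{z}$-directed strings $\sum_{k'\le t}|i,q,k'\rangle$ in the surface-code layer, the bound $|g^{zQ}\gamma_t(v^q)|\le 1\{z\sim q\}$ summed over at most $\qubit$ checks, and sub-additivity over the $|\sigma^q|$ strings. The step you flag as delicate is not actually needed: for fixed $z$, the string meets $g^{zQ}$ only at the single coordinate $k'=\eta(z)$, so the per-$z$ bound holds whether or not distinct $z\sim q$ receive distinct colors (and your justification via $\lambda(z)\supseteq\lambda(q)$ would anyway require $z$ to have weight at least two).
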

\begin{proof}
    The proof is exactly the same as Lemma \ref{lem:Q-layer-err-4D} and thus omitted.
\end{proof}
\begin{lemma}[$Z$ layer error]
    \label{lem:Z-layer-err-5D}
    Let $\sigma^{z}\in C(z)$ be a 0-chain. Then there exists map $\sigma^z \mapsto \hat{e}^z$ such that the 1-chain $\hat{e}^z$ satisfies 
    \begin{equation}
        \partial^z \hat{e}^z + \sigma^z = |00\bar{z}\ket 1\{[\sigma_z]\ne 0\}
    \end{equation}
    where $[\sigma^z]\in H_0(C^Z)$, and  $\Delta_1(\hat{e}^{z}) \le 2$.
    Moreover, if $\sigma^z=0$, we can choose $\hat{e}^z=0$.
\end{lemma}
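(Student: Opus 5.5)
We adapt the proof of Lemma \ref{lem:Z-layer-err-3D} (equivalently Lemma \ref{lem:Z-layer-err-4D}) to the 5D $z$-check layer $C(z)=R_X\otimes\lambda\eta(z)+\omega\Lambda\eta(z)$. Two facts about $C(z)$ are needed. First, $C(z)$ is a cell complex whose underlying graph $C_1(z)\to C_0(z)$ is connected. This holds because $\lambda(z)$ is a union of star graphs $\lambda(\gamma_s(p))$. By property 2) of Theorem \ref{thm:color-route-5D}, consecutive points of a color route lie in a common coordinate plane, so their star graphs intersect. Each star graph also contains the qubits $q\sim z$ at its ends, so $\lambda(z)$ is connected. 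Tensoring with the connected $R_X$ and adding the contracting layer, which meets $R_X\otimes\lambda\eta(z)$ along $\omega(z)\times\lambda(z)\times\eta(z)$, keeps the graph connected. Second, $H_0(C(z))\cong\dF_2$, and a 0-chain $\sigma^z$ lies in $\im\partial^z$ if and only if $|\sigma^z|$ is even. We fix a distinguished vertex $|00\bar z\ket$ of $C(z)$, for example some vertex of $R_X\otimes\lambda\eta(z)$ at $\hat{x}$-coordinate $1$.

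The construction is then the pairing argument. Order the vertices of $\sigma^z$ arbitrarily as $v_1,v_2,\dots$ and choose, in a predetermined way, a simple path $\gamma^z_m$ in the graph of $C(z)$ from $v_{2m-1}$ to $v_{2m}$. If $|\sigma^z|=2M+1$ is odd, add a simple path $\gamma^z_{M+1}$ from $v_{2M+1}$ to $|00\bar z\ket$. Each path is grown one edge at a time, giving a $1$-continuous Pauli path $t\mapsto\gamma^z_m(t)$. At every time its boundary consists of just the moving endpoint and the starting vertex, so the restricted syndrome satisfies $|\partial^z\gamma^z_m(t)|\le 2$. Let $\Gamma^z$ be the concatenation of these paths as in Eq.~\eqref{eq:concatenated-path}, and set $\hat e^z$ to be its destination.

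It remains to check the energy bound, which is where one could go wrong. While path $m$ is being traversed, the earlier paths have already canceled the pairs $v_1,\dots,v_{2m-2}$ from $\sigma^z$. Measured as $|\partial^z(\text{partial})+\sigma^z|$, the energy therefore drops as pairs are completed; measured as $|\partial^z\Gamma^z(t)|$, it is at most $2$ at each time, because the completed paths have boundary exactly the canceled pairs. Either way, $\Delta_1(\hat e^z)\le 2$ in the convention used for Lemma \ref{lem:Z-layer-err-3D}, where only $\partial^z$ matters, since $C^Z$ is the last term of the complex and $\partial^z$ is the whole differential on $z$-layer 1-cells. By construction, $\partial^z\hat e^z+\sigma^z$ equals $|00\bar z\ket$ when $|\sigma^z|$ is odd, that is when $[\sigma^z]\neq0$, and equals $0$ otherwise. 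If $\sigma^z=0$ no paths are chosen, so $\hat e^z=0$. The only genuinely 5D-specific input is the connectivity of $C(z)$, so I expect that check to be the main step; it follows from the color-route properties of Theorem \ref{thm:color-route-5D} as described above.
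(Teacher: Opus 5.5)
Your construction is the one the paper intends: its proof simply defers to the pairing argument of Lemma~\ref{lem:Z-layer-err-3D}. Your connectivity check for $C(z)$ is correct and fills in something the paper leaves implicit. It chains star graphs along the color route, uses that every pair of qubits in $\supp z$ is a packet, and attaches the contracting layer along $\omega(z)\times\lambda(z)\times\eta(z)$.

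The problem is the energy paragraph. Your claim that $|\partial^z\Gamma^z(t)|\le 2$ for all $t$ is false. While the $m$-th path is being grown, $\partial^z\Gamma^z(t)=v_1+\cdots+v_{2m-1}+w_t$, where $w_t$ is the moving endpoint, so this chain has weight up to $2m$. The completed paths cancel the pairs $v_1,\dots,v_{2m-2}$ only against $\sigma^z$, not inside $\partial^z\Gamma^z(t)$ itself. In fact no choice of path can give $\Delta_1(\hat e^z)\le 2$ under the definition of Section~\ref{sec:energy}. Any path ending at $\hat e^z$ has $\Delta(\gamma)\ge|\partial\gamma(T)|$, so $\Delta_1(\hat e^z)\ge|\partial^z\hat e^z|\ge|\sigma^z|-1$, which exceeds $2$ as soon as $|\sigma^z|\ge 4$. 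Your ``either way'' sentence therefore establishes only one of the two readings.

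Your construction does prove two things:
\begin{enumerate}
\item[(i)] Each pair satisfies $\Delta_1(\gamma^z_m(T_m))\le 2$, so $\Delta_1(\hat e^z)\le|\sigma^z|+1$ by Lemma~\ref{lem:sub-additivity}. This is the natural analogue of the $|\sigma|$-proportional bounds in Lemmas~\ref{lem:X-layer-err-5D} and \ref{lem:Q-layer-err-5D}.
\item[(ii)] The relative bound $|\sigma^z+\partial^z\Gamma^z(t)|\le|\sigma^z|$ holds for all $t$. That is, cleaning a $z$-layer never raises its syndrome above its initial weight.
\end{enumerate}
State one of these explicitly in place of the constant $2$. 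The paper's own 3D proof is equally loose at this point; it asserts $\Delta_1(\Gamma^z)\le 1$. The path-reduction argument only uses the identity $\partial^z\hat e^z+\sigma^z=|00\bar z\rangle\, 1\{[\sigma^z]\ne 0\}$, not the energy bound on $\hat e^z$, so nothing downstream breaks. As written, however, your argument asserts an inequality that does not hold.
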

\begin{proof}
    The proof is similar to the 3D case in Lemma \ref{lem:Z-layer-err-3D} and 4D case in Lemma \ref{lem:Z-layer-err-4D} and thus omitted.
\end{proof}
\begin{theorem}[Path Reduction]
    \label{thm:path-reduction-5D}
    Let $\sL$ be a nontrivial logical of the 5D Layer Code $C$ so that $[\sL]\in H_1(C)$. If $\phi:H_1(C) \to H_1(A)$ denotes the isomorphism in Theorem 1 of \cite{yuan2025unified}, then
    \begin{equation}
        \Delta_{\weight}(\phi [\sL]) =O(\weight^2 \qubit) \Delta_1(\sL)
    \end{equation}
\end{theorem}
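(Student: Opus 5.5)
The proof mirrors the 4D path reduction in Theorem \ref{thm:path-reduction-4D}, using the 5D layer-error lemmas in place of the 4D ones. Fix a 1-continuous Pauli path $t\mapsto\Gamma(t)$ in $C$ with destination $\sL$ that attains $\Delta_1(\sL)$, and fix an arbitrary $e=\Gamma(t)$. The output code has the same block lower-triangular differential, with blocks $\partial^X$; $g^{QX},\partial^Q$; and $p^{ZX},g^{ZQ},\partial^Z$. The argument cleans the syndrome layer by layer, in the order $X$, then $Q$, then $Z$.

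\textbf{Cleaning the three layer types.} First, apply Lemma \ref{lem:X-layer-err-5D} to each syndrome $\pi^x\partial e$ to obtain $\hat e^X=\bigoplus_x\hat e^x$. This gives three bounds:
\begin{itemize}
\item $\Delta_1(\hat e^X)=O(\weight^2\qubit)|\partial e|$;
\item $|g^{QX}\hat e^X|\le \weight|\pi^X\partial e|$;
\item $|p^{ZX}\hat e^X|\le \weight\qubit\cdot\weight|\pi^X\partial e|$, since each $x$ has at most $\weight\qubit$ neighbours $z\sim x$.
\end{itemize}
Next, apply Lemma \ref{lem:Q-layer-err-5D} to $\pi^q\partial e+g^{qX}\hat e^X$. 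This gives $\Delta_1(\hat e^Q)\le(1+\qubit)(|\pi^Q\partial e|+\weight|\pi^X\partial e|)=O(\weight\qubit)|\partial e|$. Finally, apply Lemma \ref{lem:Z-layer-err-5D} to $\pi^z\partial(e+\hat e^X+\hat e^Q)$. Each $Z$-layer is then reduced to at most one point syndrome, at cost $O(1)$ energy per layer.

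\textbf{Reading off a path in $A$.} Since $H_1(C^X)=0$ by Theorem \ref{thm:contracting-cycles-5D}, we have $e^X+\hat e^X=\partial^X s^X$, where $s^X$ is chosen by a fixed deterministic rule. Set $\ell^Q=e^Q+\hat e^Q+g^{QX}s^X\in\ker\partial^Q$ and $\ell^A=[\ell^Q]\in Q$. Using the compatibility relation of Lemma \ref{lem:compatible-5D} together with $[g^{ZQ}]$ acting as $\partial^A$ on homology, one gets $[\pi^z\partial(e+\hat e^X+\hat e^Q)]=\langle z|\partial^A\ell^A\rangle$. Hence
\[|\partial^A\ell^A|=|\partial(e+\hat e)|\le|\partial e|+O(\weight^2\qubit)|\partial e|.\]

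\textbf{Continuity and conclusion.} We must check that $t\mapsto\ell^A(t)$ is $\weight$-continuous. A single-cell change in an $X$-layer alters $g^{QX}s^X$ only on the qubits $q\sim x$, so $\ell^A$ changes in at most $\weight$ places. A change in a $Q$-layer alters one qubit. A change in a $Z$-layer does not affect $\ell^A$. At the endpoint $e=\sL$ we have $\hat e=0$ and $\ell^A\in\phi[\sL]$, which yields $\Delta_{\weight}(\phi[\sL])=O(\weight^2\qubit)\Delta_1(\sL)$.

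\textbf{Main obstacle.} The delicate step is the omitted contracting-layer case of Lemma \ref{lem:X-layer-err-5D}. A syndrome on a face of $\eta\Lambda\omega(x)$ must be cleaned by a string inside $\Lambda(x)$ that reaches the boundary while crossing only $O(\weight)$ defect lines $\Gamma_m(xz)$ with $\eta(z)=\omega(x)$. The plan is to move along a fixed coordinate direction within a single plane of $\Lambda(\gamma_s(p))$ out to the boundary of $[L]^3$. Each defect edge crossed must lie in $\lambda(z)$ for some $z\sim x$ of colour $\omega(x)$. Since the $\lambda(z)$ of a given colour are edge-decongested and the defect paths for fixed $x,z$ are edge-disjoint (Corollary \ref{cor:defect-congestion-5D}), a straight row should meet each relevant $z$ only $O(\weight)$ times. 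This is the source of the factor $\weight$ per $z$, and therefore of the overall $\weight^2\qubit$.
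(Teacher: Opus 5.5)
The main line of your argument is the paper's own argument, which for 5D simply defers to the 4D proof. You clean the $X$-layers, then the $Q$-layers, then the $Z$-layers. You write $e^X+\hat e^X=\partial^X s^X$ using $H_1(C(x))=0$, read off $\ell^A$ through the compatibility relation, and check $\weight$-continuity. All of this is sound \emph{given} Lemma~\ref{lem:X-layer-err-5D}.

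\textbf{Where it breaks.} The problem is the step you single out as the crux. Your plan for the contracting-layer case of that lemma would fail, and the reason has nothing to do with counting defect crossings.
\begin{itemize}
\item $\Lambda(x)$ is a branched $2$-complex, not a surface. A straight dual string in one plane of $\Lambda(\gamma_s(p))$ crosses edges on lines where other planes of $\Lambda(x)$ cut that plane. Each such edge is also incident to faces of the other plane.
\item If the edge lies in $\lambda(x)$, it is further incident to the cylinder faces $e\otimes|\omega(x)^{\pm}\rangle$ of $\eta\lambda(x)\otimes R_Z$.
\item So every such crossing leaves new syndromes on other sheets. Neither $|\partial^x\gamma_t|\le 1$ nor $\partial^x\gamma_T=\sigma^x$ holds.
\end{itemize}

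\textbf{The obstruction is topological.} Suppose two route points $q,q'$ of $x$ differ in all three coordinates, for example two qubits of $x$ with no common coordinate. Then the six full planes of $\Lambda(q)\cup\Lambda(q')$ contain the boundary $S$ of the box spanned by $q$ and $q'$. This $S$ is a nonzero $2$-cycle of $C(x)$. Since $\partial^x$ is the transpose of the face-boundary map, $\langle S,\partial^x e\rangle=\langle\partial S,e\rangle=0$ for every $1$-chain $e$. Hence a single face syndrome on $S$ cannot be cleaned by \emph{any} error inside the layer. Lemma~\ref{lem:X-layer-err-5D} is therefore false as stated for general $\sigma^x$.

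The same branching already affects the cylinder strings of that lemma at junction vertices of $\lambda(x)$. There a vertical edge is incident to more than two faces. The paper's ``proceeds similarly'' does not close this gap either.

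\textbf{How to repair it.} What the path reduction actually uses is weaker, and the fix belongs there.
\begin{itemize}
\item The lemma is only ever applied to $\sigma^x=\partial^x e^x\in\im\partial^x$.
\item Only the footprints $|g^{QX}\hat e^x|$ and $|p^{ZX}\hat e^x|$ enter the final bound. The number of surviving $z$-layers is at most $|\pi^Z\partial e|+|p^{ZX}\hat e^X|+|g^{ZQ}\hat e^Q|$, and $\Delta_1(\hat e^x)$ is never used.
\item Push cylinder syndromes $e\otimes|k^+\rangle$ along $\hat z$, away from height $\omega(x)$. Such a string meets no junction and no qubit position, so its $g^{QX}$-footprint is zero. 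By edge-disjointness it crosses at most one defect edge per $z\sim x$, so at most $\weight\qubit$ in total.
\item Cap syndromes must be paired rather than sent individually to the boundary. One way is to consolidate them inside the $O(\poly(\weight))$ rectangles into which junction lines cut $\Lambda(x)$. Interior dual paths in these rectangles carry no footprint.
\item The bounded residual still lies in $\im\partial^x$. Clear it with a preimage that uses each junction segment at most once.
\end{itemize}
The resulting polynomial factor has to be recomputed. It is not obviously the ``$\weight$ per $z$'' you predict.
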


\begin{proof}
    The proof is exactly the same as the 4D case in Theorem \ref{thm:path-reduction-4D} and thus omitted.
\end{proof}

\begin{proof}[Proof of Theorem \ref{thm:energy-5D}]
    The statement follows from Theorem \ref{thm:path-reduction-5D} and Remark \ref{rem:step-size}.
\end{proof}

\section{Color Routing for Arbitrary Dimensions}
\label{sec:anyD}

Note that the 4D and 5D embedding both utilize color routing (Theorem \ref{thm:line-coloring-4D}, \ref{thm:plane-coloring-4D}, \ref{thm:line-coloring-5D}, \ref{thm:plane-coloring-5D}), and thus it's natural to suspect that higher dimensional Layer Code constructions also utilize analogous color routing schemes.

In this section, we generalize the color routing scheme to all $\fD=D-2$ qubit grids $[L]^{\fD}$ and compare with the 4D and 5D cases, so that if an analogue of Theorem \ref{thm:contracting-cycles-5D} can be proven for $\fD$ dimensions, then the $D$-dimensional Layer Code construction immediately follows.

\subsection{Check Layers}


\subsubsection{Check Routing}

\begin{definition}[Induced Routing on $Q$]
    \label{def:induced-graph-Q}
    Let $\sG_Q=(\sQ,\sP_Q)$ denote the \textbf{induced graph} on $Q$ with vertices consisting of qubits $\sQ$ and edges (also referred as \textbf{packets}) $p=qq'$ if there exists an $x$-check (or $z$-check) such that $q,q'\sim x$ (or $q,q'\sim z$). 
    Note that the max degree is $O(\weight\qubit)$.
    
    Order each packet $p$ based on the lexicographic order of $\sQ=[L]^{\fD}$.
    Let the \textbf{induced routing problem} on $Q$ be such that $\gamma_{0},\gamma_\infty:\sP_Q \to \sQ$ are the start and end of each (directed) edge.
    By definition, we see that each qubit is the source (destination) of $O(\weight\qubit)$ packets (edges in $\sP_Q$).
    We refer to $O(\weight \qubit)$ as the \textbf{density} of the routing problem.
\end{definition}

\begin{definition}[Bit Strings]
    Let $s\in \dF_2^{\lceil \fD/2\rceil }$. 
    Write $\bar{0} =0\cdots 0$ and similarly for $\bar{1}$.
    Define $\delta(s) \le \fD$ to be the last bit index which is distinct from the remaining bit indices, e.g.,
    \begin{equation}
        s = s|_{[\delta)}0\underbrace{1\cdots 1}_{\fD-\delta}, \quad  s|_{[\delta)}1\underbrace{0\cdots 0}_{\fD-\delta}
    \end{equation}
    where we take the convention $\delta(\bar{0})=\delta(\bar{1})=\lceil\fD/2\rceil$.
    Also note that if $\delta(s)=\delta$ for $s\ne \bar{0},\bar{1}$, then there are only two possibilities for $s|_{[\delta,\fD]}$, i.e., either $1\cdots 10$ or $0\cdots 01$.
\end{definition}

\begin{theorem}[Color Route]
    \label{thm:color-route}
    Consider the induced routing problem $\gamma_0\to \gamma_\infty$ of $\sP_Q$ on $[L]^{\fD}$. 
    Then for each packet $p\in \sP_Q$, there exists a sequence $\gamma_s(p)\in[L]^{\fD} ,s\in \dF_2^{\lceil \fD/2\rceil}$, referred as the \textbf{color route}, such that the following properties are satisfied.
    \begin{enumerate}[label=\arabic*)]
        \item (Source and Destination)
        \begin{align}
            \gamma_{\bar{0}}(p) &= \gamma_0(p)\\
            \gamma_{\bar{1}}(p) &= \gamma_\infty(p)
        \end{align}
        \item (One Plane at a Time) If $s|_{[i]}=s'|_{[i]}$, then 
        \begin{equation}
            \gamma_s(p)|_{[2i]}=\gamma_{s'}(p)|_{[2i]}
        \end{equation}
        If, in addition, $\delta(s)=\delta(s')=i+1$, then $\gamma_s(p),\gamma_{s'}(p)$ have the same coordinates except possibly for the $2i+1$- and $2i+2$- coordinate.
        \item (Density) For any $s\in \dF_2^{\fD}$ and $q\in [L]^{\fD}$, 
        \begin{equation}
            |\{p:\gamma_s(p)=q\}| =O(\weight \qubit)
        \end{equation}
    \end{enumerate}
\end{theorem}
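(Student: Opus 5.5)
We argue by induction on the number of coordinate pairs, generalizing the proof of Theorem \ref{thm:color-route-5D}. Group the $\fD$ coordinates into consecutive pairs $(1,2),(3,4),\dots$, with a final singleton if $\fD$ is odd. The bit string $s\in\dF_2^{\lceil\fD/2\rceil}$ is read as follows: $s|_{[i]}$ records which coordinate pairs among the first $i$ have already been ``fixed'' to their destination value or to an intermediate value.

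The construction is recursive. At level $i$ (pair $(2i+1,2i+2)$), the prefix $s|_{[i]}$ determines coordinates $[2i]$. To pass from $\ldots0\bar{1}$ to $\ldots 1\bar{0}$ (the two strings with $\delta=i+1$ sharing a prefix), only coordinates $2i+1,2i+2$ may change. These coordinates are set to an intermediate value $c_{i}(p,s|_{[i)})\in[L]^2$, or $[L]$ in the odd case.

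The intermediate value is chosen exactly as in Theorem \ref{thm:color-route-5D}. Form the bipartite multigraph whose two vertex classes are the $[L]^{2}$ values of the current pair at the source side and at the destination side. Put one edge per packet (together with its prefix branch). Its max degree is $O(\weight\qubit)L$ times a constant depending on $\fD$, since for fixed values of the other coordinates each grid point carries $O(\weight\qubit)$ packets. Apply Theorem \ref{theorem:edge-coloring} and group the colors into $L$ classes of size $O(\weight\qubit)$. This yields a map to $[L]$, which we place in one coordinate of the pair so that the intermediate point shares all other coordinates with its neighbors.

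Property (1) holds by construction, since $\bar0$ and $\bar1$ keep the untouched source and destination coordinates respectively. Property (2) holds because of how coordinates are inherited along prefixes: $\gamma_s|_{[2i]}$ depends only on $s|_{[i]}$, and strings with the same $\delta=i+1$ and the same prefix differ only in pair $i+1$.

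Property (3) is proved inductively. Fix $s$ and $q$. The points $\gamma_s(p)$ agree with either source or destination coordinates on each pair, except at the one pair where an intermediate color is used. The coloring guarantees that fixing the remaining coordinates and the color leaves $O(\weight\qubit)$ packets. Summing over the bounded number of branches gives $O_{\fD}(\weight\qubit)$.

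The main obstacle is this density bookkeeping. When the coordinates of $\gamma_s(p)$ mix source values on some pairs, destination values on others, and intermediate colors on still others, the density argument must control the number of packets sharing a \emph{mixed} tuple. This is not automatic, since sources and destinations are individually sparse but mixed projections need not be. Our first approach is to perform the edge coloring at level $i$ on the bipartite graph between the full mixed tuples of the neighboring points $\gamma_{s|_{[i)}0\bar1}$ and $\gamma_{s|_{[i)}1\bar0}$. Each such tuple inherits density $O(\weight\qubit)$ by the inductive hypothesis, and the same regrouping as in the 5D case then preserves density.
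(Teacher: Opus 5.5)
The skeleton is right: peel off a coordinate pair, choose an intermediate point by bipartite edge coloring, recurse. You also correctly flag the density of ``mixed'' points as the crux. But the intermediate point goes in the wrong coordinates and is too small, so the density bound fails once $\fD\ge 4$.

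Property (2) pins the pair being switched: $\gamma_{0\bar{1}}(p)|_{[2]}=\gamma_0(p)|_{[2]}$ and $\gamma_{1\bar{0}}(p)|_{[2]}=\gamma_\infty(p)|_{[2]}$. These two points must also agree on coordinates $3,\dots,\fD$. So nothing can be stored ``in one coordinate of the pair''. The only freedom is the common value $\xi(p)\in[L]^{\fD-2}$ of \emph{all} later coordinates. Density $O(\weight\qubit)$ at both points means exactly that $\xi$ is $O(\weight\qubit)$-to-one on packets sharing a source pair, and also on packets sharing a destination pair.

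Up to $O(\weight\qubit)L^{\fD-2}$ packets can share a source pair. So the bipartite multigraph on pair values has degree $O(\weight\qubit)L^{\fD-2}$, not $O(\weight\qubit)L$. Its edge coloring may need that many colors, and these cannot be grouped into $L$ classes of size $O(\weight\qubit)$ as you claim. Your count is the $\fD=3$ (5D) count, where the complementary block happens to be a single coordinate.

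The missing step, which is the paper's, is to group the $O(\weight\qubit)L^{\fD-2}$ colors into $L^{\fD-2}$ classes, giving $\xi(p)\in[L]^{\fD-2}$. One then sets $\gamma_{0\bar{1}}=\gamma_0|_{[2]}\times\xi$ and $\gamma_{1\bar{0}}=\gamma_\infty|_{[2]}\times\xi$. Inside each slab $\{(i_0,j_0)\}\times[L]^{\fD-2}$, routing $\gamma_0\to\gamma_{0\bar{1}}$ becomes a $(\fD-2)$-dimensional problem whose sources \emph{and} destinations have density $O(\weight\qubit)$. Induction supplies $\gamma_{0s}$, provided it is stated for arbitrary routing problems of density $\rho$ and yields density $O(\rho)$. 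Distinct slabs do not interact, and the destination side is symmetric.

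This removes the mixed-tuple issue. Every point is an endpoint of some subproblem whose density was controlled when that point was created, so no separate bookkeeping is needed. Your fallback of coloring between $\gamma_{s|_{[i)}0\bar{1}}$ and $\gamma_{s|_{[i)}1\bar{0}}$ is circular, since those are the points being defined.

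Your property-(3) premise, that each $\gamma_s(p)$ carries an intermediate value on at most one pair, is also false. For $\fD=5$, take $\xi=(\xi_1,\xi_2,\xi_3)$ and a second-level $\xi'\in[L]$ chosen in the source slab. Then $\gamma_{010}=\gamma_0|_{[2]}\times(\xi_1,\xi_2)\times\xi'$, which carries intermediate values on two pairs.
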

\begin{proof}
    Let us consider $\fD=1$ or $\fD=2$ as the base case and proceed by induction on the dimension $\fD$, i.e., suppose that the statement holds for $\fD-2$ dimensions for $\fD\ge 3$.   
    
    Consider a (multi-edge) bipartite graph $\sH$ with vertex partitions $\sV_{0},\sV_\infty =[L]^2$, and an edge between $i_t\times j_t \in \sV_t,t=0,\infty$ if there exists a packet $p$ with source $\gamma_0(p)|_{[2]}= i_0\times j_0$ and destination $\gamma_\infty(p)|_{[2]}=i_\infty \times j_\infty$.
    Since each source and destination has $O(\weight \qubit)$ packets, we see that the (multi-edge) bipartite graph has max degree $O(\weight \qubit)L^{\fD-2}$.
    By the edge-coloring theorem, we see that the edges can be colored with $O(\weight\qubit)L^{\fD-2}$ many colors so that edges with the same color do not share a vertex.
    Note that there is a one-to-one correspondence between edges in the multi-edge graph and packet $p\in \sP_{Q}$.
    By grouping at most $O(\weight \qubit)$ colors together in an arbitrary manner, we can define a coloring map $\xi:\sP_{Q} \to [L]^{\fD-2}$ such that the following subsets of packets satisfy
    \begin{equation}
        |\{p: (\gamma_t|_{[2]} \times \xi)(p)=q\}| = O(\weight \qubit)
    \end{equation}
    For any $q\in \sQ$ and $t=0,\infty$.
    Let 
    \begin{align}
        \gamma_{01\cdots 1}(p) &= ( \gamma_{0}|_{[2]} \times \xi)(p)\\
        \gamma_{10\cdots 0}(p) &= ( \gamma_\infty|_{[2]} \times \xi)(p)
    \end{align}


    Fix $i_0\times j_0\in [L]^2$ and consider the routing problem in $[L]^{\fD-2}\cong i_0\times j_0\times [L]^{\fD-2}$ from $\gamma_0\to \gamma_{01\cdots 1}$ for all packets with fixed $\gamma_0(p)|_{[2]} = i_0\times j_0$.
    By induction, there exists color route $\gamma_{0s},s\in \dF_2^{\lceil \fD/2\rceil -1}$ from $\gamma_0(p) \to \gamma_{01\cdots 1}(p)$ in $i_0\times j_0\times [L]^{\fD-2}$.
    The case is similar for the routing problem in $i_\infty\times j_\infty\times [L]^{\fD-2}$ from $\gamma_{10\cdots 0}\to \gamma_\infty$ for all packets with fixed $\gamma_\infty(p)|_{[2]}=i_\infty\times j_\infty$ so that we obtain color route $\gamma_{1s},s\in \dF_2^{\lceil \fD/2\rceil -1}$.
    Then it's straightforward to check that $\gamma_s,s\in \dF_2^{\lceil \fD/2 \rceil}$ satisfies property (1)-(3).
\end{proof}


\begin{definition}[Star Graph]
    Let $q\in [L]^\fD$.
    Then the \textbf{star graph} of $q$ is defined as
    \begin{equation}
        \lambda(q) = \bigcup_{i} q|_{[1,i)} \times [L]\times q|_{(i,\fD]}
    \end{equation}
    so that $|\lambda(q)| = O(\fD)L$.
\end{definition}

\begin{definition}[Check Star Graphs]
    For every $x$-check, define the \textbf{star graph} for $x$ as
    \begin{equation}
        \lambda(x) = \bigcup_{s, p\sim x} \lambda(\gamma_s(p))
    \end{equation}
    where the \textbf{adjacency} $p\sim x$ is shorthand for the condition that the endpoints $q,q'$ of $p$ are both contained in the support of $x$.
    In particular,
    \begin{equation}
        |\lambda(x)| = O(2^{\fD/2}\fD \weight^2 )L
    \end{equation}
    The case is similar for $\lambda(z)$ for $z$ checks
\end{definition}

\begin{theorem}[Line Coloring]
    \label{thm:line-coloring-anyD}
    There exists a \textbf{line coloring} function $\eta=\eta_X:\sX\to [\chi_X]$ where 
    \begin{equation}
        \label{eq:line-color}
        \chi_X =O(2^{\fD} \fD\weight^3 \qubit^2)L
    \end{equation}
    such that if $\sX_{\eta}$ denotes a partition with fixed color $\eta(x)=\eta$, then the collection of star routes $\lambda(x),x\in \sX_{\eta}$ has $1$-congestion in $[L]^3$ for every partition $\sX_\eta$.
\end{theorem}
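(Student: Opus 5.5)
The plan is to mirror the proof of Theorem \ref{thm:line-coloring-5D}, with the 5D color route replaced by the general color route of Theorem \ref{thm:color-route}. The claimed congestion statement is read as edge decongestion in the qubit grid $[L]^{\fD}$; the ``$[L]^3$'' in the statement is read as $[L]^{\fD}$.

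First, define the abstract conflict graph $\sG_X=(\sX,\sE_\sX)$ with an edge $xx'$ whenever $\lambda(x)\sim_e\lambda(x')$, meaning the two share an edge. By Theorem \ref{theorem:vertex-coloring}, it then suffices to bound the maximum degree of $\sG_X$ by $O(2^{\fD}\fD\weight^3\qubit^2)L$, since colour classes of a proper vertex colouring are exactly edge-decongested families.

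The bound on the degree comes from a union bound. Write
\begin{equation}
    \{x':\lambda(x)\sim_e\lambda(x')\}\subseteq\bigcup_{s,\,p\sim x}\{x':\lambda(\gamma_s(p))\sim_e\lambda(x')\}.
\end{equation}
The outer union has at most $2^{\lceil\fD/2\rceil}\cdot O(\weight^2)$ terms: there are that many bit strings $s$, and at most $\binom{\weight}{2}$ packets $p\sim x$. Now fix $q=\gamma_s(p)$. Any $x'$ with $\lambda(q)\sim_e\lambda(x')$ has some $s'$ and some $p'\sim x'$ with $\lambda(q)\sim_e\lambda(\gamma_{s'}(p'))$. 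Each packet $p'$ is adjacent to at most $O(\qubit)$ checks $x'$: $p'\sim x'$ forces $x'$ to act on the first endpoint of $p'$, and at most $\qubit$ checks do so. This gives
\begin{equation}
    |\{x':\lambda(q)\sim_e\lambda(x')\}|\le \qubit\sum_{s',p'}1\{\lambda(q)\sim_e\lambda(\gamma_{s'}(p'))\}.
\end{equation}

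Two star graphs $\lambda(q),\lambda(q')$ share an edge only if $q,q'$ agree in all coordinates except possibly one coordinate $i\in[\fD]$. So for fixed $s'$, split the count by $i$ and by the value $c\in[L]$ of the $i$-th coordinate of $\gamma_{s'}(p')$. Each resulting point $q|_{\ne i}\times c$ has $O(\weight\qubit)$ packets $p'$ with $\gamma_{s'}(p')$ equal to it, by property (3) of Theorem \ref{thm:color-route}. The sum over $p'$ is therefore $O(\fD\weight\qubit L)$. Summing over the $2^{\lceil\fD/2\rceil}$ values of $s'$ and multiplying by $\qubit$ gives $O(2^{\fD/2}\fD\weight\qubit^2)L$ per fixed $q$.

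Multiplying by the $2^{\lceil\fD/2\rceil}O(\weight^2)$ outer terms gives a degree of $O(2^{\fD}\fD\weight^3\qubit^2)L$. The claimed $\chi_X$ then follows from the greedy $(\Delta+1)$-colouring.

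The one step needing care is the property-(3) density bound in Theorem \ref{thm:color-route}. That theorem indexes $s\in\dF_2^{\fD}$ in the density statement but $\dF_2^{\lceil\fD/2\rceil}$ elsewhere. The argument uses the density bound uniformly for each of the $2^{\lceil\fD/2\rceil}$ bit strings $s'$, so I will read it with the constant independent of $s$. Under that reading the induction in Theorem \ref{thm:color-route} gives the bound, since each recursive level only regroups $O(\weight\qubit)$ colours. Everything else is bookkeeping of the $2^{\fD/2}$ factors, which combine to $2^{\fD}$.
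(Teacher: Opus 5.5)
Your proposal is correct and follows the paper's argument: the paper defers entirely to the proof of Theorem \ref{thm:line-coloring-5D}, noting that the only change is the extra summation over bit strings $s,s'\in\dF_2^{\lceil\fD/2\rceil}$ and indices $i\in[\fD]$, which is exactly the $2^{\lceil\fD/2\rceil}\cdot 2^{\lceil\fD/2\rceil}\cdot\fD$ bookkeeping you carry out. Your readings of the typos are the intended ones: $[L]^3$ means $[L]^{\fD}$, and property (3) holds uniformly for each $s\in\dF_2^{\lceil\fD/2\rceil}$, which is justified because regrouping colours at each recursive level never increases the density.
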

\begin{proof}
    The proof is exactly the same as that of Theorem \ref{thm:line-coloring-5D} and thus omitted.
    The only difference is that the summation over bit strings $s,s'\in \dF_2^{\lceil \fD/2\rceil}$ and indices $i\in [\fD]$ induces a dependence of $\chi_X$ on the qubit grid dimension $\fD$.
\end{proof}

Similar to the 4D and 5D scenario, by the previous Theorem, for any given $x\in \sX$, define
\begin{equation}
    \eta\lambda(x) =\eta(x)\times \lambda (x) \subseteq [\chi_X]\times [L]^{\fD}
\end{equation}
Then it's clear that $\eta\lambda(x),x\in \sX$ is embedded in $\hat{x} \times \hat{q}$ with $1$ edge congestion.
The case is similarly defined for the $Z$-checks with line coloring function $\eta=\eta_Z:\sZ \to [\chi_Z]$ so that
\begin{equation}
    \eta\lambda(z) =\lambda(z) \times \eta(z) \subseteq  [L]^{\fD} \times [\chi_Z]
\end{equation}
over $z\in \sZ$ is embedded in $\hat{q}\times \hat{z}$ with $1$ edge congestion. 


\subsubsection{Contracting Cycles}

Let us denote the planes induced by filling in the star routes $\lambda(q)$ via $\Lambda(q)$, which are naturally 3-term cell complexes.
The filled planes $\Lambda(p),\Lambda(x)$ are similarly defined.
Specifically
\begin{definition}[Star Planes]
    Let $q\in \sQ=[L]^{\fD}$. Then define the \textbf{star plane} of $q$ as
    \begin{equation}
        \Lambda(q) = \bigcup_{i<j} q|_{[1,i)}\times [L] \times q|_{(i,j)} \times [L] \times q|_{(j,\fD]}
    \end{equation}
    so that $|\Lambda(q)|=O(\fD^2)L^2$ in terms of vertices.
\end{definition}

\begin{definition}[Check Star Planes]
    Let $\gamma_{\bullet}(p)\in [L]^{\fD}$ denote the color route given by the induced routing problem in Theorem \ref{thm:color-route}. 
    For each $x$-check, further define the \textbf{star plane} as
    \begin{equation}
        \Lambda(x) = \bigcup_{s,p\sim x} \Lambda(\gamma_s(p))
    \end{equation}
    And similarly for the $z$ checks.
\end{definition}
\begin{theorem}[Plane Coloring]
    \label{thm:plane-coloring-anyD}
    Let $\sX_{\eta}$ be a partition in Theorem \ref{thm:line-coloring-anyD} such that star graphs $\lambda(x)$ do not share the same edge, i.e., has 1 edge congestion.
    Then there exists a \textbf{plane coloring} map $\omega=\omega_X:\sX\to [\zeta_X]$ where
    \begin{equation}
        \zeta_X = O(2^{\fD}\fD^2\weight^2)L
    \end{equation}
    such that if $\sX_{\eta,\omega}$ is the collection of $x\in \sX_{\eta}$ with fixed \textbf{plane} color $\omega$, then the collection $\Lambda(x),x\in \sX_{\eta,\gamma}$ have $1$ face congestion.
    The case is similar for $Z$-check with plane coloring function $\omega=\omega_Z:\sZ \to [\zeta_Z]$
\end{theorem}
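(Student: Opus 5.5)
The proof follows Theorem \ref{thm:plane-coloring-5D} almost verbatim, with the sums over $s$ and coordinate pairs made explicit in $\fD$. Consider the abstract graph $\sG_{X,\eta}=(\sX_\eta,\sE_{X,\eta})$ with an edge $(x,x')$ whenever $\Lambda(x)\sim_f\Lambda(x')$. Vertex coloring (Theorem \ref{theorem:vertex-coloring}) then gives the plane coloring $\omega_X$ with $\zeta_X=\Delta(\sG_{X,\eta})+1$ colors. Within a color class $\sX_{\eta,\omega}$ no two star planes share a face, which is exactly 1 face congestion. So it remains to show $\Delta(\sG_{X,\eta})=O(2^{\fD}\fD^2\weight^2)L$.

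For the degree bound, fix $x$. Since $\Lambda(x)=\bigcup_{s,p\sim x}\Lambda(\gamma_s(p))$, the number of neighbours $x'$ is at most
\begin{equation}
\sum_{s,\,p\sim x}\big|\{x'\in\sX_\eta:\Lambda(\gamma_s(p))\sim_f\Lambda(x')\}\big| .
\end{equation}
There are $2^{\lceil\fD/2\rceil}$ bit strings $s$ and $O(\weight^2)$ packets $p\sim x$. Now fix $q=\gamma_s(p)$. A face of $\Lambda(q)$ lies in a plane spanned by coordinates $i<j$ with all other coordinates equal to $q|_{\ne ij}$. So if $\Lambda(q)$ and $\Lambda(x')$ share a face, there exist $s'$, $p'\sim x'$ and $i<j$ with $\gamma_{s'}(p')|_{\ne ij}=q|_{\ne ij}$.

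Next define $I_{s',ij}(x')\subseteq[L]$ as the set of values $\gamma_{s'}(p')|_i$ over $p'\sim x'$ with $\gamma_{s'}(p')|_{\ne ij}=q|_{\ne ij}$. The key claim is that for fixed $(s',i,j)$ the sets $I_{s',ij}(x')$, $x'\in\sX_\eta$, are pairwise disjoint. Suppose two such $x'\neq x''$ shared a value $c$. Then both $\lambda(x')$ and $\lambda(x'')$ contain the full line through the point with coordinates $q|_{\ne ij}$, $c$ in slot $i$, running in direction $j$. That line has $L-1\ge1$ edges, contradicting the edge decongestion of $\lambda(x)$, $x\in\sX_\eta$ from Theorem \ref{thm:line-coloring-anyD}. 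Hence for each $(s',i,j)$ at most $L$ checks $x'$ have $I_{s',ij}(x')\neq\varnothing$.

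Summing over $s'$ (again $2^{\lceil\fD/2\rceil}$ choices) and over the $\binom{\fD}{2}$ pairs $i<j$ gives
\begin{equation}
\Delta(\sG_{X,\eta})\le 2^{\lceil\fD/2\rceil}\cdot O(\weight^2)\cdot 2^{\lceil\fD/2\rceil}\binom{\fD}{2}L=O(2^{\fD}\fD^2\weight^2)L,
\end{equation}
as required. The $Z$-check case is identical with $\hat x\leftrightarrow\hat z$.

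The main point needing care is the disjointness claim: it must use only edges of $\lambda$, not of $\Lambda$. Its validity rests on the fact that $\lambda(\gamma_{s'}(p'))$ contains every coordinate line through $\gamma_{s'}(p')$. That line in direction $j$ is what gets shared, and it has at least one edge provided $L\ge2$, which we assume.
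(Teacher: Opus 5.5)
Your proposal is correct and follows the paper's proof essentially step for step. The paper proves this theorem by deferring to the proof of Theorem \ref{thm:plane-coloring-5D}: a face-sharing graph on each line-color class, disjointness of the sets $I_{s',ij}(x')$ so that at most $L$ checks contribute per $(s',i,j)$, and then vertex coloring, with the sums over bit strings and pairs $i<j$ supplying the dimension-dependent prefactor exactly as in your count. Your explicit justification of that disjointness, via the full coordinate line in direction $j$ contained in both star graphs, fills in a step the paper only asserts by analogy with Theorem \ref{thm:plane-coloring-4D}.
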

\begin{proof}
    The proof is exactly the same as that of Theorem \ref{thm:plane-coloring-5D} and thus omitted.
    The only difference is that the summation over bit strings $s,s'\in \dF_2^{\fD}$ and indices $i,j\in [\fD]$ induces a dependence of $\chi_X$ on the qubit grid dimension $\fD$.
\end{proof}

\begin{definition}[Dimension of 5D Grid]
    Let $\chi_X,\chi_Z$ be that defined in Theorem \ref{thm:line-coloring-5D} and $\zeta_X,\zeta_Z$ be that in Theorem \ref{thm:plane-coloring-5D}. Define
    \begin{align}
        L_X &= \max(\chi_X,\zeta_Z) \\
        L_Z &= \max(\chi_Z,\zeta_X)
    \end{align}
    so that the D-dimensional grid is given by $[L_X]\times [L]^{\fD}\times [L_Z]$ and
    \begin{equation}
        L_X,L_Z = O(2^{\fD} \fD^2 \weight^3 \qubit^2)L 
    \end{equation}
\end{definition}

\begin{definition}[Contracting Layers]
    For each $x$-check, define the \textbf{contracting layer} as
    \begin{equation}
        \eta\Lambda\omega(x)=\eta(x)\times \Lambda(x) \times \omega(x)
    \end{equation}
    so that the collection $\eta\Lambda\omega(x),x\in \sX$ is embedded in $[L_X]\times [L]^{\fD}\times [L_Z]$ with 1 (face) congestion.
    The case is similar for the $z$-checks.
    Let $R_X,R_Z$ denote the repetition codes on $L_X,L_Z$ vertices, respectively.
\end{definition}

\begin{definition}[Check layers]
    \label{def:check-layers-anyD}
    Define the $x$-check layer as
    \begin{equation}
        C(x) = \eta\lambda(x)\otimes R_Z +\eta\Lambda\omega(x)
    \end{equation}
    Similarly, define the $z$-check layer as
    \begin{equation}
        C(z)=R_X\otimes \lambda\eta (z) +\omega\Lambda\eta(z)
    \end{equation}
\end{definition}

\begin{remark}[Layer Congestion]
    \label{rem:layer-congestion-anyD}
    Similar to the 5D case in Remark \ref{rem:layer-congestion-5D}, note that $\eta\lambda(x),x\in \sX$ has 1 edge congestion, but after tensoring with $R_Z$, it's possible that $\eta\lambda(x)\otimes R_Z$ have $\fD$ edge congestion for edges along the $\hat{z}$ axis since at most $\fD$ distinct $\eta\lambda(x)$ can share the same vertex.
    Similarly, any edge in the qubit grid $[L]^{\fD}$ can be shared by $\fD-1$ many contracting layers.
    Hence, $C(x),x\in \sX$ has
    \begin{itemize}
        \item 0 edge congestion for edges along the $\hat{x}$ axis
        \item $\max(1,D-3)$ edge congestion for edges in the qubit grid
        \item $D-2$ edge congestion for edges along $\hat{z}$ axes
    \end{itemize}
    The case is similar for $Z$-check layers with $\hat{x}\lr \hat{z}$ switched.
\end{remark}

\begin{remark}[$D\ge 6$ Layer Codes]
    \label{rem:anyD-layer-codes}
    Suppose that an analogue of Theorem \ref{thm:contracting-cycles-5D} can be proven for (any) higher dimension $D\ge 6$.
    Then the $D$-dimensional Layer Code construction follows exactly as in Section \ref{sec:together-5D}, with main results given by those in Section \ref{sec:main-result-5D} with 5D replaced by $D$-dimensions.
\end{remark}

\bibliographystyle{alphaurl}
\bibliography{main.bbl}

\newcommand{\etalchar}[1]{$^{#1}$}
\begin{thebibliography}{HFDM12}

\bibitem[AB25]{apel2025simulating}
Harriet Apel and Nou\'edyn Baspin.
\newblock Simulating sparse hamiltonians on 2d lattices.
\newblock {\em Phys. Rev. Lett.}, 134:170602, May 2025.
\newblock URL: \url{https://link.aps.org/doi/10.1103/PhysRevLett.134.170602},
  \href {https://doi.org/10.1103/PhysRevLett.134.170602}
  {\path{doi:10.1103/PhysRevLett.134.170602}}.

\bibitem[ABO08]{aharonov2008faulttolerant}
Dorit Aharonov and Michael Ben-Or.
\newblock Fault-tolerant quantum computation with constant error rate.
\newblock {\em SIAM Journal on Computing}, 38(4):1207--1282, 2008.
\newblock \href {https://doi.org/10.1137/S0097539799359385}
  {\path{doi:10.1137/S0097539799359385}}.

\bibitem[Bas23]{baspin2023combinatorial}
Nouédyn Baspin.
\newblock On combinatorial structures in linear codes, 2023.
\newblock URL: \url{https://arxiv.org/abs/2309.16411}, \href
  {http://arxiv.org/abs/2309.16411} {\path{arXiv:2309.16411}}.

\bibitem[Bas25]{baspin2025free}
Nouédyn Baspin.
\newblock The free energy barrier: An eyring-polanyi bound for stabilizer
  hamiltonians, with applications to quantum error correction, 2025.
\newblock URL: \url{https://arxiv.org/abs/2509.17356}, \href
  {http://arxiv.org/abs/2509.17356} {\path{arXiv:2509.17356}}.

\bibitem[BFHS17]{bacon2017sparse}
Dave Bacon, Steven~T. Flammia, Aram~W. Harrow, and Jonathan Shi.
\newblock Sparse quantum codes from quantum circuits.
\newblock {\em IEEE Transactions on Information Theory}, 63(4):2464--2479,
  2017.
\newblock \href {https://doi.org/10.1109/TIT.2017.2663199}
  {\path{doi:10.1109/TIT.2017.2663199}}.

\bibitem[BPT10]{bravyi2010tradeoffs}
Sergey Bravyi, David Poulin, and Barbara Terhal.
\newblock Tradeoffs for reliable quantum information storage in 2d systems.
\newblock {\em Physical Review Letters}, 104(5), February 2010.
\newblock URL: \url{http://dx.doi.org/10.1103/PhysRevLett.104.050503}, \href
  {https://doi.org/10.1103/physrevlett.104.050503}
  {\path{doi:10.1103/physrevlett.104.050503}}.

\bibitem[Bra11]{bravyi2011subsystem}
Sergey Bravyi.
\newblock Subsystem codes with spatially local generators.
\newblock {\em Physical Review A}, 83(1), January 2011.
\newblock URL: \url{http://dx.doi.org/10.1103/PhysRevA.83.012320}, \href
  {https://doi.org/10.1103/physreva.83.012320}
  {\path{doi:10.1103/physreva.83.012320}}.

\bibitem[BT09]{bravyi2009nogo}
Sergey Bravyi and Barbara Terhal.
\newblock A no-go theorem for a two-dimensional self-correcting quantum memory
  based on stabilizer codes.
\newblock {\em New Journal of Physics}, 11(4):043029, April 2009.
\newblock URL: \url{http://dx.doi.org/10.1088/1367-2630/11/4/043029}, \href
  {https://doi.org/10.1088/1367-2630/11/4/043029}
  {\path{doi:10.1088/1367-2630/11/4/043029}}.

\bibitem[BW24]{baspin2024wire}
Nou{\'e}dyn Baspin and Dominic Williamson.
\newblock Wire codes.
\newblock {\em arXiv preprint arXiv:2410.10194}, 2024.
\newblock URL: \url{https://arxiv.org/abs/2410.10194}.

\bibitem[DHLV23]{dinur2023good}
Irit Dinur, Min-Hsiu Hsieh, Ting-Chun Lin, and Thomas Vidick.
\newblock Good quantum ldpc codes with linear time decoders.
\newblock In {\em Proceedings of the 55th Annual ACM Symposium on Theory of
  Computing}, STOC 2023, page 905–918, New York, NY, USA, 2023. Association
  for Computing Machinery.
\newblock \href {https://doi.org/10.1145/3564246.3585101}
  {\path{doi:10.1145/3564246.3585101}}.

\bibitem[Die25]{diestel2025graph}
Reinhard Diestel.
\newblock {\em Graph Theory}.
\newblock Springer Berlin Heidelberg, 2025.
\newblock URL: \url{http://dx.doi.org/10.1007/978-3-662-70107-2}, \href
  {https://doi.org/10.1007/978-3-662-70107-2}
  {\path{doi:10.1007/978-3-662-70107-2}}.

\bibitem[FH21]{freedman2021building}
Michael Freedman and Matthew Hastings.
\newblock Building manifolds from quantum codes.
\newblock {\em Geometric and Functional Analysis}, 31(4):855–894, June 2021.
\newblock URL: \url{http://dx.doi.org/10.1007/s00039-021-00567-3}, \href
  {https://doi.org/10.1007/s00039-021-00567-3}
  {\path{doi:10.1007/s00039-021-00567-3}}.

\bibitem[GCC{\etalchar{+}}25]{gu2025layer}
Shouzhen Gu, Libor Caha, Shin~Ho Choe, Zhiyang He, Aleksander Kubica, and
  Eugene Tang.
\newblock Layer codes as partially self-correcting quantum memories, 2025.
\newblock URL: \url{https://arxiv.org/abs/2510.06659}, \href
  {http://arxiv.org/abs/2510.06659} {\path{arXiv:2510.06659}}.

\bibitem[Got24]{gottesman2024surviving}
Daniel Gottesman.
\newblock Surviving as a quantum computer in a classical world, 2024.
\newblock URL: \url{https://www.cs.umd.edu/~dgottesm/QECCbook-2024.pdf}.

\bibitem[HFDM12]{horsman2012surface}
Dominic Horsman, Austin~G Fowler, Simon Devitt, and Rodney~Van Meter.
\newblock Surface code quantum computing by lattice surgery.
\newblock {\em New Journal of Physics}, 14(12):123011, dec 2012.
\newblock \href {https://doi.org/10.1088/1367-2630/14/12/123011}
  {\path{doi:10.1088/1367-2630/14/12/123011}}.

\bibitem[HLL25]{hsieh2025simplified}
Min-Hsiu Hsieh, Xingjian Li, and Ting-Chun Lin.
\newblock Simplified quantum weight reduction with optimal bounds, 2025.
\newblock URL: \url{https://arxiv.org/abs/2510.09601}, \href
  {http://arxiv.org/abs/2510.09601} {\path{arXiv:2510.09601}}.

\bibitem[KLZ98]{knill1998resilient}
Emanuel Knill, Raymond Laflamme, and Wojciech~H. Zurek.
\newblock Resilient quantum computation: error models and thresholds.
\newblock {\em Proceedings of the Royal Society A: Mathematical, Physical and
  Engineering Sciences}, 454(1969):365--384, 01 1998.
\newblock \href
  {http://arxiv.org/abs/https://royalsocietypublishing.org/rspa/article-pdf/454/1969/365/633972/rspa.1998.0166.pdf}
  {\path{arXiv:https://royalsocietypublishing.org/rspa/article-pdf/454/1969/365/633972/rspa.1998.0166.pdf}},
  \href {https://doi.org/10.1098/rspa.1998.0166}
  {\path{doi:10.1098/rspa.1998.0166}}.

\bibitem[Lei14]{leighton2014introduction}
F~Thomson Leighton.
\newblock {\em Introduction to parallel algorithms and architectures:
  Arrays{\textperiodcentered} trees{\textperiodcentered} hypercubes}.
\newblock Elsevier, 2014.
\newblock URL: \url{http://dx.doi.org/10.1016/C2013-0-08299-0}, \href
  {https://doi.org/10.1016/c2013-0-08299-0}
  {\path{doi:10.1016/c2013-0-08299-0}}.

\bibitem[LLWH26]{li2026almost}
Xingjian Li, Ting-Chun Lin, Adam Wills, and Min-Hsiu Hsieh.
\newblock Almost optimal geometrically local quantum ldpc codes in any
  dimension.
\newblock {\em Nature Communications}, 2026.
\newblock URL: \url{https://doi.org/10.1038/s41467-026-69031-w}.

\bibitem[LMR94]{leighton1994packet}
F.~T. Leighton, Bruce~M. Maggs, and Satish~B. Rao.
\newblock Packet routing and job-shop scheduling ino(congestion+dilation)
  steps.
\newblock {\em Combinatorica}, 14(2):167–186, June 1994.
\newblock URL: \url{http://dx.doi.org/10.1007/BF01215349}, \href
  {https://doi.org/10.1007/bf01215349} {\path{doi:10.1007/bf01215349}}.

\bibitem[LMS01]{litman2001fast}
Ami Litman and Shiri Moran-Schein.
\newblock Fast, minimal and oblivious routing algorithms on the mesh with
  bounded queues.
\newblock In {\em Proceedings of the Thirteenth Annual ACM Symposium on
  Parallel Algorithms and Architectures}, SPAA '01, page 21–31, New York, NY,
  USA, 2001. Association for Computing Machinery.
\newblock \href {https://doi.org/10.1145/378580.378584}
  {\path{doi:10.1145/378580.378584}}.

\bibitem[LWH23]{lin2023geometrically}
Ting-Chun Lin, Adam Wills, and Min-Hsiu Hsieh.
\newblock Geometrically local quantum and classical codes from subdivision.
\newblock {\em arXiv preprint arXiv:2309.16104}, 2023.
\newblock URL: \url{https://arxiv.org/abs/2309.16104}.

\bibitem[LZ22]{leverrier2022quantum}
Anthony Leverrier and Gilles Zemor.
\newblock { Quantum Tanner codes }.
\newblock In {\em 2022 IEEE 63rd Annual Symposium on Foundations of Computer
  Science (FOCS)}, pages 872--883, Los Alamitos, CA, USA, November 2022. IEEE
  Computer Society.
\newblock URL:
  \url{https://doi.ieeecomputersociety.org/10.1109/FOCS54457.2022.00117}, \href
  {https://doi.org/10.1109/FOCS54457.2022.00117}
  {\path{doi:10.1109/FOCS54457.2022.00117}}.

\bibitem[PK22]{panteleev2022asymptotically}
Pavel Panteleev and Gleb Kalachev.
\newblock Asymptotically good quantum and locally testable classical ldpc
  codes.
\newblock In {\em Proceedings of the 54th Annual ACM SIGACT Symposium on Theory
  of Computing}, STOC 2022, page 375–388, New York, NY, USA, 2022.
  Association for Computing Machinery.
\newblock \href {https://doi.org/10.1145/3519935.3520017}
  {\path{doi:10.1145/3519935.3520017}}.

\bibitem[Por23]{portnoy2023local}
Elia Portnoy.
\newblock Local quantum codes from subdivided manifolds.
\newblock {\em arXiv preprint arXiv:2303.06755}, 2023.
\newblock URL: \url{https://arxiv.org/abs/2303.06755}.

\bibitem[{Pro}]{proofwiki:odd-vertices-edge-disjoint-trails}
{ProofWiki}.
\newblock Odd vertices determines edge-disjoint trails.
\newblock
  \url{https://proofwiki.org/wiki/Odd_Vertices_Determines_Edge-Disjoint_Trails}.
\newblock ProofWiki, accessed 2026-03-26.

\bibitem[Rot09]{rotman2009introduction}
Joseph~J. Rotman.
\newblock {\em An Introduction to Homological Algebra}.
\newblock Springer New York, 2009.
\newblock URL: \url{http://dx.doi.org/10.1007/b98977}, \href
  {https://doi.org/10.1007/b98977} {\path{doi:10.1007/b98977}}.

\bibitem[SGI{\etalchar{+}}24]{sabo2024weight}
Eric Sabo, Lane~G Gunderman, Benjamin Ide, Michael Vasmer, and Guillaume
  Dauphinais.
\newblock Weight-reduced stabilizer codes with lower overhead.
\newblock {\em PRX Quantum}, 5(4):040302, 2024.
\newblock URL: \url{https://link.aps.org/doi/10.1103/PRXQuantum.5.040302}.

\bibitem[Sho96]{shor1996faultolerant}
P.W. Shor.
\newblock Fault-tolerant quantum computation.
\newblock In {\em Proceedings of 37th Conference on Foundations of Computer
  Science}, pages 56--65, 1996.
\newblock \href {https://doi.org/10.1109/SFCS.1996.548464}
  {\path{doi:10.1109/SFCS.1996.548464}}.

\bibitem[Viz64]{vizing1964estimate}
Vadim~G Vizing.
\newblock On an estimate of the chromatic class of a p-graph.
\newblock {\em Diskret analiz}, 3:25--30, 1964.

\bibitem[WB24]{williamson2023layer}
Dominic~J. Williamson and Nou{\'e}dyn Baspin.
\newblock Layer codes.
\newblock {\em Nature Communications}, 15(1):9528, 2024.
\newblock \href {https://doi.org/10.1038/s41467-024-53881-3}
  {\path{doi:10.1038/s41467-024-53881-3}}.

\bibitem[Wil25]{williamson2025partial}
Dominic~J. Williamson.
\newblock Partial self-correction in layer codes, 2025.
\newblock URL: \url{https://arxiv.org/abs/2510.09218}, \href
  {http://arxiv.org/abs/2510.09218} {\path{arXiv:2510.09218}}.

\bibitem[YBW26]{layerreduction}
Andrew~C. Yuan, Nouédyn Baspin, and Dominic~J. Williamson.
\newblock Quantum weight reduction with layer codes.
\newblock {\em arXiv preprint arXiv:2603.04883}, 2026.
\newblock URL: \url{https://arxiv.org/abs/2603.04883}.

\bibitem[Yos13]{yoshida2013information}
Beni Yoshida.
\newblock Information storage capacity of discrete spin systems.
\newblock {\em Annals of Physics}, 338:134–166, November 2013.
\newblock URL: \url{http://dx.doi.org/10.1016/j.aop.2013.07.009}, \href
  {https://doi.org/10.1016/j.aop.2013.07.009}
  {\path{doi:10.1016/j.aop.2013.07.009}}.

\bibitem[Yua26]{yuan2025unified}
Andrew~C. Yuan.
\newblock {Unified framework for quantum code embedding}.
\newblock {\em Phys. Rev. A}, 113:022438, Feb 2026.
\newblock URL: \url{https://link.aps.org/doi/10.1103/gtlz-3q9d}, \href
  {https://doi.org/10.1103/gtlz-3q9d} {\path{doi:10.1103/gtlz-3q9d}}.

\end{thebibliography}

\end{document}